\newtheorem{theorem}{Theorem}[section]
\newtheorem{proposition}[theorem]{Proposition}
\newtheorem{lemma}[theorem]{Lemma}
\newtheorem{corollary}[theorem]{Corollary}
\theoremstyle{definition}
\newtheorem{definition}[theorem]{Definition}
\theoremstyle{remark}
\newtheorem*{remark}{Remark}
\newcommand{\BE}{\mathbb E}
\newcommand{\BI}{\mathbb I}
\newcommand{\BN}{\mathbb N}
\newcommand{\BP}{\mathbb P}
\newcommand{\BR}{\mathbb R}
\newcommand{\eps}{\varepsilon}
\newcommand{\TV}{d_{\text{TV}}}
\newcommand{\poly}{\text{poly}}
\DeclareMathOperator{\Var}{\text{Var}}
\title{Private High-Dimensional Hypothesis Testing}
\author{Shyam Narayanan\thanks{Massachusetts Institute of Technology. Email: \texttt{shyamsn@mit.edu}. Research supported by the NSF Graduate Fellowship and the NSF TRIPODS Program (award DMS-2022448).}}
\begin{document}

\maketitle

\begin{abstract}%
    We provide improved differentially private algorithms for identity testing of high-dimensional distributions. Specifically, for $d$-dimensional Gaussian distributions with known covariance $\Sigma$, we can test whether the distribution comes from $\mathcal{N}(\mu^*, \Sigma)$ for some fixed $\mu^*$ or from some $\mathcal{N}(\mu, \Sigma)$ with total variation distance at least $\alpha$ from $\mathcal{N}(\mu^*, \Sigma)$ with $(\eps, 0)$-differential privacy, using only
\[\tilde{O}\left(\frac{d^{1/2}}{\alpha^2} + \frac{d^{1/3}}{\alpha^{4/3} \cdot \eps^{2/3}} + \frac{1}{\alpha \cdot \eps}\right)\]
    samples if the algorithm is allowed to be computationally inefficient, and only 
\[\tilde{O}\left(\frac{d^{1/2}}{\alpha^2} + \frac{d^{1/4}}{\alpha \cdot \eps}\right)\]
    samples for a computationally efficient algorithm. We also provide a matching lower bound showing that our computationally inefficient algorithm has optimal sample complexity. We also extend our algorithms to various related problems, including mean testing of Gaussians with bounded but unknown covariance, uniformity testing of product distributions over $\{-1, 1\}^d$, and tolerant testing.
    
    Our results improve over the previous best work of Canonne et al.~\cite{CanonneKMUZ20} for both computationally efficient and inefficient algorithms, and even our computationally efficient algorithm matches the optimal \emph{non-private} sample complexity of $O\left(\frac{\sqrt{d}}{\alpha^2}\right)$ in many standard parameter settings. 
    In addition, our results show that, surprisingly, private identity testing of $d$-dimensional Gaussians can be done with fewer samples than private identity testing of discrete distributions over a domain of size $d$ \cite{AcharyaSZ18}, which refutes a conjectured lower bound of~\cite{CanonneKMUZ20}.
\end{abstract}

\section{Introduction} \label{sec:intro}

Hypothesis testing is one of the oldest and most widely studied problems in statistical inference, and is critical in research for nearly every scientific branch. Simply put, hypothesis testing asks, given a collection of $N$ data points $\textbf{X} = \{X^{(1)}, \dots, X^{(N)}\}$, whether the data points come from a distribution proposed by a null hypothesis $\mathcal{H}_0$ or by an alternative hypothesis $\mathcal{H}_1$.

Classical hypothesis testing often focuses on univariate or discrete distributions.
However, these distributions may be insufficient in many applications.
For instance, if we wish to test a hypothesis on patient data, each patient may have numerous features such as those corresponding to vitals, organ function, metabolic rate, presence or severity of diseases, etc. Consequently, it is crucial to develop hypothesis testing procedures for high-dimensional distributions.

In many practical applications of hypothesis testing, the data may reveal highly sensitive information about an individual. For instance, the data may include whether a patient has a certain disease, or has participated in a crime or embarrassing activity.
For this reason, an increasingly important challenge that has arisen in hypothesis testing to ensure that the test is not only accurate but also preserves the privacy of the individuals that contribute data. 
The notion of privacy we wish to guarantee is commonly called \emph{differential privacy}, which was first formulated by \cite{DworkMNS06}. Differential privacy has emerged as the leading notion of privacy both in theory and practice, and has been utilized to measure privacy by companies such as Apple \cite{Apple}, Google \cite{ErlingssonPK14}, and Microsoft \cite{DingKY17}, as well as the US Census Bureau \cite{USCensus}.
Informally, differential privacy provably ensures that changing one piece of the data does not affect the output of the algorithm significantly. Hence, an adversary cannot reconstruct any single user's data based on the algorithm's output, thereby ensuring that each user's data is secure.

In this paper, we study \emph{differentially private} hypothesis testing for \emph{high-dimensional} distributions.
We primarily study two major classes of distributions: multivariate Gaussians and Boolean product distributions. 
Specifically, we consider the following problems, as well as certain generalizations of them, and improve over the previous state-of-the-art results by \cite{CanonneKMUZ20}.
\begin{enumerate}
    \item How many samples from a multivariate Gaussian distribution $\mathcal{N}(\mu, \Sigma)$, where $\mu \in \BR^d$ is the mean vector and $\Sigma \in \BR^{d \times d}$ is the covariance matrix, are necessary to privately determine whether $\mu = \mu^*$ for some fixed $\mu^* \in \BR^d$ or $\mu$ is ``far'' from $\mu^*$? 
    \item How many samples from a product distribution $\mathcal{D}$ over $\{-1, 1\}^d$ are necessary to privately determine whether $\mathcal{D}$ equals some fixed product distribution $\mathcal{D}^*$ or $\mathcal{D}$ is ``far'' from $\mathcal{D}^*$?
\end{enumerate}

\subsection{Private Hypothesis Testing}

In this subsection, we describe the basics of differential privacy and differentially private hypothesis testing. First, we define the notion of neighboring datasets and differential privacy.

\begin{definition}[Neighboring Datasets]
    Let $\mathcal{X}$ be some domain, and let $\textbf{X} = (X^{(1)}, \dots, X^{(N)}) \in \mathcal{X}^N$ and $\textbf{X}' = (X'^{(1)}, \dots, X'^{(N)}) \in \mathcal{X}^N$ be two datasets of size $N$ from $\mathcal{X}$. Then, we say that $\textbf{X}$ and $\textbf{X}'$ are \emph{neighboring datasets} if there is at most one value of $1 \le i \le N$ such that $X^{(i)} \neq X'^{(i)}$.
\end{definition}

\begin{definition}[Differential Privacy \cite{DworkMNS06}]
    Let $0 \le \eps, \delta \le 1$. A randomized algorithm $\mathcal{A}: \mathcal{X}^N \to \mathcal{O}$ is said to be $(\eps, \delta)$-differentially private (DP) if for any two neighboring datasets $\textbf{X}, \textbf{X}' \in \mathcal{X}^N$ and any subset $O \subset \mathcal{O}$,
\begin{equation}
    \BP[\mathcal{A}(\textbf{X}') \in O] \le e^{\eps} \cdot \BP[\mathcal{A}(\textbf{X}) \in O] + \delta. \label{eq:DP_defn}
\end{equation}
\end{definition}

We next describe hypothesis testing. A hypothesis $\mathcal{H}$ represents a class of distributions over $\mathcal{X}$, which may consist of either a single distribution or a family of distributions with certain constraints.
In distribution testing, we are given two hypotheses $\mathcal{H}_0$ and $\mathcal{H}_1$ that are \emph{disjoint}, meaning no distribution is in both classes. We are then given $N$ i.i.d. samples from $\mathcal{D}$, where $\mathcal{D}$ is a distribution from $\mathcal{H}_0 \cup \mathcal{H}_1$, and our goal is to determine whether $\mathcal{D} \in \mathcal{H}_0$ or $\mathcal{D} \in \mathcal{H}_1$.

We now give a formal definition for \emph{hypothesis testing}, and then we formally define \emph{private hypothesis testing}.

\begin{definition}
    Fix $N \in \BN$ as the number of samples, and let $\mathcal{A}: (\BR^d)^N \to \{0, 1\}$ be an algorithm that takes as input $X^{(1)}, \dots, X^{(N)} \in \BR^d$.
    Given disjoint hypotheses $\mathcal{H}_0$ and $\mathcal{H}_1$, and parameters $0 \le \eps, \delta \le 1,$ we say that an algorithm $\mathcal{A}$ is a \emph{hypothesis testing algorithm} that can \emph{distinguish} between $\mathcal{H}_0$ and $\mathcal{H}_1$
    if:
\begin{itemize}
    \item For all distributions $\mathcal{D} \in \mathcal{H}_0$, if each $X^{(i)}$ is drawn i.i.d. from $\mathcal{D}$, then for $\textbf{X} = (X^{(1)}, \dots, X^{(N)})$, $\BP[\mathcal{A}(\textbf{X}) = 0] \ge \frac{2}{3}$, where the probability is over both the samples $X^{(1)}, \dots, X^{(N)} \leftarrow \mathcal{D}$ and the randomness of the algorithm $\mathcal{A}$.
    \item For all distributions $\mathcal{D}' \in \mathcal{H}_1$, if each $X^{(i)}$ is drawn i.i.d. from $\mathcal{D}'$, then $\BP[\mathcal{A}(\textbf{X}) = 1] \ge \frac{2}{3}$.
\end{itemize}
\end{definition}

\begin{definition}
    Fix $N \in \BN$ as the number of samples, and let $\mathcal{A}: (\BR^d)^N \to \{0, 1\}$ be an algorithm that takes as input $X^{(1)}, \dots, X^{(N)} \in \BR^d$.
    Given disjoint hypotheses $\mathcal{H}_0$ and $\mathcal{H}_1$, and parameters $0 \le \eps, \delta \le 1,$ we say that an algorithm $\mathcal{A}$ can $(\eps, \delta)$-\emph{privately distinguish} between $\mathcal{H}_0$ and $\mathcal{H}_1$ if:
    \begin{itemize}
        \item $\mathcal{A}$ can distinguish between $\mathcal{H}_0$ and $\mathcal{H}_1$.
        \item  $\mathcal{A}$ is $(\eps, \delta)$-DP (where $\mathcal{X} = \BR^d$ and $\mathcal{O} = \{0, 1\}$).
    \end{itemize}
    Note that the privacy must hold for any neighboring datasets $\textbf{X}, \textbf{X}' \in (\BR^d)^N$, even if they are not drawn from any distribution.
\end{definition}

\subsection{Our Results}

In all of the problems we investigate, our goal is to devise an algorithm $\mathcal{A}$ that can $(\eps, \delta)$-privately distinguish between a null hypothesis $\mathcal{H}_0$ and an alternative hypothesis $\mathcal{H}_1,$ where the number of samples $N$ is as small as possible. We also are interested in constructing such an algorithm that is efficient, meaning that the runtime is polynomial in the number of samples $N$ and the dimension $d$.

\paragraph{Identity testing of Gaussians with known covariance:}
The main and perhaps simplest problem we study in this paper is \emph{private identity testing} of a multivariate Gaussian with known covariance matrix $\Sigma$.
In identity testing, the goal is to distinguish between a null hypothesis that only consists of $N$ samples drawn i.i.d. from a single hypothesis distribution $\mathcal{N}(\mu^*, \Sigma)$, and an alternative hypothesis consisting of $N$ samples drawn i.i.d. from $\mathcal{N}(\mu, \Sigma)$, where the mean $\mu$ is ``far'' from $\mu^*$.
In this work, we improve over work by \cite{CanonneKMUZ20} by improving their sample complexity both in the case of inefficient and efficient algorithms, and by providing an optimal lower bound to complement these results.

First, we state our sample complexity upper bound where we allow for an inefficient algorithm.

\begin{theorem}[Inefficient Upper Bound] \label{thm:slow_upper}
    Fix $\mu^* \in \BR^d$ and $\Sigma \in \BR^{d \times d}$ as a known positive definite covariance matrix. Also, fix parameters $0 < \alpha, \eps \le \frac{1}{2}$.
    Then, there exists an algorithm that, using 
$$N = \tilde{O}\left(\frac{d^{1/2}}{\alpha^2} + \frac{d^{1/3}}{\alpha^{4/3} \cdot \eps^{2/3}} + \frac{1}{\alpha \cdot \eps}\right)$$
samples\footnote{We use $N = \tilde{O}(F)$ to mean that there exists a fixed constant $C$ such that $N = O(F (\log F)^C)$.}, can $(\eps, 0)$-privately distinguish between $\mathcal{H}_0$, which solely consists of $\mathcal{N}(\mu^*, \Sigma)$, and $\mathcal{H}_1$, which consists of all distributions $\mathcal{N}(\mu, \Sigma)$ for $\mu$ with $\sqrt{(\mu-\mu^*)^T \Sigma^{-1} (\mu-\mu^*)} \ge \alpha$.
\end{theorem}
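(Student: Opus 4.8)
The plan is to whiten the data, prove the statistical guarantee with a quadratic (split‑sample $U$‑statistic) statistic, and then privatize it by a combination of clipping and Laplace noise, with the main difficulty being to make the sensitivity small enough to get the optimal $d$‑exponent in the middle term. First I would \emph{whiten}: apply the fixed linear map $X\mapsto Y:=\Sigma^{-1/2}(X-\mu^*)$ to every sample. This is data‑independent, so it preserves the neighboring relation, and it reduces the problem to $(\eps,0)$‑privately distinguishing $\mathcal H_0=\{\mathcal N(0,I_d)\}$ from $\mathcal H_1=\{\mathcal N(\nu,I_d):\|\nu\|_2\ge\alpha\}$ (note $\|\nu\|_2^2=(\mu-\mu^*)^T\Sigma^{-1}(\mu-\mu^*)$, so this matches the Mahalanobis/TV condition). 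I would also argue we may restrict to $\|\nu\|_2=\tilde O(1)$: samples from $\mathcal N(\nu,I)$ with atypically large $\|\nu\|_2$ have atypically large norm, which a simple private test on $\max_i\|Y^{(i)}\|_2$ detects at negligible cost, so we may condition on the ``hard regime'' in which the per‑sample clipping below is essentially lossless.

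For the non‑private statistical guarantee I would use the degree‑two $U$‑statistic. Split the $N$ samples into two halves $A,B$ of size $n=N/2$, form $S_A=\sum_{i\in A}Y^{(i)}$ and $S_B=\sum_{j\in B}Y^{(j)}$, and take $Z=\langle S_A,S_B\rangle$. Under $\mathcal H_0$ one computes $\mathbb E[Z]=0$ and $\Var[Z]=\Theta(d\,n^2)$, whereas under $\mathcal H_1$, $\mathbb E[Z]=n^2\|\nu\|_2^2\ge n^2\alpha^2$ with $\Var[Z]=O(d\,n^2+n^3\|\nu\|_2^2)$. A Chebyshev/threshold argument then separates the hypotheses once $N=\Omega(\sqrt d/\alpha^2)$, which is the optimal non‑private rate and is responsible for the first term of the claimed bound.

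To make this $(\eps,0)$‑private I would (i) clip each $Y^{(i)}$ to the $\ell_2$‑ball of radius $R=\Theta(\sqrt{d+\log N})$, (ii) clip $S_A,S_B$ to radius $\tau$, and (iii) release $\tilde Z=\langle\hat S_A,\hat S_B\rangle+\mathrm{Lap}(\Delta/\eps)$, outputting $\mathcal H_1$ iff $\tilde Z$ exceeds a threshold. Since projection onto a ball is $1$‑Lipschitz, changing one sample moves $\hat S_A$ by at most $2R$, so the sensitivity satisfies $\Delta\le 2R\tau$ and the mechanism is $(\eps,0)$‑DP. The first piece of technical work is a ``clipping is benign'' lemma: in the hard regime the per‑sample clip a.s.\ never triggers, and for a suitable $\tau$ the partial‑sum clip changes neither $\mathbb E[\hat Z]$ under $\mathcal H_1$ (still $\Omega(n^2\alpha^2)$, up to replacing it by $\Theta(\tau^2)$ when $S_A,S_B$ are clipped, in which case the consistent direction of $\hat S_A,\hat S_B$ still gives separation) nor the $\mathcal H_0$ fluctuation (still $O(\sqrt d\,n)$) by more than constant factors.

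The crux — and what I expect to be the main obstacle — is the choice of $\tau$ and the resulting sensitivity. The naive choice $\tau\asymp\sqrt{nd}$ (so that clipping essentially never triggers) yields $\Delta\asymp d\sqrt n$, hence a private sample cost of $\tilde O\!\big(d^{2/3}\alpha^{-4/3}\eps^{-2/3}\big)$, which is a full $d^{1/3}$ factor away from the theorem; moreover this gap is genuine for any one‑shot ``clip and add Laplace noise'' scheme. Obtaining the claimed $\tilde O\!\big(d^{1/3}\alpha^{-4/3}\eps^{-2/3}\big)$ requires a privatization whose \emph{effective} sensitivity is only $\tilde O(\sqrt{dn})$ — the size of $\langle\Delta Y^{(i)},S_B\rangle$ for a \emph{typical}, rather than adversarially aligned, rank‑one perturbation $\Delta Y^{(i)}$. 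I would attack this either with a smooth‑sensitivity / propose‑test‑release layer that privately certifies the dataset lies in the ``typical'' region before releasing $\tilde Z$ at the smaller scale, or — exploiting that the algorithm is \emph{allowed to be computationally inefficient} — with a brute‑force private‑selection (exponential‑mechanism) step over an exponentially large net, which plausibly also yields the $\tilde O(1/(\alpha\eps))$ term. Finally I would run a separate, essentially one‑dimensional private subtest in parallel to cover the regime where $1/(\alpha\eps)$ dominates, combine all subtests by a union bound over the constant number of cases, and conclude privacy by basic composition, since each subtest uses a constant fraction of the budget.
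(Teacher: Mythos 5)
There is a genuine gap. Your setup (whitening, reduction to $\mu^*=0$, $\Sigma=I$, the quadratic statistic, and the Chebyshev separation giving the $\sqrt d/\alpha^2$ term) matches the paper, and you correctly diagnose the quantitative crux: a worst-case clip-and-Laplace scheme has sensitivity $\approx d\sqrt N$, which is a $\sqrt d$ factor too large and costs $d^{1/3}$ in sample complexity, so one needs a mechanism whose \emph{effective} sensitivity is the typical fluctuation $\tilde O(\sqrt{Nd}+\alpha N + 1/\eps)$. But at exactly that point the proposal stops: you name two candidate strategies (a smooth-sensitivity/propose-test-release layer, or an exponential mechanism over a net) without carrying either out, and neither is routine. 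Propose-test-release natively yields $(\eps,\delta)$-DP rather than the required $(\eps,0)$-DP (salvageable via the known equivalence for testing problems, but you do not invoke it), and the exponential-mechanism sketch is too vague to assess. The entire technical content of the theorem lives in this missing step.

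The paper's actual mechanism is a Lipschitz extension: define $\mathcal C\subset(\BR^d)^N$ as the datasets whose Gram matrix $T_{i,j}=\langle X^{(i)},X^{(j)}\rangle$ has bounded entries, bounded row/column sums, \emph{and} bounded submatrix sums $\sum_{i,j\in S}T_{i,j}=Kd\pm\tilde O(K\sqrt{Kd}+K^2)$ simultaneously for every subset $S$ of every size $K$; show the clipped statistic $\tilde T(\textbf{X})=\max(0,\min(\|\bar X\|^2-Nd,\alpha^2N^2))$ satisfies $|\tilde T(\textbf{X})-\tilde T(\textbf{X}')|\le K\cdot D$ for \emph{all} $K$ and all pairs in $\mathcal C$ differing in $K$ rows (the clipping handles $K\ge C/\eps$, the submatrix bounds handle $K\le C/\eps$); extend to all of $(\BR^d)^N$ by McShane--Whitney; and add $\mathrm{Lap}(D/\eps)$. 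The reason the multi-step ($K>1$) Lipschitz bound is indispensable --- and the reason your ``privately certify typicality, then release at the smaller scale'' idea is dangerous --- is that a path of single-sample swaps between two typical datasets can pass through datasets outside the typical set, so a sensitivity bound proved only for adjacent pairs inside the certified region does not compose into a global privacy guarantee. This is precisely the error in the earlier construction of Canonne et al.\ that this theorem corrects, so any correct proof must confront it explicitly; your proposal does not.
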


Expressed more simply, our goal is to privately distinguish between the mean of a multivariate Gaussian being some fixed $\mu^*$ and the mean being far away from $\mu^*$, given some number of samples from the multivariate Gaussian.
Our notion of far away depends on the quantity $\sqrt{(\mu-\mu^*)^T \Sigma^{-1} (\mu-\mu^*)}$, which is also called the \emph{Mahalanobis distance} $d_\Sigma(\mu^*, \mu)$. While this choice may appear more confusing than simply using $\ell_2$ (a.k.a. Euclidean) distance, it is more practical than $\ell_2$ distance as it scales properly with linear transformations of multivariate Gaussian distributions. When $\Sigma = I$, the identity matrix, this is equivalent to $\|\mu-\mu^*\|$, the Euclidean distance between $\mu$ and $\mu^*$. In addition, whenever the Mahalanobis distance is smaller than $1$, it is asymptotically equivalent to the total variation distance $\TV\left(\mathcal{N}(\mu, \Sigma), \mathcal{N}(\mu^*, \Sigma)\right)$.

Our result improves over the previous best inefficient algorithm of \cite{CanonneKMUZ20}, which had a sample complexity of $N = \tilde{O}\left(\frac{d^{1/2}}{\alpha^2} + \frac{d^{1/2}}{\alpha \sqrt{\eps}} + \frac{d^{1/3}}{\alpha^{4/3} \eps^{2/3}} + \frac{1}{\alpha \eps}\right)$.
Importantly, we remove the dependence on $\frac{d^{1/2}}{\alpha \sqrt{\eps}}$ that was present in \cite{CanonneKMUZ20}, which provides a strict improvement whenever $\eps < \alpha^2$ and $\frac{1}{d} < \alpha^2 \cdot \eps$.
We remark that the inefficient algorithm of \cite{CanonneKMUZ20} has a mild inaccuracy, and an important part of our result involves fixing the previous proof.

\medskip

Next, we show that Theorem \ref{thm:slow_upper} is tight, even if the algorithm is allowed to be $(0, \eps)$-differentially private as opposed to $(\eps, 0)$-differentially private\footnote{We remark that in the case of private \emph{hypothesis testing}, $(\eps, \delta)$-DP and $(\eps+\delta, 0)$-DP are known to be \emph{asymptotically equivalent} for any $\eps, \delta < \frac{1}{2}$ \cite{AcharyaSZ18}. In general, however, it is harder to achieve $(\eps+\delta, 0)$-DP.}. Specifically, we prove the following:

\begin{theorem}[Lower Bound] \label{thm:lower}
    Let all notation be as in Theorem \ref{thm:slow_upper}. Then, any algorithm that can $(0, \eps)$-privately distinguish between $\mathcal{H}_0$ and $\mathcal{H}_1$ must have sample complexity at least
$$N = \Omega\left(\frac{d^{1/2}}{\alpha^2} + \frac{d^{1/3}}{\alpha^{4/3} \cdot \eps^{2/3}} + \frac{1}{\alpha \cdot \eps}\right).$$
\end{theorem}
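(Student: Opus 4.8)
The plan is to prove the three terms in the lower bound separately, since the bound is a maximum (up to constants) of $\frac{\sqrt d}{\alpha^2}$, $\frac{d^{1/3}}{\alpha^{4/3}\eps^{2/3}}$, and $\frac{1}{\alpha\eps}$. Since we may freely apply an invertible linear transformation, it suffices to treat $\Sigma = I$, so the task is to distinguish $\mathcal{N}(\mu^*, I)$ from the family $\{\mathcal{N}(\mu, I) : \|\mu - \mu^*\| \ge \alpha\}$; by translation we may take $\mu^* = 0$. The first term $\frac{\sqrt d}{\alpha^2}$ is just the non-private sample complexity of Gaussian identity testing, which holds even without any privacy constraint, so I would cite (or quickly re-derive via a standard $\chi^2$/Ingster-style argument, e.g.\ the uniform prior over $\mu \in \{\pm \alpha/\sqrt d\}^d$) the known lower bound. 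The term $\frac{1}{\alpha\eps}$ is the ``pure privacy'' cost of distinguishing even two fixed, well-separated distributions: take the two-point instance $\mathcal{N}(0,I)$ versus $\mathcal{N}(\alpha e_1, I)$, which have total variation distance $\Theta(\alpha)$, and apply the standard coupling/hybrid argument for $(\eps,0)$-DP (sometimes called the ``packing'' or group-privacy lower bound) showing that distinguishing two distributions at TV distance $\alpha$ under $\eps$-DP requires $\Omega(1/(\alpha\eps))$ samples.

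The heart of the argument is the middle term $\frac{d^{1/3}}{\alpha^{4/3}\eps^{2/3}}$, and here I would use a coupling-based / ``Le Cam two-point meets differential privacy'' framework, in the style of the lower bounds of Acharya–Sun–Zhang and Canonne et al.: construct $\mathcal{H}_0$ as the single Gaussian $\mathcal{N}(0, I)$ and $\mathcal{H}_1$ as a mixture $\mathbb{E}_{v}[\mathcal{N}(\alpha v, I)]$ over a carefully chosen random direction $v$ (natural choices: $v$ uniform on the sphere, or $v$ uniform on a scaled Boolean cube $\{\pm 1/\sqrt d\}^d$). For a non-private tester, distinguishing these requires $N = \Omega(\sqrt d/\alpha^2)$ because the per-sample $\chi^2$ divergence between the mixture and $\mathcal{N}(0,I)$ is $\Theta(\alpha^4/d)$ (one needs the contributions from pairs of samples, which is the source of the $d$ in the denominator appearing with a square root non-privately but differently once privacy is in play). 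The private lower bound then comes from combining this with a privacy-aware ``moment'' argument: following the recipe of Canonne et al., one bounds not just $\TV$ but a quantity controlling how much an $\eps$-DP algorithm's output distribution can differ between the two hypotheses, which yields a bound of the form $N \gtrsim$ (non-private-type term involving the relevant moment) raised to an appropriate power, times $\eps^{-2/3}$. Concretely, I expect to show that an $\eps$-DP distinguisher forces $N \cdot \eps \cdot (\text{per-sample discrepancy})^{1/2} \gtrsim 1$ together with $N \cdot (\text{per-sample discrepancy}) \gtrsim 1$ where the per-sample discrepancy scales like $\alpha^2/\sqrt d$ in the relevant regime; balancing these reproduces $N = \Omega(d^{1/3}/(\alpha^{4/3}\eps^{2/3}))$.

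For the technical engine on the middle term I would invoke the general framework relating private testing lower bounds to a ``coupling'' between the two hypotheses: if there is a coupling of $N$-sample datasets from $\mathcal{H}_0$ and (a mixture over) $\mathcal{H}_1$ such that the expected Hamming distance between the coupled datasets is small, say at most $k$, then no $(\eps,0)$-DP algorithm can distinguish them when $e^{\eps k}$ times the statistical overlap is large — essentially group privacy applied $k$ times. So the plan is: (i) fix the mixture hypothesis $\mathcal{H}_1$; (ii) build an explicit coupling of i.i.d.\ samples from $\mathcal{N}(0,I)$ and samples from $\mathcal{N}(\alpha v, I)$ (for random $v$) by using the maximal coupling of a single pair of Gaussians, so that each coordinate-pair differs with probability $\Theta(\alpha/\sqrt d)$ when $\|v\|=1$, hence in expectation $\Theta(N\alpha/\sqrt d)$ of the $N$ samples differ; (iii) argue that conditioned on the direction $v$ the remaining discrepancy (from samples that do differ) is controlled; and (iv) plug $k \approx N\alpha/\sqrt d$ into the group-privacy bound, forcing $N \alpha/\sqrt d \gtrsim 1/\eps$ for small $N$, and separately $N \alpha^2/\sqrt d \gtrsim 1$ from the non-private part, then optimize. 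The main obstacle, and the step that needs the most care, is making this coupling argument simultaneously handle the randomness over the direction $v$ and the privacy amplification: a naive coupling that moves each sample independently destroys the correlation structure that makes $\mathcal{H}_1$ close to $\mathcal{H}_0$ statistically, so one must couple at the level of the mixture — this is exactly the subtle point where, as the paper notes, the argument of Canonne et al.\ had an inaccuracy, so I would be especially careful to track how the mixture over $v$ interacts with both the Hamming cost of the coupling and the likelihood-ratio bound, and to verify the three regimes (which term dominates) give consistent constraints.
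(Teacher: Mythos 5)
Your treatment of the $\Omega(\sqrt{d}/\alpha^2)$ and $\Omega(1/(\alpha\eps))$ terms matches the paper: the first is the non-private bound, and the second follows from the Acharya--Sun--Zhang coupling criterion applied to $\mathcal{N}(0,I)$ versus $\mathcal{N}(\alpha e_1,I)$, which are at total variation distance $\Theta(\alpha)$. The gap is in the middle term, which is the entire content of the theorem. Your step (ii) claims a coupling of $\mathcal{N}(0,I)^{\otimes N}$ with $\mathcal{N}(\alpha v,I)^{\otimes N}$ in which only $\Theta(N\alpha/\sqrt{d})$ samples differ in expectation. For a fixed unit vector $v$ this is impossible: the minimum expected Hamming distance of any coupling of two product measures is $N$ times the per-sample total variation distance, which here is $\Theta(\alpha)$, not $\Theta(\alpha/\sqrt{d})$. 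Moreover, if such a coupling did exist it would force $N=\Omega(\sqrt{d}/(\alpha\eps))$, which contradicts the paper's own upper bounds, so the claim cannot be repaired by a cleverer per-sample construction. You correctly note at the end that one must couple at the level of the mixture over $v$, but that observation is where the proof has to begin, and your outline supplies no mechanism for it; the two ``balancing'' constraints you write down are also mutually inconsistent (per-sample discrepancy $\alpha/\sqrt{d}$ in one place, $\alpha^2/\sqrt{d}$ in another) and neither produces the $N^{3/2}$ scaling that the correct bound requires.

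The missing idea is a sufficient-statistic reduction to one dimension. Conditioned on the empirical mean $\bar{X}$, the samples have the same conditional law under both hypotheses (Corollary \ref{cor:sufficient_statistic}); writing $X^{(i)} = (a+y^{(i)})\cdot v + z^{(i)}$ with $v=\bar{X}/\|\bar{X}\|$ and $a=\|\bar{X}\|$, the direction $v$ and the orthogonal components $z^{(i)}$ are identically distributed under $\mathcal{U}$ and under the mixture $\mathcal{V}$ (with prior $\mu\sim\mathcal{N}(0,\tfrac{2\alpha^2}{d}I)$ conditioned on $\|\mu\|\ge\alpha$), so all distinguishing information sits in the scalar $a$. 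Via the normal approximation to $\chi_d$, $a$ is approximately $\mathcal{N}(\sqrt{d/N},\Theta(1/N))$ under $\mathcal{U}$ and $\mathcal{N}(\sqrt{d/N+2\alpha^2},\Theta(1/N))$ under $\mathcal{V}$, and the mean shift is $\sqrt{d/N+2\alpha^2}-\sqrt{d/N}=\Theta(\alpha^2\sqrt{N/d})$. Coupling the resulting $N$ one-dimensional samples $\{\sqrt{d/N}+y^{(i)}\}$ and $\{\sqrt{d/N+2\alpha^2}+y^{(i)}\}$ makes each pair differ with probability $\Theta(\alpha^2\sqrt{N/d})$, giving expected Hamming distance $\Theta(\alpha^2N^{3/2}/\sqrt{d})$; requiring this to be at least $c/\eps$ yields $N=\Omega(d^{1/3}/(\alpha^{4/3}\eps^{2/3}))$. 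One must also patch the variance mismatch between $1/(2N)$ and $1/N$ by a constant total-variation argument, and handle the conditioning $\|\mu\|\ge\alpha$ and the degenerate regimes $d=O(1)$ and $d/\alpha^2\le CN$ separately. Without this reduction your outline does not close.
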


This improves over the previous lower bound of $\Omega\left(\frac{d^{1/2}}{\alpha^2} + \frac{1}{\alpha \cdot \eps}\right)$ \cite{CanonneKMUZ20}, which combines the non-private lower bound of $\Omega\left(\frac{d^{1/2}}{\alpha^2}\right)$ \cite{CanonneDKS20} and the private lower bound of $\Omega\left(\frac{1}{\alpha \cdot \eps}\right)$ for testing $1$-dimensional distributions \cite{AcharyaSZ18}. We remark that the previous lower bound was technically shown only for testing Boolean product distributions, but it extends to multivariate Gaussians easily. 

\medskip

Because the algorithm we devise for Theorem \ref{thm:slow_upper} has very slow runtime, a natural question is how many samples are necessary if the algorithm must run in polynomial time in $N$ and $d$. Indeed, we show the following result, that only needs a slightly larger number of samples but runs efficiently.

\begin{theorem}[Efficient Upper Bound] \label{thm:fast_upper}
    Let all notation be as in Theorem \ref{thm:slow_upper}.
    Then, there exists an algorithm that, using 
$$N = \tilde{O}\left(\frac{d^{1/2}}{\alpha^2} + \frac{d^{1/4}}{\alpha \cdot \eps}\right)$$
samples, can $(\eps, 0)$-privately distinguish between $\mathcal{H}_0$ and $\mathcal{H}_1$ in time polynomial in $N$ and $d$.
\end{theorem}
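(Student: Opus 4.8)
\emph{Proof proposal.} By applying the sample-wise affine map $x\mapsto\Sigma^{-1/2}(x-\mu^*)$ --- a fixed invertible transformation, hence harmless both for privacy and for the distinguishing task --- we may assume $\Sigma=I$ and $\mu^*=0$, so $\mathcal H_0=\{\mathcal N(0,I)\}$ and $\mathcal H_1=\{\mathcal N(\mu,I):\|\mu\|\ge\alpha\}$. A short preliminary step disposes of large means: with $\tilde O(\sqrt d/\alpha^2+1/(\alpha\eps))$ samples one can privately test whether $\|\mu\|$ exceeds (say) $1$ --- e.g.\ from a heavily clipped empirical mean, or from a private count of how many samples have norm above $\sqrt d+C\sqrt{\log N}$ --- and reject if so, so we may henceforth assume $\|\mu\|\le 1$. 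Then every sample lies in $B(0,R)$ with $R=\sqrt d+O(\sqrt{\log N})$ with overwhelming probability, and we clip all samples to $B(0,R)$ at negligible cost.

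The core of the algorithm is the degenerate quadratic (``$U$-statistic'') test, privatized via smooth sensitivity. Fix $\tau=\Theta(\sqrt d\cdot\mathrm{polylog}(N))$, let $\phi_\tau$ denote truncation to $[-\tau,\tau]$, and set $W=\sum_{i<j}\phi_\tau\!\big(\langle X^{(i)},X^{(j)}\rangle\big)$. Since $\langle g,h\rangle$ is symmetric with variance $d$ and sub-exponential tails, truncation at $\tau$ costs only constant factors: under $\mathcal H_0$, $\BE[W]=0$ and (cross-covariances vanish because $\phi_\tau$ is odd and a fresh Gaussian is symmetric) $\Var(W)=\Theta(N^2 d)$; under any $\mathcal H_1$-distribution, $\BE[W]=\Theta(N^2\|\mu\|^2)\ge\Omega(N^2\alpha^2)$ with variance of the same order. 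So a threshold on $W$ separates the hypotheses once $N^2\alpha^2\gg N\sqrt d$, i.e.\ $N=\tilde\Omega(\sqrt d/\alpha^2)$ --- the non-private term. The algorithm outputs the sign of $W$ plus noise calibrated to an efficiently computable upper bound on the smooth sensitivity of $W$, relative to that threshold; computing $W$, the sensitivity bound, and the noise each take $\mathrm{poly}(N,d)$ time, so the algorithm is efficient.

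The heart of the argument is bounding the smooth sensitivity $S^*_\eps(D)$ of $W$. Changing one sample changes each term by at most $2\tau$, so the \emph{global} sensitivity is $\Theta(N\tau)$, far too large; but at a dataset whose remaining samples look like fresh Gaussians the \emph{local} sensitivity is only $\Theta(\sqrt{Nd})$, because the per-$j$ increments $\phi_\tau(\langle X'^{(i)},X^{(j)}\rangle)-\phi_\tau(\langle X^{(i)},X^{(j)}\rangle)$ have mean zero over a symmetric $X^{(j)}$ and magnitude $O(\tau)$, so their sum over $j$ concentrates at $O(\sqrt{Nd})$; making $m$ of the samples atypical inflates this only to $O(\sqrt{Nd}+m\sqrt d)$. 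Feeding $\mathrm{LS}(D')\lesssim\sqrt{Nd}+d(D,D')\cdot\sqrt d$ into $S^*_\eps(D)=\max_{D'}\mathrm{LS}(D')e^{-\eps\,d(D,D')}$ gives $S^*_\eps(D)=\tilde O(\sqrt{Nd}+\sqrt d/\eps)$, so the noise has magnitude $\tilde O(\sqrt{Nd}/\eps+\sqrt d/\eps^2)$; demanding this be dominated by the separation $\Theta(N^2\alpha^2)$ yields $N=\tilde\Omega(d^{1/3}/(\alpha^{4/3}\eps^{2/3})+d^{1/4}/(\alpha\eps))$, and since $d^{1/3}/(\alpha^{4/3}\eps^{2/3})=O(\sqrt d/\alpha^2+d^{1/4}/(\alpha\eps))$ for all valid $\alpha,\eps$, this collapses to the claimed bound.

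The step I expect to be the main obstacle is exactly this last one: establishing the clean estimate $\mathrm{LS}(D')\lesssim\sqrt{Nd}+d(D,D')\cdot\sqrt d$ uniformly over datasets and adversarial single-sample perturbations, and then packaging it into a provably valid, efficiently computable upper bound on $S^*_\eps$ (in the Nissim--Raskhodnikova--Smith style) that is tight enough to lose only the stated factors. One must also check that per-term truncation at $\tau$ shrinks neither the signal nor the variance by more than constants, that the large-mean preliminary and the clipping cost nothing extra, and --- since the smooth-sensitivity mechanism natively yields $(\eps,\delta)$-DP (or $(\eps,0)$-DP with heavy-tailed noise) --- close the gap to a genuine $(\eps,0)$-DP tester, using if necessary the equivalence of $(\eps,\delta)$- and $(\eps+\delta,0)$-DP for hypothesis testing with $\delta$ negligible.
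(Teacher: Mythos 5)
There is a genuine gap, and it sits exactly where you predicted: the local sensitivity estimate $\mathrm{LS}(D')\lesssim\sqrt{Nd}+d(D,D')\cdot\sqrt d$ is false, and with it the whole smooth-sensitivity route collapses. Local sensitivity is a worst case over the \emph{replacement} sample, which the adversary may choose after seeing the dataset; your "mean zero over a symmetric $X^{(j)}$" cancellation only holds for a replacement chosen independently of the data. Concretely, at a perfectly typical $\mathcal H_0$ dataset the row sums $\langle \bar X, X^{(j)}\rangle$ are all $\approx d>0$, so replacing one sample by $X'^{(i)} = c\,\bar X/\|\bar X\|$ with $c=\tilde\Theta(\sqrt N)$ makes $\langle X'^{(i)},X^{(j)}\rangle > \tau$ for \emph{every} $j$ simultaneously; all $N$ truncated terms in that row saturate at $+\tau$, and $W$ changes by $\Theta(N\tau)=\tilde\Theta(N\sqrt d)$. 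Thus $\mathrm{LS}(D)$ at typical datasets already equals the global sensitivity up to logarithms, $S^*_\eps(D)\ge \mathrm{LS}(D)=\tilde\Theta(N\sqrt d)$, the required noise is $\tilde\Theta(N\sqrt d/\eps)$, and comparing with the signal $N^2\alpha^2$ forces $N=\tilde\Omega(\sqrt d/(\alpha^2\eps))$ --- no better than the prior work you are trying to beat. Entry-wise truncation at $\tau$ simply cannot prevent a single adversarial row from pushing all $N$ of its (truncated) entries in the same direction.

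This failure mode is precisely what the paper's construction is built to defeat, and it does so by controlling \emph{row and column sums} rather than individual entries: the statistic applies a fold-back function $g$ (which decreases past the threshold, so that an adversarial row's own contribution cancels the $+1$-per-column drift it induces) to each row and column sum of the normalized Gram matrix, and couples this with a cascade of clipped threshold statistics $F_k$ that privately rejects any dataset in which too many row/column sums are large --- the one regime where the cancellation in $g$ could fail. If you want to salvage your write-up you would have to replace the entry-truncated $U$-statistic with something that bounds the sensitivity of the row/column aggregates themselves; at that point you are essentially reconstructing the paper's argument rather than giving an alternative to it. (Your reduction to $\Sigma=I$, $\mu^*=0$, the non-private separation $N=\tilde\Omega(\sqrt d/\alpha^2)$, and the final $(\eps,\delta)\to(\eps,0)$ conversion are all fine.)
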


This improves over the previous best polynomial-time algorithm of \cite{CanonneKMUZ20}, which required $\tilde{O}\left(\frac{d^{1/2}}{\alpha^2} + \frac{d^{1/2}}{\alpha \cdot \eps}\right)$ samples. This algorithm matches even the optimal \emph{non-private} algorithm as long as $\frac{\alpha}{d^{1/4}} \le \eps$, and strictly improves over the previous best efficient algorithm whenever $\eps < \alpha$.
It also strictly improves even over the previous best \emph{inefficient} private algorithm if $\frac{1}{\sqrt{d}} < \eps < \alpha^2$. In addition, if we consider the dependence on $d$ as the bottleneck and only consider the terms dependent on $\sqrt{d}$, the number of samples needed is only roughly $\sqrt{d} \cdot \frac{1}{\alpha^2},$ matching the optimal \emph{non-private} sample complexity! In contrast, \cite{CanonneKMUZ20} required roughly $\sqrt{d} \cdot \left(\frac{1}{\alpha^2}+\frac{1}{\alpha \cdot \eps}\right)$ for efficient algorithms, and $\sqrt{d} \cdot \left(\frac{1}{\alpha^2}+\frac{1}{\alpha \cdot \sqrt{\eps}}\right)$ for inefficient algorithms.

\paragraph{Generalizations to other distributions:}
Our results above generalize to related hypothesis testing problems, such as hypothesis testing for Gaussians with unknown covariance, hypothesis testing for Boolean Product distributions, and tolerant identity testing. We describe the results informally here, and provide more formal statements in the Appendix.

First, we show that our results on privately testing multivariate Gaussians with known covariance can be extended to Gaussians with unknown but bounded covariance. One caveat is that we are no longer able to distinguish between $\mu = \mu^*$ and $\mu, \mu^*$ being far in Mahalanobis distance, as the Mahalanobis distance depends on the unknown matrix $\Sigma$. Instead, we distinguish between $\mu = \mu^*$ and $\mu, \mu^*$ being far in $\ell_2$ distance, if we are promised that $\Sigma$ has bounded spectral norm.

\begin{theorem}[Bounded but Unknown Covariance, Informal] \label{thm:cov_unknown}
    Let $\mathcal{H}_0$ consist of $\mathcal{N}(\mu^*, \Sigma)$ over all covariance matrices with bounded spectral norm $\|\Sigma\|_2 \le 1$, and $\mathcal{H}_1$ consist of $\mathcal{N}(\mu, \Sigma)$ over all covariance matrices $\|\Sigma\|_2 \le 1$ and $\mu: \|\mu-\mu^*\| \ge \alpha$. Then, to distinguish between $\mathcal{H}_0$ and $\mathcal{H}_1$, the same upper and lower bounds as in Theorems \ref{thm:slow_upper}, \ref{thm:lower}, and \ref{thm:fast_upper} hold.
\end{theorem}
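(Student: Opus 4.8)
The plan is to treat the lower bound and the upper bounds separately, reducing each to the known-covariance results. For the lower bound, note that $\|I\|_2 = 1$, so the hard instance of Theorem~\ref{thm:lower} with $\Sigma = I$ --- where $\mathcal{H}_0 = \{\mathcal{N}(\mu^*, I)\}$, $\mathcal{H}_1 = \{\mathcal{N}(\mu, I) : \|\mu - \mu^*\| \ge \alpha\}$, and the Mahalanobis distance coincides with the $\ell_2$ distance --- is literally a sub-problem of the one in Theorem~\ref{thm:cov_unknown}: both hypotheses here are contained in the corresponding hypotheses with unknown covariance. Any $(0,\eps)$-private tester for the unknown-covariance problem is therefore also one for this $\Sigma = I$ instance (privacy does not depend on the hypotheses), and the bound $N = \Omega(d^{1/2}/\alpha^2 + d^{1/3}/(\alpha^{4/3}\eps^{2/3}) + 1/(\alpha\eps))$ transfers verbatim.

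For the upper bounds, the key point is that the \emph{distance promise only becomes stronger}: if $\|\Sigma\|_2 \le 1$ then $\Sigma \preceq I$, hence $\Sigma^{-1} \succeq I$, so any $\mu$ with $\|\mu - \mu^*\| \ge \alpha$ also satisfies $(\mu-\mu^*)^T \Sigma^{-1}(\mu-\mu^*) \ge \|\mu-\mu^*\|^2 \ge \alpha^2$, i.e.\ $d_\Sigma(\mu,\mu^*) \ge \alpha$. Were $\Sigma$ known, this would be exactly an instance of Theorem~\ref{thm:slow_upper} (resp.\ Theorem~\ref{thm:fast_upper}). Since $\Sigma$ is unknown, I would run the corresponding algorithm \emph{as though the covariance were $I$} --- in particular, skipping the preprocessing step that whitens the data by $\Sigma^{-1/2}$ --- and argue that it still succeeds whenever the true covariance satisfies $\Sigma \preceq I$. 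This rests on monotonicity of every relevant quantity in the PSD order: writing $X = \mu + \Sigma^{1/2} Z$ with $Z \sim \mathcal{N}(0, I)$ exhibits $\mathcal{N}(\mu, \Sigma)$ as the image of $\mathcal{N}(\mu, I)$ under a contraction fixing $\mu$, so (i) under $\mathcal{H}_0$ the test statistic fluctuates no more than in the isotropic case --- by Gaussian Poincaré together with $\Tr(\Sigma^2) \le \|\Sigma\|_2 \cdot \Tr(\Sigma) \le d$ --- uniformly over $\Sigma \preceq I$, and (ii) under $\mathcal{H}_1$ the signal $\|\mu-\mu^*\|^2 \ge \alpha^2$ is unchanged while the fluctuation terms, controlled by $\Tr(\Sigma^2) \le d$ and $(\mu-\mu^*)^T \Sigma (\mu-\mu^*) \le \|\mu-\mu^*\|^2$, only shrink. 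The regime of large $\|\mu-\mu^*\|$ under $\mathcal{H}_1$ needs no new idea: the known-covariance analysis already had to cope with arbitrarily large $\|\Sigma^{-1/2}(\mu-\mu^*)\|$ after whitening, and $\Sigma \preceq I$ only concentrates the relevant samples further around $\mu^*$. Privacy is immediate, since the clipping radii and noise scales of the algorithms of Theorems~\ref{thm:slow_upper} and~\ref{thm:fast_upper} do not depend on the data distribution, so dropping the whitening step preserves $(\eps,0)$-DP exactly.

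The step I expect to be the main obstacle is the bookkeeping in the previous paragraph: one must re-verify, for the specific estimators and clipping schemes used inside the algorithms of Theorems~\ref{thm:slow_upper} and~\ref{thm:fast_upper}, that \emph{every} concentration bound invoked in the known-covariance proofs degrades monotonically as $\Sigma$ shrinks below $I$ in the PSD order, including the bias introduced by clipping the samples to a ball of radius $R = \tilde\Theta(\sqrt{d})$ about $\mu^*$, which must be shown negligible uniformly over all $\Sigma \preceq I$. Once this is established the sample-complexity expressions carry over unchanged, matching the lower bound from the first paragraph and completing the proof of all three claims in Theorem~\ref{thm:cov_unknown}.
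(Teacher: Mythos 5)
Your lower-bound reduction is correct and is exactly what the paper does: the $\Sigma = I$ instance of Theorem \ref{thm:lower} is contained in the unknown-covariance problem, so the bound transfers. The upper-bound argument, however, has a genuine gap. The algorithms of Theorems \ref{thm:slow_upper} and \ref{thm:fast_upper} threshold the statistic $\|\bar{X}\|^2 - Nd$ (equivalently, they subtract $d$ from each diagonal entry of the Gram matrix $\textbf{T}$), and this centering is only correct when $\BE\|\bar{X}\|^2 = Nd + N^2\|\mu\|^2$, i.e.\ when $\Tr(\Sigma) = d$. For general $\Sigma \preceq I$ one has $\BE\|\bar{X}\|^2 = N\Tr(\Sigma) + N^2\|\mu\|^2$, so running the identity-covariance algorithm unchanged introduces a systematic bias of $-N\bigl(d - \Tr(\Sigma)\bigr)$, which is \emph{not} a fluctuation term and is not controlled by your PSD-monotonicity argument. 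Concretely, take $\Sigma = \tfrac{1}{2}I$ and $\|\mu - \mu^*\| = \alpha$: the bias is $-Nd/2$, while the signal is $N^2\alpha^2 \approx N\sqrt{d}$ at the optimal sample size $N \approx \sqrt{d}/\alpha^2$, so the statistic lands far below the acceptance threshold and the tester outputs $0$ on this $\mathcal{H}_1$ instance. Your claim that ``under $\mathcal{H}_1$ the signal is unchanged while the fluctuation terms only shrink'' is true of the variance but misses this mean shift, which is one-sided in the wrong direction (it only helps under $\mathcal{H}_0$, and kills power under $\mathcal{H}_1$). Since $\Tr(\Sigma)$ is unknown and ranges over $[0,d]$, no fixed re-thresholding repairs this.

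The paper's fix is to estimate the unknown trace from the data: using triples of samples, it forms $W^{(i)} = (Y^{(i)} - Z^{(i)})/\sqrt{2} \sim \mathcal{N}(0,\Sigma)$ regardless of $\mu$, and works with the differenced Gram matrix $U_{i,j} = \langle X^{(i)}, X^{(j)}\rangle - \langle W^{(i)}, W^{(j)}\rangle$, whose entrywise and row/column concentration matches the identity-covariance case (via Theorem \ref{thm:concentration_2}) but whose total sum is $N^2\|\mu\|^2 \pm O(N\sqrt{d} + \alpha N\sqrt{N})$ with the $N\Tr(\Sigma)$ term cancelled. Adding $d$ back to the diagonal then reproduces exactly the hypotheses of the known-covariance analyses, at a cost of a factor of $3$ in samples. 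If you want to complete your write-up you should incorporate some such trace-cancellation (or private trace-estimation) step; the rest of your outline, including the observation that privacy is unaffected, then goes through.
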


Next, we show that our results on testing multivariate Gaussians also extend to identity testing for ``balanced'' Boolean product distributions over $\{-1, 1\}^d$, i.e., distributions where each coordinate is independent (but not necessarily identically distributed). Namely, we can privately test whether a product distribution is some fixed $\mathcal{P}^*$ or has total variation distance far from $\mathcal{P}^*$, as long as the expectation of $\mathcal{P}^*$ is between $-1/2$ and $1/2$ in each coordinate\footnote{$-1/2$ and $1/2$ can be replaced by any constants bounded away from $-1$ and $1$.}.

\begin{theorem}[Product Distributions, Informal] \label{thm:prod}
    Fix $\mu^* \in [-1/2, 1/2]^d$, and suppose that $\mathcal{H}_0$ consists only of the product distribution $\mathcal{P}^*$ over $\{-1, 1\}^d$ with mean $\mu^*$, and $\mathcal{H}_1$ consists of all product distributions $\mathcal{P}$ over $\{-1, 1\}^d$ such that $\TV(\mathcal{P}, \mathcal{P}^*) \ge \alpha$. Then, to distinguish between $\mathcal{H}_0$ and $\mathcal{H}_1$, the same upper and lower bounds as in Theorems \ref{thm:slow_upper}, \ref{thm:lower}, and \ref{thm:fast_upper} all hold for any $\mu^* \in [-1/2, 1/2]^d$.
\end{theorem}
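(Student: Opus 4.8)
The plan is not to invoke Theorems~\ref{thm:slow_upper}, \ref{thm:lower}, and \ref{thm:fast_upper} as black boxes, but to re-run the same statistic-plus-private-threshold arguments with Gaussian samples replaced by samples from the product distribution, checking that every estimate used there has a counterpart in the Boolean setting. The starting observation is that a product distribution $\mathcal{P}$ over $\{-1,1\}^d$ with mean vector $\mu = \BE_{X \sim \mathcal{P}}[X]$ has covariance $\Sigma_{\mathcal{P}} = \mathrm{diag}(1 - \mu_i^2) \preceq I$, and in particular the reference distribution $\mathcal{P}^*$ has covariance $\Sigma^* = \mathrm{diag}(1 - \mu_i^{*2})$ whose eigenvalues all lie in $[\tfrac34, 1]$ because $\mu^* \in [-\tfrac12, \tfrac12]^d$. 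Thus, exactly as in the Gaussian case, there is a fixed, well-conditioned reference covariance relative to which the test statistic can be defined, and the entire analysis will be carried out in the Mahalanobis geometry of $\Sigma^*$.

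For the distance reduction: by tensorization of squared Hellinger distance over coordinates and the elementary bound $d_H^2(\mathrm{Ber}(p), \mathrm{Ber}(q)) = O((p - q)^2)$, valid whenever $q$ is bounded away from $0$ and $1$, we get $\TV(\mathcal{P}, \mathcal{P}^*) \le \sqrt{2}\, d_H(\mathcal{P}, \mathcal{P}^*) = O(\|\mu - \mu^*\|_2) \le O(d_{\Sigma^*}(\mu, \mu^*))$, so every $\mathcal{P} \in \mathcal{H}_1$ satisfies $d_{\Sigma^*}(\mu, \mu^*) = \Omega(\alpha)$, while under $\mathcal{H}_0$ the mean is exactly $\mu^*$. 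Hence the task is precisely ``test $\mu = \mu^*$ versus $d_{\Sigma^*}(\mu, \mu^*) = \Omega(\alpha)$'', the same formulation handled for Gaussians, and I would plug in the identical statistics (the chi-squared-type statistic for Theorem~\ref{thm:slow_upper} and its efficiently computable variant for Theorem~\ref{thm:fast_upper}), normalized by the known matrix $\Sigma^*$.

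For correctness I would reverify the mean and variance bounds of these statistics under $\mathcal{H}_0$ and $\mathcal{H}_1$. The Gaussian analysis uses only independence of coordinates, the known reference $(\mu^*, \Sigma^*)$, and concentration of linear and quadratic forms in the samples. In the Boolean setting each coordinate is a mean-$\mu_i$ variable supported in $[-1,1]$, hence $O(1)$-subgaussian with variance $1 - \mu_i^2 \le 1$; moreover under $\mathcal{H}_1$ the true covariance satisfies $\Sigma_{\mathcal{P}} \preceq I \preceq \tfrac43 \Sigma^*$, so the statistic's variance is at most a constant factor larger than the worst case allowed by the Gaussian bound, and under $\mathcal{H}_0$ the covariance is exactly $\Sigma^*$ so completeness is identical. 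Reproducing the relevant moment computations and the Hanson--Wright / Bernstein-type tail bounds for bounded independent variables is the one place the argument needs genuine (if routine) care, and is the step I expect to be the main technical obstacle.

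Privacy is in fact easier than for Gaussians: the data lives in the bounded set $\{-1,1\}^d$, so changing one sample perturbs each linear or bilinear statistic by an amount bounded outright, with none of the truncation needed to tame unbounded Gaussian samples; calibrating the same Laplace noise (or using the same private-threshold mechanism) to this sensitivity gives $(\eps,0)$-DP with the stated parameters. Finally, the lower bound requires nothing new: as noted in the text, the hard family used for Theorem~\ref{thm:lower} --- a reference distribution together with $\pm\Theta(\alpha/\sqrt d)$-perturbations of its mean along random sign patterns --- is already realized by product distributions over $\{-1,1\}^d$ (take $\mu^* = 0$), and the Hellinger/TV computations and the privacy-aware packing argument go through verbatim. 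Combining these pieces yields all three bounds claimed in Theorem~\ref{thm:prod}.
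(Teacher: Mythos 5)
Your upper-bound plan is essentially the paper's: keep the same quadratic statistics and re-verify the concentration bounds when the coordinates are bounded independent $\pm 1$ variables rather than Gaussians (the paper does exactly this in Theorem \ref{thm:concentration}, which is stated simultaneously for $\mathcal{N}(\mu,I)$ and $\mathcal{P}(\mu)$). One cosmetic difference: instead of carrying the reference covariance $\Sigma^*=\mathrm{diag}(1-\mu_i^{*2})$ through the analysis, the paper first applies a randomized per-coordinate re-randomization (Lemma \ref{lem:wlog_product_mean_0}) mapping $\mathcal{P}(\mu)$ to $\mathcal{P}(\frac{\mu-\mu^*}{2})$, so that WLOG $\mu^*=0$ and $\|X\|^2=d$ exactly; your direct Mahalanobis normalization would also work since $\Sigma^*$ is well-conditioned, but the re-centering is cleaner.

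However, two of your claims are wrong in ways that matter. First, privacy is \emph{not} easier because the data is bounded. Changing one sample $X^{(i)}\in\{-1,1\}^d$ changes the row and column $i$ of the Gram matrix, and the worst-case change in $\|\bar X\|^2=\sum_{i,j}\langle X^{(i)},X^{(j)}\rangle$ is $\Theta(Nd)$ (each of the $2N$ affected inner products can move by $\Theta(d)$). Calibrating Laplace noise to that sensitivity forces $N\gtrsim d/(\alpha^2\eps)$, far worse than the claimed bounds. The truncation/Lipschitz-extension machinery in Theorems \ref{thm:slow_upper} and \ref{thm:fast_upper} is not there to tame unbounded Gaussian samples; it is there to replace the worst-case sensitivity by the typical-case bound $\tilde O(\sqrt{Nd}+\alpha N)$ on row/column sums of $\textbf{T}$, and it is just as indispensable for Boolean data. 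Your proposal must still invoke the restricted-set Lipschitz extension (inefficient case) and the $f_k,g$ clipping construction of Theorem \ref{thm:matrix} (efficient case), with the product-distribution concentration bounds certifying that honest data lands in the good set.

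Second, the lower bound does not ``go through verbatim.'' The paper's proof of Theorem \ref{thm:lower} is a coupling argument (via Theorem \ref{thm:lb_emd}), not a packing argument, and its coupling construction relies essentially on rotational symmetry of Gaussians: the decomposition $X^{(i)}=(a+y^{(i)})v+z^{(i)}$, the sufficiency of the empirical mean, and the convergence of $\|\bar X\|\sim\chi_d/\sqrt N$ to a Normal. None of these steps is available for $\{-1,1\}^d$-valued samples, and the hard instance is a Gaussian prior on $\mu$, not random sign patterns. The route the paper takes, and the one you should take, is the reduction of \cite[Theorem 3.1]{CanonneKMUZ20}: a tester for (balanced) product distributions yields a tester for identity of $\mathcal{N}(\mu,I)$, so the Gaussian lower bound of Theorem \ref{thm:lower} transfers to products for free. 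If you insist on a direct product-distribution coupling you would need a genuinely new construction, and nothing in your sketch supplies it.
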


Theorem \ref{thm:prod} implies bounds for private \emph{uniformity} testing of Boolean product distributions, since the uniform distribution over $\{-1, 1\}^d$ is a product distribution with mean $\textbf{0}$.
Theorem \ref{thm:prod} improves over both the previous best upper bound of $\tilde{O}\left(\frac{d^{1/2}}{\alpha^2} + \frac{d^{1/2}}{\alpha \eps^{1/2}} + \frac{d^{1/3}}{\alpha^{4/3} \eps^{2/3}} + \frac{1}{\alpha \eps}\right)$ for inefficient algorithms and $\tilde{O}\left(\frac{d^{1/2}}{\alpha^2} + \frac{d^{1/2}}{\alpha \eps}\right)$ for efficient algorithms \cite{CanonneKMUZ20}. In addition, it also improves over the best lower bound of $\Omega\left(\frac{d^{1/2}}{\alpha^2} + \frac{1}{\alpha \eps}\right)$ \cite{CanonneKMUZ20}. Note that the previous best upper and lower bounds for privately testing balanced Boolean product distributions and multivariate Gaussian distributions match, as do our bounds.

We remark that identity testing of ``unbalanced'' Boolean product distributions, i.e., where $\mu^*$ is not promised to be in $[-1/2, 1/2]^d$, is not always achievable with the same number of samples. Indeed, \cite{CanonneKMUZ20} showed that if the null hypothesis distribution $\mathcal{P}^*$ is sufficiently unbalanced, there is a sample complexity lower bound of $\Omega\left(\frac{d^{1/2}}{\alpha^2} + \frac{d^{1/2}}{\alpha \sqrt{\eps}} + \frac{d^{1/3}}{\alpha^{4/3} \eps^{2/3}} + \frac{1}{\alpha \eps}\right)$.


\medskip

Finally, we show that our results on private identity testing of Gaussians and Boolean product distributions extend to private \emph{tolerant} identity testing. In this setting, we allow for some slack in the null hypothesis, and have to distinguish between the mean $\mu$ being far from $\mu^*$ versus \emph{close} to $\mu^*$, as opposed to just equaling $\mu^*$. Tolerant testing is useful as it provides meaningful guarantees even if the underlying distribution is very close to, but does not perfectly satisfy, the null distribution.

\begin{theorem}[Tolerant Hypothesis Testing, Informal] \label{thm:tolerant}
    Theorems \ref{thm:slow_upper}, \ref{thm:lower}, and \ref{thm:fast_upper} all hold if we replace the null hypothesis $\mathcal{H}_0$ with all distributions $\mathcal{N}(\mu, \Sigma)$ such that $\sqrt{(\mu-\mu^*)^T \Sigma^{-1} (\mu-\mu^*)} \le \frac{\alpha}{2}$. 
    Likewise, for any $\mu^* \in [-1/2, 1/2]^d$ and for $\mathcal{P}^*$ the product distribution with mean $\mu^*$, Theorem \ref{thm:prod} holds even if $\mathcal{H}_0$ consists of all product distributions $\mathcal{P}$ with $\TV(\mathcal{P}, \mathcal{P}^*) \le \frac{\alpha}{C}$ and $\mathcal{H}_1$ consists of product distributions $\mathcal{P}$ with $\TV(\mathcal{P}, \mathcal{P}^*) \ge \alpha$, for a sufficiently large constant $C$. 
\end{theorem}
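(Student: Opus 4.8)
The plan is to reduce the tolerant problem to the non-tolerant one by observing that all of the algorithms underlying Theorems~\ref{thm:slow_upper} and~\ref{thm:fast_upper} are \emph{threshold tests} on a statistic $Z$ that estimates (a monotone proxy for) the squared Mahalanobis distance $d_\Sigma(\mu,\mu^*)^2$, and that replacing the point null $\{\mathcal{N}(\mu^*,\Sigma)\}$ by the enlarged null $\{d_\Sigma(\mu,\mu^*)\le\alpha/2\}$ only shrinks the separation between the two hypotheses by a constant factor. The lower bound of Theorem~\ref{thm:lower} is immediate in the tolerant setting: the singleton $\{\mathcal{N}(\mu^*,\Sigma)\}$ is contained in the tolerant $\mathcal{H}_0$, so any algorithm that $(0,\eps)$-privately distinguishes the tolerant $\mathcal{H}_0$ from $\mathcal{H}_1$ in particular distinguishes the original $\mathcal{H}_0$ from $\mathcal{H}_1$, and hence inherits the $\Omega(\cdot)$ bound verbatim; the same containment argument handles the Boolean product case, since $\mathcal{P}^*$ lies in $\{\TV(\mathcal{P},\mathcal{P}^*)\le\alpha/C\}$.

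For the upper bounds, recall the structure of the algorithms: given $N$ samples one forms a statistic $Z$ (a bounded-degree polynomial of the data in the Gaussian case, a suitable bilinear/collision statistic in the product case) together with an $\ell_1$-sensitivity bound $\Delta$ over neighboring datasets, outputs $Z + \mathrm{Lap}(\Delta/\eps)$ (or the analogous mechanism), and compares the result to a threshold $\tau$. The choice of $N$ in Theorems~\ref{thm:slow_upper} and~\ref{thm:fast_upper} guarantees that when $d_\Sigma(\mu,\mu^*)=0$ the noisy statistic lies below $\tau$ with probability $\ge 2/3$, and that when $d_\Sigma(\mu,\mu^*)\ge\alpha$ it lies above $\tau$ with probability $\ge 2/3$. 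I would keep the same mechanism and the same $N$, and simply move the threshold to (the proxy value corresponding to) $d_\Sigma(\mu,\mu^*)^2=\alpha^2/2$, which sits strictly between $(\alpha/2)^2=\alpha^2/4$ and $\alpha^2$. Privacy is untouched because $\Delta$ is a worst-case quantity over neighboring datasets and does not depend on the underlying distribution. For soundness under the enlarged null one only needs that $Z$ still concentrates: since every $\mathcal{D}\in\mathcal{H}_0$ has $d_\Sigma(\mu,\mu^*)\le\alpha/2\le\alpha\le\tfrac12$, the mean shift is no larger (up to a factor of $2$) than the largest shift already handled on the $\mathcal{H}_1$ side, so the variance and higher-moment bounds used in the original analysis degrade by at most constant factors, which are absorbed into the $\tilde O(\cdot)$.

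The Boolean product case is identical in spirit, with $d_\Sigma(\mu,\mu^*)$ replaced by the chi-square-type proxy $\sqrt{\sum_i (\mu_i-\mu_i^*)^2/(1-(\mu_i^*)^2)}$ that the algorithm of Theorem~\ref{thm:prod} implicitly estimates and that sandwiches $\TV(\mathcal{P},\mathcal{P}^*)$ up to constants in the balanced regime. Here I would choose the constant $C$ large enough for two reasons: first, so that $\TV(\mathcal{P},\mathcal{P}^*)\le\alpha/C$ still leaves a constant multiplicative gap between the proxy values of the two hypotheses after the $\TV$-to-proxy conversion; and second, so that $|\mu_i-\mu_i^*|=O(\alpha/C)$ keeps each coordinate mean of any $\mathcal{D}\in\mathcal{H}_0$ inside, say, $[-0.6,0.6]$, i.e.\ the enlarged null is still ``balanced'' and the estimator's concentration guarantees apply.

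The main obstacle, and the only part that is not pure bookkeeping, is verifying that the concentration of the test statistics, which in the cited analyses was often argued specifically at $\mu=\mu^*$ (resp.\ $\mathcal{P}=\mathcal{P}^*$), survives a mean drift of Mahalanobis length up to $\alpha/2$ with only a constant-factor loss. In the known-covariance Gaussian case this is clean: one whitens to $\Sigma=I$, and the statistic becomes a polynomial of bounded degree in i.i.d.\ Gaussians whose moments are continuous in the (bounded) mean, so the required tail bounds change only by constants. In the product case one must re-examine the second-moment computation for the collision/bilinear statistic over the slightly enlarged balanced window; this is where I expect to spend the most effort, but it is a routine variance estimate and introduces no new phenomena.
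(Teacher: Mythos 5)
Your proposal is correct and follows essentially the same route as the paper: the lower bound via containment of the point null in the tolerant null, and the upper bounds by reusing the same mechanisms and sample complexity with a threshold sitting between $\alpha^2N^2/4$ and $\alpha^2N^2$, noting that privacy is a worst-case property and that the concentration bounds only degrade by constants under the enlarged null (in the paper this last point is automatic, since the concentration statements were already proved uniformly over all $\|\mu\|\le 2\alpha$ to handle the alternative hypothesis). The one cosmetic difference is in the product case, where the paper first applies a randomized recentering to reduce to $\mu^*=0$ and then works with $\|\mu\|_2$ directly rather than with a $\chi^2$-type proxy, but this does not change the substance of the argument.
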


One surprising consequence of our algorithms is that the number of samples we require for private identity testing of $d$-dimensional Gaussians and private uniformity testing of $d$-dimensional Boolean product distributions is in fact \emph{smaller than} the number of samples needed for private uniformity testing of a discrete distribution over just $d$ elements. Indeed, private uniformity testing of a discrete distribution requires $\Theta\left(\frac{\sqrt{d}}{\alpha^2} + \frac{\sqrt{d}}{\alpha \sqrt{\eps}} + \frac{d^{1/3}}{\alpha^{4/3} \eps^{2/3}} + \frac{1}{\alpha \eps}\right)$ samples~\cite{AcharyaSZ18}, whereas for $d$-dimensional product distributions and Gaussians, we are able to remove the dependence on $\frac{\sqrt{d}}{\alpha \sqrt{\eps}}$. Hence, we refute a conjecture of \cite{CanonneKMUZ20}, which postulates that private uniformity testing of discrete distributions over $[d]$ and of product distributions over $\{-1, 1\}^d$ have asymptotically equivalent sample complexities.

\subsection{Related Work}

Ignoring privacy constraints, hypothesis testing from a statistical point of view dates back nearly a century, notably to \cite{NeymanP33}. Hypothesis testing has become a popular area of study in theoretical computer science more recently (where it is also called \emph{distribution testing}), starting with \cite{GoldreichR00, BatuFRSW00}, and with a large body of subsequent literature over the past two decades (see, for instance, \cite{CanonneSurvey} for a survey of the distribution testing field). While much of the work has focused on discrete or univariate distributions, there has recently been significant work in the multivariate setting as well. This work on multivariate hypothesis testing has come both from statistical \cite{Hotelling31, SrivastavaD08, ChenQ10, CaiM13, JavanmardM14, RamdasISW16} and computational \cite{AlonAKMRX07, RubinfeldX10, AcharyaDK15, DaskalakisP17, AcharyaBDK18, DaskalakisDK18, GheissariLP18, BezakovaBCSV20, CanonneDKS20, DiakonikolasK21} perspectives.

\emph{Differentially private} hypothesis testing began with work by \cite{VuS09, UhlerSF13}, and has seen significant work during the past decade. Apart from the work by \cite{CanonneKMUZ20}, which this paper primarily improves over, perhaps the work most closely related to ours is that of \cite{CaiDK17, AliakbarpourDR18, AcharyaSZ18}, which study the problem of private identity testing (as well as closeness testing) of discrete distributions. Various other problems in private hypothesis testing have also been studied, including testing simple hypotheses \cite{CummingsKMTZ18, CanonneKMSU19}, selection from a discrete set of multiple hypotheses \cite{BunKSW19}, goodness-of-fit and independence testing \cite{WangLK15, GaboardiLRV16, RogersK17, KakizakiFS17, AliakbarpourDKR19}, ANOVA testing \cite{CampbellBRG18, SwanbergGGRGB19}, and nonparametric hypothesis testing \cite{CouchKSBG19}. Finally, hypothesis testing has also been studied with respect to local differential privacy \cite{DuchiJW13, GaboardiR18, Sheffet18, AcharyaCFT19, GopiKKNWZ20, LamWeilLL22}.

Apart from private \emph{hypothesis testing} of high-dimensional multivariate distributions, there has also been been work on private \emph{learning} of Gaussians and multivariate distributions \cite{KarwaV18, KamathLSU19, KamathSSU19, BiswasDKU20, KamathSU20, CaiWZ21, WangX21}. Notably, however, learning a distribution requires a linear dependence on the dimension $d$, whereas hypothesis testing only requires a square-root dependence on the dimension. 

\subsection{Roadmap}

In Section \ref{sec:overview}, we give a technical outline for Theorems \ref{thm:slow_upper} through \ref{thm:tolerant}. All formal proofs are deferred to the appendix. In Appendix \ref{sec:prelim}, we define notation and prove some preliminary results. In Appendix \ref{sec:concentration}, we prove several important concentration bounds. In Appendix \ref{sec:slow}, we prove Theorem \ref{thm:slow_upper}. In Appendix \ref{sec:lower}, we prove Theorem \ref{thm:lower}. In Appendix \ref{sec:fast}, we prove Theorem \ref{thm:fast_upper}. Finally, in Appendix \ref{sec:generalization}, we prove Theorems \ref{thm:cov_unknown}, \ref{thm:prod}, and \ref{thm:tolerant}.
    
\section{Technical Overview} \label{sec:overview}

In this section, we provide an outline for each of the theorems we prove. We first describe the non-private algorithm for testing the mean of a known-covariance Gaussian. We then outline Theorems \ref{thm:fast_upper}, \ref{thm:slow_upper}, and \ref{thm:lower} (in that order), and then outline how we can generalize these results to prove Theorems \ref{thm:cov_unknown}, \ref{thm:prod}, and \ref{thm:tolerant}. We view Theorems \ref{thm:fast_upper} and \ref{thm:lower} as our most interesting results from a technical perspective.

If the covariance $\Sigma$ is known, we may scale and shift so that WLOG $\mu^* = \textbf{0}$ and $\Sigma = I$ is the identity covariance matrix in $d$ dimensions. So, our goal is to determine whether $\mu = 0$ or $\|\mu\| \ge \alpha$. For simplicity, we will assume that the alternative hypothesis is $\|\mu\| = \alpha$ as opposed to $\|\mu\| \ge \alpha$.

\paragraph{Non-Private Hypothesis Testing:}
The optimal non-private algorithm \cite{SrivastavaD08,CanonneDKS20} is incredibly simple: given $N$ samples $X^{(1)}, \dots, X^{(N)} \in \BR^d,$ it just takes the sum of the samples, $\bar{X} = X^{(1)} + \cdots + X^{(N)}$, and computes the statistic $T = \|\bar{X}\|^2.$ Based on how large $T$ is, the algorithm decides whether $\mu = 0$ or $\|\mu\| \ge \alpha$. It is simple to show that if each $X^{(i)}$ is drawn i.i.d. from $\mathcal{N}(0, I)$, then $\BE[T] = N \cdot d$ and $\Var[T] = O(N^2 \cdot d)$. Conversely, if each $X^{(i)}$ is drawn i.i.d. from $\mathcal{N}(\mu, I)$ where $\|\mu\| = \alpha,$ then $\BE[T] = N \cdot d + \alpha^2 \cdot N^2$ and $\Var[T]= O(N^2 \cdot d + N^3 \cdot \alpha^2)$.

For the statistic $T$ to successfully distinguish between the two hypotheses, by Chebyshev's inequality, the square of difference in the means must significantly exceed the variances. As the difference in means is $\alpha^2 \cdot N^2$ and the variances are $O(N^2 \cdot d + N^3 \cdot \alpha^2)$, it suffices to choose $N$ so that $(\alpha^2 \cdot N^2)^2 \ge \Omega(N^2 \cdot d + N^3 \cdot \alpha^2)$. This is equivalent to $N \ge \Omega(\sqrt{d}/\alpha^2)$, so for $N \ge C \sqrt{d}/\alpha^2$ for a sufficiently large constant $C$, we will be able to distinguish between $\mu = 0$ and $\|\mu\| = \alpha$.

\paragraph{Theorem \ref{thm:fast_upper}:}
We note that our algorithm, while somewhat based on the non-private hypothesis testing result, deviates significantly from \cite{CanonneKMUZ20} and other work on private hypothesis testing. For that reason, we do not describe the previous techniques of \cite{CanonneKMUZ20}.

Before returning to the private setting, we note that $\|\bar{X}\|^2$, where $\bar{X} = X^{(1)}+\cdots+X^{(N)}$, can be rewritten as $\sum_{i = 1}^{N} \sum_{j = 1}^{N} \langle X^{(i)}, X^{(j)} \rangle$. Because of this, if we write $T_{i, j} = \langle X^{(i)}, X^{(j)} \rangle$, the non-private algorithm can be rephrased as outputting $0$ (null hypothesis) if $\sum_{i, j} T_{i, j} < N \cdot d + \frac{\alpha^2 N^2}{2}$ and $1$ (alternative hypothesis) if $\sum_{i, j} T_{i, j} > N \cdot d + \frac{\alpha^2 N^2}{2}.$

This motivates our private algorithm, which will attempt to compute $T = \sum_{i = 1}^{N} \sum_{j=1}^{N} T_{i, j}$ privately. However, we note that when a single data sample $X^{(i)}$ changes, this affects $T_{i, j}$ and $T_{j, i}$ for all $j$. In other words, instead of preserving privacy when a single entry in the matrix $\{T_{i, j}\}$ changes, we need to preserve privacy when an entire row and column in the matrix changes.

We will modify the matrix $\textbf{T}$ consisting of all the entries $T_{i, j}$, by subtracting $d$ from each diagonal entry and then dividing the matrix by $\tilde{O}(\sqrt{d})$ to get a new matrix $\textbf{V}$.
By applying classic concentration bounds, we can show that, assuming the original datapoints $X^{(1)}, \dots, X^{(N)}$ are drawn i.i.d. from some $\mathcal{N}(\mu, I)$, with $\|\mu\| \le 1$, then each entry in $\textbf{V}$ is bounded in the range $[-1, 1]$ and each row/column sum of $\textbf{V}$ is bounded in magnitude by $\sqrt{N}.$ Our goal will roughly be to distinguish between the sum of all the entries in $\textbf{V}$ being either in the range $[-N, N]$ or $[\gamma N^2 - N, \gamma N^2 + N]$, for $\gamma \approx \frac{\alpha^2}{\sqrt{d}}$.
In addition, we wish to perform this \emph{privately}, where we consider two matrices $\textbf{V}, \textbf{V}' \in \BR^{N \times N}$ to be adjacent if they differ only in a single row or a single column.

If we may restrict ourselves to matrices where every row and column sum is at most $\sqrt{N}$ in absolute value, the algorithm is quite simple. Let $\bar{V} := \sum_{i = 1}^{N} \sum_{j = 1}^{N} V_{i, j}$. By our restriction, $\bar{V}$ cannot change by more than $2\sqrt{N}$ if we only change a single row or column, so the statistic $\bar{V} + Lap(\eps^{-1} \cdot 2 \sqrt{N})$ is $(\eps, 0)$-differentially private. This means that as long as $N, \eps^{-1} \sqrt{N} \ll \gamma N^2$, our algorithm will be accurate. This is equivalent to $N \ge \Omega\left(\frac{d^{1/2}}{\alpha^2} + \frac{d^{1/3}}{\alpha^{4/3} \eps^{2/3}}\right)$. The problem, however, is that we want differential privacy for arbitrary adjacent datasets $X^{(1)}, \dots, X^{(N)}$, for which we may not have the $\sqrt{N}$ bound on the row and column sums of the corresponding matrix $V$. Even if each entry is bounded in the range $[-1, 1]$, we can increase $\bar{V}$ by $N$ in the worst case.

A common approach to fixing this is to ``clip'' each data point, i.e., replace a data point $x$ with $\max(x_{min}, \min(x_{max}, x))$ to keep $x$ in the range $[x_{min}, x_{max}]$. This technique has been used in statistical data analysis since the early 20th century.
In our case, a natural first attempt to clip the row and column sums. Indeed, we can rewrite $\bar{V}= \frac{1}{2}\left[\sum_{i = 1}^{N} \sum_{j = 1}^{N} V_{i, j} + \sum_{j = 1}^{N} \sum_{i = 1}^{N} V_{i, j}\right],$ i.e., $\bar{V}$ is simply the average of the sum over all row sums and the sum over all column sums. Since we want each row sum and column sum to be bounded by the range $[-\sqrt{N}, \sqrt{N}],$ we can consider replacing our statistic $\bar{V}$ with
\begin{equation} \label{eq:Winsorized_statistic}
    G(\textbf{V}) := \frac{1}{2}\left[\sum_{i = 1}^{N} g\left(\sum_{j = 1}^{N} V_{i, j}\right) + \sum_{j = 1}^{N} g\left(\sum_{i = 1}^{N} V_{i, j}\right)\right],
\end{equation}
where $g(x) := \min(\sqrt{N}, \max(-\sqrt{N}, x))$ prevents each row/column sum from exceeding $\sqrt{N}$ in absolute value. However, this still runs into the same problem as before. If we alter a row of $V$ by increasing each entry in the row by $1$, while the modified row sum does not increase by more than $\sqrt{N}$ now, we still have that each column sum could potentially increase by $1$, causing an overall increase by $N$.

To fix this, we consider an even stronger version of clipping, where after $x$ exceeds $\sqrt{N}$, the function starts going back down again. Specifically, we instead consider the function
\[g(x) := \begin{cases} x & |x| \le \sqrt{N} \\ 2\sqrt{N}-x & x \ge \sqrt{N} \\ -2\sqrt{N}-x & x \le -\sqrt{N} \end{cases}.\]
Note that while $g(x) = x$ in the range $[-\sqrt{N}, \sqrt{N}]$, for general $x$ we have that $g(x) \in [-x - O(\sqrt{N}), -x + O(\sqrt{N})]$.
Replacing $g$ in our Equation \eqref{eq:Winsorized_statistic} with the new function $g$, we now see what happens when we change a single row. If we increase every element in a row by $1$, each column sum increases by $1$, from which we would ideally hope that that $g$ applied to each column sum increases by $1$. Conversely, for the row that we update, we use the fact that $g(x) \in [-x - O(\sqrt{N}), -x + O(\sqrt{N})]$ to say that in fact $g$ applied to the row \emph{decreases} by roughly $N-O(\sqrt{N})$. So, the overall change in the statistic $G(\textbf{V})$ is ideally $O(\sqrt{N})$, because the increase of each column sum by $1$ cancels out with the decrease in $g$ applied to the row sum.

The problem with this, however, is that if a column sum exceeds $\sqrt{N}$ in absolute value, $g$ applied to that column sum goes down instead. This is not an issue even if up to $\sqrt{N}$ column sums exceed $\sqrt{N}$ in absolute value, as even in this case, we have that $\sqrt{N}$ column sums are decreasing by $1$ instead of increasing by $1$, so overall the statistic $G(\textbf{V})$ still does not change by more than $O(\sqrt{N})$. To fix this, we propose another private algorithm which detects and throws out matrices if too many row and column sums exceed $\sqrt{N}$ in absolute value. We remark that if $V$ comes from $X^{(1)}, \dots, X^{(N)}$ drawn from either the null or alternative hypothesis, with high probability no row or column sum will exceed $\sqrt{N}$ in absolute value, so we do not sacrifice accuracy with this algorithm.

We will consider a new threshold function $f(x) = \max(0, \min(\frac{|x|}{\sqrt{N}}-1, 1))$: this function takes $\frac{|x|}{\sqrt{N}}-1$ and clips it to keep it in the range $[0, 1]$. Suppose we apply $f$ to each row and column sum of $V$, i.e., we consider the statistic
\[F(\textbf{V}) := \frac{1}{2}\left[\sum_{i = 1}^{N} f\left(\sum_{j = 1}^{N} V_{i, j}\right) + \sum_{j = 1}^{N} f\left(\sum_{i = 1}^{N} V_{i, j}\right)\right].\]
If $X^{(1)}, \dots, X^{(N)}$ actually came from the distribution, then $f$ applied to each row and column is $0$ with very high probability, as no row or column sum exceeds $\sqrt{N}$ in absolute value. In addition, because $f$ is capped by $0$ and $1$, $f$ applied to a row sum doesn't change by more than $1$ if we change the entire row. In addition, each column sum does not change by more than $1$, so $f$ does not change by more than $1/\sqrt{N}$ for each column sum. So, $F(\textbf{V})$ does not change by more than $\sqrt{N}$ for adjacent datasets. In addition, if more than $2\sqrt{N}$ row/column sums of $V$ exceed $2\sqrt{N}$ in absolute value, then $F(\textbf{V}) \ge \sqrt{N}$.
It will be quite simple to utilize a Laplace Mechanism to privately reject any datasets with $F(\textbf{V})$ exceeding $2\eps^{-1} \sqrt{N}$, which is an $\eps^{-1}$ factor greater than desired. However, with this weaker bound we can ensure $G(\textbf{V})$ does not change by more than $\eps^{-1} \sqrt{N}$ if a single row/column changes, which can be used to obtain a $\tilde{O}\left(\frac{d^{1/2}}{\alpha^2} + \frac{d^{1/3}}{\alpha^{4/3} \eps^{4/3}}\right)$-sample upper bound. This is already an improvement in many regimes.

To improve upon this, we create a series of logarithmic threshold functions $f_k$ for $1 \le k \le O(\log N)$, which have increasing thresholds. This will allow us to create statistics $F_k(\textbf{V})$ for each $k$, similar to $F(\textbf{V})$.
Roughly, we will show that in order for $F_{k+1}$ to change significantly if we change a single row/column, we require $F_k$ to be large. From here, we can show that the relative change in $F_{k+1}$ is much smaller than the relative change in $F_k$, unless $F_k$ is sufficiently large that we could use a Laplace Mechanism to reject such a dataset. We can use these to privately reject any $\textbf{X}$ with $F_{O(\log N)}(\textbf{V}) \gg \tilde{O}(\sqrt{N})$, which will allow for a better sample complexity bound for $N$.

\paragraph{Theorem \ref{thm:slow_upper}:} The proof of Theorem \ref{thm:slow_upper} is based on the corresponding result in \cite{CanonneKMUZ20}: while their result is not fully accurate, we show how to simultaneously fix their result and improve upon it.

We first sketch the ideas behind the computationally inefficient algorithm of \cite{CanonneKMUZ20}. The objective in \cite{CanonneKMUZ20} is to create a map $\hat{T}$ that sends any dataset $\textbf{X} \in (\BR^d)^N$ to $\BR$ with two properties. The first property is that for any two adjacent datasets $\textbf{X}, \textbf{X}'$, $\hat{T}(\textbf{X})$ and $\hat{T}(\textbf{X}')$ are ``close'' in value. The second is that $\hat{T}(\textbf{X})$ should almost always be ``small'' if $\textbf{X}$ is a sample of $N$ i.i.d. $\mathcal{N}(0, I)$ values, and $\hat{T}(\textbf{X})$ should almost always be ``large'' if $\textbf{X}$ is a sample of $N$ i.i.d. $\mathcal{N}(\mu, I)$ values, for any $\mu$ with $\|\mu\| = \alpha$. (``Close'', ``small'', and ''large'' can be effectively quantified.) By adding Laplace noise to $\hat{T}(\textbf{X})$ and determine if the output exceeds a certain threshold, one can privately distinguish between $\mu = 0$ and $\|\mu\| = \alpha$.

In the case where the data points $X^{(1)}, \dots, X^{(N)} \overset{i.i.d.}{\sim} \mathcal{N}(\mu, I)$, we have strong concentration of the row and column sums of the corresponding matrix $\textbf{T} \in \BR^{N \times N}$, due to the independence of the data points. (Recall that $\textbf{T}$ is the matrix with $T_{i,j} = \langle X^{(i)}, X^{(j)} \rangle$.) If we restrict ourselves exclusively to such datasets where the row and column sums of $\textbf{T}$ are properly bounded (call this set $\mathcal{C}$), we can successfully obtain that any two adjacent datasets $\textbf{X}, \textbf{X}'$ have relatively close values of $T(\textbf{X}) := \|\sum X^{(i)}\|^2$ and $T(\textbf{X}') = \|\sum X'^{(i)}\|^2$. Hence, $\hat{T}(\textbf{X}) = T(\textbf{X})$ would actually be a suitable choice if we could restrict ourselves to $\mathcal{C}$.

\cite{CanonneKMUZ20} combines this observation with a theorem about Lipschitz extensions \cite{McShane}. The theorem by \cite{McShane} states that if there exists a function $T: \mathcal{C} \subset \mathcal{X}$ where $\mathcal{X}$ is equipped with some metric (in our case $\mathcal{X} = (\BR^d)^N$ and the metric measures the number of different data points), and $T$ is $D$-Lipschitz for some $D$, then there exists an extension $\hat{T}: \mathcal{X} \to \BR$ that is also $D$-Lipschitz. Expressed more simply in our setting, if we can ensure that $|T(\textbf{X})-T(\textbf{X}')| \le K \cdot D$ for $\textbf{X}, \textbf{X}' \in \mathcal{C}$ that differ in exactly $K$ data points, then we can extend the function $T$ to some $\hat{T}$ which ensures that $\hat{T}$ does not change by more than $K$ on adjacent datasets. The smaller we can make $D$ as a function of $N$, the smaller our sample complexity needs to be.

The main issue in \cite{CanonneKMUZ20} is that they only prove that $|T(\textbf{X})- T(\textbf{X}')| \le D$ for adjacent datasets $\textbf{X}, \textbf{X}' \in \mathcal{C}$. While this ostensibly ensures that $|T(\textbf{X})- T(\textbf{X}')| \le K \cdot D$ for datasets $\textbf{X}, \textbf{X}' \in \mathcal{C}$ that differ in at most $K$ data points, this does not actually hold. For instance, if we change $K$ of the data points in $\textbf{X}$ to make $\textbf{X}'$, the intermediate datasets (obtained by changing the data points one at a time) may not be in $\mathcal{C}$. As a result, we in fact must prove that for all integers $K$, $|T(\textbf{X})-T(\textbf{X}')| \le K \cdot D$ for $\textbf{X}, \textbf{X}' \in \mathcal{C}$ that differ in exactly $K$ data points. To do this, we further restrict the class $\mathcal{C} \subset (\BR^d)^N$, by showing a tight concentration of the norm of $\sum_{i \in S} X^{(i)}$ for all subsets $S \subset [N]$ of size $K$ simultaneously, assuming each $X^{(i)}$ was drawn i.i.d. from some $\mathcal{N}(\mu, I)$, and restricting $\mathcal{C}$ to datasets that satisfy these tight concentration bounds. We then prove that for the restricted set $\mathcal{C}$, we obtain our desired Lipschitz property. Our concentration analysis is tighter than that of \cite{CanonneKMUZ20}, which provides a smaller Lipschitz parameter $D$, and therefore we obtain a reduced value of $N$ as well. Hence, we are able to fix their inaccuracy as well as improve upon their result.

\paragraph{Theorem \ref{thm:lower}:} Our starting point for the lower bound is a theorem of \cite{AcharyaSZ18}, which relates a \emph{coupling} of two distributions $\mathcal{U}$ and $\mathcal{V}$ over $\mathcal{X}^N$ with privacy lower bounds. A coupling of $\mathcal{U}$ and $\mathcal{V}$ is a joint distribution over $(\textbf{X}, \textbf{X}') \sim \mathcal{X}^N \times \mathcal{X}^N$ where the marginal of $\textbf{X}$ is $\mathcal{U}$ and the marginal of $\textbf{X}'$ is $\mathcal{V}$. Specifically, they prove that if there exists a coupling over $\textbf{X} = (X^{(1)}, \dots, X^{(N)})$ and $\textbf{X}' = (X'^{(1)}, \dots, X'^{(N)})$ where the expected number of $i \le N$ such that $X^{(i)} \neq X'^{(i)}$ is at most $O(1/\eps)$, then it is impossible to $(0, \eps)$-privately distinguish between $\mathcal{U}$ and $\mathcal{V}$. This method has been used to provide privacy lower bounds in the discrete distribution setting \cite{AcharyaSZ18}.

As noted by \cite{CanonneKMUZ20}, proving lower bounds for multivariate Gaussians is much more challenging than for discrete distributions, as the coupled distributions $\mathcal{U}, \mathcal{V}$ must be generated as Gaussians with identity covariance, which will usually need strong independence guarantees in each coordinate. In contrast, proving similar lower bounds for distributions over a discrete domain $\{1, 2, \dots, d\}$ do not require us to prove any independence guarantees.

Hence, to apply this result in our setting, two things are necessary. First, we need to decide the distributions $\mathcal{U}$ and $\mathcal{V}$. Next, we need to establish a suitable coupling.
The choice for $\mathcal{U}$ is simple: it will just be the distribution over $(\BR^d)^N$ where each sample is i.i.d. $\mathcal{N}(0, I)$. For $\mathcal{V}$, we wish to find some distribution of mean vectors $\mu$ with $\|\mu\|\ge \alpha$, and then sample $N$ points from $\mathcal{N}(\mu, I)$. The distribution for the mean vectors we choose will roughly be $\mu \sim \mathcal{N}(0, \frac{\alpha^2}{d} \cdot I)$. While this does not ensure that $\|\mu\| \ge \alpha$, a simple concentration inequality ensures that $\|\mu\| \ge \Omega(\alpha)$ with overwhelming probability, which will end up being sufficient.

Next, how do we establish a coupling between $\mathcal{U}$ and $\mathcal{V}$? The first trick we use is to rewrite the distributions $\mathcal{U}, \mathcal{V}$ based on the mean vector $\bar{X} = \frac{X^{(1)}+\cdots+X^{(N)}}{N}$ of the $N$ points in $\mathcal{U}$ (or $\mathcal{V}$), which we will write as $a \cdot v$, where $a := \|\bar{X}\| \in \BR_{\ge 0}$ and $v := \bar{X}/\|\bar{X}\|$ is a unit vector. Given $\bar{X}$, we consider each vector $X^{(i)} = \bar{X} + y^{(i)} \cdot v + z^{(i)} = (a+y^{(i)}) \cdot v + z^{(i)}$, where $y^{(i)} \in \BR$ and $z^{(i)} \in \BR^d$ is orthogonal to $v$. This means that we decompose $X^{(i)}-\bar{X}$ into a component in the $v$ direction and a component in the hyperplane orthogonal to $v$. (See Figure \ref{fig:decomposition} for an example of this decomposition.) The advantage of this is that one can show that for both the distribution $\mathcal{U}$ and $\mathcal{V}$, the distribution of the vector $v$ is uniform across the unit sphere, and the distribution of $z^{(i)}$ are the same for both $\mathcal{U}$ and $\mathcal{V}$. 

\begin{figure}
    \centering
\begin{tikzpicture}[scale=2]

    \coordinate (A1) at (-0.8,-0.87);
    \coordinate (A2) at (1.6,0.39);
    \coordinate (A3) at (0.73,-1.1);
    \coordinate (A4) at (0.68,0.02);
    \coordinate (A5) at (1.0,0.9);
    \coordinate (A6) at (-0.6,0.22);
    \coordinate (A7) at (0.43,0.6);
    \coordinate (A8) at (0.77,-0.76);
    \coordinate (A9) at (1.1,-0.98);
    \coordinate (A10) at (0.0,-2.0);
    
    \coordinate (A) at (0.48, -0.36);
    \coordinate (O) at (0, 0);

    \node at (A1) {\textbullet};
    \node at (A2) {\textbullet};
    \node at (A3) {\textbullet};
    \node at (A4) {\textbullet};
    \node at (A5) {\textbullet};
    \node at (A6) {\textbullet};
    \node at (A7) {\textbullet};
    \node at (A8) {\textbullet};
    \node at (A9) {\textbullet};
    \node at (A10) {\textbullet};

    \node [red] at (A) {\textbullet};
    
    \draw [red, ->, dashed, thick] (O) -- (0.44, -0.33);
    \draw [blue, ->, dashed, thick] (0.52, -0.39) -- (0.96, -0.72);
    \draw [green, ->, dashed, thick] (0.96, -0.72) -- (0.03, -1.96);
    
    \node[rotate=-37, scale=0.8] at (0.27,-0.08) {$a \cdot v$};
    \node[rotate=-37, scale=0.8] at (0.75,-0.44) {$y^{(i)} \cdot v$};
    \node[rotate=53, scale=0.8] at (0.35,-1.32) {$z^{(i)}$};
    
    \node at (0.46, -0.5) {$\bar{X}$};
    \node at (0.0, -2.15) {$X^{(i)}$};
\end{tikzpicture}

    \caption{In this figure, we have a series of black points $X^{(1)}, \dots, X^{(N)}$ and a red point representing their mean $\bar{X}$. We decompose an individual $X^{(i)}$ as $(a + y^{(i)}) \cdot v + z^{(i)}$, such that $a \cdot v$ equals $\bar{X}$, $y^{(i)} \cdot v$ goes in the same direction as $\bar{X}$ from the origin, and $z^{(i)}$ is perpendicular to $v$. Our reduction ignores the direction of $v$ and the orthogonal component $z^{(i)}$, and focuses on the quantities $a+y^{(i)}$.}
    \label{fig:decomposition}
\end{figure}
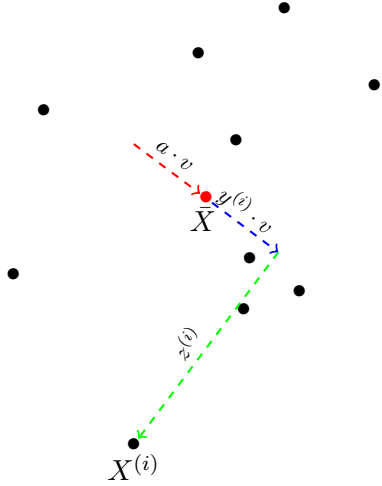

Hence, we can reduce our coupling problem to a single dimension, by considering the distributions $\{a+y^{(i)}\}_{i = 1}^{N}$ and $\{b + y^{(i)}\}_{i = 1}^{N},$ where $a = \|\bar{X}\|$ for $\textbf{X} \sim \mathcal{U}$ and $b = \|\bar{X}\|$ for $\textbf{X} \sim \mathcal{V}.$ 
Note that $a, b$ are also random variables.
One can show that $a$ and $b$ converge (in total variation distance) to roughly $\mathcal{N}(\sqrt{\frac{d}{N}}, \frac{1}{2N})$ and $\mathcal{N}(\sqrt{\frac{d}{N}+\alpha^2}, \frac{1}{2N}),$ respectively. To explain the intuition behind the Normality, since $\bar{X}$ is a spherical Gaussian, $a^2$ behaves exactly like a $\chi^2$-random variable with $d$ degrees of freedom, as the mean of the distribution $\mathcal{U}$ is the origin. Likewise, $b^2$ behaves like an off-centered $\chi^2$-random variable with $d$ degrees of freedom, as the mean of the distribution $\mathcal{V}$ is not the origin but is somewhat close. Therefore, $a$ behaves like a $\chi$ random variable and $b$ behaves like an off-centered $\chi$ random variable, and as $d$ grows, this converges to Normal. (We remark that the rate of convergence of $a, b$ to Normal will not matter: it will just matter that for sufficiently large $d$, the total variation distances between $a, b$ and their respective Normal distributions is less than, say, $0.01$.)
In addition, one can show that $\{y^{(i)}\}$ are distributed as i.i.d. standard Normals minus their mean, which has distribution $\mathcal{N}(0, \frac{1}{N})$. Indeed, one can use this observation to show that if $a, b$ were approximately Normal with variance $\frac{1}{N}$ instead of $\frac{1}{2N},$ then $\{a+y^{(i)}\}_{i = 1}^{N}$ would converge in total variation distance to the distribution of $N$ i.i.d. samples from $\mathcal{N}(\sqrt{\frac{d}{N}}, 1)$, and similarly, $\{b+y^{(i)}\}_{i = 1}^{N}$ would converge in total variation distance to the distribution of $N$ i.i.d. samples from $\mathcal{N}(\sqrt{\frac{d}{N}+2 \alpha^2}, 1)$.

To fix the issue that $a, b$ have the wrong variance, we use the fact that the total variation distance between $\mathcal{N}(0, \frac{1}{N})$ and $\mathcal{N}(0, \frac{1}{2N})$ is less than $0.2$. We use this to show that if could privately distinguish between $\{a+y^{(i)}\}_{i = 1}^{N}$ and $\{b + y^{(i)}\}_{i = 1}^{N}$ with $0.9$ probability, then we could distinguish between $N$ i.i.d. samples from $\mathcal{N}(\sqrt{\frac{d}{N}}, 1)$ and from $\mathcal{N}(\sqrt{\frac{d}{N}+2 \alpha^2}, 1)$ with a weaker $0.7$ probability. Now that we have reduced it to univariate and independent samples, we are in a position to create a coupling, which will indeed get us the correct lower bound.

\subsection{Generalizations}

\paragraph{Theorem \ref{thm:cov_unknown}:} In the case where the samples come from $\mathcal{N}(\mu, \Sigma)$ where $\Sigma$ is unknown, we run into the problem that $T := \|X^{(1)}+\cdots+X^{(N)}\|^2$ is no longer concentrated around $N d$ (if $\mu = 0$) or $Nd + \alpha^2 N^2$ (if $\|\mu\| = \alpha$), since $\Sigma$ is not necessarily the identity matrix. Instead, we will show that $T$ is concentrated around $N \cdot J$ (if $\mu = 0$) or $N \cdot J + \alpha^2 N^2$ (if $\|\mu\| = \alpha$), where $J = Tr(\Sigma)$ is unknown. Because $J$ is unknown, we cannot attempt to directly privately estimate $T$ and accept or reject the hypothesis based on whether $T$ exceeds a threshold.

Instead, we will use the fact that given samples from $\mathcal{N}(\mu, \Sigma)$, we can generate samples from $\mathcal{N}(0, \Sigma)$. This is because if $X, Y \overset{i.i.d.}{\sim} \mathcal{N}(\mu, \Sigma)$ for any $\mu \in \BR^d$, then $\frac{X-Y}{\sqrt{2}} \sim \mathcal{N}(0, \Sigma)$. Therefore, we can use this to privately estimate $J$ and privately estimate $T$, and then accept or reject based on whether our estimate for $T$ significantly exceeds our estimate for $N \cdot J$. Indeed, we can estimate both $T$ and $J$ similarly with the same sample complexity (up to constant factors) as in Theorem \ref{thm:slow_upper} and Theorem \ref{thm:fast_upper}, using the fact that $\Sigma$ is spectrally bounded by $I$. Since estimating $T$ and $J$ are sufficient for the testing problem, we can therefore generalize both Theorems \ref{thm:slow_upper} and \ref{thm:fast_upper}.

\paragraph{Theorem \ref{thm:prod}:} This case will be almost identical to the Gaussian case. For simplicity, we consider testing whether a product distribution $\mathcal{P}(\mu)$ has mean $\mu = 0$ or mean $\mu$ with $\|\mu\|_2 = \alpha$. (Indeed, the $\ell_2$ norm of $\mu$ is asymptotically equal to the total variation distance between $\mathcal{P}(\mu)$ and $\mathcal{P}(0)$.)
For the upper bound, if we let $T_{i, j} := \langle X^{(i)}, X^{(j)}\rangle,$ we are able to obtain the same concentration bounds as in the Gaussian case for each entry $T_{i, j}$ and the sum of each row/column of the matrix $\textbf{T}$, which is required in proving Theorem \ref{thm:fast_upper}. In addition, we are able to obtain the same concentration bounds as in the Gaussian case for the sum $\sum_{i \in S} X^{(i)}$ for all subsets $S \subset [N]$. From here, the rest of the upper bounds proceed in the same way as in the Gaussian case.

For the lower bound, there is a known reduction \cite{CanonneKMUZ20} from private identity testing of Gaussians to private uniformity testing of Boolean products, i.e., any upper bound for uniformity testing of Boolean products also holds for Gaussians. Therefore, our lower bound for identity testing of Gaussians implies a lower bound for uniformity testing of Boolean products.

\paragraph{Theorem \ref{thm:tolerant}:} While in many scenarios tolerant identity testing is more difficult than standard identity testing, it will not be so in our case. For instance, suppose we wished to distinguish between i.i.d. samples from $\mathcal{N}(\mu, I)$, where either $\|\mu\| = \frac{\alpha}{2}$ or $\|\mu\| = \alpha$. In the non-private setting, we can use the fact that the statistic $T$ has expectation $N \cdot d +\frac{\alpha^2}{4} \cdot N^2$ in the former setting, and expectation $N \cdot d +\alpha^2 \cdot N^2$ in the latter setting. In other words, the difference between the mean of the respective statistics is still $\Omega(\alpha^2 \cdot N^2)$. Because of this, we can still do non-private hypothesis testing using the same number of samples.
The private setting will work similarly, as our goal has been to output a private version of the statistic $T$ (or some scaled version of it like $\bar{V}$). Hence, we can apply the same algorithms as before, but change the threshold value accordingly, and still require the same number of samples up to an asymptotic factor.

\section*{Acknowledgments}

The author thanks Piotr Indyk for helpful feedback on this paper, as well as Cl\'{e}ment Canonne for helpful discussions regarding the paper \cite{CanonneKMUZ20}.

\newcommand{\etalchar}[1]{$^{#1}$}

\appendix

\section{Preliminaries} \label{sec:prelim}

\subsection{Notation}

We first note several pieces of notation that we use in our proofs.
\begin{itemize}
    \item We use $\textbf{X} = (X^{(1)}, \dots, X^{(N)})$ to represent a dataset of $N$ points in $\BR^d$. We use $\bar{X}$ to denote either the sum or average of the data points in $\textbf{X}$, depending on context.
    \item For two datasets $\textbf{X}, \textbf{X}'$, we use $\rho(\textbf{X}, \textbf{X}')$ to represent the number of data points $X^{(i)} \neq X'^{(i)}$.
    \item We use $\mathcal{N}(\mu, \Sigma)$ to represent the $d$-dimensional Multivariate Gaussian distribution with mean vector $\mu$ and covariance matrix $\Sigma$. We use $I$ to represent the $d \times d$ identity matrix.
    \item We use $\mathcal{P}(\mu)$ to represent the Boolean product distribution over $\{-1, 1\}^d$ with expectation $\mu$.
    \item We use $\|\cdot\|$ to represent the Euclidean ($\ell_2$) norm of a vector, and $\|\cdot\|_2$ to represent the spectral (operator) norm of a symmetric matrix.
    \item We use $a = \pm b$ to mean $-b \le a \le b$, and $a = b \pm c$ to mean $b-c \le a \le b+c$. 
    \item We use a lazy $\tilde{O}$ notation where $\tilde{O}(f) = f \cdot (\log N)^{O(1)}$ for general $f$, even if $f$ is sub-polynomial in $N$. Note that $N$ will always be at least $\sqrt{d}$, $\alpha^{-1}$, and $\eps^{-1}$, so any logarithmic factors in $d, \alpha^{-1}$, and $\eps^{-1}$ can be incorporated into the $\tilde{O}$ notation.
    \item We use $\TV(\mathcal{D}, \mathcal{D}')$ to represent the total variation distance between two distributions $\mathcal{D}$ and $\mathcal{D}'$ over the same domain.
\end{itemize}

Finally, we will always assume that our privacy parameter $\eps$ and error parameter $\alpha$ are at most a small constant (such as $1/3$).

\subsection{Preliminary Propositions} \label{subsec:basic_prop}

We define a \emph{subgaussian} random variable and recall some important properties of it.

\begin{definition}
    A $1$-dimensional random variable $X$ with mean $\mu$ is \emph{subgaussian} if there exists a constant $C$ such that $\BP(|X-\mu| \ge t) \le 2e^{-t^2/C^2}$ for all $t \ge 0$. We define the \emph{subgaussian norm} of such a variable $X$ to be the infimum over all $C$ such that $\BP(|X-\mu| \ge t) \le 2e^{-t^2/C^2}$ for all $t \ge 0$.
\end{definition}

We note the following standard propositions about subgaussian random variables.

\begin{proposition} \label{prop:normal_subgaussian}
    The univariate Normal distribution $\mathcal{N}(\mu, \sigma^2)$ has subgaussian norm $\sigma$.
\end{proposition}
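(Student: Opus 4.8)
The plan is to reduce to the standard normal and then invoke the classical Gaussian tail estimate. Writing $X = \mu + \sigma Z$ with $Z \sim \mathcal{N}(0,1)$, the centered variable is $X - \mu = \sigma Z$, so the statement scales out of the $\sigma = 1$ case: it suffices to control $\BP(|Z| \ge s)$ and then substitute $s = t/\sigma$.

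For the upper bound on the subgaussian norm I would use the Chernoff / Laplace-transform bound: for $\lambda \ge 0$, $\BP(Z \ge s) \le e^{-\lambda s}\,\BE[e^{\lambda Z}] = e^{\lambda^2/2 - \lambda s}$, and optimizing $\lambda = s$ gives $\BP(Z \ge s) \le e^{-s^2/2}$ for all $s \ge 0$. By symmetry, $\BP(|X - \mu| \ge t) = 2\,\BP(Z \ge t/\sigma) \le 2 e^{-t^2/(2\sigma^2)}$, which is precisely the defining inequality $\BP(|X-\mu| \ge t) \le 2 e^{-t^2/C^2}$ with $C$ proportional to $\sigma$; hence the subgaussian norm is $O(\sigma)$. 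If one wants the sharpest leading factor, one can instead establish $\BP(|Z| \ge s) \le e^{-s^2/2}$ directly by a one-variable monotonicity argument on $s \mapsto e^{-s^2/2} - \BP(|Z| \ge s)$ (it vanishes at $0$ and at $\infty$ and has a single interior critical point, so it is nonnegative).

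For the matching lower bound I would invoke the precise Gaussian tail asymptotics $\BP(Z \ge s) = (1+o(1))\,\frac{1}{s\sqrt{2\pi}}\,e^{-s^2/2}$ as $s \to \infty$: if $C$ were small enough that $1/C^2 > 1/(2\sigma^2)$, then comparing the two exponentials shows $\BP(|X - \mu| \ge t) > 2e^{-t^2/C^2}$ for all sufficiently large $t$, contradicting the definition. Together these identify the critical value of $C$ and give the claimed $\Theta(\sigma)$ dependence.

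I do not expect a genuine obstacle here: every step is elementary and the fact is entirely standard (it is normally just cited). The only point needing care is the exact numerical constant — the crude Chernoff bound yields a subgaussian parameter merely proportional to $\sigma$, so pinning the normalization to $\sigma$ exactly requires the refined tail inequality above (or the corresponding constant in the definition).
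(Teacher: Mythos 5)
The paper gives no proof of this proposition---it is stated as a standard fact---so there is nothing to compare against; your argument is correct and is the standard one (reduce to $Z \sim \mathcal{N}(0,1)$, Chernoff for the upper bound, the Gaussian tail asymptotic for the lower bound). One point worth making precise: your own lower-bound argument shows that under the paper's literal definition the infimal $C$ is $\sqrt{2}\,\sigma$ rather than $\sigma$, and the refined inequality $\BP(|Z| \ge s) \le e^{-s^2/2}$ only improves the prefactor from $2$ to $1$, not the exponent, so it does not ``pin the normalization to $\sigma$'' as you suggest. The proposition as stated is thus true only up to this universal $\sqrt{2}$ factor, which is immaterial everywhere it is invoked in the paper since every downstream bound is phrased with $O(\cdot)$.
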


\begin{proposition} \label{prop:sum_of_subgaussian} \cite{BoucheronLM13}
    If $X_1, \dots, X_n$ are independent variables with each $X_i$ having subgaussian norm $\sigma_i$, then the random variable $X_1+\cdots+X_n$ has subgaussian norm at most $O(\sqrt{\sum_{i = 1}^{n} \sigma_i^2})$.
\end{proposition}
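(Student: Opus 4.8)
The plan is to route everything through the moment generating function (MGF), since subgaussianity and independence are both easy to exploit there. The key intermediate fact is the standard two-sided equivalence: a random variable $X$ with mean $\mu$ satisfies $\BP(|X-\mu| \ge t) \le 2e^{-t^2/C^2}$ for all $t \ge 0$ if and only if $\BE[e^{\lambda(X-\mu)}] \le e^{c\lambda^2 C^2}$ for all $\lambda \in \BR$, where $c$ is a universal constant. I would first prove the direction "tail bound $\Rightarrow$ MGF bound": starting from the tail bound, write $\BE|X-\mu|^p = \int_0^\infty p t^{p-1}\BP(|X-\mu|\ge t)\,dt \le \int_0^\infty 2pt^{p-1}e^{-t^2/C^2}\,dt = p C^p\,\Gamma(p/2)$, and use $\Gamma(p/2) \le (p/2)^{p/2}$ to get $\BE|X-\mu|^p \le (O(C))^p\, p^{p/2}$. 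Then expand $\BE[e^{\lambda(X-\mu)}] = 1 + \sum_{p\ge 2}\lambda^p \BE[(X-\mu)^p]/p!$ (the $p=1$ term vanishes since we center at the mean), bound $|\BE[(X-\mu)^p]| \le \BE|X-\mu|^p$, and use $p! \ge (p/e)^p$; the $p$-th term is then at most $(|\lambda| C e / \sqrt p)^p$, and a short computation shows $\sup_p (a/\sqrt p)^p = e^{a^2/(2e)}$ with the sum bounded by $e^{O(a^2)}$ for $a = |\lambda| C e$. This yields the claimed MGF bound with a universal constant $c$, valid for all $\lambda$.

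Given this, let $S = X_1 + \cdots + X_n$, let $\mu_i = \BE[X_i]$ and $m = \BE[S] = \sum_i \mu_i$. By independence of the $X_i$ (hence of the centered variables $X_i - \mu_i$),
\[
\BE\bigl[e^{\lambda(S-m)}\bigr] = \prod_{i=1}^{n} \BE\bigl[e^{\lambda(X_i-\mu_i)}\bigr] \le \prod_{i=1}^{n} e^{c\lambda^2\sigma_i^2} = e^{c\lambda^2 \sum_{i=1}^{n}\sigma_i^2},
\]
so $S - m$ has an MGF bound with parameter $C' := \sqrt{\sum_{i=1}^n \sigma_i^2}$ (up to the constant $c$). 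Finally I would run the Chernoff/Cram\'er argument to convert this back into a tail bound: for $t \ge 0$ and $\lambda \ge 0$, Markov's inequality gives $\BP(S-m \ge t) \le e^{-\lambda t}\BE[e^{\lambda(S-m)}] \le e^{-\lambda t + c\lambda^2 (C')^2}$, and optimizing at $\lambda = t/(2c(C')^2)$ yields $\BP(S-m \ge t) \le e^{-t^2/(4c(C')^2)}$. Applying the same bound to $-S$ and taking a union bound gives $\BP(|S-m| \ge t) \le 2e^{-t^2/(4c(C')^2)}$ for all $t \ge 0$, which is precisely the assertion that $S$ has subgaussian norm $O(C') = O\bigl(\sqrt{\sum_{i=1}^n \sigma_i^2}\bigr)$.

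The only step with real content is the first one: showing the tail-bound definition used in the paper implies a genuinely two-sided MGF bound with a universal constant, where the mild care needed is in managing the Taylor-series tail uniformly over all $\lambda$ rather than only for small $|\lambda|$. Everything afterward — multiplying MGFs across the independent summands, then Chernoff — is entirely routine. Since the paper already attributes this proposition to \cite{BoucheronLM13}, one could alternatively just invoke the corresponding statement there; the self-contained sketch above is included for completeness.
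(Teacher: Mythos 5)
Your proof is correct. The paper itself supplies no argument for this proposition---it is stated with a citation to \cite{BoucheronLM13}---and your self-contained sketch is exactly the standard route: convert the two-sided tail bound into a moment bound $\BE|X-\mu|^p \le (O(C))^p p^{p/2}$, hence into a uniform MGF bound $\BE[e^{\lambda(X-\mu)}] \le e^{O(\lambda^2 C^2)}$, multiply the MGFs across the independent centered summands, and finish with Chernoff. The one step you rightly flag as needing care---bounding $1+\sum_{p\ge 2}(a/\sqrt{p})^p$ by $e^{O(a^2)}$ uniformly in $\lambda$, splitting the small-$a$ regime (geometric tail giving $1+O(a^2)\le e^{O(a^2)}$) from the large-$a$ regime (roughly $O(a^2)$ significant terms each at most $e^{a^2/(2e)}$)---goes through as you describe, so there is no gap.
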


Next, we note the following proposition linking the total variation distance between two multivariate Gaussians with covariance $\Sigma$ and their Mahalanobis distance.

\begin{proposition} \label{prop:mahalanobis_tv} (Folklore)
    For two distributions $\mathcal{N}(\mu, \Sigma)$ and $\mathcal{N}(\nu, \Sigma)$, recall that the \emph{Mahalanobis} distance between these two distributions is $\sqrt{(\mu-\nu)^T \Sigma^{-1} (\mu-\nu)}$. Then, $\TV(\mathcal{N}(\mu, \Sigma), \mathcal{N}(\nu, \Sigma)) = \Theta(\min(\sqrt{(\mu-\nu)^T \Sigma^{-1} (\mu-\nu)}, 1)),$ i.e., the total variation distance and the Mahalanobis distance are asymptotically equivalent assuming the Mahalanobis distance does not exceed $\Theta(1)$.
\end{proposition}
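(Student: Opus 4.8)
The plan is to reduce the claim to a one-dimensional Gaussian computation via an affine change of variables, and then estimate the resulting univariate total variation distance by hand.

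First I would use the fact that total variation distance is preserved under pushforward by an invertible affine map of $\BR^d$. Applying $x \mapsto \Sigma^{-1/2}x$ turns $\mathcal{N}(\mu,\Sigma)$ into $\mathcal{N}(\Sigma^{-1/2}\mu, I)$ and $\mathcal{N}(\nu,\Sigma)$ into $\mathcal{N}(\Sigma^{-1/2}\nu, I)$, so $\TV(\mathcal{N}(\mu,\Sigma),\mathcal{N}(\nu,\Sigma)) = \TV(\mathcal{N}(u,I),\mathcal{N}(v,I))$ with $u := \Sigma^{-1/2}\mu$ and $v := \Sigma^{-1/2}\nu$; and since $\Sigma^{-1/2}$ is symmetric, $\|u-v\|^2 = (\mu-\nu)^T\Sigma^{-1}(\mu-\nu)$ is exactly the squared Mahalanobis distance, which I will call $t^2$. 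Next I would translate by $-v$ and apply an orthogonal transformation sending $u-v$ to $t\cdot e_1$ (this fixes $\mathcal{N}(0,I)$), reducing to $\TV(\mathcal{N}(t e_1, I), \mathcal{N}(0,I))$. Since these two product distributions have identical marginals in coordinates $2,\dots,d$, the total variation distance collapses to the first coordinate: $\TV(\mathcal{N}(te_1,I),\mathcal{N}(0,I)) = \TV(\mathcal{N}(t,1),\mathcal{N}(0,1))$.

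It then remains to show $\TV(\mathcal{N}(0,1),\mathcal{N}(t,1)) = \Theta(\min(t,1))$ for all $t \ge 0$, which I would do directly. The densities $\phi(x)$ and $\phi(x-t)$ satisfy $\phi(x) > \phi(x-t)$ exactly for $x < t/2$, so $\TV = \int_{-\infty}^{t/2}(\phi(x)-\phi(x-t))\,dx = \Phi(t/2) - \Phi(-t/2) = 2\Phi(t/2) - 1$, where $\Phi$ is the standard normal CDF. For the upper bound, $2\Phi(t/2)-1 = 2\int_0^{t/2}\phi(s)\,ds \le 2\phi(0)(t/2) = t/\sqrt{2\pi} \le t$, and $\TV \le 1$ always, so $\TV \le \min(t,1)$. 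For the lower bound, when $t \le 2$ monotonicity of $\phi$ on $[0,1]$ gives $2\int_0^{t/2}\phi(s)\,ds \ge 2\phi(1)(t/2) = \phi(1)t$, and when $t > 2$ we have $2\Phi(t/2)-1 \ge 2\Phi(1)-1 > 0$ while $\min(t,1) = 1$; hence $\TV \ge c\min(t,1)$ for an absolute constant $c>0$.

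I do not expect a genuine obstacle here. The only points needing care are recording that both the TV distance and the Mahalanobis distance are preserved under the affine reduction (the latter relying on the symmetry of $\Sigma^{-1/2}$), and keeping the constants in the one-dimensional estimate explicit so that the two-sided bound $\TV = \Theta(\min(t,1))$ holds uniformly over all $t \ge 0$ rather than only in a limiting regime.
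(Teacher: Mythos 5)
Your proof is correct. The paper states this proposition as folklore and gives no proof of its own, and your argument --- push forward by $x \mapsto \Sigma^{-1/2}x$ and a rigid motion to reduce to $\TV(\mathcal{N}(0,1),\mathcal{N}(t,1))$ with $t$ the Mahalanobis distance, then compute this quantity as $2\Phi(t/2)-1$ and bound it above and below by constant multiples of $\min(t,1)$ --- is the standard, complete derivation of this fact, with all the necessary care taken (symmetry of $\Sigma^{-1/2}$, collapse of the product TV to the first coordinate, and uniform two-sided constants over all $t \ge 0$).
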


We also have a similar proposition linking the total variation distance between two product distributions with their means.

\begin{proposition} \label{prop:mean_tv} \cite[Lemma 2.8]{CanonneKMUZ20}
    Let $\mathcal{P}(\mu), \mathcal{P}(\mu^*)$ be product distributions over $\{-1, 1\}^d$, where $\mu, \mu^* \in [-1, 1]^d$. Also, suppose that $\mu^* \in [-1/2, 1/2]^d$. Then, $\TV(\mathcal{P}(\mu), \mathcal{P}(\mu^*)) = \Theta(\min(\|\mu-\mu^*\|, 1))$.
\end{proposition}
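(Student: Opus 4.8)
The plan is to route everything through $f$-divergences, exploiting that $\chi^2$ and Hellinger tensorize over the independent coordinates of the product distributions. Write $P=\mathcal{P}(\mu)$, $Q=\mathcal{P}(\mu^*)$, and $P_i,Q_i$ for the $i$-th marginals (with means $\mu_i,\mu_i^*$). First I would record the one-dimensional computation: a direct calculation shows $\chi^2(P_i\|Q_i)=\frac{(\mu_i-\mu_i^*)^2}{1-(\mu_i^*)^2}=:\kappa_i$ (the relevant function of $\mu_i$ has constant second derivative and vanishes to first order at $\mu_i=\mu_i^*$), so the hypothesis $\mu_i^*\in[-1/2,1/2]$ gives $\kappa_i=\Theta((\mu_i-\mu_i^*)^2)$ and, by tensorization, $\chi^2(P\|Q)=\prod_i(1+\kappa_i)-1$; hence $\kappa:=\sum_i\kappa_i=\Theta(\|\mu-\mu^*\|^2)$. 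The upper bound is then immediate: if $\|\mu-\mu^*\|\le 1$ then $\kappa=O(1)$, so $\chi^2(P\|Q)\le e^{\kappa}-1=O(\kappa)$ and $\TV(P,Q)\le\tfrac12\sqrt{\chi^2(P\|Q)}=O(\|\mu-\mu^*\|)$; if $\|\mu-\mu^*\|>1$ the bound is trivial since $\TV\le 1$.

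The lower bound $\TV(P,Q)=\Omega(\min(\|\mu-\mu^*\|,1))$ is the real content, and I would split it by the size of $\|\mu-\mu^*\|$. For $\|\mu-\mu^*\|=O(1)$ I need $\TV(P,Q)=\Omega(\|\mu-\mu^*\|)=\Omega(\sqrt\kappa)$, and the point is that $\chi^2$ (or Hellinger) alone only yields $\TV\gtrsim\kappa$, off by a square; I must use that the likelihood ratio concentrates. So, with $L=dP/dQ$ (well defined since $Q$ has full support) and $\TV(P,Q)=\tfrac12\BE_Q|L-1|$, I would apply the Paley--Zygmund inequality to the nonnegative variable $(L-1)^2$, whose mean is $\chi^2(P\|Q)\ge\kappa$: once I also have the matching fourth-moment estimate $\BE_Q[(L-1)^4]=O(\kappa^2)$, Paley--Zygmund gives $\Pr_Q[(L-1)^2>\kappa/2]=\Omega(1)$, whence $\BE_Q|L-1|=\Omega(\sqrt\kappa)$ and $\TV(P,Q)=\Omega(\sqrt\kappa)$.

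Establishing $\BE_Q[(L-1)^4]=O(\kappa^2)$ in the regime $\kappa=O(1)$ is the step I expect to be the main obstacle. The approach: set $W_i:=L_i-1$, so $\BE_{Q_i}W_i=0$ and, using $\mu_i^*\in[-1/2,1/2]$, one has the pointwise bound $|W_i|\le 2|\mu_i-\mu_i^*|\le 2\sqrt{\kappa_i}$, hence $|\BE_{Q_i}W_i^l|\le 2^{l-2}\kappa_i^{l/2}$ for $l\ge 2$. Now observe that $\BE_Q[(L-1)^4]=\sum_{j=0}^{4}(-1)^{4-j}\binom{4}{j}\BE_Q[L^j]$ is a fourth finite difference, in $j$, of $\BE_Q[L^j]=\prod_i\BE_{Q_i}[(1+W_i)^j]=\prod_i\big(1+\binom{j}{2}\kappa_i+\binom{j}{3}\BE_{Q_i}W_i^3+(\text{an }O(\kappa_i^2)\text{ tail})\big)$. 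Any contribution that depends on $j$ only through a polynomial of degree $\le 3$ is annihilated by the fourth difference; the surviving terms are products of at least two distinct per-coordinate moment factors (each of size $O(\kappa_i)$) or contain a single-coordinate $l\ge 4$ moment (size $O(\kappa_i^2)$), so after summing they are $O(\sum_{i\ne i'}\kappa_i\kappa_{i'})+O(\sum_i\kappa_i^2)$, which is $O(\kappa^2)$ because every $\kappa_i\le\kappa=O(1)$ (so $\sum_i\kappa_i^2\le\kappa\max_i\kappa_i\le\kappa^2$). The delicate part is organizing the expansion so that this degree-$\le 3$ cancellation is visible, rather than bounding term by term, which would lose the needed power of $\kappa$.

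Finally, for $\|\mu-\mu^*\|>1$ I would reduce to the case just handled via the data-processing inequality. If some coordinate has $|\mu_i-\mu_i^*|\ge\tfrac12$, then already $\TV(P,Q)\ge\TV(P_i,Q_i)=\tfrac12|\mu_i-\mu_i^*|=\Omega(1)$. Otherwise every coordinate contributes less than $\tfrac14$ to $\sum_i(\mu_i-\mu_i^*)^2>1$, so I can greedily select a subset $S\subseteq[d]$ with $\sum_{i\in S}(\mu_i-\mu_i^*)^2\in[\tfrac14,\tfrac12]$, apply the $O(1)$-distance lower bound to the marginal product distributions on $S$, and conclude $\TV(P,Q)\ge\TV(\prod_{i\in S}P_i,\prod_{i\in S}Q_i)=\Omega(1)$. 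Assembling the three parts gives $\TV(\mathcal{P}(\mu),\mathcal{P}(\mu^*))=\Theta(\min(\|\mu-\mu^*\|,1))$. (Alternatively the large-distance case can be done directly by Hellinger tensorization together with the easy per-coordinate bound $H^2(P_i,Q_i)=\Omega((\mu_i-\mu_i^*)^2)$, giving $H^2(P,Q)\ge 1-e^{-\Omega(\|\mu-\mu^*\|^2)}=\Omega(1)$ and hence $\TV\ge H^2(P,Q)=\Omega(1)$.)
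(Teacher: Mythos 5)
Your proof is correct, but note that the paper does not actually prove this proposition: it is imported verbatim as \cite[Lemma 2.8]{CanonneKMUZ20} with no argument given, so there is nothing in the paper to compare against line by line. Judged on its own, your write-up is a valid self-contained proof. The upper bound via the per-coordinate computation $\chi^2(P_i\|Q_i)=(\mu_i-\mu_i^*)^2/(1-(\mu_i^*)^2)$, tensorization, and $\TV\le\frac12\sqrt{\chi^2}$ is standard and uses the balancedness of $\mu^*$ exactly where it is needed. The substantive step is the lower bound $\TV=\Omega(\|\mu-\mu^*\|)$ in the regime $\kappa=O(1)$, where a first-moment or Hellinger argument only yields $\Omega(\kappa)$; your use of Paley--Zygmund on $(L-1)^2$ together with the fourth-moment bound $\BE_Q[(L-1)^4]=O(\kappa^2)$ is the right mechanism, and the way you obtain that bound---writing $\BE_Q[(L-1)^4]$ as a fourth finite difference in $j$ of $\prod_i\BE_{Q_i}[(1+W_i)^j]$, so that every term whose $j$-degree is at most $3$ (in particular every single-coordinate contribution with $l\in\{2,3\}$) cancels, leaving only cross terms of size $\prod O(\kappa_i)$ over sets of size at least $2$ and single-coordinate $l\ge4$ moments of size $O(\kappa_i^2)$---is sound; summing the surviving terms gives $O(\sum_{s\ge2}C^s\kappa^s/s!)+O(\kappa\max_i\kappa_i)=O(\kappa^2)$ since $\kappa=O(1)$. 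The pointwise bound $|W_i|\le2|\mu_i-\mu_i^*|$ is where balancedness enters the lower bound. The large-distance case is handled correctly either by the greedy subset selection plus data processing or, more cleanly, by the Hellinger route, where the per-coordinate estimate follows for free from $H^2(P_i,Q_i)\ge\frac12\TV(P_i,Q_i)^2=\frac18(\mu_i-\mu_i^*)^2$. If you write this up formally, the only place deserving extra care is making the finite-difference cancellation fully explicit (indexing the surviving terms by the multiset of exponents $(l_i)_{i\in S}$ with $\sum_{i\in S}l_i\ge4$), since, as you note, a naive term-by-term bound loses the needed power of $\kappa$.
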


Proposition
\ref{prop:mean_tv} implies the following are equivalent: distinguishing between $\mu^* = \mu$ and $\|\mu^*-\mu\| \ge \alpha,$ and distinguishing between $\mathcal{P}(\mu) = \mathcal{P}(\mu^*)$ and $\TV(\mathcal{P}(\mu), \mathcal{P}(\mu^*)) \ge \Theta(\alpha)$, assuming that $\mu^* \in [-1/2, 1/2]^d$. Hence, in Theorem \ref{thm:prod}, it is sufficient to distinguish between $\mathcal{P}^* := \mathcal{P}(\mu^*)$ and all distributions $\mathcal{P}(\mu)$ with $\|\mu-\mu^*\| \ge \alpha$. Likewise, in Theorem \ref{thm:tolerant}, if $C$ is sufficiently large, it is sufficient to distinguish between $\mathcal{P}(\mu)$ with $\|\mu-\mu^*\| \le \frac{\alpha}{2}$ and $\mathcal{P}(\mu)$ with $\|\mu-\mu^*\| \ge \alpha$.


\medskip

Next, we recall a basic proposition about sufficient statistics of Normal distributions.

\begin{definition}
    Given a distribution $\mathcal{D}(\theta)$ over $\BR^d$ parameterized by some $\theta$ in a domain $\Theta$, and given samples $X^{(1)}, \dots, X^{(N)} \overset{i.i.d.}{\sim} \mathcal{D}(\theta)$, we say that a function $T = T(X^{(1)}, \dots, X^{(N)})$ is a \emph{sufficient statistic} for $\theta$ if the distribution of $X^{(1)}, \dots, X^{(N)}$ conditioned on $T$ is the same as the distribution of $X^{(1)}, \dots, X^{(N)}$ conditioned on $T$ and $\theta$.
\end{definition}

\begin{proposition} \label{prop:sufficient_statistic}
    Let $\Sigma$ be a fixed covariance matrix, and let $\mathcal{N}(\mu, \Sigma)$ be parameterized only by $\mu$. Then, given samples $X^{(1)}, \dots, X^{(N)},$ the empirical mean $\bar{X} = \frac{X^{(1)}+\cdots+X^{(N)}}{N}$ is a sufficient statistic for $\mu$. In other words, if given $N$ i.i.d. samples from $\mathcal{N}(\mu, \Sigma)$, the conditional distribution of $X^{(1)}, \dots, X^{(N)}$ conditioned on $\frac{X^{(1)}+\cdots+X^{(N)}}{N}$ is independent of $\mu$. 
\end{proposition}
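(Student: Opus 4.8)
The plan is to verify sufficiency directly from the definition, by computing the conditional law of $(X^{(1)}, \dots, X^{(N)})$ given $\bar{X}$ and checking that $\mu$ drops out. Write each $X^{(i)} = \mu + Z^{(i)}$ with $Z^{(1)}, \dots, Z^{(N)} \overset{i.i.d.}{\sim} \mathcal{N}(0, \Sigma)$, and let $\bar{Z}$ be the empirical mean of the $Z^{(i)}$, so that $\bar{X} = \mu + \bar{Z}$ while each residual $X^{(i)} - \bar{X} = Z^{(i)} - \bar{Z}$ is a function of the $Z^{(i)}$ alone. The first step is to observe that the vector $(X^{(1)} - \bar{X}, \dots, X^{(N)} - \bar{X})$ is independent of $\bar{X}$: the pair is jointly Gaussian (both are affine images of $(Z^{(1)}, \dots, Z^{(N)})$), and a one-line covariance computation gives $\mathrm{Cov}(\bar{X}, X^{(i)} - \bar{X}) = \mathrm{Cov}(\bar{Z}, Z^{(i)} - \bar{Z}) = \tfrac{1}{N}\Sigma - \tfrac{1}{N}\Sigma = 0$, so that uncorrelatedness upgrades to independence.

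Given that, I would finish as follows. Set $R^{(i)} := X^{(i)} - \bar{X}$, so $(X^{(1)}, \dots, X^{(N)}) = (\bar{X} + R^{(1)}, \dots, \bar{X} + R^{(N)})$. Conditioning on $\bar{X} = m$, the independence just established shows that the conditional law of $(R^{(1)}, \dots, R^{(N)})$ equals its unconditional law, so the conditional law of $(X^{(1)}, \dots, X^{(N)})$ given $\bar{X} = m$ is the law of $(m + R^{(1)}, \dots, m + R^{(N)})$. Since $R^{(i)} = Z^{(i)} - \bar{Z}$ depends only on the $Z^{(j)}$, whose joint law is $\mathcal{N}(0, \Sigma)^{\otimes N}$ irrespective of $\mu$, this conditional law does not involve $\mu$. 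That is precisely the conditional-independence statement in the definition of a sufficient statistic.

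A shorter alternative is Fisher--Neyman factorization: expanding $\sum_i (X^{(i)} - \mu)^T \Sigma^{-1} (X^{(i)} - \mu) = \sum_i (X^{(i)})^T \Sigma^{-1} X^{(i)} - 2N \mu^T \Sigma^{-1} \bar{X} + N \mu^T \Sigma^{-1} \mu$ shows the joint density factors as $h(X^{(1)}, \dots, X^{(N)}) \cdot g(\bar{X}, \mu)$, with $g(\bar{X}, \mu) = \exp\!\big(N \mu^T \Sigma^{-1} \bar{X} - \tfrac{N}{2}\mu^T \Sigma^{-1}\mu\big)$ absorbing all $\mu$-dependence through $\bar{X}$. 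I do not expect any genuine obstacle; the only mild subtlety is that the factorization theorem must be combined with a standard argument to recover the exact conditional-independence form used in our definition, which is why I would present the direct computation above as the self-contained proof.
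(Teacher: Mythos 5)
Your proof is correct. The paper states Proposition \ref{prop:sufficient_statistic} as a standard fact and gives no proof of it, so there is no argument to compare against directly; but your decomposition into the sample mean plus residuals $R^{(i)} = X^{(i)} - \bar{X} = Z^{(i)} - \bar{Z}$, with independence of $\bar{X}$ from the residual vector established via the zero cross-covariance $\tfrac{1}{N}\Sigma - \tfrac{1}{N}\Sigma = 0$ for a jointly Gaussian family, is exactly the structure the paper implicitly relies on when it uses the proposition (e.g.\ Corollary \ref{cor:sufficient_statistic} rewrites the sample as $\bar{X} + Z^{(i)} - \bar{Z}$, which is precisely your conditional representation $(m + R^{(1)}, \dots, m + R^{(N)})$). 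Your direct computation is the right self-contained route here, and your instinct to prefer it over the Fisher--Neyman factorization is sound, since the paper's definition of sufficiency is the conditional-distribution one and translating the factorization theorem into that form in the continuous setting would require an extra (standard but nontrivial) step.
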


\begin{corollary} \label{cor:sufficient_statistic}
    Let $X^{(1)}, \dots, X^{(N)}$ be distributed i.i.d. as $\mathcal{N}(\mu, \Sigma)$, and let $\bar{X} = \frac{X^{(1)}+\cdots+X^{(N)}}{N}$. Then, for $Z^{(1)}, \dots, Z^{(N)} \overset{i.i.d.}{\sim} \mathcal{N}(0, \Sigma)$ and $\bar{Z} = \frac{Z^{(1)}+\cdots+Z^{(N)}}{N}$, we have that $X^{(1)}, \dots, X^{(N)}$ has the same distribution as $\bar{X}+Z^{(1)}-\bar{Z}, \dots, \bar{X}+Z^{(N)}-\bar{Z}$.
\end{corollary}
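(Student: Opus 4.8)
The plan is to derive this as a direct consequence of Proposition~\ref{prop:sufficient_statistic}. Write $\textbf{X} = (X^{(1)}, \dots, X^{(N)})$ and let $\mathcal{L}_\mu$ be its law when the $X^{(i)}$ are i.i.d.\ $\mathcal{N}(\mu, \Sigma)$. I would first decompose $\mathcal{L}_\mu$ through the value of the empirical mean $\bar{X}$: the marginal of $\bar{X}$ is $\mathcal{N}(\mu, \tfrac{1}{N}\Sigma)$, and by Proposition~\ref{prop:sufficient_statistic} the conditional law of $\textbf{X}$ given $\bar{X} = c$ is given by a kernel $Q(c, \cdot)$ that is \emph{independent of $\mu$}. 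In particular $Q(c, \cdot)$ can be computed by setting $\mu = 0$, i.e.\ from the centered samples $Z^{(1)}, \dots, Z^{(N)} \overset{i.i.d.}{\sim} \mathcal{N}(0, \Sigma)$.

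Next I would pin down $Q(c, \cdot)$ using the structure of the centered Gaussian. The tuple $(\bar{Z}, Z^{(1)} - \bar{Z}, \dots, Z^{(N)} - \bar{Z})$ is jointly Gaussian, and a one-line computation gives $\mathrm{Cov}(\bar{Z}, Z^{(i)} - \bar{Z}) = \tfrac1N\Sigma - \tfrac1N\Sigma = 0$ for every $i$; hence the sample mean $\bar{Z}$ is independent of the residual vector $(Z^{(i)} - \bar{Z})_{i=1}^{N}$. Therefore the conditional law of $(Z^{(1)}, \dots, Z^{(N)})$ given $\bar{Z} = c$ equals the unconditional law of $\bigl(c + Z^{(1)} - \bar{Z}, \dots, c + Z^{(N)} - \bar{Z}\bigr)$, and since this conditional law is exactly $Q(c, \cdot)$, the kernel is identified.

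Finally I would recombine the two pieces: sampling $\bar{X} \sim \mathcal{N}(\mu, \tfrac1N\Sigma)$ and then $\textbf{X}\mid\bar{X}$ from $Q(\bar{X}, \cdot)$ reproduces $\mathcal{L}_\mu$, and this procedure is precisely ``draw $\bar{X}$ with the law of the empirical mean, draw $Z^{(1)}, \dots, Z^{(N)} \overset{i.i.d.}{\sim} \mathcal{N}(0, \Sigma)$ independently, and output $\bigl(\bar{X} + Z^{(1)} - \bar{Z}, \dots, \bar{X} + Z^{(N)} - \bar{Z}\bigr)$'', which is the right-hand side of the claim. There is no serious obstacle here: the argument is bookkeeping with conditional laws, and the only substantive input beyond Proposition~\ref{prop:sufficient_statistic} is the independence of $\bar{Z}$ from the residuals, which is immediate from joint Gaussianity and the vanishing covariance above (positive-definiteness of $\Sigma$ ensures all these laws are non-degenerate). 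The one point worth stating carefully is that $Q(c,\cdot)$ genuinely does not depend on $\mu$, which is exactly the content of Proposition~\ref{prop:sufficient_statistic}.
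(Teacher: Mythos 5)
Your proof is correct and follows essentially the same route as the paper's: both use Proposition~\ref{prop:sufficient_statistic} to reduce the conditional law of $\textbf{X}$ given $\bar{X}$ to the centered case and then translate by the empirical mean. The only cosmetic difference is that you identify the conditional kernel via the independence of $\bar{Z}$ from the residuals $(Z^{(i)}-\bar{Z})_{i=1}^{N}$, whereas the paper conditions on both $\bar{X}$ and $\bar{Z}$ and applies an additive shift by $v=\bar{X}-\bar{Z}$; the two devices are equivalent here.
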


\begin{proof}
    It suffices to show they have the same distribution if we condition on $\bar{X}$ and $\bar{Z}$.
    
    By Proposition \ref{prop:sufficient_statistic}, if we know $\bar{X}$, it does not matter whether $X^{(1)}, \dots, X^{(N)}$ were originally drawn from $\mathcal{N}(\mu, \Sigma)$ or from $\mathcal{N}(\mu', \Sigma)$ for some $\mu' \neq \mu$. So, we can instead pretend that $X^{(1)}, \dots, X^{(N)}$ were drawn from $\mathcal{N}(v, \Sigma)$, for $v = \bar{X}-\bar{Z}$. In this case, by an additive shift of $v$, the conditional distribution of $X^{(1)}, \dots, X^{(N)} \overset{i.i.d.}{\sim} \mathcal{N}(v, \Sigma)$ conditioned on $\bar{X}$ is the same as the conditional distribution of $Z^{(1)}+v, \dots, Z^{(N)}+v \overset{i.i.d.}{\sim} \mathcal{N}(0, \Sigma)$ conditioned on $\bar{Z}$. Hence, by writing $v = \bar{X}-\bar{Z}$, we are done.
\end{proof}

Finally, we note the following concentration bound, due to \cite{MassartLaurent}.

\begin{theorem} \cite[Lemma 1]{MassartLaurent} \label{thm:massartlaurent}
    Let $a_1, \dots, a_n$ be nonnegative reals, all bounded by some $|a|_{\infty} > 0$. Let $|a|_2 = \sqrt{a_1^2+\cdots+a_k^2}$. Let $Z_1, \dots, Z_n \overset{i.i.d}{\sim} \mathcal{N}(0, 1)$, and let $Z = \sum_{i = 1}^{n} a_i Z_i^2$. Then, for any positive real $t$, we have that
\[\BP\left(\left|Z - \sum_{i = 1}^{n} a_i\right| \ge 2 |a|_2 \sqrt{t} + 2 |a|_\infty t\right) \le 2e^{-t}.\]
\end{theorem}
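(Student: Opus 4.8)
The plan is to prove this inequality by the Cram\'er--Chernoff (exponential moment) method, handling the upper and lower tails separately and combining them with a union bound. The key input is that the moment generating function of $Z$ factors and is explicit: since $Z_i^2$ is a chi-squared random variable with one degree of freedom, $\BE\bigl[e^{\lambda a_i Z_i^2}\bigr] = (1 - 2\lambda a_i)^{-1/2}$ for $\lambda < 1/(2a_i)$, and hence, by independence, for every $\lambda \in [0, 1/(2|a|_\infty))$,
\[
\log \BE\Bigl[e^{\lambda (Z - \sum_i a_i)}\Bigr] = -\tfrac12 \sum_{i=1}^n \bigl[\log(1 - 2\lambda a_i) + 2\lambda a_i\bigr].
\]

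For the upper tail I would bound each summand using the sharp elementary inequality $-\log(1-v) - v \le \frac{v^2}{2(1-v)}$ for $v \in [0,1)$, applied with $v = 2\lambda a_i$; since $a_i \le |a|_\infty$ this gives $\log \BE[e^{\lambda(Z - \sum_i a_i)}] \le \frac{\lambda^2 |a|_2^2}{1 - 2\lambda |a|_\infty}$, a Bernstein-type estimate. Feeding this into the exponential Markov inequality, $\BP\bigl(Z - \sum_i a_i \ge x\bigr) \le \exp\bigl(-\lambda x + \frac{\lambda^2 |a|_2^2}{1 - 2\lambda |a|_\infty}\bigr)$ for all admissible $\lambda$, and optimizing over $\lambda$ — equivalently, invoking the standard Legendre-transform computation that turns a log-MGF bound of the form $\frac{\lambda^2 \nu /2}{1 - c\lambda}$ into the tail bound $\BP(\ge \sqrt{2\nu t} + ct) \le e^{-t}$, here with $\nu = 2|a|_2^2$ and $c = 2|a|_\infty$ — yields $\BP\bigl(Z - \sum_i a_i \ge 2|a|_2\sqrt t + 2|a|_\infty t\bigr) \le e^{-t}$.

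For the lower tail the situation is cleaner: $\BE[e^{-\lambda a_i Z_i^2}] = (1 + 2\lambda a_i)^{-1/2}$ holds for all $\lambda > 0$ with no restriction, and the inequality $v - \log(1 + v) \le \frac{v^2}{2}$ for $v \ge 0$ gives the sub-Gaussian bound $\log \BE[e^{-\lambda(Z - \sum_i a_i)}] \le \lambda^2 |a|_2^2$. Optimizing $\lambda = x/(2|a|_2^2)$ in the exponential Markov inequality gives $\BP\bigl(\sum_i a_i - Z \ge x\bigr) \le e^{-x^2/(4|a|_2^2)}$, i.e. $\BP\bigl(\sum_i a_i - Z \ge 2|a|_2 \sqrt t\bigr) \le e^{-t}$, which is in fact stronger than required (the lower tail needs no $|a|_\infty t$ term). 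A union bound over the two one-sided events then gives the stated two-sided inequality with failure probability at most $2e^{-t}$.

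The main obstacle is purely one of tracking constants: the two elementary logarithmic inequalities above must be used in their sharp forms (a lossier estimate would spoil the factor $2$ in front of both $|a|_2\sqrt t$ and $|a|_\infty t$), and the Legendre-transform step converting the Bernstein-type log-MGF bound into a tail bound must be carried out carefully. Conceptually nothing is deep — everything reduces to the explicit $\chi^2$ moment generating function, independence, and the exponential Markov inequality — but this constant-tracking is where all the real work lies.
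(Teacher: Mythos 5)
Your proposal is correct, and it is essentially the original argument of Laurent and Massart: the paper itself does not prove this statement but imports it directly as \cite[Lemma 1]{MassartLaurent}, and your Cram\'er--Chernoff derivation (explicit $\chi^2$ MGF, the sharp inequalities $-\log(1-v)-v\le \tfrac{v^2}{2(1-v)}$ and $v-\log(1+v)\le \tfrac{v^2}{2}$, the sub-gamma Legendre inversion $h(u)=1+u-\sqrt{1+2u}$ giving the threshold $\sqrt{2\nu t}+ct$ with $\nu=2|a|_2^2$, $c=2|a|_\infty$, and a final union bound) reproduces that proof faithfully, including the stronger one-sided lower-tail bound without the $|a|_\infty t$ term.
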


\subsection{Reductions} \label{subsec:reductions}

In this subsection, we note several reductions between problems that will make things easier to prove.

First, we note the asymptotic equivalence between $(\eps, \delta)$-DP and $(\eps+\delta, 0)$-DP in the hypothesis testing case. Roughly, this follows because we assume the probability of acceptance/rejection has two-sided error. For instance, if we additively change the probability of acceptance from $1/3$ to $1/3 + \eps$, this is also a multiplicative $1 + O(\eps)$ change in the probability of acceptange, and a multiplicative $1-O(\eps)$ change in the probability of rejection. Formally, we have the following:

\begin{proposition} \label{prop:pure_approx} \cite[Lemma 5]{AcharyaSZ18}
    There is an $(\varepsilon, \delta)$-DP algorithm for a testing problem if and only if there is an $(O(\varepsilon+\delta), 0)$-DP algorithm for the same testing problem.
\end{proposition}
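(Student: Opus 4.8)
The plan is to prove Proposition~\ref{prop:pure_approx} in both directions, with the forward direction (from $(\eps,0)$-DP to $(O(\eps),0)$-DP, which is trivial) being immediate, and the substantive direction being the passage from an $(\eps,\delta)$-DP tester to an $(O(\eps+\delta),0)$-DP tester. Since this is quoted from \cite[Lemma 5]{AcharyaSZ18}, I would present a short self-contained argument that exploits the special structure of hypothesis testing: the output space is just $\{0,1\}$ and correctness only requires two-sided success probability $\geq 2/3$ (in fact any constant bounded away from $1/2$). So I do not need a general purification of arbitrary $(\eps,\delta)$-DP mechanisms; I only need to fix up a Boolean-valued one while keeping its two accuracy guarantees.

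\medskip

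First I would observe that an $(\eps,\delta)$-DP algorithm $\mathcal{A}$ whose output is a single bit is described on each input $\textbf{X}$ by the number $p(\textbf{X}) := \BP[\mathcal{A}(\textbf{X}) = 1] \in [0,1]$, and the DP condition \eqref{eq:DP_defn} specialized to the two singleton sets $O = \{1\}$ and $O = \{0\}$ says exactly that for neighboring $\textbf{X}, \textbf{X}'$,
\[
p(\textbf{X}') \le e^{\eps} p(\textbf{X}) + \delta \quad\text{and}\quad 1 - p(\textbf{X}') \le e^{\eps}(1 - p(\textbf{X})) + \delta.
\]
The idea is then to post-process $p$ by a fixed clipping-and-rescaling map $\phi:[0,1]\to[0,1]$ chosen so that (i) $\phi$ maps values $\le 1/3$ into $[0, c]$ and values $\ge 2/3$ into $[1-c, 1]$ for a small constant $c$, preserving accuracy since $\mathcal A$ errs with probability at most $1/3$ on each hypothesis; and (ii) $\phi \circ p$ satisfies a pure-DP bound. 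Concretely I would take $\phi$ piecewise linear, equal to $0$ on $[0, \beta]$, equal to $1$ on $[1-\beta, 1]$, and linear in between, for a suitable constant $\beta \in (1/3, 1/2)$; then define the new algorithm $\mathcal{A}'(\textbf X)$ to output $1$ with probability $\phi(p(\textbf X))$. Accuracy is immediate: if $\mathcal D \in \mathcal H_0$ then $\BP[\mathcal A(\textbf X)=1]\le 1/3 < \beta$ so $\phi(p(\textbf X))=0$, i.e.\ $\mathcal A'$ accepts with probability $1$ over the algorithm's coins (still $\ge 2/3$ over the samples), and symmetrically for $\mathcal H_1$; more carefully, taking into account that $p(\textbf X)\le 1/3$ only in expectation over the samples, a short Markov/averaging argument keeps the two-sided error below $1/3$ after adjusting $\beta$ slightly.

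\medskip

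The core step is verifying pure DP of $\mathcal A'$: I must show that for neighboring $\textbf X,\textbf X'$ we have $\phi(p(\textbf X')) \le e^{O(\eps+\delta)}\phi(p(\textbf X))$ and the symmetric statement with $1-\phi$. This is where the structure of $\phi$ is used: if $\phi(p(\textbf X)) = 0$ there is nothing to prove only when also $\phi(p(\textbf X'))=0$, so the real work is bounding how fast $\phi\circ p$ can grow. Because $\phi$ is $\frac{1}{1-2\beta}$-Lipschitz and vanishes on $[0,\beta]$, whenever $\phi(p(\textbf X'))>0$ we have $p(\textbf X') > \beta$, hence by the first displayed inequality $e^{\eps}p(\textbf X) + \delta > \beta$, so $p(\textbf X) > (\beta-\delta)e^{-\eps} =: \beta'$, which is a positive constant since $\delta,\eps$ are small; then $\phi(p(\textbf X)) \ge \frac{\beta' - \beta}{\,\cdot\,}$... actually one wants the cleaner route: since on the growing region $\phi(x) = \frac{x-\beta}{1-2\beta}$, and $p(\textbf X') \le e^\eps p(\textbf X) + \delta \le (1+O(\eps))p(\textbf X)+\delta$, we get $\phi(p(\textbf X')) \le \frac{(1+O(\eps))p(\textbf X) + \delta - \beta}{1-2\beta}$, and comparing to $\phi(p(\textbf X)) = \frac{p(\textbf X)-\beta}{1-2\beta}$ (a quantity bounded below by a constant times $p(\textbf X') $ minus error terms, using $p(\textbf X) > \beta'$) yields $\phi(p(\textbf X')) \le e^{O(\eps+\delta)}\phi(p(\textbf X))$ after absorbing the additive $\delta$ multiplicatively — legitimate precisely because the denominator-side quantity $p(\textbf X)-\beta$ is not too small relative to $\delta$, which is arranged by the choice $\beta' > \beta$ being false... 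I would instead flip the roles so that the side being lower-bounded is bounded below by an absolute constant, making the additive $\delta$ into a multiplicative $1+O(\delta)$ factor. The symmetric bound for $1-\phi$ follows identically by the reflection $x \mapsto 1-x$, $\textbf X \leftrightarrow \textbf X'$.

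\medskip

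The main obstacle I anticipate is precisely this last manipulation: turning the additive $\delta$ slack of approximate DP into a purely multiplicative $e^{O(\delta)}$ slack. This conversion is only possible because of hypothesis testing's two-sided-error structure — the relevant probabilities $\phi(p(\textbf X))$ are, on the region where they are nonzero, bounded away from $0$ by a constant depending only on $\beta$, so $\delta$ additive becomes $O(\delta)$ multiplicative. Getting the constants to line up (choice of $\beta$, verifying both the accuracy clip and the pure-DP growth bound simultaneously hold) is the only delicate point; everything else is bookkeeping. I would then conclude by noting the converse direction is trivial since every $(\eps',0)$-DP algorithm is $(\eps',\delta)$-DP for all $\delta\ge 0$, and taking $\eps' = O(\eps+\delta)$ closes the equivalence.
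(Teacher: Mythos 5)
The paper does not actually prove this proposition---it is imported from \cite[Lemma 5]{AcharyaSZ18} with only an informal remark explaining the intuition (additive changes to acceptance probabilities that are bounded away from $0$ and $1$ become multiplicative changes). So your proposal must be judged on its own merits, and as written it has a fatal flaw in the privacy step.

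The problem is your choice of post-processing map $\phi$: a function that is identically $0$ on $[0,\beta]$ and strictly positive just above $\beta$ can never yield a pure-DP mechanism. Nothing in $(\eps,\delta)$-DP prevents two neighboring datasets $\textbf{X},\textbf{X}'$ from having $p(\textbf{X})=\beta$ and $p(\textbf{X}')=e^{\eps}\beta+\delta>\beta$; then $\BP[\mathcal{A}'(\textbf{X})=1]=\phi(p(\textbf{X}))=0$ while $\BP[\mathcal{A}'(\textbf{X}')=1]>0$, and no finite $\eps'$ satisfies $\phi(p(\textbf{X}'))\le e^{\eps'}\phi(p(\textbf{X}))$. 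Your own discussion circles this issue (``if $\phi(p(\textbf{X}))=0$ there is nothing to prove only when also $\phi(p(\textbf{X}'))=0$'') and the proposed fix of ``flipping the roles'' only helps in the interior region where both values are positive; it cannot repair a $0$-versus-positive ratio at the boundary $p=\beta$. (There is also a secondary accuracy issue: with $\beta\in(1/3,1/2)$ the bound $\BE[\phi(p(\textbf{X}))]\le\BE[p(\textbf{X})]/(1-2\beta)$ exceeds $1$, so the Markov argument does not close; this is fixable by taking $\beta$ small and amplifying, but the privacy failure is not.)

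The correct construction inverts your instinct: instead of clipping probabilities \emph{to} $0$ and $1$, you must keep them \emph{away from} $0$ and $1$. For instance, let $\mathcal{A}'$ output a uniform random bit with probability $\gamma$ (a small constant) and otherwise run $\mathcal{A}$, so that $p'(\textbf{X})=(1-\gamma)p(\textbf{X})+\gamma/2\in[\gamma/2,\,1-\gamma/2]$. Then for neighbors, $p'(\textbf{X}')\le e^{\eps}p'(\textbf{X})+\delta\le e^{\eps}p'(\textbf{X})+\frac{2\delta}{\gamma}\,p'(\textbf{X})\le e^{\eps+2\delta/\gamma}p'(\textbf{X})$, and symmetrically for $1-p'$; this is exactly the ``additive $\delta$ becomes multiplicative $e^{O(\delta)}$'' step you were after, and it works precisely because $p'$ is bounded below by the constant $\gamma/2$ everywhere, not merely on the region where it is nonzero. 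Accuracy degrades from $2/3$ to $2/3-O(\gamma)$, still a constant above $1/2$, and is restored by running the mechanism $O(1)$ times and taking a majority vote, which costs only a constant factor in $\eps$. Your converse direction is of course trivial and fine.
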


Next, we recall Propositions \ref{prop:mahalanobis_tv} and \ref{prop:mean_tv}. From here, we note that it is asymptotically equivalent to distinguish between the the distribution being far from the null hypothesis distribution in total variation distance and the mean being far from the null hypothesis mean in Mahalanobis distance (in the known-covariance Gaussian case) or $\ell_2$ distance (in the Boolean product case). As we saw in Subsection \ref{subsec:basic_prop}, in the Boolean product case the equivalence also extends to tolerant distribution testing.

Next, we note that by shifting the distribution, we may always assume WLOG (in the multivariate Gaussian cases) that $\mu^* = 0$. In addition, if the covariance is known, we may also assume WLOG that $\Sigma = I$, by multiplying all data points by the matrix $\Sigma^{-1/2}$. 

We also note that in the Boolean product case, we may assume that $\mu^* = 0$. Indeed, we have the following lemma, due to \cite{CanonneKMUZ20}:

\begin{lemma} \cite[Lemma 5.1, rephrased]{CanonneKMUZ20} \label{lem:wlog_product_mean_0}
    Suppose that $\mathcal{P}(\mu^*)$ is a known Boolean product distribution, and we are given a sample $X \sim \mathcal{P}(\mu)$ where $\mu$ is unknown. Then, there exists a randomized transformation $Y = f_{\mu^*}(X)$ such that $Y \sim \mathcal{P}(\frac{\mu-\mu^*}{2})$.
\end{lemma}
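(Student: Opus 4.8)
The plan is to construct the transformation coordinate by coordinate, since everything about Boolean product distributions factorizes. Fix a coordinate $j \in [d]$, write $p^* = \mu^*_j \in [-1,1]$ for the known bias and $p = \mu_j$ for the unknown bias, so $X_j \in \{-1,1\}$ has $\BE[X_j] = p$. I want a randomized map $Y_j = h_{p^*}(X_j)$ with $\BE[Y_j] = \frac{p - p^*}{2}$, using only $X_j$ and $p^*$ (not $p$). The key point is that the map $h_{p^*}$ may only depend on the input bit $X_j$ and on the known value $p^*$; since $\BE[Y_j]$ is affine in $X_j$ once we fix the conditional output distributions $\BP[Y_j = 1 \mid X_j = \pm 1]$, and $\BE[X_j] = p$ is itself affine in ``which case we are in,'' the induced expectation $\BE[Y_j]$ will automatically be affine in $p$, which is exactly what we need to realize the target $\frac{p-p^*}{2}$.

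Concretely, I would posit $\BP[Y_j = 1 \mid X_j = 1] = a$ and $\BP[Y_j = 1 \mid X_j = -1] = b$ for constants $a, b \in [0,1]$ depending only on $p^*$. Then $\BE[Y_j] = (2a - 1)\cdot \BP[X_j = 1] + (2b-1)\cdot \BP[X_j = -1] = (2a-1)\frac{1+p}{2} + (2b-1)\frac{1-p}{2} = (a+b-1) + (a - b)p$. Setting this equal to $\frac{p - p^*}{2} = -\frac{p^*}{2} + \frac{1}{2} p$ forces $a - b = \frac12$ and $a + b - 1 = -\frac{p^*}{2}$, i.e. $a = \frac12 - \frac{p^*}{4} + \frac14 = \frac34 - \frac{p^*}{4}$ and $b = \frac14 - \frac{p^*}{4}$. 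One checks $a = \frac{3 - p^*}{4} \in [\frac12, 1] \subseteq [0,1]$ and $b = \frac{1-p^*}{4} \in [0, \frac12] \subseteq [0,1]$ for $p^* \in [-1,1]$, so these are valid probabilities. Thus $h_{p^*}$ is well defined: on reading $X_j = 1$, output $1$ with probability $\frac{3-p^*}{4}$ and $-1$ otherwise; on reading $X_j = -1$, output $1$ with probability $\frac{1-p^*}{4}$ and $-1$ otherwise.

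Now define $f_{\mu^*}(X)$ to apply $h_{\mu^*_j}$ independently to each coordinate $X_j$, with fresh independent randomness per coordinate. Since the coordinates of $X \sim \mathcal{P}(\mu)$ are independent and each $h_{\mu^*_j}$ uses only $X_j$ plus independent coins, the output coordinates $Y_j$ are independent, so $Y$ is a Boolean product distribution; and $\BE[Y_j] = \frac{\mu_j - \mu^*_j}{2}$ by the computation above, so $Y \sim \mathcal{P}\!\left(\frac{\mu - \mu^*}{2}\right)$, as required.

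There is essentially no serious obstacle here — the only thing to verify carefully is that the derived conditional probabilities $a, b$ lie in $[0,1]$ for all $p^* \in [-1,1]$, which the closed forms $a = \frac{3-p^*}{4}$, $b = \frac{1-p^*}{4}$ make transparent. The conceptual content is simply that post-processing a single bit can implement any affine shrinkage of its bias toward a prescribed center, and that independence is preserved because the transformation acts coordinatewise with independent internal randomness.
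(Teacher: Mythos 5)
Your proof is correct: the coordinate-wise randomized map with $\BP[Y_j=1\mid X_j=1]=\frac{3-\mu^*_j}{4}$ and $\BP[Y_j=1\mid X_j=-1]=\frac{1-\mu^*_j}{4}$ gives exactly $\BE[Y_j]=\frac{\mu_j-\mu^*_j}{2}$ with both conditional probabilities in $[0,1]$, and independence across coordinates is preserved since each $Y_j$ uses only $X_j$ and fresh coins. This is essentially the same construction as in the cited Lemma 5.1 of \cite{CanonneKMUZ20} (which the paper invokes without reproving), so nothing further is needed.
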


Thus, up to changing $\alpha$ by a factor of $2$, we may assume WLOG that $\mu^* = 0$ for identity testing of Boolean products, since $\mu = \mu^*$ is equivalent to $\frac{\mu-\mu^*}{2} = 0$ and $\|\mu-\mu^*\| \ge \alpha$ is equivalent to $\left\|\frac{\mu-\mu^*}{2}\right\| = \frac{\alpha}{2}.$ Note that this reduction also holds for the tolerant testing case as well.

\medskip

Next, we note a proposition showing that testing known-covariance Gaussians is \textbf{easier} than Testing Product distributions. Because of this, the only lower bounds we need to show is for testing known-covariance Gaussians. Of course, testing of unknown-covariance Gaussians and tolerant testing of Gaussians are harder problems, so the same lower bounds must hold, and the proposition we are about to state also implies that the same lower bounds hold for testing (and tolerant testing) of product distributions as well.

\begin{proposition} \cite[Theorem 3.1, rephrased]{CanonneKMUZ20}
    There exists a function $F: \BR^d \to \{-1, 1\}^d$ and an absolute constant $c > 0$ such that the following holds. Suppose that $X \sim \mathcal{N}(\mu, I)$, where $\mu$ is unknown (so, $F$ does not depend on $\mu$). Then:
\begin{itemize}
    \item If $\mu = 0$, the distribution of $F(X)$ is uniform on $\{-1, 1\}^d$.
    \item If $\mu \neq 0$, the distribution of $F(X)$ is a product distribuiton $\mathcal{P}(\mu')$ where $\TV(\mathcal{P}(0), \mathcal{P}(\mu')) \ge c \cdot \min(1, \|\mu\|)$.
\end{itemize}
\end{proposition}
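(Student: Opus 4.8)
The plan is to let $F$ act coordinatewise. Define $F(x) = (\mathrm{sign}(x_1), \dots, \mathrm{sign}(x_d)) \in \{-1,1\}^d$, with $\mathrm{sign}(0)$ set arbitrarily (a probability-zero event under any Gaussian). Crucially, $F$ does not depend on $\mu$, as required. Since $X \sim \mathcal{N}(\mu, I)$ has independent coordinates $X_i \sim \mathcal{N}(\mu_i, 1)$ and $(F(X))_i = \mathrm{sign}(X_i)$ depends only on $X_i$, the coordinates of $F(X)$ are independent; hence $F(X)$ is a Boolean product distribution $\mathcal{P}(\mu')$ with $\mu'_i = \BE[\mathrm{sign}(X_i)] = 2\,\BP[X_i > 0] - 1 = 2\Phi(\mu_i) - 1$, where $\Phi$ is the standard Gaussian CDF. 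If $\mu = 0$ then every $\mu'_i = 2\Phi(0) - 1 = 0$, so $\mathcal{P}(\mu') = \mathcal{P}(0)$ is uniform on $\{-1,1\}^d$, giving the first bullet.

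For the second bullet, the key quantitative step is the one-dimensional estimate $|2\Phi(t) - 1| \ge c' \min(|t|, 1)$ for an absolute constant $c' > 0$. This follows because $\Phi$ is concave on $[0,\infty)$ (its derivative $\phi$ is decreasing there): for $t \in [0,1]$, concavity gives $\Phi(t) - \tfrac12 \ge t(\Phi(1) - \tfrac12)$, i.e. $2\Phi(t) - 1 \ge (2\Phi(1)-1)\,t$; and for $t \ge 1$, monotonicity gives $2\Phi(t) - 1 \ge 2\Phi(1)-1$. Using the oddness $2\Phi(-t)-1 = -(2\Phi(t)-1)$, the bound holds with $|t|$ in place of $t$, so one may take $c' = 2\Phi(1)-1 = \mathrm{erf}(1/\sqrt2) > 0$. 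In particular $|\mu'_i| \ge c'\min(|\mu_i|,1)$ for every $i$.

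Next I would aggregate to an $\ell_2$ bound. Using $\min(a,1)^2 = \min(a^2,1)$ for $a \ge 0$ and the elementary inequality $\sum_i \min(a_i^2, 1) \ge \min\!\big(\sum_i a_i^2,\, 1\big)$ (immediate by casing on whether some $a_i^2 \ge 1$), one gets
\[
\|\mu'\|^2 = \sum_{i=1}^d |\mu'_i|^2 \;\ge\; c'^2 \sum_{i=1}^d \min(\mu_i^2, 1) \;\ge\; c'^2 \min\!\big(\|\mu\|^2, 1\big),
\]
so $\|\mu'\| \ge c' \min(\|\mu\|, 1)$. Since the base point $\mathbf{0} \in [-1/2,1/2]^d$ and $\mu' \in (-1,1)^d$, Proposition \ref{prop:mean_tv} applies and yields $\TV(\mathcal{P}(0), \mathcal{P}(\mu')) = \Theta(\min(\|\mu'\|, 1)) \ge \Theta\big(c'\min(\|\mu\|,1)\big)$; absorbing constants into a single $c > 0$ finishes the proof.

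I do not expect a genuine obstacle here: everything reduces to the scalar estimate $|2\Phi(t)-1| \ge c'\min(|t|,1)$ plus a trivial aggregation. The only mild care points are (i) using the inequality $\sum_i \min(a_i^2,1) \ge \min(\sum_i a_i^2, 1)$ rather than a naive coordinatewise comparison, so that the $\ell_2$ step loses only a constant factor, and (ii) checking the hypotheses of Proposition \ref{prop:mean_tv}, which hold because the relevant base point is exactly $\mathbf{0} \in [-1/2,1/2]^d$.
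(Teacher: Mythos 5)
Your proof is correct. The paper does not actually prove this proposition---it only cites \cite[Theorem 3.1]{CanonneKMUZ20}---but your argument via the coordinatewise sign map is exactly the reduction used in that cited work: independence of coordinates gives the product structure, the scalar concavity estimate $|2\Phi(t)-1|\ge (2\Phi(1)-1)\min(|t|,1)$ transfers $\|\mu\|$ to $\|\mu'\|$ up to a constant, and Proposition \ref{prop:mean_tv} (applicable since the base point is $\mathbf{0}$) converts this to a total variation lower bound. All the steps, including the aggregation inequality $\sum_i\min(a_i^2,1)\ge\min\bigl(\sum_i a_i^2,1\bigr)$, check out.
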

    Hence, we have a reduction from testing the mean of a Gaussian with identity covariance to testing uniformity of a product distribution, which implies that any lower bound for testing the mean of a identity covariance Gaussian implies the same lower bound for testing uniformity of a product distribution.


\medskip

Finally, we note that in all of our cases, our alternative hypothesis may be phrased as $\|\mu\| \ge \alpha$ (since we may assume $\mu^* = 0$ and since we have provided an equivalence between total variation distance being large and $\|\mu\|$ being large in the Boolean product case). In the following proposition, we show that it suffices to consider the weaker alternative hypothesis of $\alpha \le \|\mu\| \le 2 \alpha$.

\begin{proposition} \label{prop:wlog_bounded}
    Let $0 < \alpha \le 1$, and fix some positive integer $N$, where $N \ge \eps^{-1}$. Suppose $\mathcal{A}$ is an algorithm that can $(\eps, 0)$-privately distinguish between $\mathcal{H}_0:$ $N$ samples drawn i.i.d. from $\mathcal{N}(0, I)$, and $\mathcal{H}_1:$ $N$ samples drawn i.i.d. from $\mathcal{N}(\mu, I)$ where $\alpha \le \|\mu\| \le 2 \alpha$. Then, using $N' = O(N \log \frac{d}{\alpha} \log \log \frac{d}{\alpha})$ samples, we can $(O(\eps), 0)$-privately distinguish between $\mathcal{H}_0:$ $N$ samples drawn i.i.d. from $\mathcal{N}(0, I)$, and $\mathcal{H}_1:$ $N$ samples drawn i.i.d. from $\mathcal{N}(\mu, I)$ where $\|\mu\| \ge \alpha$.
\end{proposition}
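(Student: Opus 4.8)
The plan is to bootstrap the given algorithm $\mathcal{A}$, which only works on the window $\|\mu\| \in [\alpha, 2\alpha]$, to the unbounded range $\|\mu\| \ge \alpha$ by a dyadic ``scale sweep'', reducing every scale to that window, together with one elementary private test for the regime where $\|\mu\|$ is so large that testing is essentially trivial. First I would define, for each integer $k \ge 0$, the randomized preprocessing map sending $\textbf{X} = (X^{(1)},\dots,X^{(N)})$ to $\textbf{Y}_k = (Y^{(1)},\dots,Y^{(N)})$ with $Y^{(i)} = 2^{-k}\bigl(X^{(i)} + W^{(i)}\bigr)$, where the $W^{(i)} \sim \mathcal{N}(0,(2^{2k}-1)I)$ are fresh and independent. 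If each $X^{(i)} \sim \mathcal{N}(\nu,I)$ then each $Y^{(i)} \sim \mathcal{N}(2^{-k}\nu, I)$: under $\mathcal{H}_0$ this is exactly $\mathcal{N}(0,I)$ for every $k$, and under $\mathcal{H}_1$ with $\|\mu\| \in [2^k\alpha, 2^{k+1}\alpha]$ it is $\mathcal{N}(2^{-k}\mu, I)$ with $\|2^{-k}\mu\| \in [\alpha,2\alpha]$, precisely the window $\mathcal{A}$ handles. Since this map is coordinatewise and, under a coupling that reuses the same noise, sends neighboring datasets to neighboring datasets, $\mathcal{A}$ composed with it is still $(\eps,0)$-DP.

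Next I would set $R := c_0\sqrt{d}$ for a suitable constant $c_0$ and $K := \lceil \log_2(R/\alpha)\rceil = O(\log(d/\alpha))$, so that the windows $[2^k\alpha, 2^{k+1}\alpha]$ for $0 \le k \le K$ cover $[\alpha,R]$. For each $k$ I run the scale-$k$ reduction followed by $\mathcal{A}$ on $m = O(\log K) = O(\log\log(d/\alpha))$ \emph{disjoint} fresh batches of $N$ samples and let $b_k$ be the majority vote; a Chernoff bound makes $b_k$ err with probability at most $\tfrac{1}{10(K+1)}$ on either of $\mathcal{A}$'s hypotheses. For the trivial regime $\|\mu\| > R$, in parallel I run a simple private test: compute the clipped statistic $S := \tfrac1N \sum_i \min(\|X^{(i)}\|^2, M)$ with $M = \Theta(d)$ (global sensitivity $M/N$), add $\mathrm{Lap}(M/(N\eps))$, and output $1$ iff the result exceeds $\Theta(d)$; Theorem~\ref{thm:massartlaurent} shows this correctly distinguishes $\mathcal{N}(0,I)$ from $\mathcal{N}(\mu,I)$ whenever $\|\mu\| \ge R$, and since $N \gtrsim \eps^{-1}$ the Laplace noise is $o(d)$, so the test is $(\eps,0)$-DP. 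The full algorithm outputs $1$ iff some $b_k = 1$ or the clipped test fires.

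It then remains to verify the three requirements. For correctness under $\mathcal{H}_0$: every batch at every scale feeds $\mathcal{N}(0,I)$ to $\mathcal{A}$, so $\Pr[b_k=1] \le \tfrac1{10(K+1)}$, a union bound gives $\Pr[\exists k:\ b_k=1] \le \tfrac1{10}$, the clipped test fires with probability $\le \tfrac1{10}$, and the output is $0$ with probability $\ge \tfrac23$. Under $\mathcal{H}_1$: if $\|\mu\| \le R$ there is a scale $k^*$ with $\|2^{-k^*}\mu\| \in [\alpha,2\alpha]$, so $b_{k^*}=1$ with probability $\ge \tfrac23$; if $\|\mu\| > R$ the clipped test fires with probability $\ge \tfrac23$; either way the output is $1$ with probability $\ge \tfrac23$. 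For privacy, all $(K+1)m + 1$ sub-mechanisms act on pairwise disjoint blocks of the input, so by parallel composition the tuple of their outputs is $(\eps,0)$-DP, and the final $\{0,1\}$ decision is post-processing; hence the whole algorithm is $(O(\eps),0)$-DP (in fact $(\eps,0)$). The sample count is $(K+1)\cdot m\cdot N + O(N) = O\!\left(N\log\tfrac d\alpha\,\log\log\tfrac d\alpha\right)$, as claimed.

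The main obstacle to watch is keeping the privacy loss at $O(\eps)$ rather than $\bigl((K+1)m\bigr)\eps$: this is exactly why each scale/repetition must consume its own fresh block of samples, so that parallel composition — not sequential composition — applies. The only other subtlety is that a dyadic sweep cannot reach $\|\mu\| = \infty$, which is what forces the separate large-mean test; choosing the cutoff at $R = \Theta(\sqrt d)$ (rather than something carrying a $\log N$ factor) is what pins the number of scales at $O(\log(d/\alpha))$, and that in turn is why the large-mean test must use a \emph{clipped}, bounded-sensitivity statistic so its Laplace calibration depends only on $d$, $N$, and $\eps$.
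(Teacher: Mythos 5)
Your proposal is correct and follows the same overall strategy as the paper's proof: rescale the data by $2^{-k}$ after adding fresh $\mathcal{N}(0,(2^{2k}-1)I)$ noise so that each dyadic scale of $\|\mu\|$ lands in the window $[\alpha,2\alpha]$, amplify each scale's test to error $O(1/\log\frac{d}{\alpha})$ by a majority vote over $O(\log\log\frac{d}{\alpha})$ disjoint batches, union-bound over the $O(\log\frac{d}{\alpha})$ scales, and invoke parallel composition since every sub-mechanism sees its own block of samples. The one place you genuinely diverge is the very-large-mean regime $\|\mu\| \gg \sqrt{d}$: the paper handles it by releasing a single uniformly random sample as a crude estimate of $\mu$, which is only $(0,\eps)$-DP and therefore forces a final appeal to Proposition \ref{prop:pure_approx} to convert the resulting $(\eps,\eps)$-DP algorithm back to pure DP, whereas your clipped-norm statistic with Laplace noise is $(\eps,0)$-DP outright, so your construction stays in pure DP throughout --- a slightly cleaner route. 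The only quantitative slip is your claim that the Laplace noise on that test is $o(d)$ because $N \gtrsim \eps^{-1}$: the hypothesis only gives $N\eps \ge 1$, so the noise scale $M/(N\eps)$ is merely $O(d)$, i.e.\ of the same order as the gap you are trying to detect, and the test need not reach confidence $2/3$. This is repaired by allotting that sub-test $CN$ samples for a sufficiently large constant $C$ (comfortably within the $N'$ budget), or by making the clipping level a large constant multiple of the decision threshold; either way the stated sample complexity is unaffected.
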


\begin{proof}
    First, we can learn the mean $\mu$ with $1-o(1)$ probability up to error $O(\sqrt{d})$ and with $(0, \eps)$-privacy, by picking a random point among the sampled $N' \ge N \ge \frac{1}{\eps}$ points and outputting that as our initial guess $\tilde{\mu}$. If the estimate $\tilde{\mu}$ is not $O(\sqrt{d})$ in magnitude, we can instantly reject the null hypothesis and output $1$.
    
    Next, consider partitioning the points $X^{(1)}, \dots, X^{(N')}$ into groups that each contain $N$ points, and the groups are in turn partitioned into ``supergroups'' that each contain $O(\log \log \frac{d}{\alpha})$. There are a total of $O(\log \frac{d}{\alpha})$ supergroups.
    Now, for each point $X^{(j)}$ in the $t$th supergroup, we replace it with $X'^{(j)} := (X^{(j)} + \sqrt{2^{2t-2}-1} \cdot Z_j)/2^{t-1}$, where $Z_j \sim \mathcal{N}(0, I)$ is independent for each point $X^{(j)}$. Note that if $X^{(j)} \sim \mathcal{N}(\mu, I)$, then $X'^{(j)} \sim \mathcal{N}(\mu/2^{t-1}, I)$. Now, for each group, we run $\mathcal{A}$ on the $N$ points $X'^{(j)}$ that are now in it. The output for each group will either be $0$ or $1$. Next, for each $t$, we compute the majority output for all groups in the $t$th supergroup to obtain a bit $b_t$, which is either $0$ or $1$. Our final algorithm returns $0$ if and only if $b_t = 0$ for all $1 \le t \le O(\log \frac{d}{\alpha})$.
    
    First, suppose that the samples $X^{(j)}$ are drawn i.i.d. from $\mathcal{N}(0, I)$. Then, the samples $X'^{(j)}$ are also i.i.d. from $\mathcal{N}(0, I)$. This means that each time we run $\mathcal{A}$ on some group, we output $0$ with probability at least $2/3$, which means that the majority output in any supergroup will be $0$ with probability at least $1-\frac{c}{\log d/\alpha}$ for some small constant $c$. So, each $b_t$ equals $0$ with probability at least $1-\frac{c}{\log d/\alpha}$, which means that all of the $b_t$'s equal $0$ with probability at least $3/4$. In addition, our initial guess $\tilde{\mu}$ has norm at most $O(\sqrt{d})$ with $1-o(1)$ probability, so we output $0$ as our final answer with probability at least $3/4-o(1) \ge 2/3$.
    
    Next, suppose that the samples are drawn i.i.d. from $\mathcal{N}(\mu,I)$, where $O(\sqrt{d}) \ge \|\mu\| \ge \alpha$. Let $t \ge 1$ be the unique integer such that $2^{t-1} \cdot \alpha \le \|\mu\| < 2^t \cdot \alpha$. Note that $t \le O(\log \frac{d}{\alpha})$. Then, every $X'^{(j)}$ in the $t$h supergroup is i.i.d. $\mathcal{N}(\mu/2^{t-1}, I)$, where $\alpha \le \|\mu/2^{t-1}\| \le 2 \alpha$. So, in each group in the $t$th supergroup, $\mathcal{A}$ outputs $1$ with probability at least $2/3$, which means that $b_t = 1$ with probability at least $1-\frac{c}{\log d/\alpha} \ge 2/3$. We may have rejected the null hypothesis immediately, but that only helps us.
    
    Finally, if the samples are drawn i.i.d. from $\mathcal{N}(\mu, I)$, where $\mu \ge O(\sqrt{d})$, then we would have rejected the null hypothesis immediately with probability at least $1-o(1)$. Hence, the overall algorithm is accurate.
    
    To verify that the algorithm is private, note that replacing each $X^{(j)}$ with $X'^{(j)}$ only depends on the value $X^{(j)}$ and the position $j$. In addition, we are performing $(\eps, 0)$-DP algorithms on each group, but the groups are distinct. So, any algorithm that only depends on the output of $\mathcal{A}$ on each group is still $(\eps, 0)$-DP. In addition, the first part of the algorithm of determining $\tilde{\mu}$ is $(0, \eps)$-DP, so the overall algorithm is $(\eps, \eps)$-DP by the weak composition theorem. Thus, we can modify the algorithm to be $(O(\eps), 0)$-DP, by Proposition \ref{prop:pure_approx}.
\end{proof}

\begin{remark} \label{rmk:wlog_bounded}
We note that Proposition \ref{prop:wlog_bounded} extends easily to the cases of unknown but bounded covariance Gaussians, product distributions, and tolerant identity testing. Indeed, for the case of bounded covariance Gaussians, we note that we can replace each $X^{(j)}$ in the $t$th supergroup with $X'^{(j)} := (X^{(j)}+\sqrt{2^{2t-2}-1} \cdot Z_j)/2^{t-1}$ where $Z_j \sim \mathcal{N}(0, I),$ and then each $X'^{(j)}$ in the $t$th supergroup is independent and has distribution $\mathcal{N}(\mu/2^t, (1-1/2^{2t-2}) \cdot I + (1/2^{2t-2}) \cdot \Sigma)$, and note that if $\|\Sigma\|_2 \le 1$, then $\|(1-1/2^{2t-2}) \cdot I + (1/2^{2t-2}) \cdot \Sigma\|_2 \le 1$ as well. For the case of product distributions, if each sample $X^{(j)}$ were drawn from $\mathcal{P}(\mu)$, we can replace it with $X'^{(j)}$ where each coordinate of $X'^{(j)}_i$ is independently equal to $X^{(j)}_i$ with probability $\frac{1}{2}+\frac{1}{2^t}$ and is $-X^{(j)}_i$ with probability $\frac{1}{2}-\frac{1}{2^t}$. Then, note that the distribution of $X'^{(j)}$ is precisely $\mathcal{P}(\mu/2^{t-1})$. Finally, in the case of tolerant testing, note that if $\|\mu\| \le \alpha/2$, then $\|\mu/2^{t-1}\| \le \alpha/2^t \le \alpha/2$ for all $t \ge 1$, so the same reduction in Proposition \ref{prop:wlog_bounded} holds.
\end{remark}

Therefore, in all of our theorems, it suffices to consider the null hypothesis of $\mu = 0$ and the alternative hypothesis of $\alpha \le \|\mu\| \le 2 \alpha$. (Recall that we can assume WLOG that $\mu^* = 0$ always, and that $\Sigma = I$ in the known-covariance Gaussian case.)

\section{Concentration Bounds} \label{sec:concentration}

The main focus of this section is to state and prove various concentration bounds that are crucial in establishing accuracy of our differentially private algorithms. These bounds establish important properties of a matrix $T$ where $T_{i, j} = \langle X^{(i)}, X^{(j)} \rangle$, and each $X^{(i)}$ is drawn i.i.d. either from some Gaussian $\mathcal{N}(\mu, I)$ or some Boolean Product distribution $\mathcal{P}(\mu)$. We have $4$ types of bounds.
\begin{enumerate}
    \item Lemmas \ref{bound:total_gaussian} and \ref{bound:total_product} provide bounds on the sum over all entries in $T$.
    \item Lemmas \ref{bound:entry_gaussian} and \ref{bound:entry_product} provide bounds on individual terms of the matrix $T_{i, j}$.
    \item Proposition \ref{bound:prop_row} and Lemma \ref{bound:lem_row} provide bounds on the sum of a row or column of $T$.
    \item Proposition \ref{bound:gaussian_mgf} to Lemma \ref{bound:product_submatrix} provide bounds on the sum of submatrices of $T$.
\end{enumerate}

We will combine these bounds together in Theorem \ref{thm:concentration}.

We will also prove related results (Propositions \ref{bound:unknown_cov} and \ref{bound:unknown_subgaussian}, and Theorem \ref{thm:concentration_2}) for when each sample $X^{(i)}$ is drawn i.i.d. from some Gaussian $\mathcal{N}(\mu, \Sigma)$, where $\|\Sigma\|_2 \le 1$ is unknown. This will be important for generalizing our upper bound results in Theorem \ref{thm:cov_unknown}.

Finally, we will also prove two results relating to univariate Normal and chi variables (Propositions \ref{prop:chi_normal_tv} and \ref{prop:scaled_normal_tv}), which will be important in establishing our sample complexity lower bound.

\begin{lemma} \cite{CanonneKMUZ20} \label{bound:total_gaussian}
    Let $X^{(1)}, \dots, X^{(N)} \overset{i.i.d.}{\sim} \mathcal{N}(\mu, I)$, and let $\bar{X} = \sum_{i = 1}^{N} X^{(i)}$. Then,
\begin{itemize}
    \item $\BE\left[\|\bar{X}\|^2\right] = Nd + N^2 \cdot \|\mu\|^2$.
    \item $Var\left[\|\bar{X}\|^2\right] = O(N^2 d + N^3 \cdot \|\mu\|^2)$.
\end{itemize}
\end{lemma}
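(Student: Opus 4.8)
The plan is to reduce the whole computation to one dimension by using that a sum of i.i.d.\ Gaussians is Gaussian. Since each $X^{(i)} \sim \mathcal{N}(\mu, I)$ independently, the sum $\bar{X} = \sum_{i=1}^N X^{(i)}$ has distribution $\mathcal{N}(N\mu, NI)$; in particular, writing $\bar{X} = (\bar{X}_1, \dots, \bar{X}_d)$, the coordinates $\bar{X}_k$ are mutually independent with $\bar{X}_k \sim \mathcal{N}(N\mu_k, N)$. This is the key structural observation, because $\|\bar{X}\|^2 = \sum_{k=1}^d \bar{X}_k^2$ is then a sum of independent terms, so both its mean and its variance decompose coordinatewise: $\BE[\|\bar{X}\|^2] = \sum_k \BE[\bar{X}_k^2]$ and $\Var[\|\bar{X}\|^2] = \sum_k \Var[\bar{X}_k^2]$.

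For the mean, I would compute $\BE[\bar{X}_k^2] = \Var(\bar{X}_k) + (\BE \bar{X}_k)^2 = N + N^2 \mu_k^2$ and sum over $k$, giving $\BE[\|\bar{X}\|^2] = Nd + N^2 \|\mu\|^2$ exactly.

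For the variance, write $\bar{X}_k = N\mu_k + \sqrt{N}\,Z_k$ with $Z_k \sim \mathcal{N}(0,1)$, so that $\bar{X}_k^2 = N^2\mu_k^2 + 2N^{3/2}\mu_k Z_k + N Z_k^2$. The constant $N^2\mu_k^2$ contributes nothing to the variance, and the covariance between $2N^{3/2}\mu_k Z_k$ and $N Z_k^2$ vanishes because $\BE[Z_k^3] = 0$. Hence $\Var[\bar{X}_k^2] = 4N^3\mu_k^2 \,\Var(Z_k) + N^2\, \Var(Z_k^2) = 4N^3\mu_k^2 + 2N^2$, using $\Var(Z_k^2) = 2$. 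Summing over $k$ gives $\Var[\|\bar{X}\|^2] = 4N^3\|\mu\|^2 + 2N^2 d = O(N^2 d + N^3\|\mu\|^2)$, as claimed.

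I do not expect any genuine obstacle: the argument is just bookkeeping with low-order Gaussian moments, and it even yields the bound with explicit constants rather than up to $O(\cdot)$. The one place worth a moment's care is the vanishing of the cross term between the linear and quadratic parts of $\bar{X}_k^2$, which relies on the symmetry of the standard normal (its third moment being zero).
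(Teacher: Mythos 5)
Your proof is correct: the reduction to independent coordinates of $\bar{X} \sim \mathcal{N}(N\mu, NI)$, the computation of the second and fourth Gaussian moments, and the observation that the cross-covariance vanishes by $\BE[Z_k^3]=0$ all check out, yielding $\BE[\|\bar{X}\|^2] = Nd + N^2\|\mu\|^2$ and $\Var[\|\bar{X}\|^2] = 2N^2 d + 4N^3\|\mu\|^2$ with explicit constants. The paper does not include its own proof of this lemma (it is cited to \cite{CanonneKMUZ20}), and your argument is the standard one for it.
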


\begin{lemma} \cite{CanonneKMUZ20} \label{bound:total_product}
    Let $X^{(1)}, \dots, X^{(N)} \overset{i.i.d.}{\sim} \mathcal{P}(\mu)$, where $\|\mu\| \le 1$, and let $\bar{X} = \sum_{i = 1}^{N} X^{(i)}$. Then,
\begin{itemize}
    \item $\BE\left[\|\bar{X}\|^2\right] = Nd + \Theta(N^2 \cdot \|\mu\|^2)$.
    \item $Var\left[\|\bar{X}\|^2\right] = O(N^2 d + N^3 \cdot \|\mu\|^2)$.
\end{itemize}
\end{lemma}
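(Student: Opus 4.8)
The plan is to directly compute the mean and variance by expanding $\|\bar X\|^2 = \sum_{i,j} \langle X^{(i)}, X^{(j)} \rangle$, using independence across the samples and across the coordinates of each sample. Writing $X^{(i)} = \mu + Y^{(i)}$ where $Y^{(i)}$ has independent coordinates with $\BE[Y^{(i)}_k] = 0$, $\BE[(Y^{(i)}_k)^2] = 1-\mu_k^2$, and $|Y^{(i)}_k| \le 2$ almost surely, I would first note $\BE[\|\bar X\|^2] = N^2\|\mu\|^2 + \sum_i \BE[\|Y^{(i)}\|^2] = N^2\|\mu\|^2 + N\sum_k (1-\mu_k^2) = Nd + N^2\|\mu\|^2 - N\|\mu\|^2$. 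Since $\|\mu\|^2 \le d$ and (when $\|\mu\|\le 1$) $N\|\mu\|^2 \le N^2\|\mu\|^2$, this is $Nd + \Theta(N^2\|\mu\|^2)$ (the lower-order correction $-N\|\mu\|^2$ is absorbed, and one should check that even when $\|\mu\|$ is tiny the leading $Nd$ dominates, so the $\Theta$ on the $N^2\|\mu\|^2$ term is about identifying the order of the $\mu$-dependent part, not of the whole expression — this matches exactly the phrasing of Lemma \ref{bound:total_gaussian}).

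For the variance, I would write $\|\bar X\|^2 - \BE\|\bar X\|^2$ in terms of the centered variables $Y^{(i)}$ and expand into a sum of terms indexed by tuples of sample-indices. The cleanest route is: $\|\bar X\|^2 = \|N\mu + \sum_i Y^{(i)}\|^2 = N^2\|\mu\|^2 + 2N\mu^T(\sum_i Y^{(i)}) + \|\sum_i Y^{(i)}\|^2$. The first term is deterministic. For the cross term, $\Var[2N\mu^T \sum_i Y^{(i)}] = 4N^2 \sum_i \Var[\mu^T Y^{(i)}] = 4N^2 \cdot N \sum_k \mu_k^2(1-\mu_k^2) = O(N^3\|\mu\|^2)$. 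For $\|\sum_i Y^{(i)}\|^2 = \sum_k (\sum_i Y^{(i)}_k)^2$, since the coordinates $k$ are independent the variance is $\sum_k \Var[(\sum_i Y^{(i)}_k)^2]$; each inner sum $S_k := \sum_i Y^{(i)}_k$ is a sum of $N$ i.i.d. bounded mean-zero variables with variance $\le 1$, so $S_k$ is subgaussian with norm $O(\sqrt N)$ (Propositions \ref{prop:normal_subgaussian}-style bound; here use the bounded-variable subgaussian fact, or Hoeffding), hence $S_k^2$ has variance $O(N^2)$, giving $\sum_k O(N^2) = O(N^2 d)$. Finally, the covariance between the cross term and $\|\sum_i Y^{(i)}\|^2$ is controlled by Cauchy--Schwarz as $O(\sqrt{N^3\|\mu\|^2 \cdot N^2 d})$, which is at most $O(N^3\|\mu\|^2 + N^2 d)$. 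Summing, $\Var[\|\bar X\|^2] = O(N^2 d + N^3\|\mu\|^2)$ as claimed.

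The main obstacle is bounding $\Var[S_k^2]$ cleanly: one wants $\Var[S_k^2] = O(N^2)$ uniformly in $k$, which requires a fourth-moment (or subgaussian fourth-moment) bound on $S_k$. This is where boundedness of the Boolean variables is essential — for a sum of $N$ independent variables bounded by $2$ with variance at most $1$, one has $\BE[S_k^4] = O(N^2)$ by the standard expansion (the only surviving terms are the ``all-paired'' ones, of which there are $O(N^2)$, each contributing $O(1)$), and then $\Var[S_k^2] \le \BE[S_k^4] = O(N^2)$. Everything else is routine bookkeeping with independence; I would also double-check the edge behavior when $\|\mu\|$ is close to $1$ coordinatewise so that the $(1-\mu_k^2)$ factors are used only to upper-bound, never needed as lower bounds for the variance claim.
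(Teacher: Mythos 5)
The paper does not actually prove this lemma; it imports it directly from \cite{CanonneKMUZ20}, so there is no internal argument to compare against. Your direct moment computation is correct and self-contained: decomposing $X^{(i)} = \mu + Y^{(i)}$ with $\BE[(Y^{(i)}_k)^2] = 1-\mu_k^2$ gives $\BE[\|\bar X\|^2] = Nd + (N^2-N)\|\mu\|^2$, which is $Nd + \Theta(N^2\|\mu\|^2)$ (for $N\ge 2$; at $N=1$ the statistic is deterministic and the $\Theta$ is vacuous), and your variance bound is sound — the cross term contributes $O(N^3\|\mu\|^2)$, coordinate independence reduces $\Var[\|S\|^2]$ to $\sum_k \Var[S_k^2]$, and the fourth-moment expansion $\BE[S_k^4] = N\BE[Y_k^4] + 3N(N-1)\sigma_k^4 = O(N^2)$ for bounded mean-zero summands is exactly the right way to close the one step you flagged as the main obstacle (the covariance of the two pieces is then handled by Cauchy--Schwarz or simply by $\Var[A+B]\le 2\Var[A]+2\Var[B]$). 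This is the standard argument one would expect behind the cited claim, so nothing is missing.
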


\begin{lemma} \label{bound:entry_gaussian}
    Let $X, Y \overset{i.i.d.}{\sim} \mathcal{N}(\mu, I)$, where $\|\mu\| \le 1$. Then, with probability at least $1 - \frac{1}{N^3}$, $\|X\|^2 = d \pm O(\sqrt{d} \cdot \log N)$. Likewise, with probability at least $1 - \frac{1}{N^3}$, $\langle X, Y \rangle = \pm O(\sqrt{d} \cdot \log N)$.
\end{lemma}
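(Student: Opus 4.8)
The plan is to write $X = \mu + Z$ with $Z \sim \mathcal{N}(0, I)$ and expand $\|X\|^2 = \|\mu\|^2 + 2\langle\mu, Z\rangle + \|Z\|^2$, then control each of the three pieces with a standard tail bound and finish with a union bound. The term $\|\mu\|^2$ is at most $1$ by hypothesis. The term $\langle \mu, Z\rangle$ is distributed as $\mathcal{N}(0, \|\mu\|^2)$, hence subgaussian with norm at most $\|\mu\| \le 1$ by Proposition \ref{prop:normal_subgaussian}, so it is $O(\sqrt{\log N}) = O(\log N)$ except with probability at most $\tfrac{1}{3N^3}$. The term $\|Z\|^2$ is a $\chi^2$ random variable with $d$ degrees of freedom, so I would apply Theorem \ref{thm:massartlaurent} with $a_1 = \cdots = a_d = 1$ (so $|a|_2 = \sqrt{d}$, $|a|_\infty = 1$) and $t = \Theta(\log N)$, which gives $\big|\|Z\|^2 - d\big| = O(\sqrt{d\log N} + \log N) = O(\sqrt{d}\,\log N)$ except with probability $\tfrac{1}{3N^3}$. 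A union bound then yields $\|X\|^2 = d \pm O(\sqrt{d}\,\log N)$ with probability at least $1 - \tfrac{1}{N^3}$.

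For the inner-product bound I would similarly write $X = \mu + Z$ and $Y = \mu + W$ with $Z, W \sim \mathcal{N}(0, I)$ independent, so that $\langle X, Y\rangle = \|\mu\|^2 + \langle\mu, Z\rangle + \langle\mu, W\rangle + \langle Z, W\rangle$. The first three summands are bounded by $1$ and $O(\log N)$ exactly as before. For the cross term I would use polarization, $\langle Z, W\rangle = \frac{1}{4}\big(\|Z+W\|^2 - \|Z-W\|^2\big)$, observe that $Z \pm W \sim \mathcal{N}(0, 2I)$, and apply Theorem \ref{thm:massartlaurent} (now with all $a_i = 2$) to conclude that each of $\|Z \pm W\|^2$ equals $2d \pm O(\sqrt{d}\,\log N)$ with probability at least $1 - \tfrac{1}{5N^3}$, so $\langle Z, W\rangle = \pm O(\sqrt{d}\,\log N)$. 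A final union bound over the constantly many bad events, with the hidden constants adjusted, gives $\langle X, Y\rangle = \pm O(\sqrt{d}\,\log N)$ with probability at least $1 - \tfrac{1}{N^3}$. (Alternatively, one can condition on the event $\|W\|^2 \le 2d$ from the first part and use that $\langle Z, W\rangle \mid W \sim \mathcal{N}(0, \|W\|^2)$ is subgaussian with norm $O(\sqrt{d})$; polarization is slightly cleaner because Theorem \ref{thm:massartlaurent} already produces the correct two-regime tail.)

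There is no deep obstacle here: the statement is a routine concentration estimate. The only point requiring a little care is the cross term $\langle Z, W\rangle$, which is sub-exponential rather than Gaussian or $\chi^2$; polarization reduces it to two $\chi^2$ tails so that Theorem \ref{thm:massartlaurent} suffices and no extra machinery is needed. The remaining work is just bookkeeping of the $\log N$ factors, verifying $\sqrt{d\log N} + \log N = O(\sqrt{d}\,\log N)$ (which holds since $d \ge 1$ and $\log N \ge 1$, as we may assume $N$ exceeds a fixed constant), and splitting the target failure probability $\tfrac{1}{N^3}$ among the constituent bad events.
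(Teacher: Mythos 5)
Your proposal is correct and follows essentially the same route as the paper: isolate the dependence on $\mu$ (the paper rotates so $\mu$ lies along the first axis, you write $X=\mu+Z$ coordinate-free, which is equivalent), apply the Laurent--Massart bound (Theorem \ref{thm:massartlaurent}) to the resulting $\chi^2$ terms, and handle the cross term $\langle Z,W\rangle$ by the same polarization identity. The bookkeeping of tail probabilities and the reduction $\sqrt{d\log N}+\log N = O(\sqrt{d}\log N)$ match the paper's argument.
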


\begin{proof}
    By rotational symmetry of Gaussians, we may assume WLOG that $\mu = (\beta, 0, 0, \dots, 0)$, where $\beta = \|\mu\|.$ Then, $\|X\|^2 = (\beta + Z_1)^2 + Z_2^2 + \cdots + Z_d^2,$ where each $Z_i \overset{i.i.d.}{\sim} \mathcal{N}(0, 1)$. First, note that $|Z_1| \le O(\sqrt{\log N})$ with at least $1-\frac{1}{2N^3}$ probability, in which case $(\beta + Z_1)^2 = O(\log N)$. Next, $Z_2^2 + \cdots + Z_d^2 \sim \chi_{d-1}^2$. By Theorem \ref{thm:massartlaurent}, we have that $\BP(|\chi_k^2-k| \ge 2 \sqrt{kt}+2t) \le 2e^{-t}$ for all $t \ge 0$, so by setting $k = d-1$ and $t = O(\log N)$, we have that $\BP(|\chi_{d-1}^2 - (d-1)| \ge O(\sqrt{d \log N}+\log N)) \le \frac{1}{4N^3}$. Therefore, we have that $\BP\left(\left|\|X\|^2-d\right| \ge O(\sqrt{d \log N} + \log N)\right) \le 1 - \frac{1}{N^3}$.
    
    Next, $\langle X, Y \rangle= (\beta + Z_1)(\beta + Z_1') + Z_2Z_2' + \cdots + Z_dZ_d',$ where each $Z_i, Z_i' \overset{i.i.d.}{\sim} \mathcal{N}(0, 1)$. We have $|Z_1|, |Z_1'| \le O(\sqrt{\log N})$ with at least $1-\frac{1}{2N^3}$ probability, in which case $(\beta + Z_1)(\beta+Z_1') = O(\log N)$. Next, $$Z_2Z_2' + \cdots + Z_dZ_d' = \frac{1}{4}\left[(Z_2+Z_2')^2 + \cdots + (Z_d+Z_d')^2 - (Z_2-Z_2')^2 + \cdots + (Z_d-Z_d')^2\right],$$ and using the fact that $Z_i+Z_i', Z_i-Z_i'$ are i.i.d. $\mathcal{N}(0, 2)$ variables, this equals $\frac{1}{2}[A-B]$ where $A, B \overset{i.i.d}{\sim} \chi_{d-1}^2$. As we have already seen, $\BP(|A - (d-1)| \ge O(\sqrt{d \log N}+ \log N)) \le \frac{1}{4N^3}$ and $\BP(|B - (d-1)| \ge O(\sqrt{d \log N}+ \log N)) \le \frac{1}{4N^3}$, so $\BP(|A-B| \ge O(\sqrt{d \log N}+ \log N)) \le \frac{1}{2N^3}$. Therefore, writing $\langle X, Y \rangle = \frac{1}{2}(A-B) + (\beta + Z_1)(\beta+Z_1'),$ we have that $\BP(\langle X, Y \rangle \ge O(\sqrt{d \log N} + \log N)) \le \frac{1}{N^3}$.
\end{proof}

\begin{lemma} \label{bound:entry_product}
    Let $\mu \in [-1, 1]^d$ be such that $\|\mu\| \le 1.$ Let $X, Y \overset{i.i.d.}{\sim} \mathcal{P}(\mu)$. Then, $\|X\|^2 = d$ with probability $1$, and with probability at least $1 - \frac{1}{N^3}$, $\langle X, Y \rangle = \pm O(\sqrt{d \cdot \log N})$.
\end{lemma}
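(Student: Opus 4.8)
The first claim is immediate and requires no probability: since $\mathcal{P}(\mu)$ is supported on $\{-1,1\}^d$, each coordinate satisfies $X_i^2 = 1$ deterministically, so $\|X\|^2 = \sum_{i=1}^d X_i^2 = d$ with probability $1$.

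For the inner product, the plan is to write $\langle X, Y\rangle = \sum_{i=1}^d X_i Y_i$ and treat it as a sum of bounded independent random variables. The key observation is that the summands $X_i Y_i$ are \emph{mutually} independent: because $\mathcal{P}(\mu)$ is a product distribution, the coordinates $X_1,\dots,X_d$ are independent and the coordinates $Y_1,\dots,Y_d$ are independent, and since $X \indep Y$, the pairs $(X_i,Y_i)$ are mutually independent, hence so are the products $X_i Y_i$. Each $X_i Y_i$ takes values in $\{-1,1\}$, so it is bounded and therefore subgaussian with subgaussian norm $O(1)$. Applying Proposition \ref{prop:sum_of_subgaussian} with all $\sigma_i = O(1)$ over $d$ terms, the sum $\langle X, Y\rangle$ has subgaussian norm $O(\sqrt{d})$.

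It then remains to control the mean and apply the subgaussian tail bound. Since $\BE[X_i Y_i] = \BE[X_i]\,\BE[Y_i] = \mu_i^2$ (using independence of $X$ and $Y$ again), we have $\BE[\langle X, Y\rangle] = \|\mu\|^2 \le 1$. The subgaussian tail gives $\BP\!\left(\left|\langle X, Y\rangle - \|\mu\|^2\right| \ge t\right) \le 2\exp(-\Omega(t^2/d))$, and choosing $t = C\sqrt{d\log N}$ for a sufficiently large absolute constant $C$ makes the right-hand side at most $1/N^3$. Combining this with the bound $\|\mu\|^2 \le 1 = O(\sqrt{d\log N})$ yields $\langle X, Y\rangle = \pm O(\sqrt{d\log N})$ with probability at least $1 - 1/N^3$, as claimed.

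There is no genuine obstacle here: the argument mirrors the Gaussian case of Lemma \ref{bound:entry_gaussian} but is simpler, since each coordinate product $X_i Y_i$ is already uniformly bounded and no $\chi^2$-style estimate (such as Theorem \ref{thm:massartlaurent}) is needed. The only two points deserving a line of care are the mutual independence of the summands $X_i Y_i$, which uses both the product structure of $\mathcal{P}(\mu)$ and the independence of the two samples, and the observation that the mean term $\|\mu\|^2$ is dominated by the error term because $\|\mu\| \le 1$.
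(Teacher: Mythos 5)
Your proof is correct and follows essentially the same route as the paper: both observe $\|X\|^2 = d$ deterministically, write $\langle X, Y\rangle = \sum_i X_i Y_i$ as a sum of independent bounded terms with total mean $\|\mu\|^2 \le 1$, and apply a Hoeffding-type (subgaussian) tail bound at scale $t = \Theta(\sqrt{d \log N})$. Phrasing the concentration step via Proposition \ref{prop:sum_of_subgaussian} rather than citing Hoeffding's inequality directly is an immaterial difference.
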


\begin{proof}
    Since $X \in \{-1, 1\}^d$, we always have $\|X\|^2 = 1$. So, the first part of the lemma is immediate.

    For the second part of the lemma, we have $\langle X, Y \rangle = \sum_{i=1}^{d} X_i Y_i$. For each $i \in [d]$, $X_i Y_i$ has mean $\BE[X_i] \cdot \BE[Y_i] = \mu_i^2$. In addition, $X_i Y_i$ is bounded, so it has variance $O(1)$.  By Hoeffding's inequality, with probability at least $1-\frac{1}{N^3}$, $\sum_{i=1}^{d} X_i Y_i$ is not more than $O(\sqrt{d \log N})$ away from its expectation, which is $\sum_{i = 1}^{d} \mu_i^2 = \|\mu\|^2 = O(1)$. This proves the second part of the lemma.
\end{proof}

\begin{proposition} \label{bound:prop_row}
    Fix a vector $v \in \BR^d$ of magnitude $1$ and a vector $\mu \in [-1, 1]^d$, and let $K \ge 1$ be a fixed integer. Let $X^{(1)}, \dots, X^{(K)}$ be distributed either as i.i.d. $d$-dimensional Gaussians $\mathcal{N}(\mu, I)$ or as i.i.d. product distributions $\mathcal{P}(\mu)$, and define $\bar{X} := X^{(1)} + \cdots + X^{(K)}$. Then, $Y := \langle \bar{X} - K \cdot \mu, v \rangle$ is a subgaussian random variable with subgaussian norm at most $O(\sqrt{K}),$ meaning that $\BP[|Y| \ge t] \le 2e^{-\Omega(t^2/K)}$ for all $t \ge 0$.
\end{proposition}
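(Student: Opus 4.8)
The plan is to write $Y$ as a sum of $K$ independent mean-zero terms, bound the subgaussian norm of each term by an absolute constant, and then invoke Proposition~\ref{prop:sum_of_subgaussian} to aggregate. Concretely, since $\bar X = X^{(1)} + \cdots + X^{(K)}$ and $\langle \cdot, v\rangle$ is linear, we have
\[
Y = \sum_{i=1}^{K} \langle X^{(i)} - \mu, v\rangle,
\]
and the summands are independent because the $X^{(i)}$ are i.i.d.; moreover each summand has mean $0$ since $\BE[X^{(i)}] = \mu$ in both the Gaussian and product cases.

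First I would handle the Gaussian case. If $X^{(i)} \sim \mathcal{N}(\mu, I)$ then $X^{(i)} - \mu \sim \mathcal{N}(0, I)$, so $\langle X^{(i)} - \mu, v\rangle \sim \mathcal{N}(0, \|v\|^2) = \mathcal{N}(0,1)$, which by Proposition~\ref{prop:normal_subgaussian} has subgaussian norm $1$. Applying Proposition~\ref{prop:sum_of_subgaussian} to the $K$ independent unit-subgaussian-norm summands gives that $Y$ has subgaussian norm at most $O(\sqrt{K})$.

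Next I would handle the product case. Here $X^{(i)} \sim \mathcal{P}(\mu)$ means the coordinates $X^{(i)}_j \in \{-1,1\}$ are independent with $\BE[X^{(i)}_j] = \mu_j$, so
\[
\langle X^{(i)} - \mu, v\rangle = \sum_{j=1}^{d} v_j\bigl(X^{(i)}_j - \mu_j\bigr),
\]
an independent sum over $j$ of mean-zero random variables, each taking values in an interval of length $2|v_j|$. By Hoeffding's lemma (a bounded mean-zero variable supported on an interval of length $L$ is subgaussian with norm $O(L)$; see \cite{BoucheronLM13}), the $j$-th summand has subgaussian norm $O(|v_j|)$, and Proposition~\ref{prop:sum_of_subgaussian} then gives $\langle X^{(i)} - \mu, v\rangle$ subgaussian norm $O\bigl(\sqrt{\sum_j v_j^2}\bigr) = O(\|v\|) = O(1)$. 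Summing the $K$ independent such terms via Proposition~\ref{prop:sum_of_subgaussian} once more yields subgaussian norm $O(\sqrt{K})$ for $Y$.

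In both cases, unpacking the definition of subgaussian norm immediately gives $\BP[|Y| \ge t] \le 2e^{-\Omega(t^2/K)}$ for all $t \ge 0$, which is the claim. The only step requiring any care is the subgaussianity of the bounded Bernoulli-type contributions in the product case, since that relies on Hoeffding's lemma rather than on a proposition quoted verbatim in the preliminaries; everything else is a direct bookkeeping application of Propositions~\ref{prop:normal_subgaussian} and~\ref{prop:sum_of_subgaussian}.
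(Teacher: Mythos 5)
Your proposal is correct and follows essentially the same route as the paper: the Gaussian case reduces to the exact law $\mathcal{N}(0,K)$ of $\langle \bar X - K\mu, v\rangle$, and the product case applies Proposition \ref{prop:sum_of_subgaussian} twice to bounded mean-zero coordinate increments (the paper merely sums over the $K$ samples within each coordinate first and over coordinates second, whereas you do it in the opposite order, which is immaterial). Your worry about Hoeffding's lemma is unnecessary under the paper's tail-based definition of subgaussian norm, since a mean-zero variable bounded by $L$ trivially satisfies $\BP(|X|\ge t)\le 2e^{-t^2/L^2}$ for all $t\ge 0$.
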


\begin{proof}
    Suppose $\bar{X}$ is distributed as the sum of $K$ i.i.d. $d$-dimensional Gaussians $\mathcal{N}(\mu, I)$. Then, $\bar{X} \sim \mathcal{N}(K \cdot \mu, K \cdot I)$. Then, $\bar{X}-K \cdot \mu = \mathcal{N}(0, K \cdot I)$, so $\langle \bar{X}-K \cdot \mu, v \rangle = \mathcal{N}(0, K)$. Hence it has subgaussian norm at most $\sqrt{K}$.
    
    Suppose $\bar{X}$ is distributed as the sum of $K$ i.i.d. $d$-dimensional product distributions over $\{-1, 1\}^d$ with mean $\mu$. Then, $\bar{X}_i - K \cdot \mu_i$ is the sum of $K$ i.i.d. mean $0$ variables bounded in magnitude by $2$. Hence, each of these bounded variables has subgaussian norm $O(1)$, so $\bar{X}_i - K \cdot \mu_i$ has subgaussian norm $O(\sqrt{K})$ by Proposition \ref{prop:sum_of_subgaussian}.
    Now, note that $\langle \bar{X} - K \cdot \mu, v \rangle = \sum_{i = 1}^{d} v_i (\bar{X}_i - K \cdot \mu_i)$, and the random variables $\bar{X}_i - K \cdot \mu_i$ have mean $0$ and are independent. Therefore, by Proposition \ref{prop:sum_of_subgaussian}, $\sum_{i = 1}^{d} v_i (\bar{X}_i - K \cdot \mu_i)$ has squared subgaussian norm at most $O(K) \cdot \sum_{i = 1}^{d} v_i^2 = O(K)$, so $\langle \bar{X} - K \cdot \mu, v \rangle = \sum_{i = 1}^{d} v_i (\bar{X}_i - K \cdot \mu_i)$ has subgaussian norm at most $O(\sqrt{K})$.
\end{proof}

\begin{lemma} \label{bound:lem_row}
     Let $X^{(1)}, \dots, X^{(N)}$ be distributed either as i.i.d. $d$-dimensional Gaussians $\mathcal{N}(\mu, I)$ or as i.i.d. product distributions $\mathcal{P}(\mu)$, where $\|\mu\| \le 1$. Then, $\langle X^{(1)}, X^{(1)} + \cdots + X^{(N)}\rangle = d \pm \tilde{O}(\sqrt{Nd}+\|\mu\| \cdot N)$ with probability at least $1-\frac{3}{N^3}$.
\end{lemma}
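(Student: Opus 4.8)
The plan is to split the inner product according to which sample is distinguished. Write $\bar X := X^{(1)} + \cdots + X^{(N)}$ and $W := \sum_{i=2}^N X^{(i)}$, so that
\[
\langle X^{(1)}, \bar X\rangle = \|X^{(1)}\|^2 \;+\; (N-1)\langle X^{(1)}, \mu\rangle \;+\; \langle X^{(1)},\, W - (N-1)\mu\rangle,
\]
and bound the three terms, which I call $A$, $(N-1)B$, and $C$, separately before taking a union bound. For $A$: Lemma \ref{bound:entry_gaussian} (Gaussian case) or Lemma \ref{bound:entry_product} (product case) gives $A = \|X^{(1)}\|^2 = d \pm O(\sqrt d \log N)$ with probability at least $1 - \tfrac{1}{N^3}$ (exactly $d$ in the product case), and since $\log N \le \sqrt N$ this error is already $\tilde O(\sqrt{Nd})$; on this event $\|X^{(1)}\| \le \sqrt{d + O(\sqrt d\log N)}$. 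For $B$: assuming $\mu \neq 0$ (else $B = 0$ and the $\|\mu\| N$ term is vacuous), apply Proposition \ref{bound:prop_row} with $K = 1$ and $v = \mu/\|\mu\|$ to see that $\langle X^{(1)} - \mu, v\rangle$ is subgaussian with constant norm, hence $|B - \|\mu\|^2| = \|\mu\|\cdot|\langle X^{(1)} - \mu, v\rangle| = O(\|\mu\|\sqrt{\log N})$ except with probability $\tfrac{1}{N^3}$; together with $\|\mu\|^2 \le \|\mu\| \le 1$ this gives $(N-1)|B| = \tilde O(\|\mu\| N)$.

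For the cross term $C$, I would condition on $X^{(1)}$. Since $X^{(2)}, \dots, X^{(N)}$ are i.i.d.\ and independent of $X^{(1)}$, Proposition \ref{bound:prop_row} applied with $K = N-1$ and the unit vector $v = X^{(1)}/\|X^{(1)}\|$ shows that $\langle W - (N-1)\mu, v\rangle$ is subgaussian with norm $O(\sqrt N)$, so it is at most $O(\sqrt{N\log N})$ in magnitude except with probability $\tfrac{1}{N^3}$; multiplying by $\|X^{(1)}\|$ yields, on the good event for $A$, that $|C| \le \|X^{(1)}\|\cdot O(\sqrt{N\log N}) \le \sqrt{d + O(\sqrt d\log N)}\cdot O(\sqrt{N\log N})$. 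Using $\sqrt{a+b} \le \sqrt a + \sqrt b$ this is at most $O(\sqrt{dN\log N}) + O(d^{1/4}\log N\sqrt N)$, and since $d^{1/4} \le \sqrt d$ both summands are $\tilde O(\sqrt{Nd})$.

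Finally I would union bound over the three failure events above (each of probability at most $\tfrac{1}{N^3}$), so that with probability at least $1 - \tfrac{3}{N^3}$ all three estimates hold simultaneously, and add them to conclude $\langle X^{(1)}, \bar X\rangle = d \pm \tilde O(\sqrt{Nd} + \|\mu\| N)$. I expect the main obstacle to be the cross term $C$: one must pass to the conditional distribution of $X^{(2)}, \dots, X^{(N)}$ given $X^{(1)}$ before invoking Proposition \ref{bound:prop_row}, and then verify that the slack $O(\sqrt d\log N)$ in the bound on $\|X^{(1)}\|^2$ (present in the Gaussian case) does not inflate $|C|$ beyond $\tilde O(\sqrt{Nd})$ — which works precisely because $d^{1/4} \le \sqrt d$. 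The secondary point of care is to isolate $B = \langle X^{(1)}, \mu\rangle$ and exploit that it concentrates around $\|\mu\|^2 \le 1$, rather than bounding it crudely by $\|X^{(1)}\|\cdot\|\mu\|$, which would produce an unwanted $\|\mu\| N\sqrt d$ term.
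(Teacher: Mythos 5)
Your proposal is correct and follows essentially the same route as the paper's proof: the identical three-term decomposition of $\langle X^{(1)},\bar X\rangle$, with Lemmas \ref{bound:entry_gaussian}/\ref{bound:entry_product} for $\|X^{(1)}\|^2$, Proposition \ref{bound:prop_row} conditioned on $X^{(1)}$ with $v=X^{(1)}/\|X^{(1)}\|$ for the cross term, and a subgaussian bound on $\langle X^{(1)},\mu\rangle$ for the drift term, followed by the same union bound. Your treatment of $\langle X^{(1)},\mu\rangle$ is in fact slightly more careful than the paper's (which informally writes its distribution as $\mathcal{N}(0,\|\mu\|^2)$ rather than centering at $\|\mu\|^2$), but the argument is the same.
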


\begin{proof}
    First, note that by Lemmas \ref{bound:entry_gaussian} and \ref{bound:entry_product}, we have that $\|X^{(1)}\|^2 = d \pm O(\sqrt{d} \cdot \log N)$ with probability at least $1 - \frac{1}{N^3}$. Next, since $X^{(1)}$ is independent of $X^{(2)}, \dots, X^{(N)}$, by setting $K = N-1$, $v = X^{(1)}/\|X^{(1)}\|$, and $t = O(\sqrt{(N-1) \log N})$ in Proposition \ref{bound:prop_row}, we have that with probability at least $1-\frac{1}{N^3},$ $\left|\langle X^{(1)}, X^{(2)} + \cdots + X^{(N)} - (N-1) \mu \rangle\right| \le O(\sqrt{N \log N}) \cdot \|X^{(1)}\|$. In addition, we know that $\|X^{(1)}\| \le O\left(\sqrt{d + \sqrt{d} \log N}\right) = O(\sqrt{d \log N})$ with probability at least $1-\frac{1}{N^3}$. Finally, we note that in the Gaussian setting, $\langle X^{(1)}, \mu \rangle$ is distributed as $\mathcal{N}(0, \|\mu\|^2)$, and in the product distribution testing, $\langle X^{(1)}, \mu \rangle$ is distributed as $\sum_{i = 1}^{d} X^{(1)}_i \mu_i$, which has subgaussian norm $O\left(\sqrt{\sum_{i = 1}^{d} \mu_i^2}\right) = O(\|\mu\|)$. Thus, with probability at least $1-\frac{1}{N^3}$, $\langle X^{(1)}, \mu \rangle = O(\sqrt{\log N} \cdot \|\mu\|)$, for both the Gaussian and product settings.
    
    To summarize, we have that
\[\langle X^{(1)}, X^{(1)} + \cdots + X^{(N)}\rangle = \|X^{(1)}\|^2 + \langle X^{(1)}, X^{(2)} + \cdots + X^{(N)} - (N-1) \mu\rangle + (N-1) \cdot \langle X^{(1)}, \mu \rangle,\]
    which with probability at least $1-\frac{3}{N^3}$ is
\[d \pm O\left(\sqrt{d} \log N + \sqrt{N \log N} \cdot \sqrt{d \log N} + N \sqrt{\log N} \cdot \|\mu\|\right) = d \pm \tilde{O}\left(\sqrt{Nd} + \|\mu\| \cdot N\right).\]
\end{proof}

\begin{proposition} \label{bound:gaussian_mgf}
    Let $Z \sim \mathcal{N}(0, 1)$ be a standard univariate Gaussian. Then, for any $-\frac{1}{10} \le \gamma \le \frac{1}{10}$, $\BE[e^{\gamma \cdot Z^2}] \le 1 + \gamma + O(\gamma^2)$.
\end{proposition}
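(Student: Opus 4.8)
The plan is to evaluate $\BE[e^{\gamma Z^2}]$ in closed form and then apply Taylor's theorem with an explicit remainder bound.

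First I would compute the moment generating function of $Z^2$ directly from the Gaussian density. For any $\gamma < \tfrac12$,
\[
\BE\!\left[e^{\gamma Z^2}\right] = \frac{1}{\sqrt{2\pi}} \int_{-\infty}^{\infty} e^{\gamma z^2} e^{-z^2/2}\, dz = \frac{1}{\sqrt{2\pi}} \int_{-\infty}^{\infty} e^{-(1-2\gamma) z^2 / 2}\, dz = \frac{1}{\sqrt{1-2\gamma}},
\]
where the last equality is the standard Gaussian integral evaluated via the substitution $w = \sqrt{1-2\gamma}\, z$, which is valid since $1 - 2\gamma > 0$. This is just the well-known MGF of a $\chi^2_1$ variable, and it is finite for every $\gamma \in [-\tfrac1{10}, \tfrac1{10}]$.

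Next I would bound $f(x) := (1-x)^{-1/2}$ on the interval $x \in [-\tfrac15, \tfrac15]$, since $x = 2\gamma$ ranges over exactly this interval. We have $f(0) = 1$, $f'(0) = \tfrac12$, and $f''(x) = \tfrac34 (1-x)^{-5/2}$, which is nonnegative and bounded above by an absolute constant $M$ (for instance $M = \tfrac34 (4/5)^{-5/2}$) on $[-\tfrac15, \tfrac15]$. Taylor's theorem with Lagrange remainder gives $f(x) = 1 + \tfrac{x}{2} + \tfrac12 f''(\xi)\, x^2$ for some $\xi$ between $0$ and $x$, hence $f(x) \le 1 + \tfrac{x}{2} + \tfrac{M}{2}\, x^2$. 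Substituting $x = 2\gamma$ yields $\BE[e^{\gamma Z^2}] = f(2\gamma) \le 1 + \gamma + 2M\gamma^2 = 1 + \gamma + O(\gamma^2)$, which is the claim.

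There is no serious obstacle here; the only mild point to be careful about is that the bound must be proved uniformly in the sign of $\gamma$, using the second-order Taylor estimate rather than the crude bound $e^{\gamma Z^2} \le 1$ available for $\gamma \le 0$. The latter only gives $\BE[e^{\gamma Z^2}] \le 1$, which fails to be $\le 1 + \gamma + O(\gamma^2)$ for small negative $\gamma$; the explicit remainder bound above handles both signs simultaneously.
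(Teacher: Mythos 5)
Your proposal is correct and follows essentially the same route as the paper: compute $\BE[e^{\gamma Z^2}] = (1-2\gamma)^{-1/2}$ by the Gaussian integral and then expand to second order for $\gamma \in [-\tfrac{1}{10}, \tfrac{1}{10}]$. You simply make the Taylor remainder bound explicit where the paper states the expansion without elaboration.
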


\begin{proof}
    By integrating the PDF of the standard normal, we have that
\begin{align*}
    \BE\left[e^{\gamma \cdot Z^2}\right] &= \int_{-\infty}^{\infty} e^{\gamma \cdot x^2} \cdot \frac{1}{\sqrt{2\pi}} e^{-x^2/2} dx \\
    &= \int_{-\infty}^{\infty} \frac{1}{\sqrt{2\pi}} e^{-(1-2\gamma)x^2/2} dx \\
    &= \frac{1}{\sqrt{1-2\gamma}} = 1 + \gamma + O(\gamma^2)
\end{align*}
    for all $\gamma \in [-1/10, 1/10]$.
\end{proof}

\begin{proposition} \label{bound:subgaussian_mgf}
    Let $K \ge 1$ be an integer, and $Z' = \frac{1}{\sqrt{K}} \sum_{i = 1}^{K} Z'_i$, where each $Z'_i$ is an i.i.d. random variable with support in $[-1, 1]$. Finally, let $Z = Z' - \BE[Z']$, and let $\eta = Var(Z) = Var(Z')$. Then, for any $-\frac{1}{10} \le \gamma \le \frac{1}{10}$, $\BE[e^{\gamma \cdot Z^2}] \le 1 + \eta \cdot \gamma + O(\gamma^2)$.
\end{proposition}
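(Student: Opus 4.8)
The plan is to deduce this from standard subgaussian moment estimates followed by a Taylor expansion of $x \mapsto e^{\gamma x}$. Set $W_i := Z'_i - \BE[Z'_i]$, so each $W_i$ has mean zero and is supported on an interval of width at most $2$ (since $Z'_i$ is supported inside $[-1,1]$), and note $Z = Z' - \BE[Z'] = \frac{1}{\sqrt{K}}\sum_{i=1}^{K} W_i$. First I would invoke Hoeffding's lemma to get $\BE[e^{\lambda W_i}] \le e^{\lambda^2/2}$ for all $\lambda \in \BR$; since the $W_i$ are independent, $\BE[e^{\lambda Z}] = \prod_{i=1}^{K}\BE[e^{(\lambda/\sqrt{K}) W_i}] \le e^{\lambda^2/2}$, i.e., $Z$ satisfies the same subgaussian moment generating bound as a standard normal, uniformly in $K$.

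Next, from $\BE[e^{\lambda Z}] \le e^{\lambda^2/2}$ a Chernoff bound gives $\BP(|Z| \ge t) \le 2e^{-t^2/2}$ for all $t \ge 0$. Integrating tails via $\BE[Z^{2k}] = \int_0^\infty 2k\, t^{2k-1}\,\BP(|Z| \ge t)\,dt$ and substituting $u = t^2/2$ yields the explicit bound $\BE[Z^{2k}] \le 2^{k+1} k!$ for every integer $k \ge 1$. I also record $\BE[Z^2] = \Var(Z) = \eta$, since $Z$ has mean zero.

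Finally, expand $e^{\gamma Z^2} = \sum_{k \ge 0} \frac{\gamma^k Z^{2k}}{k!}$ and take expectations; since $\sum_{k \ge 0} \frac{|\gamma|^k \BE[Z^{2k}]}{k!} \le 2\sum_{k \ge 0}(2|\gamma|)^k < \infty$ for $|\gamma| \le \frac{1}{10}$, the interchange of expectation and summation is legitimate (Tonelli for $\gamma \ge 0$, absolute convergence for $\gamma < 0$), giving $\BE[e^{\gamma Z^2}] = 1 + \gamma\eta + \sum_{k \ge 2}\frac{\gamma^k \BE[Z^{2k}]}{k!}$. The tail of this series is bounded in absolute value by
\[\sum_{k \ge 2}|\gamma|^k \cdot 2^{k+1} = \frac{8\gamma^2}{1 - 2|\gamma|} \le 10\gamma^2,\]
using $2|\gamma| \le \frac{1}{5}$, so $\BE[e^{\gamma Z^2}] \le 1 + \eta\gamma + 10\gamma^2 = 1 + \eta\gamma + O(\gamma^2)$, as claimed.

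The one point that needs care (rather than a genuine obstacle) is obtaining a subgaussian constant for $Z$ small enough that the geometric series $\sum_k (c|\gamma|)^k$ converges on the entire interval $|\gamma| \le \frac{1}{10}$: quoting Proposition~\ref{prop:sum_of_subgaussian} directly only bounds the subgaussian norm of $Z$ by an unspecified absolute constant, which could in principle force a smaller range of admissible $\gamma$. Using Hoeffding's lemma together with the exact support width of the $Z'_i$ pins this constant down to the Gaussian value, which is precisely what makes the estimate close cleanly — this is the natural subgaussian analogue of the exact computation in Proposition~\ref{bound:gaussian_mgf}.
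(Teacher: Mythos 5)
Your proof is correct and follows essentially the same route as the paper: both start from the Hoeffding tail bound $\BP(|Z|\ge t)\le 2e^{-t^2/2}$ and then Taylor-expand $e^{\gamma Z^2}$ so that the linear term is exactly $\eta\gamma$, differing only in how the quadratic remainder is controlled (you sum explicit moment bounds $\BE[Z^{2k}]\le 2^{k+1}k!$ termwise, while the paper uses the pointwise inequality $e^x\le 1+x+x^2e^{|x|}$ together with $\BE[e^{Z^2/3}]=O(1)$). Your observation that Hoeffding's lemma pins the subgaussian MGF constant to the Gaussian value $e^{\lambda^2/2}$ is a nice touch that makes the geometric series close cleanly on all of $|\gamma|\le\frac{1}{10}$.
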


\begin{proof}
    First, by Hoeffding's inequality, we have that for any $t \ge 0$, $\BP(|Z| \ge t) \le 2e^{-t^2/2}$. Therefore,
\begin{align*}
    \BE\left[e^{1/3 \cdot Z^2}\right] &= \int_0^\infty \BP\left(e^{1/3 \cdot Z^2} \ge t\right) dt \\
    &= \int_0^\infty \BP\left(|Z| \ge \sqrt{3 \ln t}\right) dt \\
    &= \int_0^\infty \min\left(1, 2e^{-3 \ln t/2}\right) dt \\
    &\le 1 + \int_1^\infty \frac{2}{t^{3/2}}dt = O(1).
\end{align*}

Now, for all $x \in \BR$, note that $e^x \le 1 + x + x^2 e^{|x|}$, which is immediate by the Taylor series expansion of $x$. Hence, for any $-\frac{1}{10} \le \gamma \le \frac{1}{10}$, $e^{\gamma Z^2} \le 1 + \gamma Z^2 + \gamma^2 Z^4 e^{|\gamma| Z^2}$. Further, we can write $x \le e^x$ for all $x$, which means $Z^2/10 \le e^{Z^2/10}$, so $Z^2 \le 10 e^{Z^2/10}$. Therefore, $$e^{\gamma Z^2} \le 1 + \gamma Z^2 + \gamma^2 Z^4 e^{|\gamma| Z^2} \le 1 + \gamma Z^2 + 10^2 \gamma^2 e^{(|\gamma| + 1/5) Z^2}.$$
Now, using the fact that $\BE[Z^2] = \text{Var}(Z) = \eta$, we have that
\begin{align*}
    \BE\left[e^{\gamma \cdot Z^2}\right] &\le \BE\left[1 + \gamma Z^2 + 10^2 \gamma^2 e^{(|\gamma| + 1/5) Z^2}\right] \\
    &= 1 + \eta \cdot \gamma + O(\gamma^2),
\end{align*}
    since $|\gamma| + \frac{1}{5} \le \frac{1}{3}$.
\end{proof}

\begin{proposition} \label{bound:unbiased_gaussian_tail}
    Let $X^{(1)}, \dots, X^{(K)} \sim \mathcal{N}(0, 1)$ be i.i.d. $d$-dimensional Gaussians. Then, for any sufficiently large $C$, $\|X^{(1)}+\cdots+X^{(K)}\|^2 = Kd \pm C(K \sqrt{Kd}+K^2)$ with failure probability at most $e^{-(C/20) \cdot K}$.
\end{proposition}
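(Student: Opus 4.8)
The plan is to reduce the statement to a standard tail bound for a chi-squared random variable and then invoke the Laurent--Massart inequality (Theorem~\ref{thm:massartlaurent}). First, since the $X^{(i)}$ are i.i.d.\ $\mathcal{N}(0, I)$ in $\BR^d$, their sum $\bar{X} := X^{(1)} + \cdots + X^{(K)}$ is distributed as $\mathcal{N}(0, K \cdot I)$. Writing each coordinate as $\bar{X}_j = \sqrt{K}\, Z_j$ with $Z_1, \dots, Z_d \overset{i.i.d.}{\sim} \mathcal{N}(0,1)$, we obtain $\|\bar{X}\|^2 = K \sum_{j=1}^{d} Z_j^2 = K \cdot W$, where $W \sim \chi_d^2$. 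Hence it suffices to show $\BP\!\left(|W - d| \ge C(\sqrt{Kd} + K)\right) \le e^{-(C/20)K}$, and then multiply the event through by $K$.

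For this bound I would apply Theorem~\ref{thm:massartlaurent} with $n = d$ and $a_1 = \cdots = a_d = 1$, so that $|a|_2 = \sqrt{d}$ and $|a|_\infty = 1$, which yields $\BP\!\left(|W - d| \ge 2\sqrt{dt} + 2t\right) \le 2e^{-t}$ for every $t \ge 0$. Choosing $t = \frac{C}{10} K$, the deviation becomes $2\sqrt{(C/10) Kd} + (C/5) K = 2\sqrt{C/10}\,\sqrt{Kd} + (C/5)K$, which is at most $C(\sqrt{Kd} + K)$ as soon as $C$ is large enough that $2\sqrt{C/10} \le C$ (any $C \ge 1$ suffices, as this is $C \ge 2/\sqrt{10}$). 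Simultaneously, the failure probability $2e^{-t} = 2e^{-CK/10}$ is at most $e^{-CK/20}$ whenever $CK \ge 20\ln 2$, which holds since $K \ge 1$ and $C$ is taken sufficiently large. Putting these together and multiplying the event by $K$ gives $\|\bar{X}\|^2 = Kd \pm C(K\sqrt{Kd} + K^2)$ with failure probability at most $e^{-(C/20)K}$.

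Because Theorem~\ref{thm:massartlaurent} holds for all $t \ge 0$, there is no regime restriction to manage, and the entire content of the argument is checking that the constants line up. I do not expect a genuine obstacle here; the only point requiring a little care is not conflating the two roles of the constant $C$ — the multiplier in the deviation term versus the exponent $C/20$ — and observing that the slack between $C/10$, $C/5$, and $C$, together with the slack between $2e^{-CK/10}$ and $e^{-CK/20}$, is precisely what absorbs both the factor of $2$ from Theorem~\ref{thm:massartlaurent} and the constant $2\sqrt{C/10}$ arising from the $\sqrt{dt}$ term.
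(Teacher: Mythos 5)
Your proof is correct, but it takes a genuinely different route from the paper's. You reduce $\|\bar{X}\|^2$ to $K \cdot W$ with $W \sim \chi_d^2$ and invoke the Laurent--Massart inequality (Theorem~\ref{thm:massartlaurent}) with $t = (C/10)K$; the constants check out ($2\sqrt{C/10}\,\sqrt{Kd} + (C/5)K \le C(\sqrt{Kd}+K)$ once $C \ge 1$, and $2e^{-CK/10} \le e^{-CK/20}$ once $CK \ge 20\ln 2$), and rescaling by $K$ gives exactly the claimed deviation $C(K\sqrt{Kd}+K^2)$. The paper instead runs a Chernoff/MGF argument directly: it bounds $\BE[e^{(\gamma/K)\hat{X}_i^2}]$ coordinatewise via Proposition~\ref{bound:gaussian_mgf}, multiplies over coordinates, and optimizes $\gamma$ in two regimes ($\gamma = \pm\sqrt{K/d}$ when $K \le d/100$, $\gamma = \pm 1/10$ otherwise). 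Your argument is shorter and avoids the case split, and Laurent--Massart is already stated and used elsewhere in the paper (e.g., Lemma~\ref{bound:entry_gaussian}, Proposition~\ref{bound:unknown_cov}), so nothing new is imported. What the paper's MGF route buys is that it transfers verbatim to the Boolean product case (Proposition~\ref{bound:unbiased_product_tail}), where the coordinates of $\hat{X}$ are sums of bounded variables rather than Gaussians and no exact $\chi^2$ reduction is available — there the analogous MGF bound is Proposition~\ref{bound:subgaussian_mgf}. Your $\chi^2$ reduction is specific to the Gaussian setting, so if you wanted a unified treatment of both propositions you would still need the MGF machinery for the product case.
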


\begin{proof}
    Let $\hat{X} = X^{(1)}+\cdots+X^{(K)}$. Note that each coordinate $\hat{X}_i$ has distribution $\mathcal{N}(0, K)$. Therefore, for any $\gamma \in [-0.1, 0.1]$, $\BE[e^{(\gamma/K) \cdot \hat{X}_i^2}] \le 1+\gamma+O(\gamma^2) = e^{\gamma + O(\gamma^2)}$. Therefore, since each coordinate of $\hat{X}$ is independent, by Proposition \ref{bound:gaussian_mgf},
\begin{equation}
    \BE\left[e^{(\gamma/K) \cdot \|\hat{X}\|^2}\right] = \prod_{i = 1}^d \BE[e^{(\gamma/K) \cdot \hat{X}_i^2}] \le e^{d \gamma + O(d \cdot \gamma^2)}.
\end{equation}

    If we set $0 < \gamma < 0.1$, this means that for any $C \ge 0$,
\begin{align*}
    \BP(\|\hat{X}\|^2 \ge Kd + C(K\sqrt{Kd}+K^2)) &= \BP\left(e^{(\gamma/K) \cdot \|\hat{X}\|^2} \ge e^{(\gamma/K) \cdot (Kd + CK \sqrt{Kd}+CK^2)}\right) \\
    &\le e^{d \gamma + O(d \gamma^2) -(\gamma/K) \cdot (Kd + C K\sqrt{Kd}+ CK^2)} \\
    &\le e^{-C \cdot |\gamma| \sqrt{Kd}- C \cdot |\gamma| \cdot K + O(d \gamma^2)}
\end{align*}
    Likewise, if we set $-0.1 < \gamma < 0$, this means that for any $C \ge 0$,
\begin{align*}
    \BP(\|\hat{X}\|^2 \le Kd - C(K\sqrt{Kd}+K^2)) &= \BP\left(e^{(\gamma/K) \cdot \|\hat{X}\|^2} \ge e^{(\gamma/K) \cdot (Kd - CK \sqrt{Kd}-CK^2)}\right) \\
    &\le e^{d \gamma + O(d \gamma^2) -(\gamma/K) \cdot (Kd - C K\sqrt{Kd}- CK^2)} \\
    &\le e^{-C \cdot |\gamma| \sqrt{Kd}- C \cdot |\gamma| \cdot K + O(d \gamma^2)}.
\end{align*}

    In the case that $K \le d/100$, we set $\gamma = \pm \sqrt{K/d}$ to obtain that $e^{-C \cdot |\gamma| \sqrt{Kd}- C \cdot |\gamma| \cdot K + O(d \gamma^2)} \le e^{-C \cdot K + O(K)} \le e^{-(C/2) \cdot K}$, assuming $C$ is sufficiently large. In the case that $K \ge d/100$, we set $\gamma = \pm \frac{1}{10}$ to obtain that $e^{-C \cdot |\gamma| \sqrt{Kd}- C \cdot |\gamma| \cdot K + O(d \gamma^2)} \le e^{-C/10 \cdot K  + O(d/100)} \le e^{-C/20 \cdot K}$, assuming $C$ is sufficiently large. Hence, we have that
\[\BP\left(\|\hat{X}\|^2 = Kd \pm C(K \sqrt{Kd}+K^2)\right) \ge 1 - 2 e^{-C/20 \cdot K}.\]
\end{proof}

\begin{proposition} \label{bound:unbiased_product_tail}
    Let $X^{(1)}, \dots, X^{(K)} \sim \mathcal{N}(0, 1)$ be i.i.d. from $\mathcal{P}(\mu)$, where $\mu \in [-1, 1]^d$. Then, for any sufficiently large $C$, $\|X^{(1)}+\cdots+X^{(K)} - K \cdot \mu\|^2 = (d-\|\mu\|^2) K \pm C(K \sqrt{Kd}+K^2)$ with failure probability at most $e^{-(C/20) \cdot K}$.
\end{proposition}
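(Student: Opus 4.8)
The plan is to follow the proof of Proposition~\ref{bound:unbiased_gaussian_tail} essentially verbatim, replacing the Gaussian moment generating function estimate (Proposition~\ref{bound:gaussian_mgf}) by its subgaussian counterpart (Proposition~\ref{bound:subgaussian_mgf}). Write $\hat{X} = X^{(1)} + \cdots + X^{(K)}$. For each coordinate $i \in [d]$ we have $\hat{X}_i - K\mu_i = \sum_{j=1}^{K}(X^{(j)}_i - \mu_i)$, so set $Z_i := \tfrac{1}{\sqrt{K}}(\hat{X}_i - K\mu_i)$ and apply Proposition~\ref{bound:subgaussian_mgf} with $Z'_j := X^{(j)}_i$: these are i.i.d.\ with support in $[-1,1]$, $Z_i = Z' - \BE[Z']$, and $\eta = \mathrm{Var}(Z_i) = \mathrm{Var}(X^{(1)}_i) = 1 - \mu_i^2$. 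The proposition then gives $\BE\bigl[e^{\gamma Z_i^2}\bigr] \le 1 + (1-\mu_i^2)\gamma + O(\gamma^2)$ for all $|\gamma| \le \tfrac{1}{10}$.

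Since the coordinates of $\mathcal{P}(\mu)$ are independent and the samples are i.i.d., the variables $Z_1, \dots, Z_d$ are mutually independent. Using $\|\hat{X} - K\mu\|^2 = K\sum_{i=1}^{d} Z_i^2$ and $1 + x \le e^x$, we obtain $\BE\bigl[e^{(\gamma/K)\|\hat{X} - K\mu\|^2}\bigr] = \prod_{i=1}^{d}\BE\bigl[e^{\gamma Z_i^2}\bigr] \le e^{\gamma \sum_i (1-\mu_i^2) + O(d\gamma^2)} = e^{\gamma(d - \|\mu\|^2) + O(d\gamma^2)}$, which is precisely the analog of the key bound in Proposition~\ref{bound:unbiased_gaussian_tail} with the target mean $Kd$ replaced by $K(d - \|\mu\|^2)$.

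From this point the argument is identical to Proposition~\ref{bound:unbiased_gaussian_tail}: apply Markov's inequality to $e^{\pm(\gamma/K)\|\hat{X} - K\mu\|^2}$ for the upper and lower tails, and optimize $\gamma$ in the two regimes $\gamma = \pm\sqrt{K/d}$ (when $K \le d/100$) and $\gamma = \pm\tfrac{1}{10}$ (when $K \ge d/100$). Because $d - \|\mu\|^2 \le d$, the error term $C(K\sqrt{Kd} + K^2)$ still dominates $C(K\sqrt{K(d-\|\mu\|^2)} + K^2)$, so we conclude $\|\hat{X} - K\mu\|^2 = (d-\|\mu\|^2)K \pm C(K\sqrt{Kd} + K^2)$ with failure probability at most $2e^{-(C/20)K}$.

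There is no real obstacle here beyond bookkeeping: the only two points needing attention are (i) phrasing the invocation of Proposition~\ref{bound:subgaussian_mgf} so that the mean-centering of the $\pm 1$-valued coordinates is absorbed into its $Z' - \BE[Z']$, which is exactly what produces the variance factor $1 - \mu_i^2$ in place of the constant $1$ appearing in the Gaussian case; and (ii) noting that the shifted mean $K(d - \|\mu\|^2)$ does not affect the stated deviation bound since $\|\mu\|^2 \ge 0$.
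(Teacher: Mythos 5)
Your proposal is correct and follows essentially the same route as the paper: both reduce each centered coordinate to the form handled by Proposition~\ref{bound:subgaussian_mgf} (yielding the variance factor $1-\mu_i^2$), multiply the moment generating functions over independent coordinates to get $\BE\bigl[e^{(\gamma/K)\|\hat{X}-K\mu\|^2}\bigr] \le e^{(d-\|\mu\|^2)\gamma + O(d\gamma^2)}$, and then rerun the Chernoff/optimization step of Proposition~\ref{bound:unbiased_gaussian_tail} with the mean $Kd$ replaced by $K(d-\|\mu\|^2)$. No gaps.
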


\begin{proof}
    The proof is almost identical to that of Proposition \ref{bound:unbiased_gaussian_tail}. Now, note that $X^{(k)}_i$ for each $k \in [K]$ is i.i.d., bounded in magnitude by $1$, and has mean $\mu_i$. Therefore, the $i$th coordinate of $\hat{X} := X^{(1)}+\cdots+X^{(K)} - K \cdot \mu$ has a distribution $\sqrt{K} \cdot Z$, where $Z$ is captured by the distribution in Proposition \ref{bound:subgaussian_mgf}, and $Var(Z) = 1-\mu_i^2$. Therefore, for any $\gamma \in [-0.1, 0.1]$, $\BE[e^{(\gamma/K) \cdot \hat{X}_i^2}] \le 1+(1-\mu_i^2)\gamma+O(\gamma^2) = e^{(1-\mu_i^2)\gamma + O(\gamma^2)}$. Therefore, since each coordinate of $\hat{X}$ is independent,
\begin{equation}
    \BE\left[e^{(\gamma/K) \cdot \|\hat{X}\|^2}\right] = \prod_{i = 1}^d \BE[e^{(\gamma/K) \cdot \hat{X}_i^2}] \le \prod_{i = 1}^{d} e^{(1-\mu_i)^2 \gamma + O(\gamma^2)} = e^{(d-\|\mu\|^2)\gamma + O(d \cdot \gamma^2)}.
\end{equation}
    The remainder of the proof is identical, except with replacing $Kd + C(K \sqrt{Kd}+K^2)$ with $K(d-\|\mu\|^2) + C(K\sqrt{Kd}+K^2)$ and $Kd - C(K \sqrt{Kd}+K^2)$ with $K(d-\|\mu\|^2) - C(K\sqrt{Kd}+K^2)$.
\end{proof}

\begin{lemma} \label{bound:gaussian_submatrix}
    Let $X^{(1)}, \dots, X^{(K)}$ be distributed i.i.d. from $\mathcal{N}(\mu, I)$, where $\|\mu\| \le 1$. Then, with probability at least $1-O(N^{-2K})$, we have that $\|X^{(1)}+\cdots+X^{(K)}\|^2 = Kd \pm \tilde{O}(K \sqrt{Kd}+K^2)$.
\end{lemma}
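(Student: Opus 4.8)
The plan is to reduce to the mean-zero statement already established in Proposition~\ref{bound:unbiased_gaussian_tail}. Write $\hat{X} := X^{(1)} + \cdots + X^{(K)}$ and set $Y^{(i)} := X^{(i)} - \mu$, so that $Y^{(1)}, \dots, Y^{(K)}$ are i.i.d.\ $\mathcal{N}(0, I)$; let $\hat{Y} := Y^{(1)} + \cdots + Y^{(K)} = \hat{X} - K\mu$. Expanding the square,
\[
\|\hat{X}\|^2 = \|\hat{Y} + K\mu\|^2 = \|\hat{Y}\|^2 + 2K \langle \mu, \hat{Y} \rangle + K^2 \|\mu\|^2,
\]
and I would bound each of the three terms separately on a common high-probability event.

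First I would control $\|\hat{Y}\|^2$ by invoking Proposition~\ref{bound:unbiased_gaussian_tail}, but with the constant $C$ there taken to be a sufficiently large multiple of $\log N$ (at least the absolute constant the proposition requires). Then the failure probability $e^{-(C/20)K}$ is at most $N^{-2K}$, while simultaneously $C(K\sqrt{Kd} + K^2) = \tilde{O}(K\sqrt{Kd} + K^2)$; hence $\|\hat{Y}\|^2 = Kd \pm \tilde{O}(K\sqrt{Kd} + K^2)$ outside an event of probability at most $N^{-2K}$. For the cross term, observe that $\hat{Y} \sim \mathcal{N}(0, K I)$, so $\langle \mu, \hat{Y} \rangle$ is a univariate Gaussian of variance $K\|\mu\|^2 \le K$; applying the standard Gaussian tail bound at scale $\Theta(K\sqrt{\log N})$ gives $|\langle \mu, \hat{Y}\rangle| = \tilde{O}(K)$ except with probability $O(N^{-2K})$, so that $2K|\langle \mu, \hat{Y} \rangle| = \tilde{O}(K^2)$. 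Finally, the last term is deterministically $K^2\|\mu\|^2 \le K^2 = O(K^2)$, since $\|\mu\| \le 1$. Taking a union bound over the two bad events, with probability $1 - O(N^{-2K})$ all three estimates hold at once, and adding them yields $\|\hat{X}\|^2 = Kd \pm \tilde{O}(K\sqrt{Kd} + K^2)$.

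There is no substantive obstacle here; the only thing requiring care is the bookkeeping of logarithmic factors. Specifically, one must check that boosting the constant $C$ in Proposition~\ref{bound:unbiased_gaussian_tail} to $\Theta(\log N)$ keeps the additive error in the $\tilde{O}(K\sqrt{Kd}+K^2)$ regime while driving the failure probability down to $N^{-2K}$, and one must make sure the cross term $2K\langle \mu, \hat{Y} \rangle$ is genuinely absorbed into the $\tilde{O}(K^2)$ slack rather than contributing a spurious term of size $Kd$ (this is the reason the crude bound $|\langle \mu,\hat Y\rangle| \le \|\mu\|\,\|\hat Y\| \le \|\hat Y\|$ is avoided and the direct Gaussian tail bound is used instead).
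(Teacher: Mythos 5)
Your proposal is correct and follows essentially the same route as the paper: expand $\|\hat Y + K\mu\|^2$, apply Proposition~\ref{bound:unbiased_gaussian_tail} with $C = \Theta(\log N)$ to control $\|\hat Y\|^2$, bound the cross term via the Gaussian tail of $\langle \mu, \hat Y\rangle \sim \mathcal{N}(0, K\|\mu\|^2)$, and absorb $K^2\|\mu\|^2 \le K^2$. The bookkeeping of logarithmic factors and failure probabilities matches the paper's argument exactly.
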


\begin{proof}
    By Proposition \ref{bound:unbiased_gaussian_tail}, if we set $C$ as a sufficiently large multiple of $\log N$, then with probability at least $1-N^{-2K}$, $\|X^{(1)}+\cdots+X^{(K)}-K \mu\|^2 = Kd \pm O(\log N) \cdot (K \sqrt{Kd}+K^2)$. In addition, since $X^{(1)} + \cdots + X^{(K)}-K \mu \sim \mathcal{N}(0, K)$, we have that for any fixed vector $\mu$, $\langle X^{(1)} + \cdots + X^{(K)} - K \mu, \mu \rangle \sim \mathcal{N}(0, K \cdot \|\mu\|^2)$, which with probability at least $1-2N^{-2K}$ does not exceed $O(\sqrt{K \cdot \|\mu\|^2} \cdot \sqrt{K \log N}) = O(K \sqrt{\log N})$ in absolute value.
    
    Therefore, with probability at least $1-O(N^{-2K})$, we have
\begin{align*}
    \|X^{(1)}+\cdots+X^{(K)}\|^2 &= \|X^{(1)}+\cdots+X^{(K)}-K \mu\|^2 + 2 K \langle X^{(1)}+\cdots+X^{(K)}-K \mu, \mu \rangle + K^2 \|\mu\|^2 \\
    &= Kd \pm \tilde{O}(K \sqrt{Kd}+K^2) \pm 2K \cdot \tilde{O}(K) \pm K^2 \\
    &= Kd \pm \tilde{O}(K \sqrt{Kd}+K^2).
\end{align*}
\end{proof}

\begin{lemma} \label{bound:product_submatrix}
    Let $X^{(1)}, \dots, X^{(K)}$ be distributed i.i.d. from $\mathcal{P}(\mu)$, where $\|\mu\| \le 1$. Then, with probability at least $1-O(N^{-2K})$, we have that $\|X^{(1)}+\cdots+X^{(K)}\|^2 = Kd \pm \tilde{O}(K \sqrt{Kd}+K^2)$.
\end{lemma}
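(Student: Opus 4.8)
The plan is to follow the proof of Lemma~\ref{bound:gaussian_submatrix} essentially verbatim in structure, replacing the Gaussian tail bound of Proposition~\ref{bound:unbiased_gaussian_tail} by its product analogue Proposition~\ref{bound:unbiased_product_tail}, and replacing the exact normality of the centered sum (used there to handle a cross term) by the subgaussian estimate of Proposition~\ref{bound:prop_row}. Write $\hat{X} := X^{(1)} + \cdots + X^{(K)}$ and expand
\[
\|\hat{X}\|^2 = \|\hat{X} - K\mu\|^2 + 2K \langle \hat{X} - K\mu,\, \mu \rangle + K^2 \|\mu\|^2 .
\]
I would bound the three terms on the right separately and then take a union bound.

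For the first term, I would apply Proposition~\ref{bound:unbiased_product_tail} with the constant $C$ taken to be a sufficiently large multiple of $\log N$ (e.g.\ $C = 40 \log N$), which gives, with probability at least $1 - N^{-2K}$,
\[
\|\hat{X} - K\mu\|^2 = (d - \|\mu\|^2)K \pm \tilde{O}(K\sqrt{Kd} + K^2) = Kd \pm \tilde{O}(K\sqrt{Kd} + K^2),
\]
where the last equality uses $\|\mu\| \le 1$ to write $(d - \|\mu\|^2)K = Kd \pm K$. For the cross term, if $\mu = \mathbf{0}$ it vanishes identically; otherwise set $v = \mu / \|\mu\|$, so that $\langle \hat{X} - K\mu, \mu \rangle = \|\mu\| \cdot \langle \hat{X} - K\mu, v \rangle$, and apply Proposition~\ref{bound:prop_row} with this $v$ and the product distribution $\mathcal{P}(\mu)$: the quantity $\langle \hat{X} - K\mu, v \rangle$ is subgaussian with subgaussian norm $O(\sqrt{K})$. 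Taking the deviation parameter to be a large enough multiple of $K\sqrt{\log N}$, we get that with probability at least $1 - N^{-2K}$, $|\langle \hat{X} - K\mu, v \rangle| \le \tilde{O}(K)$, hence $|\langle \hat{X} - K\mu, \mu \rangle| \le \|\mu\| \cdot \tilde{O}(K) \le \tilde{O}(K)$ and $2K\,|\langle \hat{X} - K\mu, \mu \rangle| \le \tilde{O}(K^2)$. Finally $K^2 \|\mu\|^2 \le K^2$ since $\|\mu\| \le 1$.

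Combining the three bounds and union-bounding over the (at most three) failure events, with probability at least $1 - O(N^{-2K})$ we obtain
\[
\|\hat{X}\|^2 = Kd \pm \tilde{O}(K\sqrt{Kd} + K^2) \pm \tilde{O}(K^2) \pm K^2 = Kd \pm \tilde{O}(K\sqrt{Kd} + K^2),
\]
which is the claim. There is no real conceptual obstacle here, since all the substance lies in the two concentration inputs (Propositions~\ref{bound:unbiased_product_tail} and~\ref{bound:prop_row}), which replace the Gaussian-specific chi-squared concentration and exact normality used in Lemma~\ref{bound:gaussian_submatrix}. The only points requiring care are (i) choosing the constant in Proposition~\ref{bound:unbiased_product_tail} to be a large multiple of $\log N$ so that its failure probability is $O(N^{-2K})$ rather than merely $e^{-\Omega(K)}$; (ii) likewise choosing the subgaussian deviation parameter to be a large multiple of $K\sqrt{\log N}$ for the cross term; and (iii) separating out the degenerate case $\mu = \mathbf{0}$, where $v$ is undefined but the cross term is simply zero.
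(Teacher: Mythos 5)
Your proposal is correct and matches the paper's proof essentially step for step: the same decomposition of $\|\hat{X}\|^2$, Proposition~\ref{bound:unbiased_product_tail} with $C$ a large multiple of $\log N$ for the centered term, and Proposition~\ref{bound:prop_row} (applied to the unit vector $\mu/\|\mu\|$) for the cross term. Your explicit handling of the degenerate case $\mu = \mathbf{0}$ is a small point of care the paper leaves implicit.
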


\begin{proof}
    By Proposition \ref{bound:unbiased_product_tail}, if we set $C$ as a sufficiently large multiple of $\log N$, then with probability at least $1-N^{-2K}$, $\|X^{(1)}+\cdots+X^{(K)}-K \mu\|^2 = K(d-\|\mu\|^2) \pm O(\log N) \cdot (K \sqrt{Kd}+K^2)$. However, note that $\|\mu\|^2 \le 1,$ so $K(d-\|\mu\|^2) = Kd-O(K)$. In addition, for any fixed vector $\mu$, by Proposition \ref{bound:prop_row}, $\langle X^{(1)} + \cdots + X^{(K)} - K \mu, \mu|$, with probability at least $1-2N^{-2K}$, does not exceed $O(\|\mu\| \cdot \sqrt{K} \cdot \sqrt{K \log N}) = O(K \sqrt{\log N})$.
  
Therefore, with probability at least $1-O(N^{-2K})$, we have
\begin{align*}
    \|X^{(1)}+\cdots+X^{(K)}\|^2 &= \|X^{(1)}+\cdots+X^{(K)}-K \mu\|^2 + 2 K \langle X^{(1)}+\cdots+X^{(K)}-K \mu, \mu \rangle + K^2 \|\mu\|^2 \\
    &= Kd - O(K) \pm \tilde{O}(K \sqrt{Kd}+K^2) \pm 2K \cdot \tilde{O}(K) \pm K^2 \\
    &= Kd \pm \tilde{O}(K \sqrt{Kd}+K^2).
\end{align*}
\end{proof}

Combining these concentration bounds together, we obtain the following theorem.

\begin{theorem} \label{thm:concentration}
    Let $\alpha \le \frac{1}{2}$, and let $\mu \in \BR^d$ satisfy $\|\mu\| \le 2 \alpha$. Let $X^{(1)}, \dots, X^{(N)}$ be drawn i.i.d. either according to $\mathcal{N}(\mu, I)$ or according to $\mathcal{P}(\mu)$. Finally, let $\textbf{T}$ be the matrix with entries $T_{i, j} = \langle X^{(i)}, X^{(j)}\rangle$. Then, with probability at least $0.99$, the following all hold simultaneously.
\begin{enumerate}
    \item We have $\sum_{i = 1}^{N} \sum_{j = 1}^{N} T_{i, j} = Nd + N^2 \|\mu\|^2 \pm O(N \sqrt{d} + \alpha N \sqrt{N})$
    \item For all $i$, $T_{i, i} = d \pm \tilde{O}(\sqrt{d})$ and for all $i \neq j$, $T_{i, j} = \pm \tilde{O}(\sqrt{d})$.
    \item For all $i$, $\sum_{j = 1}^{N} T_{i, j} = d \pm \tilde{O}(\sqrt{Nd} + \alpha \cdot N)$, and for all $j$, $\sum_{i = 1}^{N} T_{i, j} = d \pm \tilde{O}(\sqrt{Nd} + \alpha \cdot N)$.
    \item For all subsets $S \subset [N]$, if $|S| = K$, then $\sum_{i, j \in S} T_{i, j} = Kd \pm \tilde{O}(K \sqrt{Kd}+K^2)$.
\end{enumerate}
\end{theorem}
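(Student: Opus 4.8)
The plan is to prove the theorem as a union bound: each of the four items is the simultaneous failure of one of the concentration estimates already established, so I would bound each item's failure probability separately and add them up. The one observation that ties everything together is that for any subset $S \subseteq [N]$,
\[
\sum_{i,j \in S} T_{i,j} = \Bigl\langle \sum_{i \in S} X^{(i)}, \sum_{i \in S} X^{(i)} \Bigr\rangle = \Bigl\| \sum_{i \in S} X^{(i)} \Bigr\|^2,
\]
so that item 1 is the special case $S = [N]$, the diagonal entries $T_{i,i} = \|X^{(i)}\|^2$ in item 2 are the case $|S| = 1$, and item 4 is exactly a uniform-over-all-$S$ version of this identity. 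Thus every item is really a statement about squared norms of partial sums of the $X^{(i)}$, and each has a matching lemma above (for both the Gaussian and the product case, which I would treat in parallel).

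Items 1, 2, and 3 are immediate. For item 1, write $\bar X = \sum_i X^{(i)}$; Lemma~\ref{bound:total_gaussian} (Gaussian) or Lemma~\ref{bound:total_product} (product) gives $\BE[\|\bar X\|^2] = Nd + N^2\|\mu\|^2 \pm O(N)$ and $\Var[\|\bar X\|^2] = O(N^2 d + N^3\|\mu\|^2)$, so since $\|\mu\| \le 2\alpha$, Chebyshev's inequality with a sufficiently large constant places $\|\bar X\|^2$ within $O(N\sqrt d + \alpha N \sqrt N)$ of $Nd + N^2\|\mu\|^2$ except with probability at most, say, $0.003$ (the additive $O(N) = O(N\sqrt d)$ discrepancy in the mean being absorbed into the error term). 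For item 2, Lemma~\ref{bound:entry_gaussian} (resp. Lemma~\ref{bound:entry_product}) bounds each of the at most $N^2$ quantities $T_{i,i}$ and $T_{i,j}$ with failure probability at most $\tfrac{1}{N^3}$, so a union bound fails with probability $O(1/N)$. For item 3, Lemma~\ref{bound:lem_row} controls $\langle X^{(1)}, X^{(1)} + \cdots + X^{(N)} \rangle$ with failure probability $\tfrac{3}{N^3}$; relabeling the samples gives the same bound for every row sum $\sum_j T_{i,j}$, and symmetry of $\textbf{T}$ gives the column sums, so a union bound over the $2N$ row and column sums again fails with probability $O(1/N)$.

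The main obstacle is item 4, where the union is over exponentially many sets. Fix $K$ with $1 \le K \le N$. For each individual $S$ with $|S| = K$, Lemma~\ref{bound:gaussian_submatrix} (resp. Lemma~\ref{bound:product_submatrix}) gives $\bigl\| \sum_{i \in S} X^{(i)} \bigr\|^2 = Kd \pm \tilde{O}(K\sqrt{Kd} + K^2)$ except with probability $O(N^{-2K})$, and since $\binom{N}{K} \le N^K$, a union bound over all size-$K$ subsets fails with probability $O(N^{K} \cdot N^{-2K}) = O(N^{-K})$; summing the resulting geometric series over $K = 1, \dots, N$ bounds the total failure probability for item 4 by $O(1/N)$. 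This is exactly why Lemmas~\ref{bound:gaussian_submatrix} and~\ref{bound:product_submatrix} were stated with failure probability decaying like $N^{-2K}$ rather than like a fixed polynomial in $N$ — the extra $K$ in the exponent is precisely what defeats the $N^K$ count of subsets and keeps the sum over $K$ summable. Finally, choosing the Chebyshev constant in item 1 and the hidden constants in items 2--4 appropriately, and using that $N$ exceeds an absolute constant (the finitely many smaller cases being handled by enlarging the constants inside the $\tilde{O}$'s), the total of all four failure probabilities is below $0.01$, so all four items hold simultaneously with probability at least $0.99$.
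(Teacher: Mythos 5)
Your proposal is correct and matches the paper's own proof essentially step for step: part 1 via Chebyshev applied to Lemmas \ref{bound:total_gaussian}/\ref{bound:total_product}, parts 2 and 3 via union bounds over the $1/N^3$-failure events of Lemmas \ref{bound:entry_gaussian}, \ref{bound:entry_product}, and \ref{bound:lem_row} (with the column sums handled by symmetry of $\textbf{T}$), and part 4 via the identity $\sum_{i,j\in S} T_{i,j} = \|\sum_{i\in S} X^{(i)}\|^2$ together with the $O(N^{-2K})$ tail of Lemmas \ref{bound:gaussian_submatrix}/\ref{bound:product_submatrix} beating the $\binom{N}{K}\le N^K$ count. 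Your observation about why the $N^{-2K}$ decay is exactly what is needed is accurate and consistent with how the paper sums the resulting series.
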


\begin{proof}
    Part 1 of the theorem is immediate from Lemmas \ref{bound:total_gaussian} and \ref{bound:total_product}, if we use Chebyshev's inequality to make the failure probability $0.005$. Part 2 is immediate from Lemmas \ref{bound:entry_gaussian} and \ref{bound:entry_product}, and since we need this to be true for all pairs $(i, j)$, the failure probability is at most $O(1/N^3) \cdot N^2 = O(1/N)$. 
    
    To see why Part 3 follows from Lemma \ref{bound:lem_row}, note that by symmetry, the result of Lemma \ref{bound:lem_row} also holds for $\sum_{j = 1}^{N} T_{i, j} = \langle X^{(i)}, X^{(1)} + \dots + X^{(N)}\rangle$. Since we need this to be true for all $i$, the failure probability is at most $O(1/N^3) \cdot N = O(1/N^2)$. In addition, since $\textbf{T}$ is a symmetric matrix, we have that the sum of the entries in row $i$ and the sum of the entries in column $i$ are the same, so we also have that for all $j$, $\sum_{i = 1}^{N} T_{i, j} = d \pm \tilde{O}(\sqrt{Nd} + \alpha \cdot N)$.
    
    Finally, Part 4 follows from Lemmas \ref{bound:gaussian_submatrix} and \ref{bound:product_submatrix}, since they imply that $\|\sum_{i \in S} X^{(i)}\|^2 = \sum_{i, j \in S} T_{i, j} = Kd \pm \tilde{O}(K \sqrt{Kd}+K^2)$ with failure probability $O(N^{-2K})$. Union bounding this over all subsets of size $K$ in $[N]$ and over all choices of $K$ still means the failure probability is at most $\sum_{K = 1}^{N} O(N^{-2K}) \cdot {N \choose K} = \sum_{K = 1}^{N} O(N^{-K}) = O(1/N)$.
    
    Hence, the overall failure probability is at most $0.005 + O(1/N) \le 0.01$.
\end{proof}

\begin{proposition} \label{bound:unknown_cov}
    Let $X \sim \mathcal{N}(0, \Sigma)$, where $\|\Sigma\|_2 \le 1.$ In addition, let $J := Tr(\Sigma)$. Then, for any sufficiently large $C$, $\|X\|^2 = J \pm (\sqrt{Cd}+C)$ with probability at least $1-e^{-C/10}$.
\end{proposition}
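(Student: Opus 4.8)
The plan is to reduce the claim to the Laurent--Massart concentration bound (Theorem \ref{thm:massartlaurent}) after diagonalizing the covariance. Write $\Sigma = U \Lambda U^T$ for an orthogonal matrix $U$ and a diagonal matrix $\Lambda = \mathrm{diag}(\lambda_1, \dots, \lambda_d)$; since $\Sigma$ is positive semidefinite with $\|\Sigma\|_2 \le 1$, all eigenvalues satisfy $\lambda_i \in [0,1]$, and $\sum_{i=1}^d \lambda_i = \mathrm{Tr}(\Sigma) = J$. Then $X$ is distributed as $U \Lambda^{1/2} Z$ for $Z = (Z_1, \dots, Z_d) \sim \mathcal{N}(0, I)$, and because $U$ is orthogonal (so $\|\cdot\|$ is invariant under it), $\|X\|^2$ is distributed as $\|\Lambda^{1/2} Z\|^2 = \sum_{i=1}^d \lambda_i Z_i^2$. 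Thus the off-diagonal structure of $\Sigma$ is irrelevant and we are left analyzing a weighted sum of independent $\chi^2_1$ variables.

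Next I would apply Theorem \ref{thm:massartlaurent} with $n = d$ and weights $a_i = \lambda_i$. Since each $\lambda_i \le 1$, we may take $|a|_\infty = 1$ as a valid bound, and $|a|_2^2 = \sum_i \lambda_i^2 \le \sum_i \lambda_i \le d$ (using $\lambda_i \le 1$ once to pass from $\lambda_i^2$ to $\lambda_i$, and a second time in $\sum_i \lambda_i \le d$), so $|a|_2 \le \sqrt{d}$. Monotonicity of the deviation $2|a|_2\sqrt{t} + 2|a|_\infty t$ in $|a|_2$ and $|a|_\infty$ means that using these upper bounds only weakens the tail estimate, so Theorem \ref{thm:massartlaurent} gives, for every $t > 0$,
\[
\BP\!\left( \bigl| \|X\|^2 - J \bigr| \ge 2\sqrt{d}\,\sqrt{t} + 2t \right) \le 2e^{-t}.
\]

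Finally I would set $t = C/9$. Then $2\sqrt{d}\,\sqrt{t} + 2t = \tfrac{2}{3}\sqrt{Cd} + \tfrac{2}{9}C \le \sqrt{Cd} + C$, and $2e^{-C/9} \le e^{-C/10}$ whenever $C \ge 90\ln 2$; since the statement only claims the bound for sufficiently large $C$, this threshold is acceptable. Combining, $\|X\|^2 = J \pm (\sqrt{Cd} + C)$ with probability at least $1 - e^{-C/10}$. There is no substantive obstacle here: the proof is a direct invocation of Theorem \ref{thm:massartlaurent}, and the only points requiring care are the orthogonal-invariance reduction, the double use of $\lambda_i \le 1$ to get $|a|_2 \le \sqrt d$, and choosing $t$ linear in $C$ so that the constant factor $2$ in the tail probability is absorbed into the exponent.
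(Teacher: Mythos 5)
Your proof is correct and follows essentially the same route as the paper: diagonalize $\Sigma$ by rotational invariance, reduce to a weighted sum of $\chi^2_1$ variables, and invoke Theorem \ref{thm:massartlaurent} with $|a|_\infty \le 1$ and $|a|_2 \le \sqrt{d}$. The only difference is the cosmetic choice $t = C/9$ versus the paper's implicit $t = C/4$; both absorb the constant $2$ into the exponent for sufficiently large $C$.
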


\begin{proof}
    By rotating, we may assume WLOG that $\Sigma$ is diagonal. Then, note that $\|X\|^2 = \sum_{i = 1}^{d} \Sigma_{ii} Z_i^2$, where each $Z_i \overset{i.i.d.}{\sim} \mathcal{N}(0, 1)$. We apply Theorem \ref{thm:massartlaurent}, where $n = d$, and $a_i = \Sigma_{ii}$ (so $J = \sum_{i=1}^{n} a_i$). Note that $|a|_\infty = \|\Sigma\|_2 \le 1$ and $|a|_2 \le \sqrt{d}$. Thus, we have that $\BP\left(\left|\|X\|^2-J\right| \ge \sqrt{C d} + C\right) \le 2e^{-C/4} \le e^{-C/10}$ if $C$ is sufficiently large.
%
\end{proof}

\begin{proposition} \label{bound:unknown_subgaussian}
    Let $X \sim \mathcal{N}(0, \Sigma)$, where $\|\Sigma\|_2 \le 1.$ Then, for any vector $\mu$ of magnitude at most $2 \alpha$, $\langle \mu, \Sigma \rangle$ has subgaussian norm at most $O(\alpha)$.
\end{proposition}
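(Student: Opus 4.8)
The plan is to observe that $\langle \mu, X\rangle$ is a one-dimensional linear image of the multivariate Gaussian $X$, hence is itself a univariate Gaussian, and then to bound its variance using the spectral bound on $\Sigma$. Concretely, first I would note that since $X \sim \mathcal{N}(0, \Sigma)$, any fixed linear functional $x \mapsto \langle \mu, x\rangle$ applied to $X$ yields a univariate Gaussian with mean $\langle \mu, 0\rangle = 0$ and variance $\Var(\langle \mu, X\rangle) = \mu^T \Sigma \mu$. So it suffices to upper bound $\mu^T \Sigma \mu$.

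Next I would bound the variance: since $\Sigma$ is positive semidefinite and $\|\Sigma\|_2 \le 1$, we have $0 \le \mu^T \Sigma \mu \le \|\Sigma\|_2 \cdot \|\mu\|^2 \le \|\mu\|^2 \le (2\alpha)^2 = 4\alpha^2$. Hence $\langle \mu, X\rangle \sim \mathcal{N}(0, \sigma^2)$ for some $\sigma \le 2\alpha$. Finally, by Proposition \ref{prop:normal_subgaussian}, the univariate Normal $\mathcal{N}(0, \sigma^2)$ has subgaussian norm exactly $\sigma$, which is at most $2\alpha = O(\alpha)$, completing the proof.

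There is essentially no hard step here; the only points requiring any care are that the spectral norm of a PSD matrix dominates the Rayleigh quotient (so $\mu^T\Sigma\mu \le \|\Sigma\|_2\|\mu\|^2$) and that linear functionals of a multivariate Gaussian are univariate Gaussians with the stated mean and variance. The result is a direct analogue of the Gaussian case inside the proof of Proposition \ref{bound:prop_row}, except that the identity covariance is replaced by a spectrally bounded $\Sigma$, which only costs a constant factor in the variance.
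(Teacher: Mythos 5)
Your proposal is correct and is essentially identical to the paper's own proof: both observe that $\langle \mu, X\rangle$ is a centered univariate Gaussian with variance $\mu^T \Sigma \mu \le \|\Sigma\|_2 \|\mu\|^2 \le 4\alpha^2$ and then invoke the subgaussian norm of a Normal. (You also correctly read the statement's $\langle \mu, \Sigma\rangle$ as the intended $\langle \mu, X\rangle$.)
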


\begin{proof}
    Note that $\langle \mu, \Sigma \rangle$ is a Gaussian with variance $\mu^T \Sigma \mu \le \|\mu\|^2 \cdot \|\Sigma\|_2 \le 4 \alpha^2$. So, the subgaussian norm is at most $O(\alpha)$.
\end{proof}

\begin{theorem} \label{thm:concentration_2}
    Let $\alpha \le \frac{1}{2}$, and let $\mu \in \BR^d$ satisfy $\|\mu\| \le 2 \alpha$. Let $X^{(1)}, \dots, X^{(N)}$ be drawn i.i.d. either according to $\mathcal{N}(\mu, \Sigma)$, where $\|\Sigma\|_2 \le 1$ is unknown. Finally, let $\textbf{T}$ be the matrix with entries $T_{i, j} = \langle X^{(i)}, X^{(j)}\rangle$. Then, with probability at least $0.99$, the following all hold simultaneously.
\begin{enumerate}
    \item We have $\sum_{i = 1}^{N} \sum_{j = 1}^{N} T_{i, j} = N \cdot J + N^2 \|\mu\|^2 \pm O(N \sqrt{d} + \alpha N \sqrt{N})$
    \item For all $i$, $T_{i, i} = d \pm \tilde{O}(\sqrt{d})$ and for all $i \neq j$, $T_{i, j} = \pm \tilde{O}(\sqrt{d})$.
    \item For all $i$, $\sum_{j = 1}^{N} T_{i, j} = d \pm \tilde{O}(\sqrt{Nd} + \alpha \cdot N)$, and for all $j$, $\sum_{i = 1}^{N} T_{i, j} = J \pm \tilde{O}(\sqrt{Nd} + \alpha \cdot N)$.
    \item For all subsets $S \subset [N]$, if $|S| = K$, then $\sum_{i, j \in S} T_{i, j} = K J \pm \tilde{O}(K \sqrt{Kd}+K^2)$.
\end{enumerate}
\end{theorem}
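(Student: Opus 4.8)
The plan is to establish each of the four parts as the direct analogue of the corresponding part of Theorem \ref{thm:concentration}, substituting Propositions \ref{bound:unknown_cov} and \ref{bound:unknown_subgaussian} for the identity-covariance concentration bounds and replacing $d$ by $J = \Tr(\Sigma)$ in every quantity whose expectation is governed by the trace of $\Sigma$ rather than by the ambient dimension. Throughout I would decompose $X^{(i)} = \mu + W^{(i)}$ with $W^{(1)}, \dots, W^{(N)} \overset{i.i.d.}{\sim} \mathcal{N}(0, \Sigma)$, so that each $T_{i,j} = \langle X^{(i)}, X^{(j)}\rangle$ splits into the $O(\alpha^2)$ term $\|\mu\|^2$, linear cross terms of the form $\langle \mu, W^{(\cdot)}\rangle$, and the quadratic term $\langle W^{(i)}, W^{(j)}\rangle$. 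The facts I would invoke repeatedly are: $\|\mu\|^2 \le 4\alpha^2 = O(1)$; for any $\mathcal{N}(0,\Sigma)$ vector $V$ with $\|\Sigma\|_2 \le 1$, the variable $\langle \mu, V\rangle$ is subgaussian with norm $O(\alpha)$ (Proposition \ref{bound:unknown_subgaussian}) and $\|V\|^2 = J \pm (\sqrt{Cd}+C)$ with failure probability $e^{-C/10}$ (Proposition \ref{bound:unknown_cov}); and, for $V$ independent of a fixed $X^{(i)}$, the variable $\langle X^{(i)}, V\rangle$ is, conditionally on $X^{(i)}$, Gaussian with variance $(X^{(i)})^T\Sigma X^{(i)} \le \|X^{(i)}\|^2$.

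For Part 1, I would use that $\bar X := \sum_i X^{(i)} \sim \mathcal{N}(N\mu, N\Sigma)$, so $\BE[\|\bar X\|^2] = N^2\|\mu\|^2 + NJ$ exactly; writing $\bar X = N\mu + \sqrt N\, V$ gives $\|\bar X\|^2 = N^2\|\mu\|^2 + 2N^{3/2}\langle \mu, V\rangle + N\|V\|^2$, where $\langle \mu, V\rangle$ and $\|V\|^2$ are uncorrelated (since $V \mapsto -V$ preserves the law of $V$ and negates $\langle \mu, V\rangle$), $\Var[\langle \mu, V\rangle] = \mu^T\Sigma\mu \le 4\alpha^2$, and $\Var[\|V\|^2] = 2\Tr(\Sigma^2) \le 2\|\Sigma\|_2 J \le 2d$. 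Hence $\Var[\|\bar X\|^2] = O(N^3\alpha^2 + N^2 d)$ and Chebyshev gives Part 1 with failure probability $0.005$. Parts 2 and 3 are then mechanical: taking $C = \Theta(\log N)$ in Proposition \ref{bound:unknown_cov}, the diagonal entries are $T_{i,i} = \|\mu\|^2 + 2\langle \mu, W^{(i)}\rangle + \|W^{(i)}\|^2 = J \pm \tilde O(\sqrt d)$; the off-diagonal entries, via $\langle W^{(i)}, W^{(j)}\rangle = \tfrac14(\|W^{(i)}+W^{(j)}\|^2 - \|W^{(i)}-W^{(j)}\|^2)$ with $W^{(i)}\pm W^{(j)} \sim \mathcal{N}(0,2\Sigma)$, are $\pm\tilde O(\sqrt d)$; and, writing $\sum_{j} T_{i,j} = T_{i,i} + \langle X^{(i)}, \sum_{j\ne i}X^{(j)}\rangle$ with $\sum_{j\ne i} X^{(j)} = (N-1)\mu + \sqrt{N-1}\,V$ and $V \sim \mathcal{N}(0,\Sigma)$ independent of $X^{(i)}$, the row sum is $J \pm \tilde O(\sqrt{Nd}+\alpha N)$ (the term $(N-1)\langle X^{(i)},\mu\rangle$ is $\tilde O(\alpha N)$ and, conditioning on $X^{(i)}$ so $\|X^{(i)}\|^2 = O(d)$ with high probability by Part 2, $\sqrt{N-1}\,\langle X^{(i)},V\rangle$ is $\tilde O(\sqrt{Nd})$). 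Column sums follow by symmetry of $\textbf{T}$. Union bounds over the $N^2$ pairs and the $N$ indices cost only $1/\poly(N)$.

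The hard part will be Part 4, because the estimate must hold simultaneously over all subsets $S \subseteq [N]$, so for $|S| = K$ the per-subset failure probability has to be $O(N^{-2K})$ in order to survive the union bound $\sum_K \binom NK O(N^{-2K}) = O(1/N)$. I would write $\sum_{i,j\in S} T_{i,j} = \|\sum_{i\in S} X^{(i)}\|^2$ with $\sum_{i\in S} X^{(i)} \sim \mathcal{N}(K\mu, K\Sigma) = K\mu + \sqrt K\, V$, $V \sim \mathcal{N}(0,\Sigma)$, so $\|\sum_{i\in S}X^{(i)}\|^2 = K^2\|\mu\|^2 + 2K^{3/2}\langle\mu,V\rangle + K\|V\|^2$. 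The delicate point is that Proposition \ref{bound:unknown_cov} must be applied to $V$ itself, not to $\sqrt K\, V$, since $\|K\Sigma\|_2$ need not be at most $1$; with $C = \Theta(K\log N)$ this gives $K\|V\|^2 = KJ \pm \tilde O(K\sqrt{Kd}+K^2)$ with failure probability $N^{-\Omega(K)}$, the cross term $2K^{3/2}\langle\mu,V\rangle$ is Gaussian with standard deviation $O(\alpha K^{3/2})$ and hence $\tilde O(K^2)$ outside an event of probability $N^{-\Omega(K)}$, and $K^2\|\mu\|^2 = O(K^2)$. Summing these yields $\sum_{i,j\in S}T_{i,j} = KJ \pm \tilde O(K\sqrt{Kd}+K^2)$ per subset, and the union bound over all $S$ and all $K$ finishes Part 4; the total failure probability across the four parts is then $0.005 + O(1/N) \le 0.01$.
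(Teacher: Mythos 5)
Your proposal is correct and follows essentially the same route as the paper's proof: the same decomposition $X^{(i)} = \mu + W^{(i)}$, the same use of Propositions \ref{bound:unknown_cov} and \ref{bound:unknown_subgaussian} with $C = \Theta(\log N)$ (resp.\ $\Theta(K\log N)$ for Part 4), the same polarization trick for $\langle W^{(i)}, W^{(j)}\rangle$, and the same $N^{-\Omega(K)}$ per-subset bound to survive the union over all $S$. The only cosmetic deviations are that you handle Part 1 by an exact variance computation plus Chebyshev where the paper applies the two tail bounds directly, and your conclusion $T_{i,i} = J \pm \tilde O(\sqrt d)$ (likewise the row sums centered at $J$) matches what the paper's proof actually derives, the ``$d$'' appearing in the displayed statement being a typo inherited from Theorem \ref{thm:concentration}.
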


\begin{proof}
    For part 1 of the theorem, note that $\sum_{i = 1}^{N} \sum_{j = 1}^{N} T_{i, j} = \|\bar{X}\|^2,$ where $\bar{X} = \sum_{i = 1}^{N} X^{(i)} \sim \mathcal{N}(N \cdot \mu, N \cdot \Sigma)$. We can write $\bar{X} = N \cdot \mu + \sqrt{N} \cdot X,$ where $X \sim \mathcal{N}(0, \Sigma)$. Then, $\|\bar{X}\|^2 = N^2 \|\mu\|^2 + 2N^{3/2} \cdot \langle \mu, X \rangle + N \cdot \|X\|^2.$ But we know that since $\langle \mu, X \rangle$ has subgaussian norm at most $O(\alpha)$, so doesn't exceed $O(\alpha)$ in absolute value with probability at least $0.999$. In addition, by Proposition \ref{bound:unknown_cov}, we have that $\|X\|^2 = J \pm O(\sqrt{d})$ with probability at least $0.999$. Together, this implies that $\|\bar{X}\|^2 = N \cdot J + N^2 \|\mu\|^2 \pm O(N \sqrt{d} + \alpha N \sqrt{N})$.
    
    For part 2 of the theorem, note that $T_{i, i}$ has distribution $\|\mu+X\|^2$, where $\|\mu\| \le 1$ and $X \sim \mathcal{N}(0, \Sigma)$. In addition, note that $\|\mu+X\|^2 = \|\mu\|^2 + 2 \langle \mu, X \rangle + \|X\|^2.$ By Proposition \ref{bound:unknown_cov}, $\|X\|^2 = J \pm \tilde{O}(\sqrt{d})$ with probability at least $1-\frac{1}{N^3}$, and by Proposition \ref{bound:unknown_subgaussian}, $\langle \mu, X \rangle = \pm \tilde{O}(1)$ with probability at least $1-\frac{1}{N^3}$. Finally, $\|\mu\|^2 \le 1.$ Overall, this implies that with probability at least $1-\frac{2}{N^2}$, $T_{i, i} = J \pm \tilde{O}(\sqrt{d})$ for all $i$. 
    
    For $i \neq j$, note that $T_{i, j}$ has distribution $\langle \mu+X, \mu+X'\rangle$, where $X, X' \overset{i.i.d.}{\sim} \mathcal{N}(0, \Sigma)$. So, $\langle \mu+X, \mu+X'\rangle = \|\mu\|^2 + \langle \mu, X \rangle + \langle \mu, X' \rangle + \langle X, X' \rangle$. We already know that $\|\mu\|^2, \langle \mu, X \rangle, \langle \mu, X' \rangle$ are all in the range $\pm \tilde{O}(1)$ with probability $1-\frac{2}{N^3}$. In addition, $\langle X, X' \rangle = \frac{1}{4}\left[(X+X')^2-(X-X')^2\right]$, but since $X, X'$ are independent, this means $X+X', X-X'$ are both $\mathcal{N}(0, 2 \cdot \Sigma)$, which means $\|X+X'\|^2, \|X-X'\|^2$ are both in the range $2J \pm \tilde{O}(\sqrt{d})$ with probability at least $1-\frac{2}{N^3}$. Overall, we have that for all $i \neq j$, $T_{i, j} = \langle \mu+X, \mu+X'\rangle = \pm \tilde{O}(\sqrt{d})$ with probability at least $1-O(\frac{1}{N})$.
    
    For part 3 of the theorem, by symmetry it suffices to show that $\sum_{j = 1}^{N} T_{i, j} = d \pm \tilde{O}(\sqrt{Nd}+\alpha \cdot N)$ for any fixed $i$ with probability at least $1-O(\frac{1}{N^2})$. Note that $\sum_{j \neq i} T_{i, j} = \langle X^{(i)}, \sum_{j \neq i} X^{(j)} \rangle$ and $\sum_{j \neq i} X^{(j)}$ has distribution $\mathcal{N}((N-1) \cdot \mu, (N-1) \cdot \Sigma).$ Hence, $\sum_{j \neq i} T_{i, j}$ has distribution $\langle \mu + X, (N-1) \mu + \sqrt{N-1} X' \rangle,$ where $X, X' \overset{i.i.d.}{\sim} \mathcal{N}(0, \Sigma)$. We can write this as $(N-1) \|\mu\|^2 + (N-1) \langle \mu, X \rangle + \sqrt{N-1} \langle \mu, X' \rangle + \sqrt{N-1} \cdot \langle X, X' \rangle$. We know that $(N-1) \|\mu\|^2 = O(N \cdot \alpha^2)$. Also, we already saw that with probability at least $1-O(\frac{1}{N^2})$, $\langle \mu, X \rangle, \langle \mu, X' \rangle = \pm \tilde{O}(\alpha)$, and $\langle X, X' \rangle = \tilde{O}(\sqrt{d}).$ So overall, with failure probability at most $O(\frac{1}{N^2})$, $\sum_{j \neq i} T_{i, j} = \pm \tilde{O}(\alpha \cdot N + \sqrt{Nd}).$ Finally, since $T_{i, i} = J \pm \tilde{O}(\sqrt{d})$ with probability at least $1-O(\frac{1}{N^3}),$ we have that $\sum_{j = 1}^{N} T_{i, j} = J \pm \tilde{O}(\alpha \cdot N + \sqrt{Nd})$ with the desired failure probability.
    
    For part 4, note that for any fixed subset $S \subset [N]$ of size $K$, $\sum_{i, j \in S} T_{i, j} = \|\sum_{i \in S} X^{(i)}\|^2,$ and $\sum_{i \in S} X^{(i)} = K \cdot \mu + \sqrt{K} \cdot X$, for $X \sim \mathcal{N}(0, \Sigma)$. So, $\sum_{i, j \in S} T_{i, j}$ has distribution $K^2 \|\mu\|^2 + 2K \sqrt{K} \langle X, \mu\rangle + K \cdot \|X\|^2$. Note that $K^2 \|\mu\|^2 \le K^2$ always, and since $\langle X, \mu \rangle$ has subgaussian norm at most $O(1)$ by Proposition \ref{bound:unknown_subgaussian}, we have that $2K \sqrt{K} \langle X, \mu\rangle = \pm \tilde{O}(K^2)$ with probability at least $1-N^{-2K}$. Finally, by Proposition \ref{bound:unknown_cov}, we have that $K \cdot \|X\|^2 = K \cdot \left[J \pm \tilde{O}(\sqrt{Kd} + K)\right]$ with probability at least $1-N^{-2K}$. Hence, with probability at least $1-2N^{-2K}$, we have that $\sum_{i, j \in S} T_{i, j} = K \cdot J \pm \tilde{O}(K \sqrt{Kd} + K^2)$. Taking a union bound over all $1 \le K \le N$ and all ${N \choose K} \le N^K$ sets $S$ of size $K$, we have that this holds for all $S$ with probability at least $1-O(\frac{1}{N})$.
\end{proof}

Finally, we also prove a result relating the Normal distribution and the $\chi_d$ distribution, as well as a result relating Normal distributions with different scalings.

\begin{proposition} \label{prop:chi_normal_tv}
    As $d \to \infty$, we have that $\lim_{d \to \infty} \TV\left(\chi_d, \mathcal{N}(\sqrt{d}, \frac{1}{2})\right) = 0$.
\end{proposition}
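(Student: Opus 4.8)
The plan is to compare probability density functions directly and then invoke Scheffé's lemma. Recall that $\chi_d$ is the distribution of $\sqrt{Z_1^2+\cdots+Z_d^2}$ for $Z_1,\dots,Z_d \overset{i.i.d.}{\sim} \mathcal{N}(0,1)$, with density $f_d(x) = \frac{x^{d-1} e^{-x^2/2}}{2^{d/2-1}\,\Gamma(d/2)}$ for $x>0$. Since total variation distance is invariant under a common translation of both distributions, it suffices to show $\TV\!\big(\chi_d - \sqrt d,\ \mathcal{N}(0,\tfrac12)\big) \to 0$. Write $h_d(u) := f_d(u+\sqrt d)$ for the density of $\chi_d - \sqrt d$ (supported on $u > -\sqrt d$) and $\phi(u) := \tfrac{1}{\sqrt\pi}\, e^{-u^2}$ for the density of $\mathcal{N}(0,\tfrac12)$.

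The main step is to prove $h_d(u) \to \phi(u)$ for every fixed $u \in \BR$. Taking logarithms, $\ln h_d(u) = (d-1)\ln(\sqrt d + u) - \tfrac12(\sqrt d + u)^2 - \ln\!\big(2^{d/2-1}\Gamma(d/2)\big)$. I would expand $\ln(\sqrt d + u) = \tfrac12\ln d + \tfrac{u}{\sqrt d} - \tfrac{u^2}{2d} + O(d^{-3/2})$, so that $(d-1)\ln(\sqrt d + u) = \tfrac{d-1}{2}\ln d + u\sqrt d - \tfrac{u^2}{2} + O(d^{-1/2})$, while $-\tfrac12(\sqrt d + u)^2 = -\tfrac d2 - u\sqrt d - \tfrac{u^2}{2}$; the $u\sqrt d$ terms cancel. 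For the normalizing constant I would apply Stirling's formula $\Gamma(d/2) = \sqrt{2\pi}\,(d/2)^{(d-1)/2} e^{-d/2}\big(1+O(1/d)\big)$, which simplifies to $2^{d/2-1}\Gamma(d/2) = \sqrt\pi\, d^{(d-1)/2} e^{-d/2}\big(1+O(1/d)\big)$, hence $\ln\!\big(2^{d/2-1}\Gamma(d/2)\big) = \tfrac{d-1}{2}\ln d - \tfrac d2 + \tfrac12\ln\pi + O(1/d)$. Adding everything, the $\tfrac{d-1}{2}\ln d$ and $-\tfrac d2$ pieces cancel and we are left with $\ln h_d(u) = -u^2 - \tfrac12\ln\pi + O(d^{-1/2}) \to \ln\phi(u)$.

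Finally, I would invoke Scheffé's lemma: if $h_d$ and $\phi$ are probability density functions on $\BR$ (extending $h_d$ by $0$ on $u \le -\sqrt d$) with $h_d \to \phi$ pointwise everywhere, then $\int_{\BR} |h_d(u)-\phi(u)|\,du \to 0$. Since this integral equals $2\,\TV\!\big(\chi_d - \sqrt d,\ \mathcal{N}(0,\tfrac12)\big) = 2\,\TV\!\big(\chi_d,\ \mathcal{N}(\sqrt d,\tfrac12)\big)$, the claim follows. (If one prefers to avoid citing Scheffé, the same conclusion is immediate from dominated convergence: $\int|h_d-\phi| = 2\int(\phi-h_d)_+$ because $\int(h_d-\phi)=0$, and $(\phi-h_d)_+ \le \phi$ is dominated while tending to $0$ pointwise.) The only delicate point is the asymptotic bookkeeping in the middle paragraph — one must track the error terms in the logarithmic expansion and in Stirling's formula carefully enough to confirm that all the $d$-dependent quantities $\tfrac{d-1}{2}\ln d$, $u\sqrt d$, and $\tfrac d2$ cancel exactly, leaving the finite limit $-u^2-\tfrac12\ln\pi$; the rest is routine.
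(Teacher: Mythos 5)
Your proposal is correct and follows essentially the same route as the paper: both establish, via Stirling's formula and a Taylor expansion of the log-density at $x=\sqrt{d}+u$, that the shifted $\chi_d$ density converges pointwise to the density of $\mathcal{N}(0,\tfrac12)$, and your asymptotic bookkeeping (the cancellation of the $\tfrac{d-1}{2}\ln d$, $u\sqrt d$, and $\tfrac d2$ terms) checks out. The only difference is the routine finishing step — you invoke Scheff\'e's lemma, while the paper bounds $\TV$ by integrating the density discrepancy over a compact window $[-C,C]$ (where the ratio is $1\pm O(1/\sqrt d)$) plus a Gaussian tail $O(e^{-C^2})$ and then lets $C\to\infty$; both are valid.
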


\begin{proof}
    It is well known that $\chi_d$ has PDF
\begin{align*}
    \frac{1}{2^{(d/2)-1} \Gamma(d/2)} \cdot x^{d-1} e^{-x^2/2} &= \left(1 \pm O(\frac{1}{d})\right) \cdot \frac{1}{2^{(d/2)-1} \cdot \left(\frac{d}{2e}\right)^{d/2} \cdot \sqrt{\frac{2 \pi}{d/2}}} \cdot x^{d-1} e^{-x^2/2} \\
    &= \left(1 \pm O(\frac{1}{d})\right) \cdot \frac{e^{d/2}}{d^{(d-1)/2} \cdot \sqrt{\pi}} \cdot x^{d-1} e^{-x^2/2},
\end{align*}
    where the first equality follows by Stirling's formula and the second equality follows by rearrangement. For $x = \sqrt{d}+\theta$ for any $\theta$ bounded by a constant, this equals
\begin{align*}
    &\hspace{0.5cm} \left(1 \pm O(\frac{1}{d})\right) \cdot \frac{e^{d/2}}{d^{(d-1)/2} \cdot \sqrt{\pi}} \cdot \sqrt{d}^{d-1} \left(1+\frac{\theta}{\sqrt{d}}\right)^{d-1} \cdot e^{-(\sqrt{d}+\theta)^2/2} \\
    &= \left(1 \pm O(\frac{1}{d})\right) \cdot \frac{e^{d/2}}{\sqrt{\pi}} \cdot e^{(\theta/\sqrt{d} - \theta^2/2d) \cdot d \pm O(1/\sqrt{d})} \cdot e^{-(\sqrt{d}+\theta)^2/2} \\
    &= \left(1 \pm O(\frac{1}{\sqrt{d}})\right) \cdot \frac{1}{\sqrt{\pi}} \cdot e^{-\theta^2},
\end{align*}
    where the first equality follows by setting $x = \sqrt{d}+\theta$, the second equality follows using a Taylor expansion for $\ln (1+x)$ for $x = \frac{\theta}{\sqrt{d}}$ and noting that $\theta = \pm O(1),$ and the third equality follows by rearrangement.
    
    Hence, for any fixed constant $C$, the PDF of $\chi_d$ at $\sqrt{d}+\theta$, for any $|\theta| \le C$, equals $\left(1 \pm O(\frac{1}{\sqrt{d}})\right) \cdot \frac{1}{\sqrt{\pi}} \cdot e^{-\theta^2}.$ For any two distributions $\mathcal{F}, \mathcal{G}$ over with PDFs $f(x), g(x),$ it is well known that $\TV(\mathcal{F}, \mathcal{G}) \le \int_{-\infty}^{\infty} \max(f(x)-g(x), 0)$. Hence, by setting $f(x)$ as the PDF of $\mathcal{N}(0, 1/2)$ and $g(x)$ as the PDF of $\chi_d$, we have that
\[\TV\left(\mathcal{N}\left(0, \frac{1}{2}\right), \chi_d\right) \le O\left(\frac{1}{\sqrt{d}}\right) \cdot \int_{-C}^{C} \frac{1}{\sqrt{\pi}} e^{-\theta^2} + \int_{x \not\in [-C, C]} f(x) = O\left(\frac{1}{\sqrt{d}} + e^{-C^2}\right),\]
    which means that $\limsup_{d \to \infty} \TV\left(\chi_d, \mathcal{N}(\sqrt{d}, \frac{1}{2})\right) \le O(e^{-C^2})$. As this is true for all $C$, the result is immediate.
\end{proof}

\begin{proposition} \label{prop:scaled_normal_tv}
    We have that $\TV\left(\mathcal{N}(0, \frac{1}{2}), \mathcal{N}(0, 1)\right) \le \frac{1}{6}$.
\end{proposition}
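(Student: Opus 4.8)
The plan is to evaluate the total variation distance exactly as a one‑dimensional integral and then bound that integral carefully; the care is necessary because the true value is only slightly below $\frac16$ (it is about $0.166$), so any soft bound will be too lossy.

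First I would determine where each density dominates. Let $p$ be the density of $\mathcal{N}(0,\frac12)$ and $q$ the density of $\mathcal{N}(0,1)$, so that $p(x) = \frac{1}{\sqrt{\pi}}e^{-x^2}$ and $q(x) = \frac{1}{\sqrt{2\pi}}e^{-x^2/2}$. A direct calculation shows $p(x) > q(x)$ if and only if $x^2 < \ln 2$. Hence, writing $B := \{x : x^2 < \ln 2\}$, we have $\TV(\mathcal{N}(0,\tfrac12),\mathcal{N}(0,1)) = \BP_{X \sim \mathcal{N}(0,1/2)}[X \in B] - \BP_{X \sim \mathcal{N}(0,1)}[X \in B]$. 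Rescaling each Gaussian to standard form, this equals $\BP[\,|Z| < \sqrt{2\ln 2}\,] - \BP[\,|Z| < \sqrt{\ln 2}\,] = 2\int_{\sqrt{\ln 2}}^{\sqrt{2\ln 2}} \phi(t)\,dt$, where $\phi(t) = \frac{1}{\sqrt{2\pi}}e^{-t^2/2}$ and $Z \sim \mathcal{N}(0,1)$.

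Then I would bound the integral. Since $\phi''(t) = (t^2-1)\phi(t)$ and $\sqrt{\ln 2} < 1 < \sqrt{2\ln 2}$, the function $\phi$ is concave on $[\sqrt{\ln 2},1]$ and convex on $[1,\sqrt{2\ln 2}]$. On the concave piece I would bound $\phi$ by its tangent line at $t=1$, namely $\phi(t) \le \phi(1) + \phi'(1)(t-1) = \phi(1)(2-t)$, which after integration gives $\int_{\sqrt{\ln 2}}^{1}\phi(t)\,dt \le \phi(1)\big(\tfrac32 - 2\sqrt{\ln 2} + \tfrac{\ln 2}{2}\big)$. On the convex piece I would bound $\phi$ by the chord through its endpoints, giving $\int_{1}^{\sqrt{2\ln 2}}\phi(t)\,dt \le \tfrac12(\sqrt{2\ln 2}-1)\big(\phi(1) + \phi(\sqrt{2\ln 2})\big)$. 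Finally I would substitute the exact values $\phi(\sqrt{\ln 2}) = \frac{1}{2\sqrt{\pi}}$, $\phi(\sqrt{2\ln 2}) = \frac{1}{2\sqrt{2\pi}}$, $\phi(1) = \frac{1}{\sqrt{2\pi e}}$, together with the crude numerical estimates $\ln 2 < 0.6932$, $\sqrt{\ln 2} > 0.8325$, $\sqrt{2\ln 2} < 1.1775$, $\frac{1}{\sqrt{2\pi e}} < 0.2420$, $\frac{1}{2\sqrt{2\pi}} < 0.1995$, to conclude $\TV \le 2(0.0440 + 0.0392) = 0.1664 < \frac16$.

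The main obstacle is quantitative: the slack is only about $0.0005$, so the usual soft tools all fail here — bounding $\phi$ by its maximum on $[\sqrt{\ln 2},\sqrt{2\ln 2}]$, or invoking Pinsker's inequality or the Hellinger bound, each give values in the range $0.19$–$0.28$. The tangent‑and‑chord bounds work only because $\phi$ hugs these linear approximants closely over the short interval; the delicate point is simply to carry the numerical constants to sufficient precision, and in the correct direction, so that the final comparison with $\frac16$ is genuinely rigorous rather than merely numerically plausible.
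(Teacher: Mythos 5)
Your proposal is correct, and it is in fact more rigorous than the paper's own argument. The paper simply writes the total variation distance as $\frac12\int_{-\infty}^{\infty}\bigl|\frac{1}{\sqrt{\pi}}e^{-x^2}-\frac{1}{\sqrt{2\pi}}e^{-x^2/2}\bigr|\,dx$ and asserts numerically that this is at most $0.1661$, with no justification of the numerical evaluation. You instead locate the crossing point $x^2=\ln 2$ exactly, reduce the TV distance to $2\int_{\sqrt{\ln 2}}^{\sqrt{2\ln 2}}\phi(t)\,dt$, and then exploit the sign change of $\phi''(t)=(t^2-1)\phi(t)$ at $t=1$ to overestimate $\phi$ by its tangent at $1$ on the concave piece and by the chord on the convex piece. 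I checked the details: the tangent line is $\phi(1)(2-t)$ since $\phi'(1)=-\phi(1)$; the stated antiderivative gives $\phi(1)\bigl(\tfrac32-2\sqrt{\ln 2}+\tfrac{\ln 2}{2}\bigr)<0.2420\cdot 0.1816<0.0440$; the trapezoid bound gives $\tfrac12(1.1775-1)(0.2420+0.1995)<0.0392$; the directions of all the numerical inequalities ($\ln 2<0.6932$, $\sqrt{\ln 2}>0.8325$, $\sqrt{2\ln 2}<1.1775$, $\frac{1}{\sqrt{2\pi e}}<0.2420$, $\frac{1}{2\sqrt{2\pi}}<0.1995$) are the correct ones for an upper bound; and $2(0.0440+0.0392)=0.1664<\tfrac16$. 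Your observation about the tightness of the constant is also accurate --- the true value is about $0.1658$, so the margin is only about $0.0005$ and crude sup-norm or Pinsker-type bounds would not suffice. The only cost of your route is length; what it buys is a proof that can be verified by hand without trusting a numerical integration.
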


\begin{proof}
    We can compute the total variation distance as
\[\frac{1}{2} \cdot \int_{-\infty}^{\infty} \left|\frac{1}{\sqrt{\pi}} \cdot e^{-x^2} - \frac{1}{\sqrt{2\pi}} \cdot e^{-x^2/2}\right| dx \le 0.1661,\]
    which is at most $\frac{1}{6}$.
\end{proof}

\section{Proof of Theorem \ref{thm:slow_upper}} \label{sec:slow}

In this section, we prove Theorem \ref{thm:slow_upper}, which establishes a (tight) sample complexity upper bound of $\tilde{O}\left(\frac{d^{1/2}}{\alpha^2} + \frac{d^{1/3}}{\alpha^{4/3} \cdot \eps^{2/3}} + \frac{1}{\alpha \cdot \eps}\right)$ for identity testing of Gaussians with known covariance. We also conclude with a brief section describing the necessity of changing the construction used by~\cite{CanonneKMUZ20} in order to make the construction fully correct.

First, we note by Proposition \ref{prop:wlog_bounded}, we may assume that the null hypothesis $\mathcal{H}_0$ is that we are given $N$ samples i.i.d. from $\mathcal{N}(0, I)$, and the alternative hypothesis $\mathcal{H}_1$ is that we are given $N$ samples i.i.d. from $\mathcal{N}(\mu, I)$, where $\alpha \le \|\mu\| \le 2 \alpha$.

The major tool we use in developing our sample-optimal but inefficient upper bound is the McShane-Whitney Extension theorem, which we now state.

\begin{theorem} \cite{McShane} \label{thm:mcshane}
    Let $\mathcal{X}$ be any metric space with a metric $\rho$, and let $\mathcal{C} \subset \mathcal{X}$. Suppose there is a function $F: \mathcal{C} \to \BR$ and a positive real number $D$ such that $F$ is $D$-Lipschitz on $\mathcal{C}$, i.e., for all $x, y \in \mathcal{C}$, $|F(x)-F(y)| \le D \cdot \rho(x, y)$. Then, there exists an extension $\hat{F}: \mathcal{X} \to \BR$ such that $\hat{F}(x) = F(x)$ for all $x \in \mathcal{C}$, and $\hat{F}$ is $D$-Lipschitz on all of $\mathcal{X}$, meaning that for all $x, y \in \mathcal{X}$, $|F(x)-F(y)| \le D \cdot \rho(x, y)$.
\end{theorem}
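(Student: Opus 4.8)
The plan is to use the classical McShane construction: take the candidate extension
\[\hat{F}(x) := \inf_{y \in \mathcal{C}} \bigl( F(y) + D \cdot \rho(x, y) \bigr),\]
and verify the three required properties in turn. (If $\mathcal{C} = \emptyset$ the claim is vacuous, so assume $\mathcal{C}$ is nonempty and fix some $y_0 \in \mathcal{C}$.)

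First I would check that $\hat{F}$ is well-defined, i.e.\ that the infimum is a real number and not $-\infty$. For any $y \in \mathcal{C}$, the triangle inequality gives $\rho(x,y) \ge \rho(y_0, y) - \rho(x, y_0)$, while the $D$-Lipschitz bound on $\mathcal{C}$ gives $F(y) \ge F(y_0) - D \cdot \rho(y_0, y)$; adding these yields $F(y) + D \cdot \rho(x,y) \ge F(y_0) - D \cdot \rho(x, y_0)$, a finite lower bound independent of $y$. Hence $\hat{F}(x) \in \BR$ for every $x \in \mathcal{X}$.

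Next I would show that $\hat{F}$ agrees with $F$ on $\mathcal{C}$. For $x \in \mathcal{C}$, plugging $y = x$ into the infimum gives $\hat{F}(x) \le F(x)$. Conversely, for every $y \in \mathcal{C}$ the Lipschitz property of $F$ on $\mathcal{C}$ gives $F(x) - F(y) \le D \cdot \rho(x,y)$, i.e.\ $F(y) + D \cdot \rho(x,y) \ge F(x)$; taking the infimum over $y$ yields $\hat{F}(x) \ge F(x)$, so $\hat{F}(x) = F(x)$.

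Finally, for the global Lipschitz bound, fix $x_1, x_2 \in \mathcal{X}$. For any $y \in \mathcal{C}$ the triangle inequality gives $\rho(x_1, y) \le \rho(x_1, x_2) + \rho(x_2, y)$, hence $F(y) + D \cdot \rho(x_1, y) \le \bigl(F(y) + D \cdot \rho(x_2, y)\bigr) + D \cdot \rho(x_1, x_2)$; taking the infimum over $y \in \mathcal{C}$ on both sides gives $\hat{F}(x_1) \le \hat{F}(x_2) + D \cdot \rho(x_1, x_2)$. Swapping $x_1$ and $x_2$ gives the reverse inequality, so $|\hat{F}(x_1) - \hat{F}(x_2)| \le D \cdot \rho(x_1, x_2)$, which completes the proof. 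There is no genuinely hard step here; the only point that warrants a moment's care is confirming that the defining infimum does not diverge to $-\infty$, which is precisely what the $D$-Lipschitz hypothesis on $\mathcal{C}$ guarantees.
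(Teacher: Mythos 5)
Your proof is correct and is precisely the classical McShane infimum construction; the paper simply cites \cite{McShane} for this result without reproducing the argument, and your write-up (including the check that the infimum is finite, which is the one point that needs the Lipschitz hypothesis on $\mathcal{C}$) is the standard complete proof.
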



We will apply Theorem \ref{thm:mcshane} with $\mathcal{X} = (\BR^d)^N$, i.e., $\mathcal{X}$ is the set of all datasets of size $N$ in $\BR^d$. The distance metric $\rho$ that we use measures the number of differing rows, i.e., for $\textbf{X} = \{X^{(1)}, \dots, X^{(N)}\}$ and $\textbf{X}' = \{X'^{(1)}, \dots, X'^{(N)}\}$, we define $\rho(\textbf{X}, \textbf{X}') = \left|\{i: X^{(i)} \neq X'^{(i)}\}\right|$.
For a dataset $\textbf{X} \in (\BR^d)^N$, we associate with it the matrix $\textbf{T} \in \BR^{N \times N}$ with entries $T_{i, j} := \langle X^{(i)}, X^{(j)} \rangle$.
We will define $\mathcal{C} := \mathcal{C}_\alpha \subset (\BR^d)^N$ to be the set of datasets $\{X^{(1)}, \dots, X^{(N)}\}$ that satisfy Properties 2, 3, and 4 listed in Theorem \ref{thm:concentration}. More precisely, we choose some sufficiently large $L = \poly(\log N)$, and define $\mathcal{C}$ to be the set of all datasets $\textbf{X} \in (\BR^d)^N$ such that 
\begin{itemize}
    \item For all $i \in [N]$, $T_{i, i} = d \pm L \cdot \sqrt{d}$, and for all $i \neq j \in [N]$, $T_{i, j} = \pm L \cdot \sqrt{d}$.
    \item For all $i$, $\sum_{j = 1}^{N} T_{i, j} = d \pm L(\sqrt{Nd}+\alpha N)$, and for all $j$, $\sum_{i = 1}^{N} T_{i, j} = d \pm L(\sqrt{Nd}+\alpha N)$.
    \item For all subsets $S \subset [N]$, if $|S| = K$, then $\sum_{i, j \in S} T_{i, j} = K \cdot \left[d \pm L \cdot (\sqrt{Kd}+K)\right]$.
\end{itemize}
We prove the following lemma.

\begin{lemma} \label{lem:lipschitz_1}
    Let $\textbf{X}, \textbf{X}' \subset \mathcal{C}$ be datasets that differ in $K \le \frac{C}{\eps}$ rows, for some fixed $C$. Let $\bar{X} = \sum_{i = 1}^{N} X^{(i)}$, and $\bar{X}' = \sum_{i = 1}^{N} X'^{(i)}$. Then, 
\[\left|\|\bar{X}\|^2 - \|\bar{X}'\|^2\right| \le 6K \cdot L \cdot \left(\sqrt{Nd} + \alpha N + \frac{C}{\eps}\right).\]
\end{lemma}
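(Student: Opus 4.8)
The plan is to exploit that $\textbf{X}$ and $\textbf{X}'$ share their $N-K$ unchanged rows, so that $\bar{X}$ and $\bar{X}'$ agree on a common ``core.'' Let $S \subseteq [N]$ with $|S| = K$ be the set of indices where the two datasets differ, and write $\bar{X} = A + B$ and $\bar{X}' = A + B'$, where $A = \sum_{i \notin S} X^{(i)} = \sum_{i \notin S} X'^{(i)}$ is common to both, $B = \sum_{i \in S} X^{(i)}$, and $B' = \sum_{i \in S} X'^{(i)}$. The first step is to establish the algebraic identity
\[
\|\bar{X}\|^2 - \|\bar{X}'\|^2 = 2\langle \bar{X}, B\rangle - 2\langle \bar{X}', B'\rangle - \|B\|^2 + \|B'\|^2,
\]
which follows by expanding $\|A+B\|^2 = \|A\|^2 + 2\langle A,B\rangle + \|B\|^2$, substituting $\langle A,B\rangle = \langle \bar{X}-B,B\rangle = \langle \bar{X},B\rangle - \|B\|^2$ (and similarly for the primed quantities), and cancelling the $\|A\|^2$ terms, which are identical for the two datasets. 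The point of this rewriting is that every term on the right-hand side is controlled directly by the defining inequalities of $\mathcal{C}$, with no reference to any intermediate dataset.

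Next I would bound each of the four terms. For the cross terms, $\langle \bar{X}, B\rangle = \sum_{i \in S}\langle \bar{X}, X^{(i)}\rangle = \sum_{i \in S}\sum_{j=1}^{N} T_{i,j}$ is a sum of $K$ row sums of $\textbf{T}$, so since $\textbf{X} \in \mathcal{C}$ each such row sum lies in $d \pm L(\sqrt{Nd}+\alpha N)$, giving $\langle \bar{X},B\rangle = Kd \pm KL(\sqrt{Nd}+\alpha N)$, and identically $\langle \bar{X}',B'\rangle = Kd \pm KL(\sqrt{Nd}+\alpha N)$ from $\textbf{X}' \in \mathcal{C}$. For the quadratic terms, $\|B\|^2 = \bigl\|\sum_{i \in S}X^{(i)}\bigr\|^2 = \sum_{i,j \in S} T_{i,j}$, so since $|S| = K$ the submatrix-sum property of $\mathcal{C}$ gives $\|B\|^2 = Kd \pm KL(\sqrt{Kd}+K)$, and likewise for $\|B'\|^2$. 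Plugging these into the identity, the four $Kd$-centers cancel ($2Kd - 2Kd - Kd + Kd = 0$), leaving
\[
\bigl|\|\bar{X}\|^2 - \|\bar{X}'\|^2\bigr| \le 4KL(\sqrt{Nd}+\alpha N) + 2KL(\sqrt{Kd}+K).
\]

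Finally I would simplify using the hypotheses $K \le N$ (so $\sqrt{Kd} \le \sqrt{Nd}$) and $K \le C/\eps$, which turn the bound into $6KL\sqrt{Nd} + 4KL\alpha N + 2KLC/\eps \le 6KL\bigl(\sqrt{Nd}+\alpha N + C/\eps\bigr)$, as claimed. The argument is essentially bookkeeping once the identity is in place; the only points requiring care are that we never telescope through intermediate datasets — which is precisely the defect of the earlier proof being repaired — and that the submatrix property of $\mathcal{C}$ is invoked with the correct subset size $K$. I do not anticipate a genuine obstacle within this lemma itself; the real difficulty was front-loaded into constructing a set $\mathcal{C}$ whose defining inequalities hold simultaneously for all subsets $S \subseteq [N]$, which is exactly what makes this direct (non-telescoping) estimate go through.
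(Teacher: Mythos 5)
Your proposal is correct and follows essentially the same route as the paper's proof: the same decomposition into a common core plus the differing rows, the same identity $\|\bar{X}\|^2 - \|\bar{X}'\|^2 = 2\langle \bar{X}, B\rangle - 2\langle \bar{X}', B'\rangle - \|B\|^2 + \|B'\|^2$, and the same invocation of the row-sum and submatrix-sum properties of $\mathcal{C}$ followed by the bound $K \le \min(N, C/\eps)$. No issues.
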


\begin{proof}
    Write $\bar{X} = Y+Z$ and $\bar{X}' = Y+Z'$, where $Y$ is the sum of the identical rows in $\textbf{X}$ and $\textbf{X}'$, $Z$ is the sum of the rows in $\textbf{X}$ that are not the same in $\textbf{X}'$, and $Z'$ is the sum of the rows in $\textbf{X}'$ that are not the same in $\textbf{X}$. Let $\textbf{T}$ be the matrix with $T_{i, j} = \langle X^{(i)}, X^{(j)} \rangle$, and $\textbf{T}'$ be the matrix with $T'_{i, j} = \langle X'^{(i)}, X'^{(j)} \rangle$. Let $S \subset [N]$ represent the indices of the rows in $\textbf{X}$ and $\textbf{X}'$ that differ. Then, note that $\langle Z, \bar{X}\rangle$ equals the sum of the entries $T_{i, j}$ with $i \in S, j \in [N]$ (and similarly $\langle Z', \bar{X}'\rangle$ is the sum of the entries $T'_{i, j}$ with $i \in S, j \in [N]$). Also, $\|Z\|^2$ is the sum of the entries $T_{i, j}$ with $i, j \in S$ (and similarly $\|Z'\|^2$ is the sum of the entries $T'_{i, j}$ with $i \in S$).
    
    Next, note that $\|\bar{X}\|^2-\|\bar{X}'\|^2 = 2 \langle Z, \bar{X} \rangle - 2 \langle Z', \bar{X}' \rangle - (\|Z\|^2 - \|Z'\|^2)$. Then, note that $\langle Z, \bar{X} \rangle = K \cdot (d \pm L(\sqrt{Nd} + \alpha \cdot N))$, since it is the sum of the entries of $K$ rows in $T$. Likewise, $\langle Z', \bar{X}' \rangle = K \cdot (d \pm L(\sqrt{Nd} + \alpha \cdot N))$. In addition, since $\|Z\|^2$ is the sum of the entries $T_{i, j}$ with $i, j \in S$, and since $|S| = K$, we have that $\|Z\|^2 = Kd \pm L(K \sqrt{Kd}+K^2)$. Similarly, we also have $\|Z'\|^2 = Kd \pm L(K \sqrt{Kd}+K^2)$.
    
    Therefore,
\begin{align*}
    \|\bar{X}\|^2-\|\bar{X}'\|^2 &= 2 \langle Z, \bar{X} \rangle - 2 \langle Z', \bar{X}' \rangle - (\|Z\|^2 - \|Z'\|^2)\\
    &= \pm K \cdot 4L \cdot (\sqrt{Nd}+\alpha N) \pm 2L \cdot K \cdot (\sqrt{Kd}+K) \\
    &= \pm 6 K L \cdot \left(\sqrt{Nd}+\alpha N + \frac{C}{\eps}\right),
\end{align*}
where the last line follows because $K \le \frac{C}{\eps}$ and $K \le N$.
\end{proof}

Let $C$ be a sufficiently large constant, and let $\Delta := 6L \left(\sqrt{Nd} + \alpha N + \frac{C}{\eps}\right)$.
Define $D := \frac{\eps}{C} \cdot \alpha^2 \cdot N^2$, and suppose that $\Delta \le D$.
We now consider the statistic function $\tilde{T}: \mathcal{C} \to \BR$ that sends a dataset $\textbf{X}$ to $\max(0, \min(\|\bar{X}\|^2 - Nd, \alpha^2 \cdot N^2)),$ where $\bar{X} = X^{(1)} + \cdots + X^{(N)}$. Note that $\tilde{T}(\textbf{X})$ is simply $\|\bar{X}\|^2-Nd$ but clipped by $0$ and $\alpha^2 N^2$.

We first show that $\tilde{T}: \mathcal{C} \to \BR$ is $D$-Lipschitz, where we recall that the distance metric $\rho$ over $\mathcal{C}$ is the number of differing rows.

\begin{lemma} \label{lem:lipschitz_2}
    Assume that $\Delta \le D$. Then, for any integer $K$ and any datasets $\textbf{X}, \textbf{X}' \in \mathcal{C}$ that differ in $K$ rows, $|\tilde{T}(\textbf{X}) - \tilde{T}(\textbf{X}')| \le K \cdot D$.
\end{lemma}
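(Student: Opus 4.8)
The plan is to reduce the $K$-row bound to the single-row-at-a-time bound of Lemma \ref{lem:lipschitz_1}, crucially using the fact that $\tilde{T}$ is the \emph{clipped} version of $\textbf{X} \mapsto \|\bar{X}\|^2 - Nd$. The naive approach — change the $K$ differing rows one at a time and sum up per-step changes — fails because the intermediate datasets need not lie in $\mathcal{C}$, so Lemma \ref{lem:lipschitz_1} does not apply to consecutive intermediate pairs. Instead, I would argue directly about the two endpoints $\textbf{X}, \textbf{X}' \in \mathcal{C}$, which do satisfy the hypotheses of Lemma \ref{lem:lipschitz_1}, and handle the cases $K \le \frac{C}{\eps}$ and $K > \frac{C}{\eps}$ separately.

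First, suppose $K \le \frac{C}{\eps}$. Then $\textbf{X}, \textbf{X}' \in \mathcal{C}$ differ in $K$ rows, so Lemma \ref{lem:lipschitz_1} gives $\bigl|\|\bar{X}\|^2 - \|\bar{X}'\|^2\bigr| \le 6KL\bigl(\sqrt{Nd} + \alpha N + \frac{C}{\eps}\bigr) = K \Delta$. Since clipping to an interval (here via $x \mapsto \max(0,\min(x,\alpha^2 N^2))$, composed with a shift by $-Nd$) is $1$-Lipschitz, we get $|\tilde{T}(\textbf{X}) - \tilde{T}(\textbf{X}')| \le \bigl|\|\bar{X}\|^2 - \|\bar{X}'\|^2\bigr| \le K\Delta \le K D$, using the standing assumption $\Delta \le D$. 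This already settles the small-$K$ regime.

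Now suppose $K > \frac{C}{\eps}$. Here I would not try to control $\|\bar{X}\|^2 - \|\bar{X}'\|^2$ at all — it can genuinely be large — but instead use that $\tilde{T}$ takes values in $[0, \alpha^2 N^2]$, so trivially $|\tilde{T}(\textbf{X}) - \tilde{T}(\textbf{X}')| \le \alpha^2 N^2$. It then suffices to check $\alpha^2 N^2 \le K D$. Since $D = \frac{\eps}{C}\alpha^2 N^2$ and $K > \frac{C}{\eps}$, we have $K D > \frac{C}{\eps} \cdot \frac{\eps}{C}\alpha^2 N^2 = \alpha^2 N^2$, as desired. (The boundary case $K = \lceil C/\eps\rceil$ and non-integrality of $C/\eps$ are harmless: one of the two arguments always applies, since either $K \le \lfloor C/\eps\rfloor \le C/\eps$ so the first bound applies, or $K \ge \lceil C/\eps\rceil > C/\eps$ wait — to be safe I would phrase the split as $K \le C/\eps$ versus $K > C/\eps$ on the reals, with the first case invoking Lemma \ref{lem:lipschitz_1} which is stated for $K \le C/\eps$, and the second using the trivial range bound.)

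The main obstacle, conceptually, is recognizing that one should \emph{not} attempt a telescoping/path argument through intermediate datasets — that is precisely the gap in \cite{CanonneKMUZ20} that the paper is fixing — and instead lean on the clipping to get a global range bound for large $K$ while invoking Lemma \ref{lem:lipschitz_1} directly on the endpoints for small $K$. Everything else is a one-line Lipschitz composition and an arithmetic comparison of $\Delta$, $D$, and $\alpha^2 N^2$; there are no serious calculations. I would also double-check that $\tilde{T}$'s clipping interval length $\alpha^2 N^2$ matches the threshold against which $K D$ is compared, which it does by the definition $D = \frac{\eps}{C}\alpha^2 N^2$, so the constant $C$ in the definition of $D$ is exactly what makes the large-$K$ case close.
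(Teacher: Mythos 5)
Your proof is correct and follows essentially the same two-case argument as the paper: for $K \le C/\eps$ apply Lemma \ref{lem:lipschitz_1} to the two endpoints and use that clipping is $1$-Lipschitz, and for $K > C/\eps$ use the trivial range bound $\tilde{T} \in [0, \alpha^2 N^2]$ together with $\alpha^2 N^2 = (C/\eps) \cdot D \le K \cdot D$. Your observation that one must avoid the telescoping-through-intermediate-datasets argument is exactly the point the paper emphasizes.
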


\begin{proof}
    First, assume that $K \ge C \cdot \eps^{-1}$, where we recall that $C$ is a large constant and that $D=\frac{\eps}{C} \cdot \alpha^2 \cdot N^2$. Then, since $\tilde{T}(\textbf{X})$ and $\tilde{T}(\textbf{X}')$ are bounded below by $0$ and bounded above by $\alpha^2 \cdot N^2 = \eps^{-1} \cdot C \cdot D \le K \cdot D$, we indeed have that $|\tilde{T}(\textbf{X}) - \tilde{T}(\textbf{X}')| \le K \cdot D$.
    
    Next, assume that $K \le C \cdot \eps^{-1}$. Then, by Lemma \ref{lem:lipschitz_1}, we have that $\left|\|\bar{X}\|^2 - \|\bar{X}'\|^2\right| \le K \cdot \Delta$. Since $\tilde{T}(\textbf{X})$ simply takes $\|\bar{X}\|^2$, subtracts $Nd$, and then prevents the statistic from being less than $0$ or more than $\alpha^2 \cdot N^2$ (and similar for $\tilde{T}(\textbf{X}')$), we will also have that $|\tilde{T}(\textbf{X})-\tilde{T}(\textbf{X}')| \le K \cdot \Delta$. Since we are assuming that $\Delta \le D$, this also implies that $|\tilde{T}(\textbf{X})-\tilde{T}(\textbf{X}')| \le K \cdot D$.
\end{proof}

Hence, by Theorem \ref{thm:mcshane}, there exists an extension $\hat{T}$ of $\tilde{T}$ from $(\BR^d)^N \to \BR$, such that for any datasets $\textbf{X}, \textbf{X}' \in (\BR^d)^N$ that differ in $K$ rows, $|\hat{T}(\textbf{X})-\hat{T}(\textbf{X}')| \le K \cdot D$. Our final (computationally inefficient) algorithm will be the following. We compute $\hat{T}(\textbf{X}) + Lap(D/\eps)$. If this exceeds $\frac{\alpha^2 \cdot N^2}{2}$, we output $1$, and otherwise we output $0$. (Note that this algorithm is computationally inefficient because computing the extension $\hat{T}$ is computationally inefficient.)

We are now ready to prove Theorem \ref{thm:slow_upper}.

\begin{proof} of Theorem \ref{thm:slow_upper}:
First, we note that this algorithm is $(\eps, 0)$-differentially private because $\hat{T}(\textbf{X})$ has sensitivity at most $D$ whenever we change a single row of $\textbf{X}$, by the properties of our Lipschitz extension.

To verify accuracy, note that if each row of $\textbf{X}$ is drawn i.i.d. from $\mathcal{N}(0, I)$, then with probability at least $0.99$, $\textbf{X} \in \mathcal{C}$ and $\|\bar{X}\|^2 \le Nd \pm C N \sqrt{d}$ by Theorem \ref{thm:concentration}, assuming $C$ is sufficiently large. Hence, with probability at least $0.99$, $\hat{T}(\textbf{X}) = \tilde{T}(\textbf{X}) \le C N \sqrt{d}$, which means that with probability at least $0.9$, $\hat{T}(\textbf{X}) + Lap(D/\eps) \le CN \sqrt{d} + 10 \cdot \frac{D}{\eps},$ which we want to be at most $\frac{\alpha^2 \cdot N^2}{2}$. Indeed, if $C$ is sufficiently large, then $10 \cdot \frac{D}{\eps} \le \frac{1}{4}\cdot \alpha^2 N^2$, so we just need that $\Delta \le D = \frac{\eps}{C} \cdot \alpha^2 \cdot N^2$ and $C N \sqrt{d} \le \frac{1}{4} \cdot \alpha^2 \cdot N^2$.

Next, if each row of $\textbf{X}$ is drawn i.i.d. from $\mathcal{N}(\mu, I)$, where $\alpha \le \|\mu\| \le 2 \alpha,$ then with probability at least $0.99$, $\textbf{X} \in \mathcal{C}$ and $\|\bar{X}\|^2 \ge Nd + N^2 \alpha^2 - C(N \sqrt{d} + \alpha N \sqrt{N})$ by Theorem \ref{thm:concentration}. Hence, with probability at least $0.99$, $\hat{T}(\textbf{X}) = \tilde{T}(\textbf{X}) \ge N^2 \alpha^2 - C(N \sqrt{d} + \alpha N \sqrt{N})$, which means that with probability at least $0.9$, $\hat{T}(\textbf{X}) + Lap(D/\eps) \ge N^2 \alpha^2 - C(N \sqrt{d} + \alpha N \sqrt{N}) - 10 \cdot \frac{D}{\eps}$, which we want to be at least $\frac{\alpha^2 \cdot N^2}{2}.$ If $C$ is sufficiently large, then $10 \cdot \frac{D}{\eps} \le \frac{1}{4}\cdot \alpha^2 N^2$, so we just need that $\Delta \le D$ and that $C(N \sqrt{d} + \alpha N \sqrt{N}) \le \frac{1}{4} \cdot \alpha^2 \cdot N^2$.

Hence, our algorithm is both accurate and private as long as $N$ is sufficiently large so that $\Delta = 6L\left(\sqrt{Nd}+\alpha N + \frac{C}{\eps}\right) \le \frac{\eps}{C} \cdot \alpha^2 \cdot N^2$ and $C(N \sqrt{d} + \alpha N \sqrt{N}) \le \frac{1}{4} \alpha^2 \cdot N^2$. 
Since $C$ is a fixed large constant and $L$ is polylogarithmic in $N$, it suffices that
\[N \ge \tilde{O}\left(\frac{\sqrt{d}}{\alpha^2} + \frac{d^{1/3}}{\alpha^{4/3} \cdot \eps^{2/3}} + \frac{1}{\alpha \cdot \eps}\right).\]
This completes the proof.
\end{proof}

\subsection{On changing the construction of \cite{CanonneKMUZ20}}

An important part of both our argument and the argument of \cite{CanonneKMUZ20} for the inefficient testing algorithm was that the statistic $F$ on the subset $\mathcal{C}$ is $D$-Lipschitz. The previous work of \cite{CanonneKMUZ20} made the subtle mistake of assuming that it suffices for any two adjacent datasets $\mathcal{X}, \mathcal{X}'$ in $\mathcal{C}$ to satisfy $|F(\mathcal{X})-F(\mathcal{X}')| \le D$. In reality, we need that for any datasets $\mathcal{X}, \mathcal{X}' \in \mathcal{C}$ that differ in $K$ positions (for any $K \le N$), $|F(\mathcal{X})-F(\mathcal{X}')| \le D \cdot K$. It is reasonable to believe these definitions are equivalent: indeed, if $\mathcal{X}, \mathcal{X}' \in \mathcal{C}$ differ by $K$ positions, then we can construct a series of adjacent datasets $\mathcal{X} =: \mathcal{X}^{(0)}, \mathcal{X}^{(1)}, \mathcal{X}^{(2)}, \dots, \mathcal{X}^{(N)} := \mathcal{X}'$ where $\mathcal{X}^{(i-1)}$ and $\mathcal{X}^{(i)}$ are adjacent. A natural guess is that if $\mathcal{X}, \mathcal{X}' \in \mathcal{C}$, then one can ensure that each $\mathcal{X}^{(i)}$ is also in $\mathcal{C}.$ If so, then a triangle inequality implies the statistic $F$ is $D$-Lipschitz on $\mathcal{C}$, as long as adjacent datasets in $\mathcal{C}$ have the statistic differ by at most $D$.

Unfortunately, it may be impossible to make the intermediate datasets $\mathcal{X}^{(i)}$ be in $\mathcal{C}$ for the choice of $\mathcal{C}$ used in \cite{CanonneKMUZ20}. Indeed, the choice of $\mathcal{C}$ used in \cite{CanonneKMUZ20} was the set of data $\textbf{X} = \{X^{(1)}, \dots, X^{(N)}\}$ (where each $X^{(i)} \in \BR^d$) such that $\|X^{(i)}\|^2 \le \Delta$ and $|\langle X^{(i)}, X^{(1)} + \cdots + X^{(N)}\rangle| \le \Delta$ for some appropriate choice of $\Delta$. However, even if $N = 2$, $d = 3$, and $K = 2$, we lose our desired property. For instance, define $a = (1, 0, 1), b = (-1, 0, 1), c = (0, 1, 1)$, and $d = (0, -1, 1)$ to be points in $\BR^d$. Let $\mathcal{C}$ be the set of pairs $\{X^{(1)}, X^{(2)}\}$ such that $\|X^{(i)}\|^2 \le 2$ and $|\langle X^{(i)}, X^{(1)}+X^{(2)}\rangle| \le 2$, for both $i = 1$ and $i = 2$. It is clear that the sets $\{a, b\}$ and $\{c, d\}$ are in $\mathcal{C}$, and they differ in $K = 2$ positions. However, any method of $2$ steps used to convert $\{a, b\}$ to $\{c, d\}$ cannot stay in $\mathcal{C}$ throughout, because one can see that none of $\{a, c\}, \{a, d\}, \{b, c\}$, or $\{b, d\}$ are in $\mathcal{C}.$

\section{Proof of Theorem \ref{thm:lower}} \label{sec:lower}

In this section, we prove Theorem \ref{thm:lower}. In other words, we show that any $(0, \eps)$-differentially private algorithm that can successfully distinguish between samples from $\mathcal{N}(0, I)$ and $\mathcal{N}(\mu, I)$ for $\|\mu\| \ge \alpha$ requires at least $\Omega\left(\frac{d^{1/2}}{\alpha^2} + \frac{d^{1/3}}{\alpha^{4/3} \cdot \eps^{2/3}} + \frac{1}{\alpha \cdot \eps}\right)$ samples. We reiterate that in the case of differentially private \emph{hypothesis testing}, the sample complexity required for $(\eps, 0)$-DP and $(0, \eps)$-DP are known to be \emph{asymptotically equivalent} for any $\eps < \frac{1}{2}$ \cite{AcharyaSZ18}.

First, we note that even a non-private algorithm that can successfully distinguish between $N$ samples from $\mathcal{N}(0, I)$ and $\mathcal{N}(\mu, I)$ where $\|\mu\| \ge \alpha$ requires $N \ge \Omega\left(\frac{d^{1/2}}{\alpha^2}\right)$ \cite{SrivastavaD08, CanonneDKS20}. Hence, it suffices to show that a $(0, \eps)$-private algorithm that can successfully distinguish between $N$ samples from $\mathcal{N}(0, I)$ and $\mathcal{N}(\mu, I)$ where $\|\mu\| \ge \alpha$ requires $N \ge \Omega\left(\frac{d^{1/3}}{\alpha^{4/3} \cdot \eps^{2/3}}\right)$ and $N \ge \Omega\left(\frac{1}{\alpha \cdot \eps}\right)$. We remark that in the related scenario of privately testing a binary discrete distribution, the $N \ge \Omega\left(\frac{1}{\alpha \cdot \eps}\right)$ has already been established \cite{AcharyaSZ18}. The proof in the Multivariate Gaussian case will not be significantly different (in fact, this part of the lower bound holds even for a univariate Gaussian), but we include the proof for completeness.

First, we will generalize our previous notion of privately distinguishing between two hypotheses of i.i.d. samples to privately distinguishing between hypotheses of possibly non-i.i.d. samples. Now, we will consider the problem of privately distinguishing between two distributions $\mathcal{U}$ and $\mathcal{V}$ over $\mathcal{X}^N.$ Here, $\mathcal{U}$ and $\mathcal{V}$ need not be i.i.d. samples. For an algorithm $\mathcal{A}: \mathcal{X}^N \to \{0, 1\}$ to $(\eps, \delta)$-privately distinguish between two distributions $\mathcal{U}$ and $\mathcal{V}$ over $\mathcal{X}^N$ (in our case $\mathcal{X} = \BR^d$), it must hold that:
\begin{itemize}
    \item $\mathcal{A}$ is $(\eps, \delta)$-differentially private.
    \item If $\textbf{X} = (X^{(1)}, \dots, X^{(N)}) \sim \mathcal{U}$, then $\BP(\mathcal{A}(X) = 0) \ge \frac{2}{3}$.
    \item If $\textbf{X} = (X^{(1)}, \dots, X^{(N)}) \sim \mathcal{V}$, then $\BP(\mathcal{A}(X) = 1) \ge \frac{2}{3}$.
\end{itemize}

To prove our lower bound on the number of samples $N$, it suffices to prove a lower bound for distinguishing between $\mathcal{U}$ and $\mathcal{V}$ where $\mathcal{U}$ is the distribution of $N$ i.i.d. samples from $\mathcal{N}(0, I)$, and $\mathcal{V}$ is generated by first taking some distribution $\mathcal{D}$ supported only on points $\mu \in \BR^d$ with $\|\mu\|_2 \ge \alpha$, and then sampling $X^{(1)}, \dots, X^{(N)} \sim \mathcal{N}(\mu, I)$, where $\mu \sim \mathcal{D}$. To see why, suppose there existed an algorithm $\mathcal{A}$ that could privately distinguish between $\mathcal{H}_0$ of all samples being i.i.d. $\mathcal{N}(0, I)$ and $\mathcal{H}_1$ of all samples being i.i.d. $\mathcal{N}(\mu, I)$ for any $\|\mu\| \ge \alpha$. Then, $\mathcal{A}$ would be $(\eps, \delta)$-DP, and if $\textbf{X} \sim \mathcal{U}$, then $\BP(\mathcal{A}(X) = 0) \ge \frac{2}{3}$. Finally, if $\textbf{X} = (X^{(1)}, \dots, X^{(N)}) \sim \mathcal{V}$, then $\BP(\mathcal{A}(X) = 1) \ge \frac{2}{3}$. This is true because we can condition on the choice of $\mu \sim \mathcal{D}$, and we always have that $\BP(\mathcal{A}(X) = 1) \ge \frac{2}{3}$ regardless of what choice of $\mu$ we picked.\footnote{This idea of choosing a distribution for $\mu$ is quite standard, and is similar to Yao's Minimax Principle \cite{YaoMinimax}.}

Next, we note the following theorem which will be crucial in establishing our lower bounds.

\begin{theorem} \cite[Theorem 11, rephrased]{AcharyaSZ18} \label{thm:lb_emd}
    Let $\mathcal{X}$ represent the domain that samples are coming from, and assume $\mathcal{U}$ and $\mathcal{V}$ be distributions over $\mathcal{X}^N$ (which may not necessarily be i.i.d. samples), such that there is a coupling between $\mathcal{U},\mathcal{V}$ such that $\BE_{\textbf{X} \sim \mathcal{U}, \textbf{X}' \sim \mathcal{V}} [\rho(\textbf{X}, \textbf{X}')] \le D.$ Then, if there existed an algorithm that could $(0, \eps)$-privately distinguish between $\mathcal{H}_0: \textbf{X} \sim \mathcal{U}$ and $\mathcal{H}_1: \textbf{X} \sim \mathcal{V},$ then $D \ge c_1/\eps$ for some constant $c_1 > 0$.
\end{theorem}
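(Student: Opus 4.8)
The plan is to prove Theorem~\ref{thm:lb_emd} by a standard group-privacy argument combined with a Markov truncation of the coupling distance; this is essentially the content of \cite[Theorem 11]{AcharyaSZ18}, which I would reprove here for completeness since the formulation in terms of $\BE[\rho]$ is a slight repackaging. First I would record the \emph{group privacy} property of pure differential privacy: if $\mathcal{A}$ is $(0, \eps)$-DP, then for any datasets $\textbf{x}, \textbf{x}' \in \mathcal{X}^N$ with $\rho(\textbf{x}, \textbf{x}') = k$ and any set $O$ of outputs, $\BP[\mathcal{A}(\textbf{x}) \in O] \le e^{k\eps} \cdot \BP[\mathcal{A}(\textbf{x}') \in O]$. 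This follows by chaining the inequality \eqref{eq:DP_defn} along a path of $k$ single-coordinate changes connecting $\textbf{x}$ to $\textbf{x}'$, where the vanishing additive parameter $\delta = 0$ makes the telescoping clean.

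Next I would set $p(\textbf{x}) := \BP[\mathcal{A}(\textbf{x}) = 0]$ (probability over the internal coins of $\mathcal{A}$ only). The accuracy of $\mathcal{A}$ as a distinguisher gives $\BE_{\textbf{X} \sim \mathcal{U}}[p(\textbf{X})] \ge \tfrac{2}{3}$ and $\BE_{\textbf{X}' \sim \mathcal{V}}[p(\textbf{X}')] \le \tfrac{1}{3}$. Now let $(\textbf{X}, \textbf{X}')$ be drawn from the assumed coupling, so that marginally $\textbf{X} \sim \mathcal{U}$, $\textbf{X}' \sim \mathcal{V}$, and $\BE[\rho(\textbf{X}, \textbf{X}')] \le D$. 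Let $E$ be the event $\{\rho(\textbf{X}, \textbf{X}') \le 10D\}$, so $\BP[\neg E] \le \tfrac{1}{10}$ by Markov's inequality. On $E$, group privacy gives $p(\textbf{X}) \le e^{10 D \eps}\, p(\textbf{X}')$, and since $p \le 1$,
\[
\tfrac{2}{3} - \tfrac{1}{10} \;\le\; \BE[p(\textbf{X})] - \BP[\neg E] \;\le\; \BE[p(\textbf{X}) \mathbf{1}_E] \;\le\; e^{10 D \eps}\, \BE[p(\textbf{X}') \mathbf{1}_E] \;\le\; e^{10 D \eps} \cdot \tfrac{1}{3},
\]
so $e^{10 D \eps} \ge \tfrac{17}{10}$ and hence $D \ge c_1/\eps$ with $c_1 = \tfrac{1}{10}\ln\tfrac{17}{10} > 0$.

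The main obstacle is that the naive approach --- bounding $\BE[p(\textbf{X})] \le \BE[e^{\rho\eps} p(\textbf{X}')] \le \BE[e^{\rho\eps}]$ --- fails, because $\BE[\rho] \le D$ gives no control over $\BE[e^{\rho\eps}]$ (the distribution of $\rho$ may have a heavy tail). Restricting to the event $E$, on which $\rho$ is bounded by $10D$, is exactly what keeps the exponential factor manageable; the only care required is to absorb the $\BP[\neg E]$ probability mass on the $\mathcal{U}$-side using $p \le 1$ and to keep the indicator $\mathbf{1}_E$ consistent on both sides of the group-privacy step. (Once Theorem~\ref{thm:lb_emd} is in hand, the remaining lower bounds $N = \Omega(d^{1/3}/(\alpha^{4/3}\eps^{2/3}))$ and $N = \Omega(1/(\alpha\eps))$ claimed in Theorem~\ref{thm:lower} would follow by exhibiting suitable coupled distributions $\mathcal{U}, \mathcal{V}$ built from $\mathcal{N}(0,I)$ and from $\mathcal{N}(\mu,I)$ with $\mu \sim \mathcal{N}(0, \tfrac{\alpha^2}{d} I)$, reduced to the univariate setting as in the technical overview, and then constructing an explicit coupling with small expected Hamming distance.)
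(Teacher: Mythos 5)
The paper does not prove this statement itself --- it imports it (rephrased) from \cite{AcharyaSZ18} --- so your proposal stands on its own, and it has a genuine gap: you have proved the theorem for the wrong privacy notion. In this paper's convention (see the definition of $(\eps,\delta)$-DP in \eqref{eq:DP_defn}), the first parameter is the multiplicative one and the second is the additive one, so ``$(0,\eps)$-DP'' means $\BP[\mathcal{A}(\textbf{x}')\in O] \le \BP[\mathcal{A}(\textbf{x})\in O] + \eps$ for neighboring datasets: the multiplicative factor is $e^{0}=1$ and the additive slack is $\eps$. Your group-privacy lemma, $\BP[\mathcal{A}(\textbf{x})\in O] \le e^{k\eps}\,\BP[\mathcal{A}(\textbf{x}')\in O]$, is the one valid for $(\eps,0)$-DP (pure multiplicative DP) and is simply false under $(0,\eps)$-DP: an algorithm that outputs $0$ with probability $\min(1,\eps\cdot\rho(\textbf{x},\textbf{x}_0))$ for a fixed reference dataset $\textbf{x}_0$ is $(0,\eps)$-DP, yet $p(\textbf{x}_0)=0$ while $p$ equals $\eps$ at a neighbor, so no multiplicative bound of any kind holds. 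The distinction is not cosmetic: the paper deliberately states its lower bounds against $(0,\eps)$-DP because that is the weaker privacy guarantee, which makes the lower bound stronger (cf.\ the remark before Theorem \ref{thm:lower}).

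The fix is easy and in fact shortens the proof. Under $(0,\eps)$-DP, chaining the additive inequality along a path of single-coordinate changes gives $|p(\textbf{x})-p(\textbf{x}')|\le \rho(\textbf{x},\textbf{x}')\cdot\eps$ for \emph{all} pairs of datasets, where $p(\textbf{x}):=\BP[\mathcal{A}(\textbf{x})=0]$. Taking expectations over the coupling, $\tfrac{1}{3}\le \BE[p(\textbf{X})]-\BE[p(\textbf{X}')]\le \eps\,\BE[\rho(\textbf{X},\textbf{X}')]\le \eps D$, so $D\ge \tfrac{1}{3\eps}$. Because the privacy degradation is linear in the Hamming distance rather than exponential, the bound on $\BE[\rho]$ controls the loss directly, and your Markov truncation to the event $\{\rho\le 10D\}$ is unnecessary. (That truncation device is exactly the right tool for the $(\eps,0)$-DP analogue, where one would otherwise need control of $\BE[e^{\eps\rho}]$ rather than of $\BE[\rho]$; your arithmetic on that branch is fine. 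It just does not apply to the hypothesis actually stated.)
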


As a corollary, we have the following result.

\begin{corollary} \label{cor:lb_emd}
    Again, let $\mathcal{X}$ represent the domain, and let $\mathcal{U}$, $\mathcal{V}$ be distributions over $\mathcal{X}^N$. Suppose that there exist distributions $\mathcal{U}', \mathcal{V}'$ such that $\TV(\mathcal{U}, \mathcal{U}'), \TV(\mathcal{V}, \mathcal{V}') \le \frac{1}{4}$. In addition, suppose there is a coupling between $\mathcal{U}',\mathcal{V}'$ such that $\BE_{\textbf{X} \sim \mathcal{U}', \textbf{X}' \sim \mathcal{V}'} [\rho(\textbf{X}, \textbf{X}')] \le D.$ Then, if there existed an algorithm that could $(0, \eps)$-privately distinguish between $\mathcal{H}_0: \textbf{X} \sim \mathcal{U}$ and $\mathcal{H}_1: \textbf{X} \sim \mathcal{V},$ then $D \ge c_2/\eps$ for some constant $c_2 > 0$.
\end{corollary}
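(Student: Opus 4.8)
The plan is to prove the statement directly: push the distinguishing guarantee from $(\mathcal{U},\mathcal{V})$ onto the nearby pair $(\mathcal{U}',\mathcal{V}')$, amplify it back up to a genuine $\tfrac23$-distinguisher on a product of $(\mathcal{U}',\mathcal{V}')$, and then apply Theorem \ref{thm:lb_emd} using the product of the given coupling. The point of the detour is that Theorem \ref{thm:lb_emd} wants a $\tfrac23$-distinguisher \emph{for $\mathcal{U}',\mathcal{V}'$ themselves}, whereas the hypothesis only hands us a distinguisher for $\mathcal{U},\mathcal{V}$, which sees $\mathcal{U}',\mathcal{V}'$ only through the two total-variation windows; so accuracy must be tracked quantitatively through that transfer.

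\medskip
\noindent\textbf{Step 1 (transfer accuracy through the $\TV$ windows).} Suppose $\mathcal{A}\colon\mathcal{X}^N\to\{0,1\}$ is $(0,\eps)$-DP and $(0,\eps)$-privately distinguishes $\mathcal{H}_0\colon\mathbf{X}\sim\mathcal{U}$ from $\mathcal{H}_1\colon\mathbf{X}\sim\mathcal{V}$, so $\BP_{\mathbf{X}\sim\mathcal{U}}[\mathcal{A}(\mathbf{X})=0]\ge\tfrac23$ and $\BP_{\mathbf{X}\sim\mathcal{V}}[\mathcal{A}(\mathbf{X})=1]\ge\tfrac23$. Since $\{\mathcal{A}(\cdot)=0\}$ is an event (averaging over the internal coins of $\mathcal{A}$, which do not interact with the total-variation comparison), the definition of $\TV$ gives $\BP_{\mathbf{X}\sim\mathcal{U}'}[\mathcal{A}(\mathbf{X})=0]\ge\tfrac23-\TV(\mathcal{U},\mathcal{U}')$ and $\BP_{\mathbf{X}\sim\mathcal{V}'}[\mathcal{A}(\mathbf{X})=0]\le\tfrac13+\TV(\mathcal{V},\mathcal{V}')$. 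Because the two total-variation distances are small (at most $\tfrac14$ by hypothesis), these two numbers are separated by an absolute constant: there are absolute constants $\tau\in(0,1)$ and $\beta>0$ with $\BP_{\mathbf{X}\sim\mathcal{U}'}[\mathcal{A}(\mathbf{X})=0]\ge\tau+\beta$ and $\BP_{\mathbf{X}\sim\mathcal{V}'}[\mathcal{A}(\mathbf{X})=0]\le\tau-\beta$; in particular $\tau$ and $\beta$ are explicitly computable from the $\TV$ bound.

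\medskip
\noindent\textbf{Step 2 (amplify to a $\tfrac23$-distinguisher on a product).} Fix $k=O(1/\beta^2)$, a large enough constant. Let $\mathcal{B}$ be the algorithm on $\mathcal{X}^{kN}$ that splits its input into $k$ disjoint blocks of $N$ coordinates, runs an independent copy of $\mathcal{A}$ on each block, and outputs $0$ iff strictly more than a $\tau$-fraction of the $k$ runs output $0$. Changing one input coordinate affects the run of $\mathcal{A}$ on a single block only, so $\mathcal{B}$ is $(0,\eps)$-DP by parallel composition followed by post-processing. A Chernoff bound together with the gap $\beta$ from Step~1 shows that if $\mathbf{X}\sim(\mathcal{U}')^{\otimes k}$ then $\BP[\mathcal{B}(\mathbf{X})=0]\ge\tfrac23$, and if $\mathbf{X}\sim(\mathcal{V}')^{\otimes k}$ then $\BP[\mathcal{B}(\mathbf{X})=1]\ge\tfrac23$; that is, $\mathcal{B}$ $(0,\eps)$-privately distinguishes $(\mathcal{U}')^{\otimes k}$ from $(\mathcal{V}')^{\otimes k}$.

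\medskip
\noindent\textbf{Step 3 (product coupling and conclusion).} From the coupling of $(\mathcal{U}',\mathcal{V}')$ with $\BE[\rho(\mathbf{X},\mathbf{X}')]\le D$, the $k$-fold product coupling of $\bigl((\mathcal{U}')^{\otimes k},(\mathcal{V}')^{\otimes k}\bigr)$ has $\BE[\rho]\le kD$, since the Hamming distances of the $k$ blocks add and each has expectation at most $D$. Applying Theorem \ref{thm:lb_emd} to $\mathcal{B}$ forces $kD\ge c_1/\eps$, hence $D\ge c_1/(k\eps)$, and we take $c_2=c_1/k$. The only delicate point is Step~1: the raw success probability $\tfrac23$ leaves only a gap of $\tfrac13$, so one must check that the two total-variation slacks are small enough for $\beta>0$ to survive the transfer (this is exactly the role of the $\TV\le\tfrac14$ hypothesis); everything after that—amplification by majority vote, parallel composition for $(0,\eps)$-DP, and the product coupling—is routine.
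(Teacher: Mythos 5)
There is a genuine gap in Step~1, and it is fatal to the argument as written. Starting from $\BP_{\mathbf{X}\sim\mathcal{U}}[\mathcal{A}(\mathbf{X})=0]\ge\tfrac23$ and $\BP_{\mathbf{X}\sim\mathcal{V}}[\mathcal{A}(\mathbf{X})=0]\le\tfrac13$, the total-variation transfer gives only $\BP_{\mathbf{X}\sim\mathcal{U}'}[\mathcal{A}(\mathbf{X})=0]\ge\tfrac23-\tfrac14=\tfrac{5}{12}$ and $\BP_{\mathbf{X}\sim\mathcal{V}'}[\mathcal{A}(\mathbf{X})=0]\le\tfrac13+\tfrac14=\tfrac{7}{12}$. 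Since $\tfrac{5}{12}<\tfrac{7}{12}$, there is no separation at all: the original gap of $\tfrac13$ is consumed by the two slacks of $\tfrac14$ each, totalling $\tfrac12$. Your claimed constants $\tau$ and $\beta>0$ do not exist, and the "delicate point" you flag at the end is precisely where the check fails rather than succeeds. Everything downstream (the block-majority amplification in Step~2, which needs a positive gap $\beta$ to feed into Chernoff, and the product coupling in Step~3) is therefore built on nothing.

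The fix is to amplify \emph{before} transferring through the total-variation windows, not after. This is what the paper does: run $\mathcal{A}$ a constant number of times on the \emph{same} dataset and take the majority, which boosts the success probability on $\mathcal{U}$ and $\mathcal{V}$ (where the genuine $\tfrac23$ vs.\ $\tfrac13$ gap is available) up to $\tfrac{11}{12}$, at the price of degrading privacy to $(0,C\eps)$ by composition. Only then does one pay the two $\tfrac14$ slacks, landing at $\tfrac{11}{12}-\tfrac14=\tfrac23$ on $\mathcal{U}'$ and $\mathcal{V}'$, after which Theorem~\ref{thm:lb_emd} applies directly to the given coupling and yields $D\ge c_1/(C\eps)$. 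Note that your block-wise amplification cannot be moved earlier to avoid this, because amplifying over independent blocks drawn from $\mathcal{U}$ versus $\mathcal{U}'$ would multiply the total-variation distance by the number of blocks; repetition on the same data, trading a constant factor in $\eps$ for accuracy, is the right tool here. Your Steps~2 and~3 are otherwise sound (parallel composition does preserve $(0,\eps)$-DP, and the product coupling argument is correct), but they never get a valid input.
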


\begin{proof}
    Suppose there exists an algorithm $\mathcal{A}$ that could $(0, \eps)$-privately distinguish between $\mathcal{H}_0: \textbf{X} \sim \mathcal{U}$ and $\mathcal{H}_1: \textbf{X} \sim \mathcal{V}$. This means that if $\textbf{X} \sim \mathcal{U},$ $\mathcal{A} = 0$ with probability at least $2/3$, and if $\textbf{X} \sim \mathcal{V},$ $\mathcal{A} = 1$ with probability at least $2/3.$ By repeating the algorithm $O(1)$ times and taking the majority, we can improve the value $2/3$ to $11/12$, though the algorithm is now $(0, C \eps)$-DP for some constant $C$. But then, since $\TV(\mathcal{U}, \mathcal{U}') \le 1/4,$ this means that $\left|\BP_{\textbf{X} \sim \mathcal{U}} (\mathcal{A}(\textbf{X}) = 0) - \BP_{\textbf{X} \sim \mathcal{U}'} (\mathcal{A}(\textbf{X}) = 0)\right| \le 1/4$ for any algorithm $\mathcal{A},$ which means that $\BP_{\textbf{X} \sim \mathcal{U}'} (\mathcal{A}(\textbf{X}) = 0) \ge \frac{11}{12}-\frac{1}{4} \ge \frac{2}{3}$. By a symmetric argument, $\BP_{\textbf{X} \sim \mathcal{V}'} (\mathcal{A}(\textbf{X}) = 1) \ge \frac{2}{3}$. Therefore, we can apply Theorem \ref{thm:lb_emd} to conclude that $D \ge c_1/(C \cdot \eps) = c_2/\eps$ for $c_2 = c_1/C$.
\end{proof}

Before we prove the main part of the lower bound (i.e., that $N \ge \Omega\left(\frac{d^{1/3}}{\alpha^{4/3} \cdot \eps^{2/3}}\right)$), we first show how Theorem \ref{thm:lb_emd} implies an $\Omega\left(\frac{1}{\alpha \cdot \eps}\right)$-lower bound. We first have the following proposition.

\begin{proposition} \label{prop:emd_normal}
    Let $\mathcal{U}$ represent the distribution of selecting $\textbf{X} = \{X^{(1)}, \dots, X^{(N)}\} \overset{i.i.d.}{\sim} \mathcal{N}(0, I)$, and let $\mathcal{V}$ represent the distribution of selecting $\textbf{X}' = \{X'^{(1)}, \dots, X'^{(N)}\} \overset{i.i.d.}{\sim} \mathcal{N}(\mu, I),$ where $\mu = (\alpha, 0, 0, \dots, 0) \in \BR^d$, and $0 < \alpha \le 1.$ Then, there exists a coupling between $\mathcal{U}, \mathcal{V}$ such that $\BE_{\textbf{X} \sim \mathcal{U}, \textbf{X}' \sim \mathcal{V}}[\rho(\textbf{X}, \textbf{X}')] \le O(\alpha) \cdot N$.
\end{proposition}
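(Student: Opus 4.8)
The plan is to build the coupling sample-by-sample and coordinate-by-coordinate, exploiting the fact that $\mathcal{N}(0,I)$ and $\mathcal{N}(\mu,I)$ with $\mu = (\alpha,0,\dots,0)$ differ only in their first coordinate. First I would handle the last $d-1$ coordinates with shared randomness: for each $i \in [N]$, draw $W^{(i)} \sim \mathcal{N}(0, I_{d-1})$ and declare the last $d-1$ coordinates of both $X^{(i)}$ and $X'^{(i)}$ to equal $W^{(i)}$. Since the coordinates of a spherical Gaussian are independent, this is consistent with both target marginals, and now $X^{(i)} \neq X'^{(i)}$ can occur only if the first coordinates disagree.

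For the first coordinate I would apply, independently over $i \in [N]$, the standard maximal coupling of the univariate laws $\mathcal{N}(0,1)$ and $\mathcal{N}(\alpha,1)$: write $\eta := \TV(\mathcal{N}(0,1), \mathcal{N}(\alpha,1))$; with probability $1-\eta$ sample a common value from the (renormalized) pointwise minimum of the two densities and assign it to both first coordinates, and with probability $\eta$ sample the two first coordinates independently from the respective residual measures. By the standard properties of the maximal coupling, the marginal of $\textbf{X}$ is $\mathcal{U}$, the marginal of $\textbf{X}'$ is $\mathcal{V}$, and $\BP[X^{(i)} \neq X'^{(i)}] \le \BP[\text{first coordinates of }X^{(i)}, X'^{(i)}\text{ disagree}] = \eta$ for every $i$.

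It then remains to show $\eta = O(\alpha)$. This is a one-line estimate: the two densities cross at $x = \alpha/2$, so $\eta = 2\Phi(\alpha/2)-1 = \BP(|Z| \le \alpha/2)$ for $Z \sim \mathcal{N}(0,1)$, which is at most $\alpha \cdot \tfrac{1}{\sqrt{2\pi}} = O(\alpha)$ since the standard normal density is bounded by $1/\sqrt{2\pi}$. Combining with linearity of expectation gives
\[
\BE_{\textbf{X} \sim \mathcal{U},\, \textbf{X}' \sim \mathcal{V}}[\rho(\textbf{X}, \textbf{X}')] \;=\; \sum_{i=1}^{N} \BP[X^{(i)} \neq X'^{(i)}] \;\le\; \eta N \;=\; O(\alpha)\cdot N,
\]
as claimed.

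I do not expect a genuine obstacle here: the argument is essentially the observation that $\ell_2$-bounded mean shifts of a Gaussian produce only an $O(\alpha)$ total variation distance, so a per-sample maximal coupling already suffices. The only points requiring care are verifying that gluing the per-sample first-coordinate maximal couplings to the shared last-$(d-1)$-coordinate randomness yields the correct product marginals on $\mathcal{X}^N$ (it does, by coordinate independence and sample independence), and ensuring the total variation bound is genuinely \emph{linear} in $\alpha$ rather than $O(\sqrt{\alpha})$ — which holds precisely because $\alpha \le 1$ keeps the mean shift bounded.
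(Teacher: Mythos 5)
Your proposal is correct and is essentially the paper's own argument: the paper also couples only the first coordinate (sharing the remaining $d-1$ coordinates exactly) and implements the maximal coupling of $\mathcal{N}(0,1)$ and $\mathcal{N}(\alpha,1)$ geometrically, via the overlap of the regions under the two densities, whose area is $1-\TV(\mathcal{N}(0,1),\mathcal{N}(\alpha,1)) = 1-\Theta(\alpha)$. Your explicit computation $\TV(\mathcal{N}(0,1),\mathcal{N}(\alpha,1)) = 2\Phi(\alpha/2)-1 \le \alpha/\sqrt{2\pi}$ correctly supplies the $O(\alpha)$ bound that the paper cites, and the linearity-of-expectation step matches.
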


\begin{proof}
    We start by considering univariate Normal distributions. Consider drawing the PDF of $\mathcal{N}(0, 1)$ and of $\mathcal{N}(\alpha, 1)$. Let $S$ be the region of points $(x, y) \in \BR^2$, where $y$ is nonnegative but below the PDF of $\mathcal{N}(0, 1)$ at $x$. Likewise, let $S'$ be the region of points $(x, y) \in \BR^2$, where $y$ is nonnegative but below the PDF of $\mathcal{N}(\alpha, 1)$ at $x$. Both $S, S'$ have area $1$ by definition of PDF, and the intersection of $S, S'$ has area $1-\TV(\mathcal{N}(0, 1), \mathcal{N}(\alpha, 1)) = 1-\Theta(\alpha)$.
    
    Now, we consider the following method of generating samples $X^{(1)}, \dots, X^{(N)}$ and $X'^{(1)}, \dots, X'^{(N)}$ over $\BR^d$. For each $i$, pick a point $(x, y)$ uniformly at random in $S$, and let $X_1^{(i)} = x$. If $(x, y) \in S'$, then we define $(x', y') = (x, y)$. Otherwise, we choose $(x', y')$ uniformly at random from $S' \backslash S$. Finally, we let $X_1'^{(i)} = x'$. Finally, we draw $X_2^{(i)} = X_2'^{(i)}, \dots, X_n^{(i)} = X_n'^{(i)}$ each as an i.i.d. $\mathcal{N}(0, 1)$.
    
    First, note that each $X^{(i)}$ is independent and has the correct distribution. Also, note that each $X'^{(i)}$ is independent. To check the distribution of $X'^{(i)}$, note that we select a point $(x, y)$ in $S \cap S'$ with probability $1-\TV(\mathcal{N}(0, 1), \mathcal{N}(\alpha, 1)) = 1-\Theta(\alpha),$ which equals the area of $S \cap S'$. Hence, with probability equal to this area, we select a uniform point $(x', y')$ in $S \cap S'$, and otherwise, we select a uniform point $(x', y')$ in $S' \backslash S$, so the overall distribution of $(x, y)$ is just uniform from $S'$. Hence, $X_1'^{(i)}$ has the distribution $\mathcal{N}(\alpha, 1)$, which means that $X'^{(i)}$ has the correct distribution.
    
    The final thing we verify is that $\BE[\rho(\textbf{X}, \textbf{X}')] = O(\alpha) \cdot N$. Indeed, note that $X^{(i)} = X'^{(i)}$ whenever we selected $(x, y) \in S \cap S'$, so $\BP(X^{(i)} \neq X'^{(i)})$ is at most $1$ minus the area of $S \cap S'$, which is $O(\alpha)$. Hence, by linearity of expectation, we have that $\BE[\rho(\textbf{X}, \textbf{X}')] = O(\alpha) \cdot N$.
\end{proof}

Note that we immediately get an $\Omega\left(\frac{1}{\alpha \cdot \eps}\right)$ sample complexity lower bound, even in $1$ dimension. Indeed, to determine whether we have $N$ samples from $\mathcal{N}(0, I)$ or from $\mathcal{N}(\mu, I)$, by combining Proposition \ref{prop:emd_normal} and Theorem \ref{thm:lb_emd}, we need that $O(\alpha) \cdot N \ge c_1/\eps,$ which means that $N \ge \Omega\left(\frac{1}{\alpha \cdot \eps}\right)$.

\medskip 

We now set up our main lower bound. Let $\mathcal{D}$ be the distribution $\mathcal{N}(0, \frac{2 \alpha^2}{d} \cdot I)$ conditioned on the magnitude being at least $\alpha$.
Note that $\mathcal{D}$ has no support in the region $\{x: \|x\| \le \alpha\}$. Let $\mathcal{V}$ be the distribution over $\{X^{(1)}, \dots, X^{(N)}\}$ where we first draw $\mu \sim \mathcal{D}$ and then draw each $X^{(i)}$ from $\mathcal{N}(\mu, I)$. Also, let $\mathcal{U}$ be the distribution over $\{X^{(1)}, \dots, X^{(N)}\}$ where we draw each $X^{(i)}$ from $\mathcal{N}(0, I)$. Our goal will be to show the following: unless $N \ge \Omega\left(\frac{d^{1/3}}{\eps^{2/3} \cdot \alpha^{4/3}}\right),$ there exist distributions $\mathcal{U}', \mathcal{V}'$ over $(\BR^d)^N$ such that $\TV(\mathcal{U}, \mathcal{U}'), \TV(\mathcal{V}, \mathcal{V}') \le \frac{1}{4}$, and a coupling of $\mathcal{U}', \mathcal{V}'$ such that $\BE_{\textbf{X} \sim \mathcal{U}', \textbf{X}' \sim \mathcal{V}'}[\rho(\textbf{X}, \textbf{X}')] < c_2/\eps$.

We will also need the following two basic propositions.

\begin{proposition} \label{prop:lb_1}
    Let $X^{(1)}, \dots, X^{(N)} \sim \mathcal{N}(0, I)$ and $\bar{X} = \frac{X^{(1)}+\cdots+X^{(N)}}{N}$. Then, $\|\bar{X}\| \sim \frac{1}{\sqrt{N}} \cdot \chi_d$, where $\chi_d$ is the square root of a chi-square variable with $d$ degrees of freedom.
\end{proposition}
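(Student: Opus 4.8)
The plan is to reduce the statement to the definition of the $\chi_d$ distribution via the standard fact that a sum of independent Gaussians is Gaussian. First I would observe that since each $X^{(i)}$ is an independent draw from $\mathcal{N}(0, I)$, the sum $X^{(1)} + \cdots + X^{(N)}$ is distributed as $\mathcal{N}(0, N \cdot I)$, because independent Gaussians add and their covariances add. Dividing by $N$ then gives $\bar{X} \sim \mathcal{N}(0, \frac{1}{N} \cdot I)$, i.e. a spherical Gaussian with per-coordinate variance $1/N$.

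Next I would rescale: set $Y := \sqrt{N} \cdot \bar{X}$, so that $Y \sim \mathcal{N}(0, I)$ is a standard $d$-dimensional Gaussian, with coordinates $Y_1, \dots, Y_d$ i.i.d. $\mathcal{N}(0,1)$. By definition, $\|Y\|^2 = Y_1^2 + \cdots + Y_d^2$ is a chi-square random variable with $d$ degrees of freedom, so $\|Y\|$ is distributed as $\chi_d$ (the square root of a $\chi^2_d$ variable, exactly as defined in the proposition statement). Since $\|Y\| = \sqrt{N} \cdot \|\bar{X}\|$, dividing through by $\sqrt{N}$ yields $\|\bar{X}\| \sim \frac{1}{\sqrt{N}} \cdot \chi_d$, which is the claim.

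There is essentially no obstacle here; the only thing to be careful about is bookkeeping the variance scaling (per-coordinate variance $1/N$ versus $1/\sqrt{N}$ for the standard deviation) so that the factor $1/\sqrt{N}$ comes out correctly in the final norm. This proposition is purely a setup lemma — it simply names the distribution of $\|\bar{X}\|$ under the null hypothesis $\mathcal{U}$ so that Proposition \ref{prop:chi_normal_tv} can later be invoked to approximate it by a Gaussian, which is the substantive step in the lower bound argument.
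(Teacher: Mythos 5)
Your proof is correct and follows essentially the same route as the paper: both identify $\bar{X} \sim \mathcal{N}(0, \frac{1}{N} \cdot I) = \frac{1}{\sqrt{N}} \cdot \mathcal{N}(0, I)$ and then read off $\|\bar{X}\|^2$ as $\frac{1}{N}$ times a sum of $d$ squared standard normals. The variance bookkeeping you flag is handled correctly.
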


\begin{proof}
    This is immediate from the fact that $\bar{X} \sim \mathcal{N}(0, \frac{1}{N} \cdot I) = \frac{1}{\sqrt{N}} \cdot \mathcal{N}(0, I)$. Thus, $\|\bar{X}\|^2 = \frac{1}{N} \cdot \sum_{i = 1}^d Z_i^2,$ where each $Z_i \overset{i.i.d.}{\sim} \mathcal{N}(0, 1)$, which completes the proof.
\end{proof}

\begin{proposition} \label{prop:lb_2}
    Let $X^{(1)}, \dots, X^{(N)} \sim \mathcal{N}(\mu, I)$, where $\mu \sim \mathcal{N}(0, \frac{2 \alpha^2}{d} I)$, and let $\bar{X} = \frac{X^{(1)}+\cdots+X^{(N)}}{N}$. Then, $\|\bar{X}\| \sim \sqrt{\frac{1}{N} + \frac{2 \alpha^2}{d}} \cdot \chi_d$.
\end{proposition}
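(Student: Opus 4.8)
The plan is to marginalize over $\mu$ and thereby reduce to the unconditioned computation already carried out in Proposition \ref{prop:lb_1}. First I would condition on the value of $\mu$: given $\mu$, the samples $X^{(1)}, \dots, X^{(N)}$ are i.i.d. $\mathcal{N}(\mu, I)$, so their empirical mean satisfies $\bar{X} \mid \mu \sim \mathcal{N}\big(\mu, \tfrac{1}{N} I\big)$. Equivalently, I can write $\bar{X} = \mu + \tfrac{1}{\sqrt{N}} W$, where $W \sim \mathcal{N}(0, I)$ is independent of $\mu$; this is just the statement that the average of $N$ i.i.d. unit-variance Gaussians about mean $\mu$ is distributed as $\mu$ plus $N^{-1/2}$ times a standard Gaussian vector.

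Next I would invoke the fact that the sum of two independent mean-zero Gaussian vectors is again Gaussian with covariance equal to the sum of the two covariances. Since $\mu \sim \mathcal{N}\big(0, \tfrac{2\alpha^2}{d} I\big)$ and $\tfrac{1}{\sqrt{N}} W \sim \mathcal{N}\big(0, \tfrac{1}{N} I\big)$ are independent, their sum $\bar{X}$ is distributed as $\mathcal{N}\big(0, (\tfrac{1}{N} + \tfrac{2\alpha^2}{d}) I\big)$. Pulling the scalar out of the covariance, this says $\bar{X} \sim \sqrt{\tfrac{1}{N} + \tfrac{2\alpha^2}{d}} \cdot Z$ for $Z \sim \mathcal{N}(0, I)$.

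Finally, taking norms gives $\|\bar{X}\| \sim \sqrt{\tfrac{1}{N} + \tfrac{2\alpha^2}{d}} \cdot \|Z\|$, and $\|Z\|^2 = \sum_{i=1}^{d} Z_i^2$ is a chi-square random variable with $d$ degrees of freedom, so $\|Z\| \sim \chi_d$ by definition. Hence $\|\bar{X}\| \sim \sqrt{\tfrac{1}{N} + \tfrac{2\alpha^2}{d}} \cdot \chi_d$, as claimed. There is no genuine obstacle here: the only step needing even a trivial justification is the additivity of Gaussian covariances under independent sums, and the whole argument mirrors the proof of Proposition \ref{prop:lb_1}, the sole difference being the extra $\tfrac{2\alpha^2}{d} I$ term contributed by the randomness of $\mu$.
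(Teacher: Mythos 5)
Your proof is correct and follows exactly the same route as the paper's: write $\bar{X} = \mu + \frac{1}{\sqrt{N}}W$ with $W \sim \mathcal{N}(0, I)$ independent of $\mu$, add the covariances to get $\bar{X} \sim \mathcal{N}\bigl(0, (\frac{1}{N} + \frac{2\alpha^2}{d}) I\bigr)$, and read off the scaled $\chi_d$ law of the norm. The paper merely compresses this into one line.
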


\begin{proof}
    This is immediate from the fact that $\bar{X} = \mu + \mathcal{N}(0, \frac{1}{N} \cdot I) = \mathcal{N}(0, \frac{2 \alpha^2}{d} + \frac{1}{N})$.
\end{proof}


We now prove the following result.

\begin{theorem} \label{thm:emd}
    Let $\mathcal{D}, \mathcal{U}, \mathcal{V}$ be as defined previously. Suppose that $d \ge C$ and $\frac{d}{\alpha^2} \ge C \cdot N$ for some sufficiently large constant $C$. Then, unless $N \ge \Omega\left(\frac{d^{1/3}}{\eps^{2/3} \cdot \alpha^{4/3}}\right)$, there exist distributions $\mathcal{U}', \mathcal{V}'$ over $(\BR^d)^N$ such that $\TV(\mathcal{U}, \mathcal{U}') \le 0.25$, $\TV(\mathcal{V}, \mathcal{V}') \le 0.25$, along with a coupling of $\mathcal{U}', \mathcal{V}'$ such that $\BE_{\textbf{X} \sim \mathcal{U}', \textbf{X}' \sim \mathcal{V}'} [\rho(\textbf{X}, \textbf{X}')] \le \frac{c_2}{\eps}$.
\end{theorem}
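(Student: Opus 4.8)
The plan is to reduce the $d$-dimensional problem to a one-dimensional problem about i.i.d.\ univariate Gaussians, and then invoke Proposition~\ref{prop:emd_normal} for the coupling. First I would peel off the conditioning: replacing $\mathcal{D}$ by the unconditioned $\mathcal{N}(0,\tfrac{2\alpha^2}{d}I)$ changes $\mathcal{V}$ by only $o_d(1)$ in total variation, since $\|\mu\|$ for such $\mu$ concentrates at $\sqrt2\,\alpha>\alpha$ when $d\ge C$. Next, by sufficiency of the sample mean (Corollary~\ref{cor:sufficient_statistic}) and rotational symmetry, write each sample as $X^{(i)}=\bar X+R^{(i)}$ with $\bar X$ the sample mean: under $\mathcal{U}$ the pair $(\bar X,(R^{(i)})_i)$ is independent with $\bar X\sim\mathcal{N}(0,I/N)$, and under the modified $\mathcal{V}$ the same holds with $\bar X\sim\mathcal{N}(0,(\tfrac{2\alpha^2}{d}+\tfrac1N)I)$ and the \emph{same} residual law. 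Writing $\bar X=AV$ with $A=\|\bar X\|$ and $V$ uniform on the sphere, and decomposing $R^{(i)}=y^{(i)}V+z^{(i)}$ with $z^{(i)}\perp V$, I would check that, conditional on $V$, the vector $(y^{(i)})_i$ has law $\mathcal{N}(0,\,I_N-\tfrac1N\mathbf{1}\mathbf{1}^\top)$ and is independent of both $A$ and $(z^{(i)})_i$, while the conditional law of $(z^{(i)})_i$ given $V$ is the same in both cases. Thus the only difference between the two hypotheses is the scalar $A$: $A\sim\tfrac1{\sqrt N}\chi_d$ under $\mathcal U$ and $A\sim\sqrt{\tfrac{2\alpha^2}{d}+\tfrac1N}\,\chi_d$ under $\mathcal V$ (Propositions~\ref{prop:lb_1},~\ref{prop:lb_2}). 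This lets me define a randomized map $\Phi$ sending a vector $(t^{(i)})_i\in\BR^N$ to $(t^{(i)}V+z^{(i)})_i$, with $V$ and $(z^{(i)})_i$ drawn from the common conditional law above; then $\Phi$ transports the law of $(A+y^{(i)})_i$ to $\mathcal U$ and the law of $(A+y^{(i)})_i$ with the shifted $A$ to the modified $\mathcal V$, and $\Phi$ contracts both total variation (it is a channel) and Hamming distance (if $t^{(i)}=t'^{(i)}$ the $i$-th output coordinates agree under the same internal randomness).

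Then I would pass to Gaussians in one dimension. Using Proposition~\ref{prop:chi_normal_tv}, $A$ is within $o_d(1)$ (in TV, and TV is scale-invariant) of $\mathcal{N}(\sqrt{d/N},\tfrac1{2N})$ in the $\mathcal U$-case and of $\mathcal{N}(\sqrt{d/N+2\alpha^2},\tfrac{\alpha^2}{d}+\tfrac1{2N})$ in the $\mathcal V$-case. The key algebraic fact is that if $A\sim\mathcal{N}(m,\tfrac1N)$ is independent of $(y^{(i)})_i\sim\mathcal{N}(0,I_N-\tfrac1N\mathbf{1}\mathbf{1}^\top)$, then $(A+y^{(i)})_i$ is \emph{exactly} i.i.d.\ $\mathcal{N}(m,1)$, because the covariance becomes $\tfrac1N\mathbf{1}\mathbf{1}^\top+(I_N-\tfrac1N\mathbf{1}\mathbf{1}^\top)=I_N$. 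The means of the two approximating laws already match their variance-$\tfrac1N$ targets, so replacing the variance $\tfrac1{2N}$ by $\tfrac1N$ costs only $\TV(\mathcal{N}(0,1),\mathcal{N}(0,\tfrac12))\le\tfrac16$ (Proposition~\ref{prop:scaled_normal_tv}), and replacing $\tfrac{\alpha^2}{d}+\tfrac1{2N}$ by $\tfrac1N$ costs $\TV(\mathcal{N}(0,1),\mathcal{N}(0,\tfrac12+\tfrac{N\alpha^2}{d}))$, which is also $\le\tfrac16$ since $\tfrac12\le\tfrac12+\tfrac{N\alpha^2}{d}\le1$ (by $d/\alpha^2\ge CN\ge2N$) and the TV between centered Gaussians is monotone in the ratio of variances. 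For $C$ and $d$ large enough these totals, together with the earlier $o_d(1)$ terms, stay below $\tfrac14$. Setting $m_0:=\sqrt{d/N}$, $m_1:=\sqrt{d/N+2\alpha^2}$, $\mathcal U_1':=\mathcal{N}(m_0,1)^{\otimes N}$, $\mathcal V_1':=\mathcal{N}(m_1,1)^{\otimes N}$, and $\mathcal U':=\Phi_*\mathcal U_1'$, $\mathcal V':=\Phi_*\mathcal V_1'$, contraction of TV under $\Phi$ gives $\TV(\mathcal U,\mathcal U')\le\tfrac14$ and $\TV(\mathcal V,\mathcal V')\le\tfrac14$.

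For the coupling and the final arithmetic: since $d/\alpha^2\ge CN$, $\delta:=m_1-m_0=\tfrac{2\alpha^2}{\sqrt{d/N+2\alpha^2}+\sqrt{d/N}}=\Theta(\alpha^2\sqrt{N/d})\le1$. Shifting all coordinates by $-m_0$ and applying Proposition~\ref{prop:emd_normal} in dimension $1$ with parameter $\delta$ produces a coupling of $\mathcal U_1'$ and $\mathcal V_1'$ with expected Hamming distance $O(\delta)\cdot N=O(\alpha^2 N^{3/2}/\sqrt d)$; pushing this coupling through $\Phi$ with shared internal randomness yields a coupling of $\mathcal U'$ and $\mathcal V'$ with expected Hamming distance at most the same quantity. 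Finally $O(\alpha^2 N^{3/2}/\sqrt d)\le c_2/\eps$ whenever $N\le c\cdot d^{1/3}/(\eps^{2/3}\alpha^{4/3})$ for a suitable constant $c$, which is precisely the contrapositive of the claimed bound $N\ge\Omega(d^{1/3}/(\eps^{2/3}\alpha^{4/3}))$.

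The main obstacle is the reduction in the first paragraph: one must verify carefully that, conditional on the sphere direction $V$, the orthogonal residual components $(z^{(i)})_i$ have the same conditional law under both hypotheses and are independent of $A$ and $(y^{(i)})_i$, so that $\Phi$ is well defined and simultaneously transports the one-dimensional laws to $\mathcal U$ and $\mathcal V$, contracts total variation, and contracts Hamming distance coordinatewise. A secondary delicate point is the total-variation budget: replacing $A$'s variance $\tfrac1{2N}$ by $\tfrac1N$ alone costs up to $\tfrac16$, leaving only a little room below $\tfrac14$, and it is exactly this that dictates the choice of variance $\tfrac{2\alpha^2}{d}$ in $\mathcal D$ together with the hypothesis $d/\alpha^2\ge CN$, so that the corresponding cost on the $\mathcal V$-side is also at most $\tfrac16$.
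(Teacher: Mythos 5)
Your proposal is correct and follows essentially the same route as the paper's proof: decompose each sample via sufficiency of the sample mean and rotational symmetry into $(a+y^{(i)})v+z^{(i)}$, reduce the two hypotheses to the scalar law of $\|\bar X\|$ ($\chi_d$ versus scaled noncentral-free $\chi_d$), approximate by Normals via Propositions \ref{prop:chi_normal_tv} and \ref{prop:scaled_normal_tv}, exploit the exact identity $\frac1N\mathbf{1}\mathbf{1}^\top+(I_N-\frac1N\mathbf{1}\mathbf{1}^\top)=I_N$ to land on i.i.d.\ univariate Gaussians with means $\sqrt{d/N}$ and $\sqrt{d/N+2\alpha^2}$, and couple them with Proposition \ref{prop:emd_normal} using the mean gap $\Theta(\alpha^2\sqrt{N/d})$. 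Your explicit channel $\Phi$ and the variance-ratio monotonicity argument are just cleaner packagings of steps the paper carries out via Corollary \ref{cor:sufficient_statistic} and the bound $\frac1N+\frac{2\alpha^2}{d}\le\frac{1+2/C}{N}$; all the quantitative budgets check out.
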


\begin{proof}
We first show how to sample $X^{(1)}, \dots, X^{(N)}$ in a different manner that ends up with the same distribution as $\mathcal{U}$.
By Corollary \ref{cor:sufficient_statistic},
the distribution of $X^{(1)}, \dots, X^{(N)}$ is the same as the distribution of $\bar{X}+Z^{(1)}-\bar{Z}, \bar{X}+Z^{(2)}-\bar{Z}, \dots, \bar{X}+Z^{(N)}-\bar{Z}$, where $Z^{(1)}, \dots, Z^{(N)} \overset{i.i.d.}{\sim} \mathcal{N}(0, I)$ and $\bar{Z} = \frac{Z^{(1)}+\cdots+Z^{(N)}}{N}.$
By Proposition \ref{prop:lb_1} and the rotational symmetry of Gaussians, we can sample $\bar{X} = a \cdot v,$ where $v$ is drawn uniformly from the unit sphere, and $a \sim \sqrt{\frac{1}{N}} \cdot \chi_d$.
Again using the rotational symmetry of Gaussians, we can write $Z^{(i)} = y^{(i)} \cdot v + z^{(i)}$, where we draw each $y^{(i)}$ i.i.d. from the univariate distribution $\mathcal{N}(0, 1)$ and each $z^{(i)}$ i.i.d. from the distribution $\mathcal{N}(0, I - vv^T),$ or equivalently, from the distribution $\mathcal{N}(0, I)$ but projected onto the subspace orthogonal to $v$. Hence, if we write $y'^{(i)} := y^{(i)} - \frac{y^{(1)}+\cdots+y^{(N)}}{N}$ and $z'^{(i)} := z^{(i)} - \frac{z^{(1)}+\cdots+z^{(N)}}{N}$ for all $i \in [N]$, we obtain the following equivalent sampling procedure to sampling from $\mathcal{U}$.
First, sample $v$ uniformly from the unit sphere and $a \sim \sqrt{\frac{1}{N}} \cdot \chi_d$, and compute $\bar{X} = a \cdot v$. Next, sample each $y'^{(i)}, z'^{(i)}$ as above, and let $X^{(i)} = (a+y'^{(i)}) \cdot v + z'^{(i)}$ for all $i \in [N]$.

Next, we replace the distribution $a \sim \sqrt{\frac{1}{N}} \cdot \chi_d$ with $a' \sim \mathcal{N}\left(\sqrt{\frac{d}{N}}, \frac{1}{N}\right)$. By Proposition \ref{prop:chi_normal_tv}, if $d$ is sufficiently large, then $\TV(\chi_d, \mathcal{N}(\sqrt{d}, 1/2)) \le 0.01$, and by Proposition \ref{prop:scaled_normal_tv}, $\TV(\mathcal{N}(\sqrt{d}, 1/2),$ $\mathcal{N}(\sqrt{d}, 1)) \le \frac{1}{6}.$ Therefore, if $d$ is sufficiently large, then $\TV(\chi_d, \mathcal{N}(\sqrt{d}, 1)) \le 0.01 + \frac{1}{6} \le 0.2$. By scaling, we have that $\TV\left(\sqrt{\frac{1}{N}} \cdot \chi_d, \mathcal{N}\left(\sqrt{\frac{d}{N}}, \frac{1}{N}\right)\right) \le 0.2$. Thus, if we let $\mathcal{U}'$ be the distribution where we sample $v$ uniformly from the unit sphere, $a' \sim \mathcal{N}\left(\sqrt{\frac{d}{N}}, \frac{1}{N}\right)$, and $X^{(i)} = (a'+y'^{(i)}) \cdot v + z'^{(i)}$ for all $i$ (where $y'^{(i)}, z'^{(i)}$ are sampled as before), we have that $\TV(\mathcal{U}, \mathcal{U}') \le 0.2$. 

Now, note that $a'$ has the same distribution as $\sqrt{\frac{d}{N}} + \bar{w},$ where $\bar{w} \sim \mathcal{N}(0, \frac{1}{N})$ is the distribution of averaging $N$ i.i.d. $\mathcal{N}(0, 1)$ variables. Hence, by Corollary \ref{cor:sufficient_statistic}, the tuple $\{a' + y'^{(i)}\}_{i = 1}^{N}$ has the same overall distribution as $\{\sqrt{\frac{d}{N}} + y^{(i)}\}_{i = 1}^{N}$, where we recall that $y^{(i)} \overset{i.i.d.}{\sim} \mathcal{N}(0, 1)$. So, to generate a sample from $\mathcal{U}'$ we can instead sample $X^{(i)} = \left(\sqrt{\frac{d}{N}} + y^{(i)}\right) \cdot v + z'^{(i)}$ for all $i$, where each $y^{(i)} \overset{i.i.d.}{\sim} \mathcal{N}(0, 1)$.


\medskip

We similarly generate a method of sampling from $\mathcal{V}'$ with $\TV(\mathcal{V}', \mathcal{V})$ small.
First, we modify $\mathcal{D}$ to $\mathcal{D}_1$ by removing the conditioning on the magnitude being at least $\alpha$. Note that if $\mu \sim \mathcal{D}_1,$ $\|\mu\|^2 \sim \frac{2 \alpha^2}{d} \cdot \chi_d^2$. Thus, basic concentration of $\chi_d^2$ distributions, if $d$ is sufficiently large then $\|\mu\|^2 \ge \alpha^2$ with probability at least $0.99$. Hence, $\TV(\mathcal{D}, \mathcal{D}_1) \le 0.01$. This also implies that if we let $\mathcal{V}_1$ be the distribution over $\{X^{(1)}, \dots, X^{(N)}\}$ where we first select $\mu \sim \mathcal{D}_1$ and then draw each $X^{(i)} \overset{i.i.d.}{\sim} \mathcal{N}(\mu, I)$, then $\TV(\mathcal{V}, \mathcal{V}_1) \le 0.01$.

Next, by Proposition \ref{prop:lb_2} and a similar calculation as in the $\mathcal{U}$ case, we can generate a sample from $\mathcal{V}_1$ by writing $X^{(i)} = (b + y'^{(i)}) \cdot v + z'^{(i)},$ where $v, y'^{(i)}, z'^{(i)}$ are generated in the same manner as previously, but $b$ is distributed as $\sqrt{\frac{1}{N} + \frac{2 \alpha^2}{d}} \cdot \chi_d$. Since $\TV(\chi_d, \mathcal{N}(\sqrt{d}, 1)) \le 0.2$, we have that by scaling, $$\TV\left(\sqrt{\frac{1}{N}+\frac{2\alpha^2}{d}} \cdot \chi_d, \mathcal{N}\left(\sqrt{\frac{d}{N} + 2 \alpha^2}, \frac{1}{N}+\frac{2\alpha^2}{d}\right)\right) \le 0.2.$$
Using the fact that $\frac{1}{N} + \frac{2 \alpha^2}{d} \le \frac{1+\frac{2}{C}}{N}$ assuming that $\frac{d}{\alpha^2} \ge C \cdot N$, and letting $C$ be sufficiently large, we have that
$$\TV\left(\sqrt{\frac{1}{N}+\frac{2\alpha^2}{d}} \cdot \chi_d, \mathcal{N}\left(\sqrt{\frac{d}{N} + 2 \alpha^2}, \frac{1}{N}\right)\right) \le 0.21.$$

Therefore, we can replace the distribution $b \sim \sqrt{\frac{1}{N}+\frac{2 \alpha^2}{d}} \cdot \chi_d$ with $b' \sim \mathcal{N}\left(\sqrt{\frac{d}{N} + 2\alpha^2}, \frac{1}{N}\right)$. 
From here, we consider the distribution $\mathcal{V}'$ over $\{X'^{(1)}, \dots, X'^{(N)}\}$ where we draw each $X'^{(i)} \sim (b' + y'^{(i)}) \cdot v + z'^{(i)}$. Since the total variation distance between $\sqrt{\frac{1}{N}+\frac{2 \alpha^2}{d}} \cdot \chi_d$ and $\mathcal{N}\left(\sqrt{\frac{d}{N} + 2 \alpha^2}, \frac{1}{N}\right)$ is at most $0.21,$ we have that $\TV(\mathcal{V}_1, \mathcal{V}') \le 0.21$. Thus, $\TV(\mathcal{V}, \mathcal{V}') \le \TV(\mathcal{V}, \mathcal{V}_1) + \TV(\mathcal{V}_1, \mathcal{V}') \le 0.01 + 0.21 \le 0.22$. In addition, as in the $\mathcal{U}$ case, we have that samples from $\{b' + y'^{(i)}\}_{i = 1}^{N}$ has the same distribution as samples from $\left\{\sqrt{\frac{d}{N} + 2 \alpha^2} + y^{(i)}\right\}$, where each $y^{(i)} \overset{i.i.d.}{\sim} \mathcal{N}(0, 1)$. So, we can draw samples from $\mathcal{V}'$ by sampling $v$ uniformly from the unit sphere, sampling $y^{(i)}, z'^{(i)}$ as before, and letting $X^{(i)} = \left(\sqrt{\frac{d}{N}+2\alpha^2}+y^{(i)}\right) \cdot v + z'^{(i)}$ for all $i$.

To finish, we have to establish a coupling between $\mathcal{U}'$ and $\mathcal{V}'$. To do this, we sample the same $v$ uniformly on the sphere for both $\mathcal{U}'$ and $\mathcal{V}'$, and the same choices of $z'^{(i)}$ for each $i \in [N]$ for both $\mathcal{U}'$ and $\mathcal{V}'$. Next, we note that we can couple the samples of $\sqrt{\frac{d}{N}} + y^{(i)}$ and the samples of $\sqrt{\frac{d}{N} + 2 \alpha^2} + y^{(i)},$ using the fact that $\sqrt{\frac{d}{N} + 2\alpha^2} - \sqrt{\frac{d}{N}} = \frac{\sqrt{d}}{\sqrt{N}} \cdot \left(\sqrt{1 + \frac{2\alpha^2 N}{d}} - 1\right) = \frac{\sqrt{d}}{\sqrt{N}} \cdot O\left(\frac{2\alpha^2 N}{d}\right) = O\left(\frac{\alpha^2 \sqrt{N}}{\sqrt{d}}\right)$.
Therefore, by setting $\gamma = \sqrt{\frac{d}{N} + 2\alpha^2} - \sqrt{\frac{d}{N}} = O\left(\frac{\alpha^2 \sqrt{N}}{\sqrt{d}}\right)$ and applying Proposition \ref{prop:emd_normal} in the one-dimensional case, we have that there exists a coupling between $\mathcal{U}'$ and $\mathcal{V}'$ that differ in $O\left(\frac{\alpha^2 \sqrt{N}}{\sqrt{d}} \cdot N\right) \le c_3\left(\frac{\alpha^2 \cdot N^{3/2}}{\sqrt{d}}\right)$ samples for some constant $c_3$. We wish to show that this is at most $\frac{c_2}{\eps}$, unless $\Omega\left(\frac{d^{1/3}}{\eps^{2/3} \cdot \alpha^{4/3}}\right)$. But note that if $c_3\left(\frac{\alpha^2 \cdot N^{3/2}}{\sqrt{d}}\right) \ge \frac{c_2}{\eps}$, then $N \ge \left(\frac{c_2}{c_3}\right)^{2/3} \cdot \frac{d^{1/3}}{\alpha^{4/3} \eps^{2/3}},$ as desired. This completes the proof.
\end{proof}

We are now ready to prove Theorem \ref{thm:lower}.

\begin{proof} of Theorem \ref{thm:lower}:
    First, we already know that $N \ge \Omega\left(\frac{\sqrt{d}}{\alpha^2}\right)$ even in the non-private setting, and that $N \ge \Omega\left(\frac{1}{\alpha \cdot \eps}\right)$ using Proposition \ref{prop:emd_normal}. So, it suffices to show that $N \ge \Omega\left(\frac{d^{1/3}}{\alpha^{4/3} \eps^{2/3}}\right).$
    
    Now, if $d \ge C$ and $\frac{d}{\alpha^2} \ge C \cdot N$ for some sufficiently large constant $C$, we will apply Theorem \ref{thm:emd}. Consider the null hypothesis $\mathcal{H}_0$ where the distribution $\{X^{(1)}, \dots, X^{(N)}\}$ is generated by sampling each $X^{(i)} \overset{i.i.d.}{\sim} \mathcal{N}(0, I)$, and the alternative hypothesis $\mathcal{H}_1$ where the distribution is generated by sampling $\mu \sim \mathcal{D}$ and $\{X^{(1)}, \dots, X^{(N)}\}$ is generated by sampling each $X^{(i)} \overset{i.i.d.}{\sim} \mathcal{N}(\mu, I)$. Since $\mathcal{D}$ only has support on $\|\mu\|_2 \ge \alpha$, this is a valid alternative hypothesis. In this case, by Corollary \ref{cor:lb_emd} and Theorem \ref{thm:emd}, we must have that if an algorithm can $(0, \eps)$-privately distinguish between $\mathcal{H}_0$ and $\mathcal{H}_1$, then $N \ge \Omega\left(\frac{d^{1/3}}{\alpha^{4/3} \eps^{2/3}}\right)$.
    
    Finally, we consider the cases where $d \le C$ or $\frac{d}{\alpha^2} \le C \cdot N$. If $d \le C$, then $d = O(1)$, which means that $\frac{d^{1/3}}{\alpha^{4/3} \eps^{2/3}} \le O\left(\frac{1}{\alpha \cdot \eps} + \frac{\sqrt{d}}{\alpha^2}\right)$. This is true because when $d = O(1)$, $\frac{1}{\alpha^{4/3} \eps^{2/3}}$ is a weighted geometric average of $\frac{1}{\alpha \cdot \eps}$ and $\frac{1}{\alpha^2}$.
    So, the lower bound of $\Omega\left(\frac{1}{\alpha \cdot \eps} + \frac{\sqrt{d}}{\alpha^2}\right)$ implies the lower bound of $\Omega\left(\frac{d^{1/3}}{\alpha^{4/3} \eps^{2/3}}\right)$ if $d = O(1)$. If $\frac{d}{\alpha^2} \le C \cdot N,$ then $N \ge \Omega\left(\frac{d}{\alpha^2}\right)$. In addition, we also know that $N \ge \Omega\left(\frac{1}{\alpha \cdot \eps}\right)$. It is simple to verify that $\frac{d^{1/3}}{\alpha^{4/3} \cdot \eps^{2/3}}$ is a weighted geometric average of $\frac{d}{\alpha^2}$ and $\frac{1}{\alpha \cdot \eps}$, so we therefore also have $N \ge \Omega\left(\frac{d^{1/3}}{\alpha^{4/3} \cdot \eps^{2/3}}\right)$. This completes the proof.
\end{proof}


\section{Proof of Theorem \ref{thm:fast_upper}} \label{sec:fast}

In this section, we prove Theorem \ref{thm:fast_upper}.
Again, we may apply Proposition \ref{prop:wlog_bounded}, and assume WLOG the null hypothesis $\mathcal{H}_0$ is that we are given $N$ samples i.i.d. from $\mathcal{N}(0, I)$, and the alternative hypothesis $\mathcal{H}_1$ is that we are given $N$ samples i.i.d. from $\mathcal{N}(\mu, I)$, where $\alpha \le \|\mu\| \le 2 \alpha$. In addition, due to Proposition \ref{prop:pure_approx}, it suffices to show $(0, O(\eps))$-DP rather than $(\eps, 0)$-DP.

The main technical contribution in our proof is the following theorem, which provides an algorithm that can test whether the sum of the entries of a matrix is large or small, but the algorithm's output does not change significantly even if we alter an entire row or column of the matrix. 

\begin{theorem} \label{thm:matrix}
    Fix parameters $\gamma \le 1$ and $L \ge 1$. Suppose that $\gamma \cdot N^2 \ge C \left(\frac{L \cdot \log N}{\eps} + \frac{\log^2 N}{\eps^2}\right)$ for some sufficiently large constant $C$. Then, there exists an algorithm $\mathcal{A}$ that runs in $\text{poly}(N)$ time on matrices in $[-1, 1]^{N \times N}$ and outputs either $0$ or $1$, with the following properties.
\begin{itemize}
    \item For any matrices $\textbf{V}, \textbf{V}' \in [-1, 1]^{N \times N}$ that only differ in a single row (or only differ in a single column), $\left|\BP[\mathcal{A}(\textbf{V}) = 0]-\BP[\mathcal{A}(\textbf{V}') = 0]\right| \le 2\eps$.
    \item For any matrix $\textbf{V} \in [-1, 1]^{N \times N}$ that has every row and column sum at most $L$ in absolute value, and with sum of all entries at most $\frac{\gamma \cdot N^2}{4}$, then $\BP[\mathcal{A}(\textbf{V}) = 0] \ge 0.99.$
    \item For any matrix $\textbf{V} \in [-1, 1]^{N \times N}$ that has every row and column sum at most $L$ in absolute value, and with sum of all entries at least $\frac{3 \gamma \cdot N^2}{4}$, then $\BP[\mathcal{A}(\textbf{V}) = 1] \ge 0.99.$
\end{itemize}
\end{theorem}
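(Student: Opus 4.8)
The plan is to build a privatized version of the non-private test ``compare $\bar V(\textbf V):=\sum_{i,j}V_{i,j}$ to $\gamma N^2/2$'', engineered so that the privacy guarantee holds even on worst-case matrices. The key observation is that if every row and column sum of $\textbf V$ has magnitude at most $L$, then $\bar V$ has sensitivity at most $2L$ under a single-row or single-column change (changing row $i_0$ only alters the one row sum $\sum_j V_{i_0,j}$), so on such matrices $\bar V(\textbf V)+Lap(2L/\eps)$ already distinguishes the two cases once $\gamma N^2\gg L/\eps$. The whole difficulty is extending this to all of $[-1,1]^{N\times N}$, where $\bar V$ has sensitivity $\Theta(N)$.

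The first ingredient is to replace $\bar V$ by a ``fold-back'' statistic. Let $g\colon\BR\to\BR$ be $g(x)=x$ for $|x|\le L$ and $g(x)=2L\cdot\mathrm{sgn}(x)-x$ for $|x|>L$ (so $g(x)=-x\pm 2L$ once $|x|>L$), and set
\[
G(\textbf V)\;:=\;\tfrac12\Bigl[\textstyle\sum_i g\bigl(\sum_j V_{i,j}\bigr)+\sum_j g\bigl(\sum_i V_{i,j}\bigr)\Bigr],
\]
so that $G(\textbf V)=\bar V(\textbf V)$ whenever all row/column sums have magnitude at most $L$. The purpose of the fold is that, when we change one row, each column sum moves by at most $2$; for the columns whose sum is small, $g$ is the identity there, so the column terms' total change equals exactly the change of the altered row sum, which is in turn exactly cancelled by the change of $g$ on that single row (which lies in its ``$-x$'' regime precisely when that row sum is large). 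What survives is an $O(L)$ error from the altered row possibly crossing the fold, plus an $O(1)$ error for each row or column whose sum exceeds $L$ in magnitude (where $g$ is decreasing, flipping the sign of the cancellation). So the first lemma to prove is $|G(\textbf V)-G(\textbf V')|\le O(L)+O(1)\cdot M(\textbf V,\textbf V')$, where $M(\textbf V,\textbf V')$ counts the rows and columns of $\textbf V$ or $\textbf V'$ whose sum exceeds $L$ in magnitude; on matrices with all sums at most $L$ this is $O(L)$, but $M$ can be $\Theta(N)$ in general, so $G$ must be released with noise calibrated to a privately certified bound on $M$.

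That certification is the heart of the argument. I would introduce $K=O(\log(N/L)+1)$ threshold functions $f_1,\dots,f_K$, where $f_k$ is clipped to $[0,1]$, equals $0$ below magnitude $\approx 2^{k-1}L$, ramps linearly to $1$, is $O(2^{-k}L^{-1})$-Lipschitz, and with $f_K$ already equal to $1$ at magnitude $\gtrsim N$ (so $f_K$ ``sees'' every achievable row/column sum), and set $F_k(\textbf V):=\tfrac12[\sum_i f_k(\sum_j V_{i,j})+\sum_j f_k(\sum_i V_{i,j})]$. Two facts drive everything: (i) a matrix with all row/column sums at most $L$ has $F_k(\textbf V)=0$ for every $k$ and $M=0$; and (ii) the cascade inequality — if a row or column sum lies in the ramp region of $f_{k+1}$ then $f_k$ of that same value is bounded below by a constant, so the sensitivity of $F_{k+1}$ under a single-row/column change is at most $O(1)+O\!\bigl(\tfrac{F_k(\textbf V)+F_k(\textbf V')}{2^kL}\bigr)$, while $F_1$ has unconditional sensitivity $O(N/L)$. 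Consequently, conditioned on $F_k$ being at most a threshold $\tau_k$, the next statistic $F_{k+1}$ has sensitivity only $O(1)+O(\tau_k/(2^kL))$, so one may take $\tau_{k+1}$ a constant factor smaller; the recursion $\tau_{k+1}\approx\frac1{\eps}\bigl(1+\frac{\tau_k}{2^kL}\bigr)$ makes the $\tau_k$ decay (once $L\gtrsim\log N/\eps$) down to $\tau_K=\tilde O(1/\eps)$, and passing all these tests certifies $M\le\tilde O(1/\eps)$, hence $\Delta_G:=O(L)+O(M)=\tilde O(L+1/\eps)$.

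The algorithm is then the cascaded propose-test-release: split the budget into $\Theta(\eps/K)$ pieces and, in a suitable order, release $\hat F_k=F_k(\textbf V)+Lap(\Delta_k/\eps')$ with $\Delta_k$ the conditional-sensitivity bound, halting with output $0$ as soon as some $\hat F_k>\tau_k$; if every test passes, release $\hat G=G(\textbf V)+Lap(\Delta_G/\eps')$ with $\Delta_G=\tilde O(L+1/\eps)$ and output $1$ iff $\hat G>\gamma N^2/2$. Accuracy on the two ``good'' cases is immediate: such $\textbf V$ has every $F_k=0$, so it passes all tests with probability $\ge 0.999$ (the noise is $\tilde O(\cdot/\eps)$ and $\tau_k$ is a polylog factor larger), and then $G(\textbf V)=\bar V(\textbf V)$ is either $\le\gamma N^2/4$ or $\ge 3\gamma N^2/4$ while the added noise has scale $\tilde O((L+1/\eps)/\eps)\ll\gamma N^2$ by hypothesis. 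For privacy, one argues per adjacent pair: if $\textbf V$ lies comfortably inside the good region (each $F_k(\textbf V)$ well below $\tau_k$) then so does $\textbf V'$, every conditional bound is valid, and the acceptance probabilities plus the conditional law of $\hat G$ on $\textbf V$ versus $\textbf V'$ are within $2\eps$ by Laplace-tail and basic composition; if some $F_k(\textbf V)$ is large then, since a one-row change moves $F_k$ by at most its now-small sensitivity, $F_k(\textbf V')$ is large too, so both reach a rejecting test with probability $\approx 1$; the intermediate regime is the standard propose-test-release interpolation. The step I expect to be the main obstacle is exactly this privacy analysis of the cascade: the sensitivity bound $\Delta_{k+1}$ for $F_{k+1}$ is only valid on inputs where $F_k$ is small, yet $\hat F_{k+1}$ is released on every input, so one must either clamp each $F_k$ to $[0,2\tau_k]$ and argue by induction down the cascade that the clamped statistic genuinely has the claimed sensitivity unconditionally, or run the induction ``from the top,'' and then propagate the Laplace-tail estimates through all $K=O(\log N)$ levels without degradation — and solving the geometric recursion for the $\tau_k$ precisely enough to land on $\Delta_G=\tilde O(L+1/\eps)$, which is exactly where the hypothesis $\gamma N^2\ge C\bigl(\tfrac{L\log N}{\eps}+\tfrac{\log^2 N}{\eps^2}\bigr)$ is forced.
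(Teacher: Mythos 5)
Your high-level architecture is essentially the paper's: a fold-back statistic $G$ whose sensitivity is $O(L)$ plus $O(1)$ for each row/column whose sum lies beyond the fold, a cascade of clipped counting statistics $F_k$ with the telescoping sensitivity bound $|F_{k+1}(\textbf{V})-F_{k+1}(\textbf{V}')|\le O(1)+O(F_k(\textbf{V})/\mathrm{spacing})$, and a private certification that the relevant count is $\tilde{O}(1/\eps)$ before releasing $G$ with Laplace noise. (The paper sidesteps the conditional-sensitivity headache you flag by collapsing the cascade into the single integer statistic $F(\textbf{V})=\sum_k\lfloor F_k(\textbf{V})/(2+2N\eta^{k-1})\rfloor$ and rejecting with probability $\min(1,F\eps/K)$: either both adjacent inputs have $F\ge K/\eps$ and both reject surely, or the Lipschitz bound applies and the linear ramp moves by at most $\eps$. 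This single-statistic trick also avoids your $K$-fold composition with $\eps'=\eps/K$ and its extra $\log N$ factors relative to the stated hypothesis.)

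The genuine gap is in your threshold geometry, and it costs a polynomial rather than polylogarithmic factor. You fold $g$ at $L$, so $\Delta_G=O(L)+O(M)$ where $M$ counts row/column sums exceeding $L$ — i.e., $M$ is controlled by $F_1$, the \emph{bottom} of your cascade. But your tests only certify $F_K=\tilde{O}(1/\eps)$ at the \emph{top}, where the threshold is $2^{K-1}L\approx N$; the bottom level is certified only to $\tau_1=\tilde{O}((1+N/L)/\eps)$, since $F_1$ has unconditional sensitivity $\Theta(N/L)$. Concretely, a matrix with every column sum equal to $2L$ has $F_k=0$ for all $k\ge 2$ and sails past every upper test, yet has $M=\Theta(N)$; the best bound your certification yields is $M\le\tilde{O}(N/(L\eps))$, hence $\Delta_G=\tilde{O}(L+N/(L\eps))$ rather than the claimed $\tilde{O}(L+1/\eps)$. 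With the downstream value $L\approx\sqrt{N}$ this is an extra $1/\eps$ factor and reproduces exactly the weaker $d^{1/3}/(\alpha^{4/3}\eps^{4/3})$ bound that the paper's overview describes as the one-level baseline — not the theorem. The fix is to space the thresholds \emph{arithmetically}: $f_k$ ramps on $[k\tau,(k+1)\tau]$ with common spacing $\tau=L+\Theta(\log N/\eps)$ (the cascade's decay comes from $\tau\gtrsim\log N/\eps$, not from geometric growth of the thresholds), and the fold of $g$ sits at $(K+2)\tau=\tilde{O}(L+1/\eps)$. Then the columns that break the cancellation in $G$ are precisely those counted by $F_K$ — the level certified to be $\tilde{O}(1/\eps)$ — while the modified row's fold error is $O(K\tau)=\tilde{O}(L+1/\eps)$ as well.
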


\begin{proof}
    Consider setting a threshold $\tau := L + \frac{16 K}{\eps}$, where $K := \log_2 N$. Now, consider the following set of functions $\{f_k\}_{k = 1}^{K}$. For each $k \in [N]$, we define $f_k(x) := \max\left(\min(1, \frac{|x|}{\tau}-k), 0\right)$, i.e., it takes the function $\frac{|x|}{\tau}-k$ and prevents it from exceeding $1$ or becoming negative.

    Now, given a matrix $\textbf{V} \in [-1, 1]^{N \times N}$ and for each $k \in [K]$, we define the statistic
\[F_k(\textbf{V}) := \sum_{i = 1}^{N} f_k\biggr(\sum_{j = 1}^{N} V_{i, j}\biggr) + \sum_{j = 1}^{N} f_k\biggr(\sum_{i = 1}^{N} V_{i, j}\biggr).\]
    We prove the following fact.

\begin{proposition} \label{prop:ub_1}
    Consider two matrices $\textbf{V}, \textbf{V}' \in [-1, 1]^{N \times N}$ that only differ either in a single row or in a single column, and consider some $k \ge 2$. Then,
\[|F_k(\textbf{V}) - F_k(\textbf{V}')| \le 1 + \frac{4}{\tau} \cdot F_{k-1}(\textbf{V}).\]
\end{proposition}

\begin{proof}
    By symmetry of $F_k$ with respect to rows and columns, assume WLOG that $\textbf{V}, \textbf{V}'$ only differ in a single row $i$. Then, we can bound the absolute difference between $F_k(\textbf{V})$ and $F_k(\textbf{V}')$ by considering only how the $i$th row changes, and how each column changes at the $i$th row position. As a result, we have
\[
    |F_k(\textbf{V}) - F_k(\textbf{V}')| \le \left|f_k\biggr(\sum_{j = 1}^{N} V_{i, j}\biggr) - f_k\biggr(\sum_{j = 1}^{N} V'_{i, j}\biggr)\right| + \sum_{j = 1}^{N} \left|f_k\biggr(\sum_{i' = 1}^{N} V_{i', j}\biggr) - f_k\biggr(\sum_{i' = 1}^{N} V'_{i', j}\biggr)\right|.
\]
    
    We note that since the range of $f_k$ is bounded in the range $[0, 1]$, the first part of the summand above is at most $1$. To bound the second part of the summand, we note that for all $j \in [N]$, $\sum_{i' = 1}^{N} V_{i', j}$ and $\sum_{i' = 1}^{N} V'_{i', j}$ differ by at most $2$, since only a single row changes and each entry remains in $[-1, 1]$. In this case, the difference between $f_k\left(\sum_{i' = 1}^{N} V_{i', j}\right)$ and $f_k\left(\sum_{i' = 1}^{N} V'_{i', j}\right)$ is at most $\frac{2}{\tau}$, and in fact only changes if either $\sum_{i' = 1}^{N} V_{i', j}$ or $\sum_{i' = 1}^{N} V'_{i', j}$ is at least $k \cdot \tau$ in absolute value. This means that $\sum_{i' = 1}^{N} V_{i', j}$ is at least $k \cdot \tau - 2$ in absolute value, which means that $f_{k-1}\left(\sum_{i' = 1}^{N} V_{i', j}\right) \ge 1 - \frac{2}{\tau} \ge \frac{1}{2}$. Overall, this means that for all $j \in [N]$,
\[\left|f_k\left(\sum_{i' = 1}^{N} V_{i', j}\right) - f_k\left(\sum_{i' = 1}^{N} V'_{i', j}\right)\right| \le \frac{4}{\tau} \cdot f_{k-1}\left(\sum_{i' = 1}^{N} V_{i', j}\right).\]

    Combining everything, we have that if $\textbf{V}, \textbf{V}' \in [-1, 1]^{N \times N}$ differ in a single row, then
\begin{align*}
    |F_k(\textbf{V}) - F_k(\textbf{V}')| &\le \left|f_k\biggr(\sum_{j = 1}^{N} V_{i, j}\biggr) - f_k\biggr(\sum_{j = 1}^{N} V'_{i, j}\biggr)\right| + \sum_{j = 1}^{N} \left|f_k\biggr(\sum_{i' = 1}^{N} V_{i', j}\biggr) - f_k\biggr(\sum_{i' = 1}^{N} V'_{i', j}\biggr)\right| \\
    &\le 1 + \sum_{j = 1}^{N} \frac{4}{\tau} f_{k-1}\left(\sum_{i' = 1}^{N} V_{i', j}\right) \\
    &\le 1 + \frac{4}{\tau} \cdot F_{k-1}(\textbf{V}).
\end{align*}
\end{proof}

    Now, let $\eta := \frac{8 \cdot K}{\eps \cdot \tau}$, and define $F(\textbf{V}) := \sum_{k = 1}^{K} \lfloor \frac{F_k(\textbf{V})}{2 + 2N \cdot \eta^{k-1}} \rfloor$. Recall that $\tau \ge \frac{16 K}{\eps}$, which means that $\frac{16 K}{\tau \eps} \le 1$, and $\eta \le \frac{1}{2}$. We now show the following.

\begin{proposition} \label{prop:ub_2}
    Suppose that $\textbf{V}, \textbf{V}' \in [-1, 1]^{N \times N}$ only differ either in a single row or in a single column. If $F(\textbf{V}) \le K/\eps$, then $|F(\textbf{V}) - F(\textbf{V}')| \le K$.
\end{proposition}

\begin{proof}
    Again, by symmetry of $F_k$ with respect to rows and columns, assume WLOG that $\textbf{V}, \textbf{V}'$ only differ in a single row $i$.
    It suffices to show that for all $k \in [K]$, $|F_k(\textbf{V}) - F_k(\textbf{V}')| \le 2 + 2N \cdot \eta^{k-1}$. This would imply that $\lfloor \frac{F_k(\textbf{V})}{2 + 2N \cdot \eta^{k-1}} \rfloor$ and $\lfloor \frac{F_k(\textbf{V}')}{2 + 2N \cdot \eta^{k-1}} \rfloor$ differ by at most $1$ for all $k \in [K]$, which is sufficient.
    
    For $k = 1,$ we have that $F_1(\textbf{V}), F_1(\textbf{V}')$ are trivially between $0$ and $2N$, and since $2 + 2N \cdot \eta^{k-1} = 2+2N$, the claim is immediate.
    
    For $k \ge 2,$ note that $|F_k(\textbf{V}) - F_k(\textbf{V}')| \le 1 + \frac{4}{\tau} \cdot F_{k-1}(\textbf{V})$ by Proposition \ref{prop:ub_1}. In addition, since $F(\textbf{V}) \le \frac{K}{\eps}$, we have that $F_{k-1}(\textbf{V}) \le (2 + 2N \cdot \eta^{k-2}) \cdot (\frac{K}{\eps}+1) \le (2 + 2N \cdot \eta^{k-2}) \cdot \frac{2K}{\eps}$. Therefore, 
\[|F_k(\textbf{V}) - F_k(\textbf{V}')| \le 1 + \frac{4}{\tau} \cdot (2+2N \cdot \eta^{k-2}) \cdot \frac{2K}{\eps} \le 1 + \frac{16 K}{\tau \eps} + 2N \cdot \frac{8K}{\tau \eps} \cdot \eta^{k-2} \le 2 + 2N \cdot \eta^{k-1}.\]
\end{proof}

    Next, we consider the piecewise linear function
\[g(x) := \begin{cases} x & |x| \le (K+2) \cdot \tau \\ 2(K+2) \cdot \tau - x & x \ge (K+2) \cdot \tau \\ -2(K+2) \cdot \tau - x & x \le -(K+2) \cdot \tau \end{cases}\]
    and define
\[G(\textbf{V}) := \sum_{i = 1}^{N} g\biggr(\sum_{j = 1}^{N} V_{i, j}\biggr) + \sum_{j = 1}^{N} g\biggr(\sum_{i = 1}^{N} V_{i, j}\biggr).\]
    First note that for any real numbers $x, y,$ that $\left|(g(x)-g(y)) - (y-x)\right| \le 4 (K+2) \cdot \tau$. In addition, note that $g$ is a $1$-Lipschitz function, meaning $|g(x)-g(y)| \le |x-y|$ for all $x, y \in \BR$. Given this, we can prove the following.

\begin{proposition} \label{prop:ub_3}
    Suppose that $\textbf{V}, \textbf{V}' \in [-1, 1]^{N \times N}$ only differ either in a single row or in a single column. If $F(\textbf{V}) \le K/\eps$, then $|G(\textbf{V}) - G(\textbf{V}')| \le 4(K+2) \tau + \frac{48 K}{\eps}.$
\end{proposition}

\begin{proof}
    Again, by symmetry of $G$ with respect to rows and columns, assume WLOG that $\textbf{V}, \textbf{V}'$ only differ in a single row $i$.
    To compute $G(\textbf{V})-G(\textbf{V}')$, note that we only have to consider how the $i$th row changes, and how each column sum changes due to the $i$th row position. For each $j \in [N]$, define $a_j := V'_{i, j} - V_{i, j}$. Then, $\sum_{j = 1}^{N} V'_{i, j} - \sum_{j = 1}^{N} V_{i, j} = \sum_{j = 1}^{N} a_j,$ which means that $g\left(\sum_{j = 1}^{N} V'_{i, j}\right) - g\left(\sum_{j = 1}^{N} V_{i, j}\right) = -\sum_{j = 1}^{N} a_j \pm 4(K+2) \tau$. 
    
    Next, we consider how each column $j$ changes. We note that for any fixed $j \in [N]$, $\sum_{i' = 1}^{N} V'_{i', j} - \sum_{i' = 1}^{N} V_{i', j} = a_j$. In addition, if $\left|\sum_{i' = 1}^{N} V_{i', j}\right| \le (K+1) \cdot \tau,$ then $g(x) = x$ for $x = \sum_{i' = 1}^{N} V_{i', j}$ and $x = \sum_{i' = 1}^{N} V'_{i', j}$, since $\left|\sum_{i' = 1}^{N} V'_{i', j}\right| \le (K+1) \cdot \tau + |a_j| \le (K+1) \cdot \tau + 2 \le (K+2) \cdot \tau$. Therefore, if $\left|\sum_{i' = 1}^{N} V_{i', j}\right| \le (K+1) \cdot \tau,$ then $g\left(\sum_{i' = 1}^{N} V'_{i', j}\right) - g\left(\sum_{i' = 1}^{N} V_{i', j}\right) = a_j$. Alternatively, because $g$ is $1$-Lipschitz, we still have that $\left|g\left(\sum_{i' = 1}^{N} V'_{i', j}\right) - g\left(\sum_{i' = 1}^{N} V_{i', j}\right)\right| \le 2,$ which means that $g\left(\sum_{i' = 1}^{N} V'_{i', j}\right) - g\left(\sum_{i' = 1}^{N} V_{i', j}\right) = a_j \pm 4$.
    
    Now, note that since $F(\textbf{V}) \le K/\eps,$ this means that $F_K(\textbf{V}) \le (2+2N \cdot \eta^{K-1}) \cdot \left(\frac{K}{\eps}+1\right) \le (2+2N \cdot \eta^{K-1}) \cdot \frac{2K}{\eps}$. Since $\eta \le \frac{1}{2}$ and $K = \log_2 N$, this means that $F_K(\textbf{V}) \le \frac{12 K}{\eps}$. This, in turn, implies that the number of columns $j$ such that $\left|\sum_{j = 1}^{N} V_{i, j}\right| > (K+1) \cdot \tau$ is at most $\frac{12 K}{\eps}.$ So, we have that
\[\sum_{j = 1}^{N} \left[g\left(\sum_{i' = 1}^{N} V'_{i', j}\right) - g\left(\sum_{i' = 1}^{N} V_{i', j}\right)\right] = \left[\sum_{j = 1}^{N} a_j\right] \pm 4 \cdot \frac{12 K}{\eps}.\]
    
    So, overall, we have that
\[G(\textbf{V}) - G(\textbf{V}') = - \sum_{j = 1}^{N} a_j \pm 4 (K+2) \tau + \sum_{j = 1}^{N} a_j \pm 4 \cdot \frac{12 K}{\eps} = \pm \left[4(K+2) \tau + \frac{48 K}{\eps}\right].\]
\end{proof}
    
    The algorithm $\mathcal{A}$ will work as follows. First, we compute $F(\textbf{V}).$ With probability $\min\left(1, F(\textbf{V}) \cdot \frac{\eps}{K}\right),$ we will automatically output $1$. Alternatively, if we have not yet output $1$, we compute $\tilde{G}(\textbf{V}) := G(\textbf{V}) + Lap\left(\frac{4(K+2) \tau}{\eps} + \frac{48 K}{\eps^2}\right)$. If $\tilde{G}(T)$ is more than $\gamma \cdot N^2,$ we output $1$; otherwise we output $0$. 
    
    We will prove that this algorithm is private up to changing an entire row or column of $\textbf{V}$, and then we prove that this algorithm is accurate. Together, these complete the proof of Theorem \ref{thm:matrix}.

\begin{proposition}
    For any matrices $\textbf{V}, \textbf{V}' \in [-1, 1]^{N \times N}$ that only differ on a single row (or only differ on a single column), $\left|\BP[\mathcal{A}(\textbf{V}) = 0]-\BP[\mathcal{A}(\textbf{V}') = 0]\right| \le 2\eps$.
\end{proposition}

\begin{proof}
    First, note that if both $F(\textbf{V}) \ge \frac{K}{\eps}$ and $F(\textbf{V}') \ge \frac{K}{\eps}$, then $\mathcal{A}$ automatically outputs $1$ with probability $1$ for both $\textbf{V}$ and $\textbf{V}'$. 
    
    So, we assume that either $F(\textbf{V}) < \frac{K}{\eps}$ or $F(\textbf{V}') < \frac{K}{\eps}$. In this case, Proposition \ref{prop:ub_2} tells us that $|F(\textbf{V}) - F(\textbf{V}')| \le K$, which means that the probability of automatically outputting $1$ before even computing $\tilde{G}(T)$ does not change by more than $\eps$. Next, we note that by Proposition \ref{prop:ub_3}, $|G(\textbf{V}) - G(\textbf{V}')| \le 4(K+2) \tau + \frac{48 K}{\eps}.$ Hence, by applying a Laplace mechanism scaled by $\frac{1}{\eps} \cdot \left(4(K+2) \tau + \frac{48 K}{\eps}\right),$ we have that the probability of $\tilde{G}(\textbf{V})$ being more than $\gamma \cdot N^2$ and the probability of $\tilde{G}(\textbf{V}')$ being more than $\gamma \cdot N^2$ do not differ by more than $\eps$. Overall, we have that $\left|\BP[\mathcal{A}(\textbf{V}) = 0]-\BP[\mathcal{A}(\textbf{V}') = 0]\right| \le 2\eps$, as desired.
\end{proof}

\begin{proposition}
    Suppose that $\textbf{V} \in [-1, 1]^{N \times N}$ has every row and column sum at most $L$ in absolute value. Also, suppose that $\gamma \cdot N^2 \ge C \left(\frac{L \cdot \log N}{\eps} + \frac{\log^2 N}{\eps^2}\right)$. Then, if the sum of the entries of $\textbf{V}$ is at most $\frac{\gamma \cdot N^2}{4},$ then $\BP[\mathcal{A}(T) = 0] \ge 0.99$, and if the sum of the entries of $\textbf{V}$ is at least $\frac{3 \gamma \cdot N^2}{4},$ then $\BP[\mathcal{A}(T) = 1] \ge 0.99$.
\end{proposition}

\begin{proof}
    Note that since every row and column sum is at most $L \le \tau$ in absolute value, we have that for all $k \in [K]$, $f_k(\sum_{j=1}^{N} V_{i, j}) = 0$ for all rows $i$, and $f_k(\sum_{i=1}^{N} V_{i, j}) = 0$ for all columns $j$. Therefore, $F_k(\textbf{V}) = 0$ for all $k \in [K]$, so with probability $1$ we compute $\tilde{G}(T)$.
    
    Likewise, we have that $g(\sum_{j=1}^{N} V_{i, j}) = \sum_{j=1}^{N} V_{i, j}$ for all rows $i$, and $g(\sum_{i=1}^{N} V_{i, j}) = \sum_{j=1}^{N} V_{i, j}$ for all columns $j$. Therefore,
\[G(\textbf{V}) = \sum_{i = 1}^{N} \sum_{j = 1}^{N} V_{i, j} + \sum_{j = 1}^{N} \sum_{i = 1}^{N} V_{i, j} = 2 \cdot \sum_{i = 1}^{N} \sum_{j = 1}^{N} V_{i, j}.\]
    In other words, $G(\textbf{V})$ is just twice the sum of the entries of $\textbf{V}$. 
    
    So, if the sum of all entries in $\textbf{V}$ is at most $\frac{\gamma N^2}{4}$, then $G(\textbf{V}) \le \frac{\gamma N^2}{2}.$ So, assuming that $\gamma \cdot N^2 \ge 10 \left(\frac{4 (K+2) \tau}{\eps} + \frac{48 K}{\eps^2}\right),$ we have that the probability that $\tilde{G}(\textbf{V}) \ge \gamma N^2$ is at most $e^{-5} \le 0.01$. Likewise, if the sum of all entries in $\textbf{V}$ is at least $\frac{3 \gamma N^2}{4}$, then $G(\textbf{V}) \le \frac{3\gamma N^2}{2}.$ So, assuming that $\gamma \cdot N^2 \ge 10 \left(\frac{4 (K+2) \tau}{\eps} + \frac{48 K}{\eps^2}\right),$ we have that the probability that $\tilde{G}(\textbf{V}) \le \gamma N^2$ is at most $e^{-5} \le 0.01$. But since $\tau = L + \frac{16 K}{\eps}$, it suffices for 
\[\gamma \cdot N^2 \ge 10\left(\frac{4 (K+2) L}{\eps} + \frac{64 (K+2) K}{\eps^2} + \frac{48 K}{\eps^2}\right) = O\left(\frac{L \cdot \log N}{\eps} + \frac{\log^2 N}{\eps^2}\right).\]
    Thus, $\mathcal{A}$ succeeds with at least $0.99$ probability either if the sum of the entries is at least $\frac{3 \gamma N^2}{4}$ or if the sum of the entries is at most $\frac{\gamma N^2}{4}$.
\end{proof}
    
\end{proof}

We now return to the scenario of having $N$ samples $X^{(1)}, \dots, X^{(N)}.$ 
We first note the following corollary of Theorem \ref{thm:concentration}.

Given $N$ samples $X^{(1)}, \dots, X^{(N)} \in \BR^d$, we define the matrix $\textbf{T} \in \BR^{N \times N}$ such that $T_{i,j} = \langle X^{(i)}, X^{(j)} \rangle$ for all $i, j \in [N]$. Next, we modify the matrix to create a matrix $\textbf{V}$ such that $V_{i, i} = \frac{T_{i,i}-d}{R}$ for all $i \in [N]$ and $V_{i, j} = \frac{T_{i, j}}{R}$ for all $i \neq j \in [N]$, where $R = \tilde{O}(\sqrt{d})$ is a sufficiently large poly-logarithmic multiple of $\sqrt{d}$. As a direct corollary of Theorem \ref{thm:concentration}, we have the following.

\begin{corollary} \label{cor:concentration_fast_upper}
    Suppose that $X^{(i)} \overset{i.i.d.}{\sim} \mathcal{N}(\mu, I_d)$ for all $i \in [N]$, where $0 \le \|\mu\|_2 \le 2 \alpha$. Then, with probability at least 0.99,
\begin{itemize}
    \item The sum of the entries of $\textbf{V}$ equals $\frac{N^2 \cdot \|\mu\|_2^2}{R} \pm O\left(\frac{N \sqrt{d} + \alpha N \sqrt{N}}{R}\right)$.
    \item All individual entries in $\textbf{V}$ are in the range $[-1,1]$.
    \item The sum of each row and each column is at most $\sqrt{N} + \alpha \cdot \frac{N}{\sqrt{d}}$ in absolute value.
\end{itemize}
\end{corollary}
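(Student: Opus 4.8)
The plan is to derive all three bullet points directly from the concentration properties established in Theorem \ref{thm:concentration}, which we may invoke here because $X^{(i)} \overset{i.i.d.}{\sim} \mathcal{N}(\mu, I_d)$ with $\|\mu\|_2 \le 2\alpha$ is exactly one of the two cases covered by that theorem. So I would first condition on the event, of probability at least $0.99$, that properties 1--3 of Theorem \ref{thm:concentration} hold simultaneously (property 4 is not needed for this corollary). It then remains only to track how the diagonal correction $T_{i,i}\mapsto T_{i,i}-d$ and the rescaling by $R$ transform each of the relevant quantities, which is entirely mechanical.

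For the first bullet, I would write the sum of all entries of $\textbf{V}$ as $\tfrac1R\bigl(\sum_{i,j}T_{i,j}-Nd\bigr)$, since $\textbf{V}$ differs from $\tfrac1R\textbf{T}$ only by subtracting $d$ from each of the $N$ diagonal entries. Property 1 of Theorem \ref{thm:concentration} gives $\sum_{i,j}T_{i,j}-Nd = N^2\|\mu\|_2^2 \pm O(N\sqrt d + \alpha N\sqrt N)$, and dividing by $R$ yields the claimed estimate. For the second bullet, property 2 gives $|T_{i,i}-d|\le \tilde O(\sqrt d)$ and $|T_{i,j}|\le \tilde O(\sqrt d)$ for $i\neq j$; since $R$ is a sufficiently large poly-logarithmic multiple of $\sqrt d$ — chosen with degree and constant dominating the hidden polylog factor in this $\tilde O(\cdot)$ — every entry of $\textbf{V}$ has absolute value at most $1$. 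For the third bullet, the $i$-th row sum of $\textbf{V}$ equals $\tfrac1R\bigl(\sum_j T_{i,j}-d\bigr)$ (the diagonal correction again removes exactly one copy of $d$), which by property 3 is $\pm\tfrac1R\,\tilde O(\sqrt{Nd}+\alpha N)$. Rewriting $\sqrt{Nd}+\alpha N = \sqrt d\,\bigl(\sqrt N + \alpha N/\sqrt d\bigr)$ and once more taking $R$'s polylog factor large enough to absorb the $\tilde O(\cdot)$, this is at most $\sqrt N + \alpha N/\sqrt d$ in absolute value; the column sums are handled identically, using the symmetry of $\textbf{T}$ noted in property 3.

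The only genuine point of care — and it is bookkeeping rather than a real obstacle — is ensuring the poly-logarithmic factor absorbed into $R=\tilde O(\sqrt d)$ has strictly larger degree (and constant) than the polylog factors hidden inside the $\tilde O$'s of properties 2 and 3 of Theorem \ref{thm:concentration}, so that after dividing by $R$ we get the clean thresholds $1$ and $\sqrt N + \alpha N/\sqrt d$ with no residual logarithmic factors. Since the corollary only asks for $R$ to be "a sufficiently large poly-logarithmic multiple of $\sqrt d$," this is immediate, and a union bound over the three properties is unnecessary because Theorem \ref{thm:concentration} already states them jointly with probability at least $0.99$.
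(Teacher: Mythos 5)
Your proposal is correct and is exactly the argument the paper intends: the paper simply declares the corollary ``a direct corollary of Theorem \ref{thm:concentration},'' and your write-up supplies the same mechanical bookkeeping (subtracting $d$ on the diagonal, dividing by $R$, and choosing the polylogarithmic factor in $R$ large enough to absorb the $\tilde{O}(\cdot)$'s from properties 1--3). Nothing is missing.
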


\begin{proof}{of Theorem \ref{thm:fast_upper}:}
    Our computationally efficient algorithm works as follows. First, sample $X^{(1)}, \dots, X^{(N)} \in \BR^d$, and construct the matrix $\textbf{V}$ as above. Now, let $\gamma = \frac{\alpha^2}{R}$ and $L = \sqrt{N} + \alpha \cdot \frac{N}{\sqrt{d}}$. Finally, we apply the algorithm $\mathcal{A}$ in Theorem \ref{thm:matrix} on $\textbf{V}$. 
    
    Suppose that $N^2 \cdot \alpha^2 \ge C(N \sqrt{d}+\alpha N \sqrt{N})$ for some sufficiently large constant $C$ and that $\gamma \cdot N^2 \ge C \left(\frac{L \log N}{\eps} + \frac{\log^2 N}{\eps^2}\right)$. Then, for any $\textbf{X}, \textbf{X}' \in (\BR^d)^N$ that differ in a single row, the respective matrices $\textbf{V}, \textbf{V}'$ differ in a single row and a single column, so $|\BP[\mathcal{A}(\textbf{V}) = 0]-\BP[\mathcal{A}(\textbf{V}') = 0]| \le 4 \eps$, i.e., the algorithm is $(0, 4 \eps)$-differentially private. Next, if $X^{(1)}, \dots, X^{(N)} \overset{i.i.d.}{\sim} \mathcal{N}(0, I)$, then the sum of the entries of $V$ is at most $\frac{C}{4} \cdot \frac{N \sqrt{d} + \alpha N \sqrt{N}}{R} \le \frac{1}{4} \cdot \frac{N^2 \cdot \alpha^2}{R} = \frac{1}{4} \cdot \gamma \cdot N^2$ with probability at least $0.99$, which means the algorithm outputs $0$ with probability at least $0.97$. Finally, if $X^{(1)}, \dots, X^{(N)} \overset{i.i.d.}{\sim} \mathcal{N}(\mu, I)$, for some $\alpha \le \|\mu\|_2 \le 2 \alpha$, then the sum of the entries of $V$ is at least $\frac{N^2 \cdot \alpha^2}{R} - \frac{C}{4} \cdot \frac{N \sqrt{d} + \alpha N \sqrt{N}}{R} \ge \frac{N^2 \cdot \alpha^2}{R} - \frac{1}{4} \cdot \frac{N^2 \cdot \alpha^2}{R} = \frac{3}{4} \cdot \gamma \cdot N^2$, which means the algorithm outputs $1$ with probability at least $0.97$.
    
    Therefore, we just need $N$ to satisfy $N^2 \cdot \alpha^2 \ge C(N \sqrt{d}+\alpha N \sqrt{N})$ and $\gamma \cdot N^2 \ge C \left(\frac{L \log N}{\eps} + \frac{\log^2 N}{\eps^2}\right)$. Recalling that $R = \tilde{O}(\sqrt{d})$ and that $L = \sqrt{N} + \alpha \cdot \frac{N}{\sqrt{d}}$, it is sufficient for $$N = \tilde{O}\left(\frac{\sqrt{d}}{\alpha^2} + \frac{1}{\alpha^2} + \frac{d^{1/3}}{\alpha^{4/3} \eps^{2/3}} + \frac{d^{1/4}}{\alpha \cdot \eps} + \frac{1}{\alpha \cdot \eps}\right) = \tilde{O}\left(\frac{\sqrt{d}}{\alpha^2} + \frac{d^{1/4}}{\alpha \cdot \eps}\right),$$
    where we note that $\frac{1}{\alpha^2}, \frac{1}{\alpha \cdot \eps}$ are smaller terms, and that $\frac{d^{1/3}}{\alpha^{4/3} \eps^{2/3}}$ is a weighted geometric average of $\frac{\sqrt{d}}{\alpha^2}$ and $\frac{d^{1/4}}{\alpha \cdot \eps}$ and therefore can be omitted.
\end{proof}
    
\section{Generalizations} \label{sec:generalization}

In this section, we prove theorems \ref{thm:cov_unknown}, \ref{thm:prod}, and \ref{thm:tolerant}. We will also restate these theorems more formally while proving them. We note that the theorems proceed very similarly to Theorems \ref{thm:slow_upper}, \ref{thm:lower}, and \ref{thm:fast_upper}, so we will focus on the parts of these new theorems that differ, and simply sketch the remaining parts.

\subsection{Lower Bounds}

We start with the easiest part, the lower bound. In the case of unknown covariance Gaussians, the lower bound we have in the known-covariance case immediately implies the same lower bound in the unknown-covariance case (where we are promised $\|\Sigma\|_2 \le 1$). Indeed, we can assume that $\Sigma = I$ and the same lower bound applies.

For the case of identity testing for balanced product distributions, we remark that there is a known reduction \cite[Theorem 3.1]{CanonneKMUZ20} which reduces the problem of identity testing of Gaussians with identity covariance to uniformity testing of product distributions over $\{-1, 1\}^d$. The reduction also holds for reducing to identity testing of product distributions over $\{-1, 1\}^d$, where the null hypothesis distribution $\mathcal{P}_{\mu^*}$ satisfies $|\mu^*_i| \le 1-\Omega(1)$ for all $i \in [d]$, i.e., if the null hypothesis distribution is balanced. Because this reduction shows that identity testing of Gaussians is easier than identity testing of balanced product distributions, it implies that any lower bound for identity testing of Gaussians implies the same lower bound for identity testing of balanced product distributions.

Finally, we note that tolerant testing only expands the possible null hypotheses, so any lower bounds for standard identity testing also imply the same lower bounds for the tolerant version.

Hence, we have the following theorem, which encompasses all lower bounds.

\begin{theorem}
    Any ($0, \eps$)-private algorithm that can distinguish between $\mathcal{H}_0$ and $\mathcal{H}_1$ requires at least $\Omega\left(\frac{d^{1/2}}{\alpha^2}+\frac{d^{1/3}}{\alpha^{4/3} \cdot \eps^{2/3}} + \frac{1}{\alpha \cdot \eps}\right)$ samples, for the following choices of $\mathcal{H}_0$ and $\mathcal{H}_1$:
    \begin{enumerate}
        \item \textbf{(Theorem \ref{thm:cov_unknown}, lower bound)} $\mathcal{H}_0$ consists of $\mathcal{N}(\mu^*, \Sigma)$ over all covariance matrices with bounded spectral norm $\|\Sigma\|_2 \le 1$, and $\mathcal{H}_1$ consists of $\mathcal{N}(\mu, \Sigma)$ over all covariance matrices $\|\Sigma\|_2 \le 1$ and $\mu: \|\mu-\mu^*\| \ge \alpha$, where $\mu^* \in \BR^d$ is fixed.
        \item \textbf{(Theorem \ref{thm:prod}, lower bound)} $\mathcal{H}_0$ consists of the product distribution $\mathcal{P}^*$ over $\{-1, 1\}^d$ with mean $\mu^*$, and $\mathcal{H}_1$ consists of all product distributions $\mathcal{P}$ over $\{-1, 1\}^d$ such that $\TV(\mathcal{P}, \mathcal{P}^*) \ge \alpha$, where $\mu^* \in [-1/2, 1/2]^d$ is fixed.
        \item \textbf{(Theorem \ref{thm:tolerant}, lower bound for Gaussians)} $\mathcal{H}_0$ consists of the Gaussians $\mathcal{N}(\mu, \Sigma)$ over all $\mu$ such that the Mahalanobis distance $\sqrt{(\mu-\mu^*)^T \Sigma^{-1} (\mu-\mu^*)} \le \frac{\alpha}{2}$ and $\mathcal{H}_1$ consists of $\mathcal{N}(\mu, \Sigma)$ over all $\mu$ such that $\sqrt{(\mu-\mu^*)^T \Sigma^{-1} (\mu-\mu^*)} \ge \alpha$, where $\mu^* \in \BR^d$ and $\Sigma \in \BR^{d \times d}$ are fixed (and $\Sigma$ is positive definite).
        \item \textbf{(Theorem \ref{thm:tolerant}, lower bound for Products)} $\mathcal{H}_0$ consists of all Boolean product distributions $\mathcal{P}(\mu)$ such that $\TV(\mathcal{P}(\mu), \mathcal{P}(\mu^*)) \le \frac{\alpha}{C}$ and $\mathcal{H}_1$ consists of all $\mathcal{P}(\mu)$ such that  $\TV(\mathcal{P}(\mu), \mathcal{P}(\mu^*)) \ge \alpha$, where $\mu^* \in [-1/2, 1/2]^d$ is fixed and $C$ is a sufficiently large constant.
    \end{enumerate}
\end{theorem}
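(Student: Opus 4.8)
The plan is to derive all four lower bounds from Theorem~\ref{thm:lower}, the lower bound for identity testing of Gaussians with known identity covariance, together with the reductions assembled in Subsection~\ref{subsec:reductions}; no new coupling construction is needed. In each of the four cases I would exhibit a reduction showing that a hypothetical $(0,\eps)$-private tester for the stated $\mathcal{H}_0$ versus $\mathcal{H}_1$ yields, up to constant-factor changes in $\alpha$ and $\eps$, a $(0,\eps)$-private tester for the problem of Theorem~\ref{thm:lower}, which already requires $\Omega\bigl(\tfrac{d^{1/2}}{\alpha^2}+\tfrac{d^{1/3}}{\alpha^{4/3}\eps^{2/3}}+\tfrac{1}{\alpha\eps}\bigr)$ samples; since the target bound is invariant under replacing $\alpha$ and $\eps$ by constant multiples of themselves, these constant losses are harmless.

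For Part~1 (unknown but spectrally bounded covariance), after the WLOG reduction $\mu^* = 0$ the class $\mathcal{H}_0$ contains $\mathcal{N}(0,I)$ and $\mathcal{H}_1$ contains every $\mathcal{N}(\mu,I)$ with $\|\mu\|\ge\alpha$, since $\|I\|_2\le 1$. Hence any tester distinguishing $\mathcal{H}_0$ from $\mathcal{H}_1$ here is in particular a tester for the known-identity-covariance problem, and Theorem~\ref{thm:lower} applies verbatim. Part~3 (tolerant testing of Gaussians) is handled identically: with $\mu^* = 0$ and $\Sigma = I$ WLOG, the tolerant null $\{\mathcal{N}(\mu,I):\|\mu\|\le\alpha/2\}$ contains the single distribution $\mathcal{N}(0,I)$, so a tolerant tester is in particular a standard identity tester and Theorem~\ref{thm:lower} applies. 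Part~4 (tolerant testing of products) is the corresponding observation relative to Part~2: the tolerant product null $\{\mathcal{P}(\mu):\TV(\mathcal{P}(\mu),\mathcal{P}(\mu^*))\le\alpha/C\}$ contains $\mathcal{P}(\mu^*)$ itself, so a tolerant product tester is a standard balanced-product identity tester and the bound of Part~2 transfers.

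The only case with genuine content is Part~2 (balanced product distributions). For the uniformity case $\mu^* = 0$ I would invoke the Proposition in Subsection~\ref{subsec:reductions} (the rephrasing of \cite[Theorem~3.1]{CanonneKMUZ20}): there is a data-independent randomized map $F\colon\BR^d\to\{-1,1\}^d$ that sends $\mathcal{N}(0,I)$ to the uniform distribution and sends $\mathcal{N}(\mu,I)$ to a product distribution at total variation distance $\ge c\min(1,\|\mu\|)$ from uniform. Given an $(0,\eps)$-private uniformity tester $\mathcal{A}$, form $\mathcal{A}\circ F^{\otimes N}$, which applies $F$ independently to each of the $N$ Gaussian samples and then runs $\mathcal{A}$; this composition is $(0,\eps)$-private because, for every fixed realization of the internal randomness of $F$, the map $(X^{(1)},\dots,X^{(N)})\mapsto (F(X^{(1)}),\dots,F(X^{(N)}))$ sends neighboring datasets to neighboring datasets, so the privacy guarantee survives averaging over that randomness. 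Moreover $\mathcal{A}\circ F^{\otimes N}$ distinguishes $N$ samples from $\mathcal{N}(0,I)$ from $N$ samples from $\mathcal{N}(\mu,I)$ whenever $\|\mu\|\ge\alpha$, since then the induced product lies at distance $\ge c\alpha$ from uniform (using $\alpha\le 1$). This contradicts Theorem~\ref{thm:lower} (with $\alpha$ rescaled by a constant), establishing the bound for uniformity testing. For a general balanced $\mu^*\in[-1/2,1/2]^d$ I would either invoke the corresponding generalization of \cite[Theorem~3.1]{CanonneKMUZ20} --- a map sending $\mathcal{N}(0,I)$ to $\mathcal{P}(\mu^*)$ and far-away Gaussians to products far from $\mathcal{P}(\mu^*)$ --- or pass through Lemma~\ref{lem:wlog_product_mean_0} to reduce identity testing against $\mathcal{P}(\mu^*)$ to uniformity testing (up to a factor of $2$ in $\alpha$).

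The main obstacle is, candidly, only bookkeeping: tracking the constant-factor slippage in $\alpha$ and $\eps$ accumulated through composed reductions (the factor $c$ from the $c\min(1,\|\mu\|)$ guarantee, the factor $2$ from Lemma~\ref{lem:wlog_product_mean_0}, any amplification of $\eps$ from repetition), and checking in each reduction that pre-processing the data by a fixed, data-independent randomized map indeed maps neighboring datasets to neighboring datasets, so that $(0,\eps)$-privacy is preserved. Since none of these constant losses affect the stated asymptotic bound, once the reductions are written out there is no technical difficulty beyond assembling results already proved in the paper.
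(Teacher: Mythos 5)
Your proposal is correct and follows essentially the same route as the paper: Parts 1, 3, and 4 are immediate because the stated null/alternative pairs contain the known-identity-covariance instance (respectively, the non-tolerant instance) as a special case, and Part 2 uses the balanced-$\mu^*$ generalization of the Gaussian-to-product reduction of \cite[Theorem 3.1]{CanonneKMUZ20}, exactly as in the paper. One caveat: your fallback option for Part 2 via Lemma \ref{lem:wlog_product_mean_0} goes in the wrong direction --- that lemma reduces identity testing against $\mathcal{P}(\mu^*)$ \emph{to} uniformity testing, so it transfers upper bounds to general balanced $\mu^*$ but cannot transfer the uniformity lower bound; you should rely on your primary route (the generalized reduction from Gaussians directly to balanced products), which suffices.
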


\subsection{Upper Bounds: Unknown Covariance Case}

We next consider scenario of identity mean testing where the samples come from a Gaussian with unknown but bounded covariance. In other words, we want to $(\eps, 0)$-privately distinguish between $\mathcal{H}_0$, where we are given $N$ samples from $\mathcal{N}(\mu^*, \Sigma)$ for some $\|\Sigma\|_2 \le 1$, and $\mathcal{H}_1$, where we are given $N$ samples from $\mathcal{N}(\mu, \Sigma)$ for some $\|\Sigma\|_2 \le 1$ and $\|\mu-\mu^*\|_2 \ge \alpha$.

Specifically, we show the following theorem.
\begin{theorem}
    Let $\mu^* \in \BR^d$ be fixed, and let $\mathcal{H}_0$ consist of $\mathcal{N}(\mu^*, \Sigma)$ over all covariance matrices with bounded spectral norm $\|\Sigma\|_2 \le 1$, and $\mathcal{H}_1$ consist of $\mathcal{N}(\mu, \Sigma)$ over all covariance matrices $\|\Sigma\|_2 \le 1$ and $\mu: \|\mu-\mu^*\| \ge \alpha$. Then:
\begin{enumerate}
    \item \textbf{(Theorem \ref{thm:cov_unknown}, computationally inefficient case)} There exists a computationally inefficient algorithm that, using
\[N = \tilde{O}\left(\frac{d^{1/2}}{\alpha^2} + \frac{d^{1/3}}{\alpha^{4/3} \cdot \eps^{2/3}} + \frac{1}{\alpha \cdot \eps}\right)\]
    samples, can $(\eps, 0)$-privately distinguish between $\mathcal{H}_0$ and $\mathcal{H}_1$.
    \item \textbf{(Theorem \ref{thm:cov_unknown}, computationally efficient case)} There exists a computationally efficient algorithm that, using
\[N = \tilde{O}\left(\frac{d^{1/2}}{\alpha^2} + \frac{d^{1/4}}{\alpha \cdot \eps}\right)\]
    samples, can $(\eps, 0)$-privately distinguish between $\mathcal{H}_0$ and $\mathcal{H}_1$.
\end{enumerate}
\end{theorem}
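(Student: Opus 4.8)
The plan is to mimic the proofs of Theorems \ref{thm:fast_upper} and \ref{thm:slow_upper}, the only genuinely new issue being how to remove the dependence on the unknown trace $J := \Tr(\Sigma)$. First I would apply Proposition \ref{prop:wlog_bounded} together with Remark \ref{rmk:wlog_bounded} (plus a shift to make $\mu^* = 0$) to reduce, at the cost of a polylogarithmic factor in $N$ and an $O(1)$ factor in $\eps$, to distinguishing $\mathcal{H}_0$: $N$ i.i.d.\ samples from $\mathcal{N}(0,\Sigma)$, from $\mathcal{H}_1$: $N$ i.i.d.\ samples from $\mathcal{N}(\mu,\Sigma)$ with $\alpha \le \|\mu\| \le 2\alpha$, for unknown $\|\Sigma\|_2 \le 1$. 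Throughout, Theorem \ref{thm:concentration_2} plays the role that Theorem \ref{thm:concentration} played in the known-covariance arguments: with probability $0.99$ the matrix $\textbf{T}$ with $T_{i,j} = \langle X^{(i)}, X^{(j)}\rangle$ satisfies $|T_{i,j}| = \tilde{O}(\sqrt{d})$ for $i \ne j$, $T_{i,i} = J \pm \tilde{O}(\sqrt{d})$, each row/column sum equals $J \pm \tilde{O}(\sqrt{Nd} + \alpha N)$, each submatrix sum $\sum_{i,j \in S} T_{i,j}$ equals $|S|\, J \pm \tilde{O}(|S|\sqrt{|S|\,d} + |S|^2)$, and $\sum_{i,j} T_{i,j} = NJ + N^2\|\mu\|^2 \pm \tilde{O}(N\sqrt{d} + \alpha N\sqrt{N})$.

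For the efficient algorithm the key observation is that the offending $NJ$ offset lives entirely on the diagonal of $\textbf{T}$, since $\BE[T_{i,i}] = J + \|\mu\|^2$ while $\BE[T_{i,j}] = \|\mu\|^2$ for $i \ne j$. So, instead of subtracting the known quantity $d$ from each diagonal entry as in Theorem \ref{thm:fast_upper}, I would simply zero out the diagonal of $\textbf{T}$, divide by $R = \tilde{O}(\sqrt{d})$, and clip each entry to $[-1,1]$, obtaining a matrix $\textbf{V} \in [-1,1]^{N \times N}$. By Theorem \ref{thm:concentration_2}, under either hypothesis the sum of the entries of $\textbf{V}$ equals $\frac{N(N-1)\|\mu\|^2}{R} \pm \tilde{O}\!\left(\frac{N\sqrt{d} + \alpha N\sqrt{N}}{R}\right)$, with no $J$-dependence (the $J$'s cancel legitimately here because this is a statement about the true data distribution), the clipping is vacuous, and every row/column sum of $\textbf{V}$ is at most $L := \sqrt{N} + \alpha N/\sqrt{d}$ in absolute value. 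Applying the algorithm of Theorem \ref{thm:matrix} with $\gamma = \alpha^2/R$ and this $L$ then yields a $(0,O(\eps))$-DP tester: the privacy bullet of Theorem \ref{thm:matrix} only requires the input to lie in $[-1,1]^{N\times N}$, which the clipping guarantees for \emph{all} neighbouring datasets, and changing one sample alters only one row and one column of $\textbf{V}$. The sample-complexity bookkeeping is identical to that in the proof of Theorem \ref{thm:fast_upper}, giving $N = \tilde{O}\!\left(\frac{\sqrt{d}}{\alpha^2} + \frac{d^{1/4}}{\alpha\eps}\right)$ after converting to $(\eps,0)$-DP via Proposition \ref{prop:pure_approx}.

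For the inefficient algorithm the zeroing trick does not suffice, because the McShane--Whitney extension (Theorem \ref{thm:mcshane}) of a statistic $F : \mathcal{C} \to \BR$ requires $\mathcal{C}$ to be a fixed set that the (inefficient) algorithm can specify, whereas the natural concentration set depends on the unknown $J$. Instead I would draw $3N$ samples, keep group A $= \{X^{(1)},\dots,X^{(N)}\}$ and form from group B $= \{X^{(N+1)},\dots,X^{(3N)}\}$ the $N$ i.i.d.\ recentered samples $Z^{(i)} = (X^{(N+2i-1)} - X^{(N+2i)})/\sqrt{2} \overset{i.i.d.}{\sim} \mathcal{N}(0,\Sigma)$ (via the folklore identity, equivalently Corollary \ref{cor:sufficient_statistic}), and work with $\Phi(\textbf{X}) := \|\bar X_A\|^2 - \|\bar Z\|^2$, where $\bar X_A = \sum_{i \le N} X^{(i)}$ and $\bar Z = \sum_{i \le N} Z^{(i)}$. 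By Theorem \ref{thm:concentration_2} applied to each group, $\Phi(\textbf{X}) = N^2\|\mu\|^2 \pm \tilde{O}(N\sqrt{d} + \alpha N\sqrt{N})$: the $NJ$ from $\|\bar X_A\|^2$ is cancelled by the $NJ$ from $\|\bar Z\|^2$. I would then define $\mathcal{C}$ to be the set of $3N$-point datasets for which \emph{there exists} some $\tilde J \ge 0$ such that both the group-A matrix $\textbf{T}^A$ and the group-B matrix $\textbf{S}$ (with $S_{i,j} = \langle Z^{(i)}, Z^{(j)}\rangle$) satisfy the Theorem \ref{thm:concentration_2} bounds with $J$ replaced by this common $\tilde J$ — a description that does not mention $J$. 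Adapting Lemmas \ref{lem:lipschitz_1} and \ref{lem:lipschitz_2}: if $\textbf{X}, \textbf{X}' \in \mathcal{C}$ differ in $K$ samples with witnesses $\tilde J, \tilde J'$, then evaluating any diagonal entry on which the two group-A matrices agree forces $|\tilde J - \tilde J'| = \tilde{O}(\sqrt{d})$, and the expansion $\|\bar X\|^2 - \|\bar X'\|^2 = 2\langle W, \bar X\rangle - 2\langle W', \bar X'\rangle - (\|W\|^2 - \|W'\|^2)$ (with $W, W'$ the sums of the changed rows) now has all $\tilde J$-terms cancel up to $\tilde{O}(K\sqrt{d})$ slack, yielding $|\Phi(\textbf{X}) - \Phi(\textbf{X}')| \le \tilde{O}\!\left(K(\sqrt{Nd} + \alpha N + 1/\eps)\right)$ — the same Lipschitz bound as Lemma \ref{lem:lipschitz_1}. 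Clipping $\Phi$ to $[0, \alpha^2 N^2]$ makes it $D$-Lipschitz on $\mathcal{C}$ with $D = \frac{\eps}{C}\alpha^2 N^2$, so the McShane--Whitney extension plus noise $Lap(D/\eps)$ plus a threshold at $\alpha^2 N^2/2$ gives an $(\eps,0)$-DP tester with $N = \tilde{O}\!\left(\frac{\sqrt{d}}{\alpha^2} + \frac{d^{1/3}}{\alpha^{4/3}\eps^{2/3}} + \frac{1}{\alpha\eps}\right)$, exactly as in Theorem \ref{thm:slow_upper}.

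The main obstacle is precisely this bookkeeping around the unknown $J$ in the inefficient case: $\mathcal{C}$ must be specifiable without $J$ yet rich enough both to contain the true samples with probability $0.99$ and to support the per-$K$-rows Lipschitz bound for \emph{all} pairs in $\mathcal{C}$, not only adjacent ones — the same pitfall identified in Section \ref{sec:slow} regarding \cite{CanonneKMUZ20}. The existentially quantified common $\tilde J$ across the two matched groups is what makes both requirements hold simultaneously; once that is set up, everything else is a routine transcription of the known-covariance arguments with Theorem \ref{thm:concentration_2} in place of Theorem \ref{thm:concentration}, using the fact that $\|\Sigma\|_2 \le 1$ and $\|\mu\|^2 \le 1$ to absorb the lower-order terms.
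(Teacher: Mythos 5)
Your proposal is correct and achieves the stated bounds, but you handle the nuisance trace $J = \Tr(\Sigma)$ differently from the paper in both halves. The paper's single device for both algorithms is to spend $3N$ samples, form $W^{(i)} = (Y^{(i)}-Z^{(i)})/\sqrt{2} \sim \mathcal{N}(0,\Sigma)$, and work with the entrywise difference $U_{i,j} = \langle X^{(i)},X^{(j)}\rangle - \langle W^{(i)},W^{(j)}\rangle$: since $J$ cancels entry-by-entry, row-by-row, and submatrix-by-submatrix, adding $d$ back to the diagonal of $\textbf{U}$ produces a matrix obeying exactly the known-covariance bounds, so the set $\mathcal{C}$ is $J$-free for free and Sections \ref{sec:slow} and \ref{sec:fast} apply verbatim. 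Your efficient algorithm instead zeroes the diagonal of the single Gram matrix; this is legitimate and arguably cleaner, since the $NJ$ offset lives entirely on the diagonal, the lost $N\|\mu\|^2$ of signal is negligible, the $J$ in each row-sum bound of Theorem \ref{thm:concentration_2} cancels against the removed diagonal entry, and you save the factor-$3$ sample overhead. Your inefficient algorithm uses the same statistic as the paper ($\Phi = \|\bar X_A\|^2 - \|\bar Z\|^2$ is precisely the sum of the entries of $\textbf{U}$) but a different $\mathcal{C}$, defined via an existentially quantified common witness $\tilde J$ for the two Gram matrices; your pinning argument that a shared diagonal entry forces $|\tilde J - \tilde J'| \le \tilde{O}(\sqrt{d})$ is valid in the only regime where the per-$K$ Lipschitz bound is needed ($K \le C/\eps \le N$, larger $K$ being absorbed by the clipping), but it is an extra moving part the paper avoids by cancelling $J$ inside $\textbf{U}$ before defining $\mathcal{C}$. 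A minor point in your favour: you explicitly clip the entries of $\textbf{V}$ to $[-1,1]$ before invoking Theorem \ref{thm:matrix}, which worst-case privacy requires and which the paper leaves implicit.
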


Recall (by Proposition \ref{prop:wlog_bounded} and Remark \ref{rmk:wlog_bounded}) that we may think of $\mathcal{H}_1$ as being given $N$ i.i.d. samples from $\mathcal{N}(\mu, \Sigma)$ for some $\|\Sigma\|_2 \le 1$ and $\alpha \le \|\mu-\mu^*\|_2 \le 2 \alpha$. Also, by shifting, we assume WLOG that $\mu^* = 0$, but since $\Sigma$ is unknown, we may not assume WLOG that $\Sigma = I$. Instead, we note that if we sample $Y, Z \overset{i.i.d.}{\sim} \mathcal{N}(\mu, \Sigma)$, then even if $\mu$ is unknown, $W := \frac{Y-Z}{\sqrt{2}} \sim \mathcal{N}(0, \Sigma)$. So, by using $3N$ samples, we can generate $X^{(1)}, \dots, X^{(N)} \overset{i.i.d.}{\sim} \mathcal{N}(\mu, \Sigma)$, and $W^{(1)}, \dots, W^{(N)} \overset{i.i.d.}{\sim} \mathcal{N}(0, \Sigma)$. We start by considering the matrix $\textbf{U}$ where $U_{i, j} = \langle X^{(i)}, X^{(j)} \rangle - \langle W^{(i)}, W^{(j)} \rangle$. Assuming that $\|\mu\| \le 2 \alpha$ where $\alpha \le \frac{1}{2}$, we can apply Theorem \ref{thm:concentration_2} for some appropriate choice of $L = \poly \log(N)$ to obtain that with probability at least $0.98$,
\begin{itemize}
    \item $\sum_{i = 1}^{N} \sum_{j = 1}^{N} U_{i, j} = N^2 \|\mu\|^2 \pm O(N \sqrt{d}+\alpha N \sqrt{N})$.
    \item For all $i, j \in [N]$ (including if $i = j$), $U_{i, j} = \pm L\cdot \sqrt{d}$.
    \item For all $i$, $\sum_{j = 1}^{N} U_{i,j} = \pm L(\sqrt{Nd}+\alpha \cdot N)$, and for all $j$, $\sum_{i = 1}^{N} U_{i,j} = \pm L(\sqrt{Nd}+\alpha \cdot N)$.
    \item For all subsets $S \subset [N]$, if $|S| = K$, then $\sum_{i, j\in S} U_{i, j} = K \cdot L \cdot (\sqrt{Kd}+K)$.
\end{itemize}

Then, if we let $\textbf{T}$ be the matrix where add $d$ to every diagonal entry in $\textbf{U}$, we have that with probability at least $0.98$,
\begin{itemize}
    \item $\sum_{i = 1}^{N} \sum_{j = 1}^{N} T_{i, j} = Nd + N^2 \|\mu\|^2 \pm O(N \sqrt{d}+\alpha N \sqrt{N})$.
    \item For all $i \neq j$, $T_{i, j} = \pm L\cdot \sqrt{d}$, and for all $i$, $V_{i, i} = d \pm L \cdot \sqrt{d}$.
    \item For all $i$, $\sum_{j = 1}^{N} T_{i,j} = d \pm L(\sqrt{Nd}+\alpha \cdot N)$, and for all $j$, $\sum_{i = 1}^{N} T_{i,j} = d \pm L(\sqrt{Nd}+\alpha \cdot N)$.
    \item For all subsets $S \subset [N]$, if $|S| = K$, then $\sum_{i, j\in S} T_{i, j} = K \cdot \left[d \pm L \cdot (\sqrt{Kd}+K)\right]$.
\end{itemize}
These are precisely the bounds we used for the matrix to distinguish between $\mu = 0$ and $\alpha \le \|\mu\|_2 \le 2 \alpha$ in Section \ref{sec:slow}. Hence, we can apply the same algorithm using this matrix $\textbf{T}$ instead. Indeed, we define $\mathcal{C}$ to be the set of datasets $(\textbf{X}, \textbf{Y}, \textbf{Z}) \in (\BR^d)^{3N}$ that satisfy the second, third, and fourth properties above. Then, the same proof as in Lemmas \ref{lem:lipschitz_1} and \ref{lem:lipschitz_2} imply that for any datasets $(\textbf{X}, \textbf{Y}, \textbf{Z}) \in \mathcal{C}$ and $(\textbf{X}', \textbf{Y}', \textbf{Z}') \in \mathcal{C}$ that differ in $K$ rows, $|\tilde{T}(\textbf{X}, \textbf{Y}, \textbf{Z})-\tilde{T}(\textbf{X}', \textbf{Y}', \textbf{Z}')| \le K \cdot D$, assuming that $\Delta := 6L\left(\sqrt{Nd}+\alpha N + \frac{C}{\eps}\right) \le \frac{\eps}{C} \cdot \alpha^2 \cdot N^2 =: D$. Here, $\tilde{T}(\textbf{X}, \textbf{Y}, \textbf{Z})$ is the statistic that adds up the entries of the corresponding matrix $\textbf{T} \in \BR^{N \times N}$, subtracts $Nd$, and clips the result by $0$ from below and $\alpha^2 \cdot N^2$ from above.

Therefore, we can create our $D$-Lipschitz extension $\hat{T}$ of $\tilde{T}$ that sends all datasets $(\textbf{X}, \textbf{Y}, \textbf{Z}) \in (\BR^d)^{3N}$ to $\BR$. Finally, because we also have that the sum of all entries in $\textbf{T}$ equals $Nd + N^2 \|\mu\|^2 \pm O(N \sqrt{d}+\alpha N \sqrt{N})$ with probability at least $0.98$ if each $X^{(i)}, Y^{(i)}, Z^{(i)}$ is drawn i.i.d. from $\mathcal{N}(\mu, I)$, we can apply the same algorithm as in Theorem \ref{thm:slow_upper} to obtain a private algorithm that succeeds with high probability using the same number of samples, up to a factor of $3$. This completes the proof in the computationally inefficient case.

\medskip

For the computationally efficient case, we let $\textbf{V}$ be the matrix where we divide each entry of $\textbf{U}$ by $R$, where $R = \tilde{O}(\sqrt{d})$ is a sufficiently large polylogarithmic multiple of $\sqrt{d}.$ Then, with probability at least $0.98$,
\begin{itemize}
    \item $\sum_{i = 1}^{N} \sum_{j = 1}^{N} V_{i, j} = \frac{N^2 \|\mu\|^2}{R} \pm O(N \sqrt{d}+\alpha N \sqrt{N})$.
    \item For all $i, j \in [N]$, $V_{i, j} \in [-1, 1]$.
    \item For all $i$, $\sum_{j = 1}^{N} V_{i,j} = \pm (\sqrt{N}+\alpha \cdot \frac{N}{\sqrt{d}})$, and for all $j$, $\sum_{i = 1}^{N} V_{i,j} = \pm L(\sqrt{N}+\alpha \cdot \frac{N}{\sqrt{d}})$.
\end{itemize}
These are the same guarantees as in Corollary \ref{cor:concentration_fast_upper}.
In addition, we again have that if we change exactly one entry among $(\textbf{X}, \textbf{Y}, \textbf{Z})$, this changes at most one row and one column. Because of this, we can apply Theorem \ref{thm:matrix} in the same way that we prove Theorem \ref{thm:fast_upper} to complete the proof in the computationally efficient case.

\subsection{Upper Bounds: Product Distributions}

We establish the following theorem.
\begin{theorem} \label{thm:prod_upper}
    Let $\mu^* \in [-1/2, 1/2]^d$ be fixed, and let $\mathcal{H}_0$ consist of $\mathcal{P}(\mu^*)$, and $\mathcal{H}_1$ consist of $\mathcal{P}(\mu)$ over all $\mu$ with $\TV(\mathcal{P}(\mu), \mathcal{P}(\mu^*)) \ge \alpha$. Then:
\begin{enumerate}
    \item \textbf{(Theorem \ref{thm:prod}, computationally inefficient case)} There exists a computationally inefficient algorithm that, using
\[N = \tilde{O}\left(\frac{d^{1/2}}{\alpha^2} + \frac{d^{1/3}}{\alpha^{4/3} \cdot \eps^{2/3}} + \frac{1}{\alpha \cdot \eps}\right)\]
    samples, can $(\eps, 0)$-privately distinguish between $\mathcal{H}_0$ and $\mathcal{H}_1$.
    \item \textbf{(Theorem \ref{thm:prod}, computationally efficient case)} There exists a computationally efficient algorithm that, using
\[N = \tilde{O}\left(\frac{d^{1/2}}{\alpha^2} + \frac{d^{1/4}}{\alpha \cdot \eps}\right)\]
    samples, can $(\eps, 0)$-privately distinguish between $\mathcal{H}_0$ and $\mathcal{H}_1$.
\end{enumerate}
\end{theorem}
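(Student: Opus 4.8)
The plan is to prove Theorem~\ref{thm:prod_upper} by mirroring the Gaussian arguments of Theorems~\ref{thm:slow_upper} and~\ref{thm:fast_upper} essentially verbatim, since all the concentration machinery required has already been assembled for product distributions in Theorem~\ref{thm:concentration}. First I would reduce to a clean form of the problem: by Proposition~\ref{prop:mean_tv}, distinguishing $\mathcal{P}(\mu^*)$ from $\mathcal{P}(\mu)$ with $\TV(\mathcal{P}(\mu),\mathcal{P}(\mu^*)) \ge \alpha$ is equivalent, up to a constant factor in $\alpha$, to distinguishing $\mu = \mu^*$ from $\|\mu - \mu^*\| \ge \Theta(\alpha)$, using the hypothesis $\mu^* \in [-1/2,1/2]^d$. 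Then by Lemma~\ref{lem:wlog_product_mean_0} I may assume WLOG that $\mu^* = 0$, replacing each sample $X$ by $f_{\mu^*}(X) \sim \mathcal{P}((\mu-\mu^*)/2)$ and absorbing the factor $2$ into $\alpha$; and by Proposition~\ref{prop:wlog_bounded} together with Remark~\ref{rmk:wlog_bounded} I may further assume the alternative hypothesis is $\alpha \le \|\mu\| \le 2\alpha$. Throughout, $\mu \in [-1,1]^d$ with $\|\mu\| \le 2\alpha \le 1$, so Theorem~\ref{thm:concentration} applies.

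With this setup, I form the matrix $\textbf{T} \in \BR^{N\times N}$ with $T_{i,j} = \langle X^{(i)}, X^{(j)}\rangle$. The \emph{crucial} point is that Theorem~\ref{thm:concentration} was stated and proved for samples drawn i.i.d.\ from either $\mathcal{N}(\mu, I)$ or $\mathcal{P}(\mu)$, so in the product case $\textbf{T}$ obeys exactly the same four guarantees as in the Gaussian case: on its total sum ($Nd + N^2\|\mu\|^2 \pm O(N\sqrt d + \alpha N \sqrt N)$), on individual entries ($T_{i,i}=d$ exactly and $T_{i,j} = \pm\tilde O(\sqrt d)$), on row/column sums ($d \pm \tilde O(\sqrt{Nd}+\alpha N)$), and on sums over all size-$K$ submatrices ($Kd \pm \tilde O(K\sqrt{Kd}+K^2)$). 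These are the only properties of $\textbf{T}$ that the two Gaussian upper-bound proofs ever invoke.

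For the computationally inefficient algorithm, I would take $\mathcal{C} \subset (\BR^d)^N$ (after the $f_{\mu^*}$ preprocessing) to be the set of datasets whose matrix $\textbf{T}$ satisfies Properties 2--4 of Theorem~\ref{thm:concentration} with a polylogarithmic slack $L$, exactly as in Section~\ref{sec:slow}. The proofs of Lemmas~\ref{lem:lipschitz_1} and~\ref{lem:lipschitz_2} use only the combinatorial bounds on the entries, row/column sums, and submatrix sums of $\textbf{T}$ and never the distributional origin of the $X^{(i)}$, so they transfer unchanged; the McShane--Whitney extension (Theorem~\ref{thm:mcshane}) then produces a $D$-Lipschitz statistic $\hat T$ with $D = \frac{\eps}{C}\alpha^2 N^2$, and releasing $\hat T(\textbf{X}) + Lap(D/\eps)$ thresholded at $\alpha^2 N^2/2$ gives $(\eps,0)$-privacy and accuracy for $N = \tilde O(d^{1/2}/\alpha^2 + d^{1/3}/(\alpha^{4/3}\eps^{2/3}) + 1/(\alpha\eps))$, exactly as in the proof of Theorem~\ref{thm:slow_upper}. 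For the computationally efficient algorithm, I would rescale to $\textbf{V}$ with $V_{i,i} = (T_{i,i}-d)/R$ and $V_{i,j} = T_{i,j}/R$ for $R = \tilde O(\sqrt d)$ a sufficiently large polylogarithmic multiple of $\sqrt d$, obtaining precisely the guarantees of Corollary~\ref{cor:concentration_fast_upper} (entries in $[-1,1]$, row/column sums bounded by $\sqrt N + \alpha N/\sqrt d$ in absolute value, total sum $\approx N^2\|\mu\|^2/R$), and then apply Theorem~\ref{thm:matrix} with $\gamma = \alpha^2/R$ and $L = \sqrt N + \alpha N/\sqrt d$, since changing one sample changes one row and one column of $\textbf{V}$; this reproduces the proof of Theorem~\ref{thm:fast_upper} and yields $N = \tilde O(d^{1/2}/\alpha^2 + d^{1/4}/(\alpha\eps))$.

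The main ``obstacle'' here is not a new difficulty but a bookkeeping check: one must confirm that no step of the Gaussian arguments secretly relies on Gaussianity beyond what Theorem~\ref{thm:concentration} already furnishes. The Lipschitz lemmas and the matrix-testing algorithm are purely about the entries of $\textbf{T}$ (resp.\ $\textbf{V}$); the privacy guarantees (Lipschitz extension, Laplace mechanism, composition via Proposition~\ref{prop:pure_approx}) are distribution-agnostic; and accuracy reduces entirely to the concentration theorem. So the only substantive input specific to products is that all four concentration bounds hold for $\mathcal{P}(\mu)$ with $\mu \in [-1,1]^d$ and $\|\mu\| \le 2\alpha$, which is exactly Lemmas~\ref{bound:total_product}, \ref{bound:entry_product}, \ref{bound:lem_row}, and~\ref{bound:product_submatrix}, already combined into Theorem~\ref{thm:concentration}. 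Finally, the tolerant version (Theorem~\ref{thm:tolerant} for products) follows from the same observation as in the Gaussian case: by Proposition~\ref{prop:mean_tv} and the reduction of Remark~\ref{rmk:wlog_bounded} it suffices to separate $\|\mu\| \le \alpha/C'$ from $\|\mu\| \ge \alpha$, and since the expected total sum of $\textbf{T}$ still differs by $\Omega(\alpha^2 N^2)$ between these two regimes once $C'$ is a large enough constant, the identical algorithm works with only the acceptance threshold shifted.
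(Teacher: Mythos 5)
Your proposal is correct and follows essentially the same route as the paper: reduce via Proposition~\ref{prop:mean_tv}, Lemma~\ref{lem:wlog_product_mean_0}, and Remark~\ref{rmk:wlog_bounded} to distinguishing $\mu=0$ from $\alpha\le\|\mu\|\le 2\alpha$, then observe that Theorem~\ref{thm:concentration} already supplies all four concentration properties of $\textbf{T}$ for product distributions, so the Lipschitz-extension argument of Section~\ref{sec:slow} and the matrix-testing algorithm of Theorem~\ref{thm:matrix} carry over verbatim. The bookkeeping check you flag (that nothing in the Gaussian proofs uses Gaussianity beyond Theorem~\ref{thm:concentration}) is exactly the point the paper makes.
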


The product distribution case will be remarkably similar to the identity-covariance Gaussian case. First, by Lemma \ref{lem:wlog_product_mean_0}, we may assume WLOG that $\mu^* = 0$. In addition, we can use Proposition \ref{prop:mean_tv} to state that for any distribution $\mu$ with $\TV(\mathcal{P}(\mu), \mathcal{P}(\mu^*)) \ge \alpha$, then $\|\mu\|_2 \ge \frac{\alpha}{C_0}$ for some fixed absolute constant $C_0$. So, it suffices to distinguish between the null hypothesis of $\mu = \mu^* = 0$ and $\|\mu\| \ge \frac{\alpha}{C_0}$. We can replace $\frac{\alpha}{C_0}$ with $\alpha$, up to a constant factor loss in our sample complexity. 
We can again apply Remark \ref{rmk:wlog_bounded} to say that $\mathcal{H}_1$ is being given $N$ i.i.d. samples from $\mathcal{P}(\mu)$, where $\alpha \le \|\mu\| \le 2 \alpha$.

Now, we note that Theorem \ref{thm:concentration} works for both identity-covariance Gaussians and product distributions. Because of this, we obtain the same concentration bounds as in the known identity covariance case. Hence, for the computationally inefficient case, Lemmas \ref{lem:lipschitz_1} and \ref{lem:lipschitz_2} can be trivially adapted to the product case, as can Theorem \ref{thm:slow_upper}. Likewise, in the computationally efficient case, we can generate the matrix $\textbf{V}$ and apply Theorem \ref{thm:matrix} in the same way as we did in the identity covariance case.

\subsection{Upper Bounds: Tolerant Testing}

We start with the known-covariance Gaussian case. Specifically, we establish the following theorem.
\begin{theorem} \label{thm:tolerant_upper}
    Let $\mu^* \in \BR^d$ and $\Sigma \in \BR^{d \times d}$ be fixed (with $\Sigma$ positive definite). Let $\mathcal{H}_0$ consist of $\mathcal{N}(\mu, \Sigma)$ over all $\mu$ with $\sqrt{(\mu-\mu^*)^T \Sigma^{-1} (\mu-\mu^*)} \le \frac{\alpha}{2}$, and $\mathcal{H}_1$ consist of $\mathcal{N}(\mu, \Sigma)$ over all $\mu$ with $\sqrt{(\mu-\mu^*)^T \Sigma^{-1} (\mu-\mu^*)} \ge \alpha$. Then:
\begin{enumerate}
    \item \textbf{(Theorem \ref{thm:tolerant}, computationally inefficient Gaussian case)} There exists a computationally inefficient algorithm that, using
\[N = \tilde{O}\left(\frac{d^{1/2}}{\alpha^2} + \frac{d^{1/3}}{\alpha^{4/3} \cdot \eps^{2/3}} + \frac{1}{\alpha \cdot \eps}\right)\]
    samples, can $(\eps, 0)$-privately distinguish between $\mathcal{H}_0$ and $\mathcal{H}_1$.
    \item \textbf{(Theorem \ref{thm:tolerant}, computationally efficient Gaussian case)} There exists a computationally efficient algorithm that, using
\[N = \tilde{O}\left(\frac{d^{1/2}}{\alpha^2} + \frac{d^{1/4}}{\alpha \cdot \eps}\right)\]
    samples, can $(\eps, 0)$-privately distinguish between $\mathcal{H}_0$ and $\mathcal{H}_1$.
\end{enumerate}
\end{theorem}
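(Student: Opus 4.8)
The plan is to observe that tolerant testing in this setting is \emph{not} harder than ordinary identity testing, and to rerun the algorithms of Sections~\ref{sec:slow} and~\ref{sec:fast} essentially verbatim, only shifting the decision threshold. First I would reduce to a clean form: by shifting and applying the linear map $\Sigma^{-1/2}$ (as in Subsection~\ref{subsec:reductions}), assume $\mu^* = 0$ and $\Sigma = I$, so the Mahalanobis distance $\sqrt{(\mu-\mu^*)^T\Sigma^{-1}(\mu-\mu^*)}$ becomes $\|\mu\|$; then $\mathcal{H}_0$ is ``$\|\mu\| \le \alpha/2$'' and $\mathcal{H}_1$ is ``$\|\mu\| \ge \alpha$''. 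Next, Proposition~\ref{prop:wlog_bounded} together with its tolerant extension in Remark~\ref{rmk:wlog_bounded} (the rescaling $\mu \mapsto \mu/2^{t-1}$ only shrinks $\|\mu\|$, hence preserves $\|\mu\| \le \alpha/2$) lets me assume, at the cost of a polylogarithmic factor in $N$, that the alternative is $\alpha \le \|\mu\| \le 2\alpha$ and the null is $\|\mu\| \le \alpha/2$.

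The crucial point is that the constant-factor gap persists. By Lemma~\ref{bound:total_gaussian}, $\BE[\|\bar X\|^2] = Nd + N^2\|\mu\|^2$ for $\bar X = \sum_i X^{(i)}$, so under $\mathcal{H}_0$ we have $\|\bar X\|^2 \le Nd + \tfrac{\alpha^2}{4}N^2 \pm O(N\sqrt d + \alpha N\sqrt N)$ and under $\mathcal{H}_1$ we have $\|\bar X\|^2 \ge Nd + \alpha^2 N^2 \pm O(N\sqrt d + \alpha N\sqrt N)$, where the fluctuations come from Theorem~\ref{thm:concentration}, which applies verbatim since $\|\mu\| \le 2\alpha$ in both cases. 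The separation $\tfrac{3}{4}\alpha^2 N^2$ is still $\Omega(\alpha^2 N^2)$, exactly as in the $\|\mu\| = 0$ versus $\|\mu\| = \alpha$ case, and dominates the $O(N\sqrt d + \alpha N\sqrt N)$ error once $N = \tilde{\Omega}(\sqrt d/\alpha^2)$.

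For the inefficient bound I would reuse Section~\ref{sec:slow} unchanged: define $\mathcal{C}$ by the concentration bounds of Theorem~\ref{thm:concentration} (which hold w.h.p.\ for any $\|\mu\| \le 2\alpha$, hence under both hypotheses), take $\tilde T(\textbf{X}) = \max(0, \min(\|\bar X\|^2 - Nd,\, \alpha^2 N^2))$, and invoke Lemmas~\ref{lem:lipschitz_1} and~\ref{lem:lipschitz_2} to obtain the $D$-Lipschitz extension $\hat T$ with $D = \tfrac{\eps}{C}\alpha^2 N^2$; the privacy analysis is identical. The only change is the decision rule: compare $\hat T(\textbf{X}) + \mathrm{Lap}(D/\eps)$ against $\tfrac{5}{8}\alpha^2 N^2$ instead of $\tfrac12\alpha^2 N^2$, which lies strictly between $\tfrac{\alpha^2}{4}N^2$ and $\alpha^2 N^2$ with margin $\Omega(\alpha^2 N^2)$ on each side, enough to absorb both the $O(N\sqrt d + \alpha N\sqrt N)$ fluctuation in $\tilde T$ and the $O(D/\eps) = O(\alpha^2 N^2/C)$ Laplace noise. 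This yields the same bound $\tilde O(\sqrt d/\alpha^2 + d^{1/3}/(\alpha^{4/3}\eps^{2/3}) + 1/(\alpha\eps))$. For the efficient bound I would mirror Section~\ref{sec:fast}: form $\textbf{V}$ with $V_{i,i} = (T_{i,i}-d)/R$ and $V_{i,j} = T_{i,j}/R$ for $R = \tilde O(\sqrt d)$, so that by Corollary~\ref{cor:concentration_fast_upper} all entries lie in $[-1,1]$, all row/column sums are at most $L := \sqrt N + \alpha N/\sqrt d$ in magnitude, and the total sum is $\tfrac{N^2\|\mu\|^2}{R} \pm O\big(\tfrac{N\sqrt d + \alpha N\sqrt N}{R}\big)$ — at most $\tfrac{\alpha^2 N^2}{4R} + O(\cdot)$ under $\mathcal{H}_0$ and at least $\tfrac{\alpha^2 N^2}{R} - O(\cdot)$ under $\mathcal{H}_1$. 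I then apply Theorem~\ref{thm:matrix} with this $L$ and $\gamma$ a suitable constant multiple of $\alpha^2/R$ (e.g.\ $\gamma = \tfrac{7\alpha^2}{6R}$), chosen so these two sums straddle $[\tfrac{\gamma N^2}{4},\, \tfrac{3\gamma N^2}{4}]$; this is possible exactly when $\alpha^2 N \ge C(\sqrt d + \alpha\sqrt N)$, the same constraint as before. Since altering one sample alters one row and one column of $\textbf{V}$, Theorem~\ref{thm:matrix} gives $(0, 4\eps)$-privacy, which Proposition~\ref{prop:pure_approx} upgrades to $(O(\eps),0)$-DP with sample complexity $\tilde O(\sqrt d/\alpha^2 + d^{1/4}/(\alpha\eps))$.

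I do not expect a genuine obstacle: the whole point is that the squared-norm statistic separates the tolerant hypotheses by the same $\Omega(\alpha^2 N^2)$ margin. The only care needed is bookkeeping — confirming (as Remark~\ref{rmk:wlog_bounded} already records) that the reduction of Proposition~\ref{prop:wlog_bounded} survives replacing a point null by a Mahalanobis ball, and repositioning the threshold (and, efficiently, the parameter $\gamma$) so that the now-nonzero null value $\le \tfrac{\alpha^2}{4}N^2$ stays separated from the alternative value $\ge \alpha^2 N^2$ by a margin dominating both the concentration error and the added noise, which it does under the unchanged sample bound.
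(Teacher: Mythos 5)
Your proposal is correct and follows essentially the same route as the paper: reduce to $\mu^*=0$, $\Sigma=I$ with the alternative bounded as $\alpha \le \|\mu\| \le 2\alpha$ via Proposition~\ref{prop:wlog_bounded} and Remark~\ref{rmk:wlog_bounded}, observe that the squared-norm statistic still separates the hypotheses by $\Omega(\alpha^2 N^2)$, and rerun the algorithms of Sections~\ref{sec:slow} and~\ref{sec:fast} with only the acceptance region adjusted. The sole (cosmetic) differences are that the paper keeps the inefficient threshold at $\tfrac{1}{2}\alpha^2 N^2$ and handles the efficient case by nudging the constant $\tfrac{\gamma N^2}{4}$ inside Theorem~\ref{thm:matrix} to $\tfrac{1.01\,\gamma N^2}{4}$, whereas you move the threshold to $\tfrac{5}{8}\alpha^2 N^2$ and rescale $\gamma$ so that Theorem~\ref{thm:matrix} applies verbatim — both choices work.
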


We may assume WLOG that $\mu^* = 0$ and $\Sigma = I$. First, we recall Lemma \ref{bound:total_gaussian}, which tells us that if $X^{(1)}, \dots, X^{(N)} \overset{i.i.d.}{\sim} \mathcal{N}(\mu, I)$ and $\bar{X} = X^{(1)}+\cdots+X^{(N)}$, then if $\|\mu\| \le \frac{\alpha}{2}$ then $\BE[\|\bar{X}\|^2] \le Nd + N^2 \cdot \frac{\alpha^2}{4}$ and if $\|\mu\| \ge \alpha$ then $\BE[\|\bar{X}\|^2] \ge Nd + N^2 \cdot \alpha^2$. In addition, assuming $\|\mu\| \le 2 \alpha$ and $\alpha \le 1$, then $Var[\|\bar{X}\|^2] = O(N^2 d + N^3 \alpha^2).$

Indeed, we will see that the same algorithms as in Theorem \ref{thm:slow_upper} and Theorem \ref{thm:fast_upper} will exactly apply in the tolerant case. Of course, the algorithms will still be $(\eps, 0)$-differentially private and will still output $1$ with probability at least $2/3$ if the data is drawn i.i.d. from $\mathcal{N}(\mu, I)$ with $\alpha \le \|\mu\| \le 2 \alpha$. So, it suffices to show that the algorithm outputs $0$ with probability at least $2/3$ even if the data is drawn i.i.d. from $\mathcal{N}(\mu, I)$ with $\|\mu\| \le \frac{\alpha}{2}$.

To see why this holds in the computationally inefficient case, we recall that if $X^{(1)}, \dots, X^{(N)} \overset{i.i.d.}{\sim} \mathcal{N}(\mu, I)$ for some $\|\mu\| \le \frac{\alpha}{2}$, then with probability at least $0.99$, $\textbf{X} = (X^{(1)}, \dots, X^{(N)}) \in \mathcal{C}$. This implies that with probability at least 0.98, $\hat{T}(\textbf{X}) \le N^2 \cdot \|\mu\|^2 + C(N \sqrt{d}+\alpha N \sqrt{N})$ for some sufficiently large constant $C$ by Chebyshev's inequality. So, with probability at least $0.9$, we have our final private estimate $\hat{T}(\textbf{X}) + Lap(D/\eps) \le N^2 \cdot \frac{\alpha^2}{4} + C(N \sqrt{d}+\alpha N \sqrt{N}) + 10 \cdot \frac{D}{\eps}$. Recalling that $D = \frac{\eps}{C} \cdot \alpha^2 N^2$, this is less than $N^2 \cdot \frac{\alpha^2}{2}$ as long as 
\[N \ge \Omega\left(\frac{\sqrt{d}}{\alpha^2}\right).\]
But recall that we also needed $\Delta = \tilde{O}(\sqrt{Nd}+\alpha N + \frac{C}{\eps}) \le D$, and this holds as long as $$N \ge \tilde{O}\left(\frac{d^{1/3}}{\alpha^{4/3} \cdot \eps^{2/3}} + \frac{1}{\alpha \cdot \eps}\right).$$
Hence, we still obtain the same number of samples required for our inefficient upper bound.

In the computationally efficient case, we note that with probability at least $0.99$, the matrix $\textbf{V}$ generated from $X^{(1)}, \dots, X^{(N)} \overset{i.i.d.}{\sim} \mathcal{N}(\mu, I)$ has the sum of all entries equal to $\frac{N^2 \cdot \|\mu\|^2}{R} \pm O\left(\frac{N \sqrt{d}+\alpha N \sqrt{N}}{R}\right)$ by Corollary \ref{cor:concentration_fast_upper}, where $R = \tilde{O}(\sqrt{d})$. Recalling that $\|\mu\| \le \frac{\alpha}{2}$ and $\gamma = \frac{\alpha^2}{R},$ this is at most $\frac{\gamma \cdot N^2}{4} + O\left(\frac{N \sqrt{d}+\alpha N \sqrt{N}}{R}\right) \le \frac{1.01 \cdot \gamma \cdot N^2}{4},$ assuming that $N \ge C\left(\frac{\sqrt{d}}{\alpha^2}\right)$ for some sufficiently large constant $C$. Hence, we can apply Theorem \ref{thm:matrix} (with a slight modification to replace $\frac{\gamma \cdot N^2}{4}$ with $\frac{1.01 \cdot \gamma \cdot N^2}{4}$, for which the proof is nearly unaffected) to get that as long as $\gamma \cdot N^2 \ge \tilde{O}\left(\frac{L}{\eps}+\frac{1}{\eps^2}\right)$, the algorithm works, where $\gamma = \frac{\alpha^2}{\tilde{O}(\sqrt{d})}$ and $L = \sqrt{N}+\alpha \cdot \frac{N}{\sqrt{d}}$. As in the proof of Theorem \ref{thm:fast_upper}, it suffices for $N \ge \tilde{\Omega}\left(\frac{\sqrt{d}}{\alpha^2}+\frac{d^{1/4}}{\alpha \cdot \eps}\right)$.

\medskip

Next, we move to the Boolean Product case. Here, we establish the following theorem.
\begin{theorem} \label{thm:prod_tolerant_upper}
    Let $\mu^* \in [-1/2, 1/2]^d$ and $C$ be a sufficiently large constant. Let $\mathcal{H}_0$ consist of $\mathcal{P}(\mu)$ over all $\mu$ with $\TV(\mathcal{P}(\mu), \mathcal{P}(\mu^*)) \le \frac{\alpha}{C}$, and $\mathcal{H}_1$ consist of $\mathcal{P}(\mu)$ over all $\mu$ with $\TV(\mathcal{P}(\mu), \mathcal{P}(\mu^*)) \ge \alpha$. Then:
\begin{enumerate}
    \item \textbf{(Theorem \ref{thm:tolerant}, computationally inefficient Product case)} There exists a computationally inefficient algorithm that, using
\[N = \tilde{O}\left(\frac{d^{1/2}}{\alpha^2} + \frac{d^{1/3}}{\alpha^{4/3} \cdot \eps^{2/3}} + \frac{1}{\alpha \cdot \eps}\right)\]
    samples, can $(\eps, 0)$-privately distinguish between $\mathcal{H}_0$ and $\mathcal{H}_1$.
    \item \textbf{(Theorem \ref{thm:tolerant}, computationally efficient Product case)} There exists a computationally efficient algorithm that, using
\[N = \tilde{O}\left(\frac{d^{1/2}}{\alpha^2} + \frac{d^{1/4}}{\alpha \cdot \eps}\right)\]
    samples, can $(\eps, 0)$-privately distinguish between $\mathcal{H}_0$ and $\mathcal{H}_1$.
\end{enumerate}
\end{theorem}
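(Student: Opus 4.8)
The plan is to combine the reduction used for (non-tolerant) product testing in Theorem~\ref{thm:prod_upper} with the threshold-shifting idea used for tolerant Gaussian testing in Theorem~\ref{thm:tolerant_upper}. First I would reduce to a clean ``$\mu^* = 0$'' form: by Lemma~\ref{lem:wlog_product_mean_0} we may assume $\mu^* = 0$ (at the cost of a factor of $2$ in the distance parameters), and by Proposition~\ref{prop:mean_tv} the condition $\TV(\mathcal{P}(\mu), \mathcal{P}(0)) \le \frac{\alpha}{C}$ becomes $\|\mu\| \le \frac{C_1 \alpha}{C}$ for an absolute constant $C_1$, while $\TV(\mathcal{P}(\mu), \mathcal{P}(0)) \ge \alpha$ becomes $\|\mu\| \ge \frac{\alpha}{C_0}$ for an absolute constant $C_0$. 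Rescaling $\alpha$ by the constant $C_0$ and then taking $C$ large enough, it suffices to $(\eps, 0)$-privately distinguish $\mathcal{H}_0$: $N$ i.i.d.\ samples from $\mathcal{P}(\mu)$ with $\|\mu\| \le \frac{\alpha}{C'}$ for a suitable constant $C'$, from $\mathcal{H}_1$: $N$ i.i.d.\ samples from $\mathcal{P}(\mu)$ with $\|\mu\| \ge \alpha$. Finally, by Remark~\ref{rmk:wlog_bounded} (whose tolerant-product reduction downscales $\mu$ by powers of $2$, which preserves both $\|\mu\| \le \frac{\alpha}{C'}$ and the alternative condition), we may further assume $\alpha \le \|\mu\| \le 2\alpha$ in $\mathcal{H}_1$.

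With this reduction in place, I would observe that all of the machinery developed for the identity-covariance Gaussian case carries over verbatim to the product case. Concretely, Theorem~\ref{thm:concentration} holds for samples drawn i.i.d.\ from $\mathcal{P}(\mu)$ (this is exactly how it is stated and proved), so the set $\mathcal{C} \subset (\BR^d)^N$ of well-concentrated datasets, together with Lemmas~\ref{lem:lipschitz_1} and~\ref{lem:lipschitz_2} establishing the $D$-Lipschitz property of the clipped statistic $\tilde T$ on $\mathcal{C}$, are unchanged. For the efficient algorithm, the matrix $\textbf{V}$ (obtained from $T_{ij} = \langle X^{(i)}, X^{(j)}\rangle$ by subtracting $d$ from the diagonal and dividing by $R = \tilde O(\sqrt d)$) satisfies the same bounds as in Corollary~\ref{cor:concentration_fast_upper}, and changing a single sample still changes at most one row and one column of $\textbf{V}$, so Theorem~\ref{thm:matrix} applies exactly as in the proof of Theorem~\ref{thm:fast_upper}. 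The only thing to modify is the choice of acceptance threshold (equivalently, the parameter $\gamma$), as in the tolerant Gaussian case.

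The main point requiring care — and the single genuinely new step compared with merely combining existing pieces — is the accuracy analysis of this shifted threshold. By Lemma~\ref{bound:total_product}, if each $X^{(i)} \overset{i.i.d.}{\sim} \mathcal{P}(\mu)$ and $\bar X = \sum_i X^{(i)}$, then $\BE[\|\bar X\|^2] = Nd + \Theta(N^2 \|\mu\|^2)$ with $\Var[\|\bar X\|^2] = O(N^2 d + N^3 \|\mu\|^2)$; unlike the Gaussian case the ``$\Theta$'' here is not an exact equality, so I must check that the hidden absolute constants $c_1 \le c_2$ (with $c_1 N^2\|\mu\|^2 \le \BE[\|\bar X\|^2] - Nd \le c_2 N^2 \|\mu\|^2$) still leave a gap. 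In $\mathcal{H}_0$ we get $\BE[\|\bar X\|^2] - Nd \le \frac{c_2}{(C')^2} N^2\alpha^2$, and in $\mathcal{H}_1$ we get $\BE[\|\bar X\|^2] - Nd \ge c_1 N^2 \alpha^2$; choosing the constant $C$ in the statement (hence $C'$) large enough that $\frac{c_2}{(C')^2} < \frac{c_1}{4}$, these expectations are separated by $\Omega(\alpha^2 N^2)$. We then place the threshold at $Nd + \frac{c_1}{2} N^2\alpha^2$ (and clip $\tilde T$, or scale $\gamma$, accordingly), and the Chebyshev bounds on $\|\bar X\|^2$ together with the added Laplace noise of scale $O(D/\eps)$ (inefficient case) or the guarantees of Theorem~\ref{thm:matrix} (efficient case) yield the claimed accuracy as long as $N^2\alpha^2 \ge \Omega(N\sqrt d + \alpha N\sqrt N)$ and $\Delta := \tilde O(\sqrt{Nd} + \alpha N + \eps^{-1}) \le D := \frac{\eps}{C}\alpha^2 N^2$ — exactly the constraints already present in Theorems~\ref{thm:slow_upper} and~\ref{thm:fast_upper}, which give the stated sample bounds. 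Everything else — privacy, the McShane–Whitney extension, and the polynomial runtime of the efficient algorithm — is identical to the non-tolerant product case, so I expect no further obstacles.
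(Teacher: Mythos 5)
Your proposal is correct and follows essentially the same route as the paper: reduce to $\mu^* = 0$ and an $\ell_2$ separation of the means via Lemma~\ref{lem:wlog_product_mean_0}, Proposition~\ref{prop:mean_tv}, and Remark~\ref{rmk:wlog_bounded}, then reuse the product-distribution concentration bounds (Theorem~\ref{thm:concentration}, Corollary~\ref{cor:concentration_fast_upper}) with a shifted acceptance threshold exactly as in the tolerant Gaussian case. Your extra step of enlarging $C$ to absorb the hidden constants in the $\Theta(N^2\|\mu\|^2)$ of Lemma~\ref{bound:total_product} is a small refinement that the paper elides (it implicitly uses that these constants are essentially $1$), but it changes nothing structurally.
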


Again, we may assume WLOG that $\mu^* = 0$, by Lemma \ref{lem:wlog_product_mean_0}. In addition, by Proposition \ref{prop:mean_tv}, we have that $\TV(\mathcal{P}(0), \mathcal{P}(\mu)) \in \left[\frac{\min(\|\mu\|, 1)}{C_0}, C_0 \cdot \min(\|\mu\|, 1)\right]$ for all $\mu \in [-1, 1]^d$, for some fixed absolute constant $C_0$. So, if we set $C = 2C_0^2$, it suffices to distinguish between $\|\mu\| \le \frac{\alpha}{C} \cdot C_0 = \frac{\alpha}{2C_0}$ and $\|\mu\| \ge \frac{\alpha}{C_0}.$ By replacing $\alpha$ with $\frac{\alpha}{C_0}$, up to an asymptotic factor it suffices to distinguish between $\|\mu\|_2 \le \frac{\alpha}{2}$ and $\|\mu\|_2 \ge \alpha$ for all $\alpha \le \frac{1}{2}.$ Again, we may assume WLOG (by Proposition \ref{prop:wlog_bounded} and Remark \ref{rmk:wlog_bounded}) that we are distinguishing between $\|\mu\|_2 \le \frac{\alpha}{2}$ and $\alpha \le \|\mu\|_2 \le 2\alpha$.

At this point, the proof of Theorem \ref{thm:prod_tolerant_upper} proceeds almost identically to the proof of Theorem \ref{thm:tolerant_upper}. Indeed, we have the same concentration bounds for $\|\bar{X}\|^2$ where $\bar{X} = X^{(1)} + \cdots + X^{(N)},$ as well as the same required bounds on the row and column sums, the individual entries, and the sums of submatrices in the matrix $\textbf{T}$ where $T_{i, j} = \langle X^{(i)}, X^{(j)} \rangle,$ assuming that $X^{(1)}, \dots, X^{(N)} \overset{i.i.d.}{\sim} \mathcal{P}(\mu)$. Hence, the same proof as for Theorem \ref{thm:tolerant_upper} will apply to Theorem \ref{thm:prod_tolerant_upper}.

\end{document}